\newtheorem{theorem}{Theorem}
\newtheorem{corollary}{Corollary}
\newtheorem{definition}{Definition}
\newtheorem{example}{Example}
\newtheorem{property}{Property}
\newcommand{\W}{ \mathcal{W} }  
\newcommand{\Dim}{ d }  
\newcommand{\DegenW}{ \alpha }  
\newcommand{\DegenV}{ \lambda }  
\newcommand{\Sites}{N}  
\newcommand{\Sys}{S}  
\newcommand{\td}{t_{\rm d}}  
\newcommand{\Lyap}{\lambda_{\rm L}}  
\newcommand*{\OurKD}[1]{\tilde{A}_{#1}}  
\newcommand*{\SumKD}[1]{\tilde{ \mathscr{A} }_{#1}}  
\newcommand*{\ProjW}[1]{\Pi^{ \W }_{#1}}  
\newcommand*{\ProjWt}[1]{\Pi^{ \W(t) }_{#1}}  
\newcommand*{\ProjV}[1]{\Pi^{ V }_{#1}}  
\newcommand{\NondegW}{ \tilde{\W} }
\newcommand{\NondegV}{ \tilde{V} }
\newcommand{\reg}{ {\rm reg} } 
\newcommand{\TOC}{F_\toc} 
\newcommand{\GW}{ G_\W }
\newcommand{\GV}{ G_V }
\newcommand{\Charac}{ \mathcal{G} }  
\newcommand{\U}{ \mathcal{U} }  
\newcommand{\weak}{ {\rm weak} }
\newcommand{\target}{ {\rm target} }
\newcommand{\Protocol}{\mathcal{P}}  
\newcommand{\ProtocolA}{\mathscr{P}_\Amp}  
\newcommand{\toc}{{\rm TOC}}
\newcommand*{\TOCKD}[1]{\tilde{A}^\toc_{#1}}  
\newcommand{\A}{\mathcal{A}} 
\newcommand{\B}{ \mathcal{B} } 
\newcommand{\C}{ \mathcal{C} } 
\newcommand{\K}{ \mathcal{K} } 
\newcommand{\Oper}{\mathcal{O}} 
\newcommand{\Ops}{\mathscr{K}} 
\newcommand{\Opsb}{\bar{\mathscr{K}}} 
\newcommand{\Gest}{ \Gamma_{\rm est} } 
\newcommand{\Amp}{A} 
\newcommand*{\bra}[1]{\langle #1\rvert}
\newcommand*{\ket}[1]{\lvert #1 \rangle}
\newcommand*{\ketbra}[2]{\lvert #1 \rangle\!\langle #2 \rvert}
\newcommand*{\expval}[1]{\left\langle  #1  \right\rangle}
\newcommand{\Min}{ {\rm min} }
\newcommand{\Max}{ {\rm max} }
\newcommand{\inter}{ {\rm int} }   
\newcommand{\Tr}{{\rm Tr}}   
\def\id{\mathbbm{1}}   
\newcommand{\kB}{k_\mathrm{B}}  
\newcommand{\Hil}{\mathcal{H}}  
\newcommand{\Basis}{\mathcal{S}}  
\newcommand{\1}{ {(1)} }
\newcommand{\2}{ {(2)} }
\newcommand{\3}{ {(3)} }
\newcommand{\ParenK}{{(\Ops)}}
\newcommand{\ParenKB}{{( \Opsb )}}
\newcommand{\LParen}{ \bm{(} }
\newcommand{\RParen}{ \bm{)} }
\newcommand*{\Set}[1]{\left\{  #1  \right\}}
\newcommand*{\Unit}[1]{ \bm{ \hat{ #1 }} }  
\renewcommand\th{ {\rm th} }
\newcommand{\caphead}[1]{{\bf #1}}
\newcommand\footnoteref[1]{\protected@xdef\@thefnmark{\ref{#1}}\@footnotemark}
\begin{document}
\title{The quasiprobability behind the out-of-time-ordered correlator}
\author{Nicole~Yunger~Halpern\footnote{E-mail: nicoleyh@caltech.edu}}
\affiliation{Institute for Quantum Information and Matter, Caltech, Pasadena, CA 91125, USA}
\author{Brian~Swingle}
\affiliation{Department of Physics, Massachusetts Institute of Technology, Cambridge, MA 02139, USA}
\affiliation{Department of Physics, Harvard University, Cambridge, MA 02138, USA}
\affiliation{Department of Physics, Brandeis University, Waltham, MA 02453, USA}

\author{Justin~Dressel}\affiliation{Institute for Quantum Studies, Chapman University, Orange, CA 92866, USA}
\affiliation{Schmid College of Science and Technology, Chapman University, Orange, CA 92866, USA}
\date{\today}

%
%
\keywords{Quantum chaos, Entanglement, Quantum information theory, Nonequilibrium statistical mechanics, Quasiprobability, Weak measurement}

%
%
\begin{abstract}
Two topics, evolving rapidly in separate fields, were combined recently: The out-of-time-ordered correlator (OTOC) signals quantum-information scrambling in many-body systems. The Kirkwood-Dirac (KD) quasiprobability represents operators in quantum optics. The OTOC was shown to equal a moment of a summed quasiprobability [Yunger Halpern, \emph{Phys. Rev. A} \textbf{95}, 012120 (2017)]. That quasiprobability, we argue, is an extension of the KD distribution. We explore the quasiprobability's structure from experimental, numerical, and theoretical perspectives. First, we simplify and analyze the weak-measurement and interference protocols for measuring the OTOC and its quasiprobability. We decrease, exponentially in system size, the number of trials required to infer the OTOC from weak measurements. We also construct a circuit for implementing the weak-measurement scheme. Next, we calculate the quasiprobability (after coarse-graining) numerically and analytically: We simulate a transverse-field Ising model first. Then, we calculate the quasiprobability averaged over random circuits, which model chaotic dynamics. The quasiprobability, we find, distinguishes chaotic from integrable regimes. We observe nonclassical behaviors: The quasiprobability typically has negative components. It becomes nonreal in some regimes. The onset of scrambling breaks a symmetry that bifurcates the quasiprobability, as in classical-chaos pitchforks. Finally, we present mathematical properties. We define an extended KD quasiprobability that generalizes the KD distribution. The quasiprobability obeys a Bayes-type theorem, for example, that exponentially decreases the memory required to calculate weak values, in certain cases. A time-ordered correlator analogous to the OTOC, insensitive to quantum-information scrambling, depends on a quasiprobability closer to a classical probability. This work not only illuminates the OTOC's underpinnings, but also generalizes quasiprobability theory and motivates immediate-future weak-measurement challenges.
\end{abstract}

{\let\newpage\relax\maketitle}

%
%
Two topics have been flourishing independently:
the out-of-time-ordered correlator (OTOC)
and the Kirkwood-Dirac (KD) quasiprobability distribution.
The OTOC signals chaos,
and the dispersal of information through entanglement,
in quantum many-body systems~\cite{Shenker_Stanford_14_BHs_and_butterfly,Shenker_Stanford_14_Multiple_shocks,Shenker_Stanford_15_Stringy,Roberts_15_Localized_shocks,Roberts_Stanford_15_Diagnosing,Maldacena_15_Bound}.
Quasiprobabilities represent quantum states
as phase-space distributions represent statistical-mechanical states~\cite{Carmichael_02_Statistical}.
Classical phase-space distributions are restricted to positive values;
quasiprobabilities are not.
The best-known quasiprobability is the Wigner function.
The Wigner function can become negative;
the KD quasiprobability, negative and nonreal~\cite{Kirkwood_33_Quantum,Dirac_45_On,Lundeen_11_Direct,Lundeen_12_Procedure,Bamber_14_Observing,Mirhosseini_14_Compressive,Dressel_15_Weak}.
Nonclassical values flag contextuality,
a resource underlying quantum-computation speedups~\cite{Spekkens_08_Negativity,Ferrie_11_Quasi,Kofman_12_Nonperturbative,Dressel_14_Understanding,Howard_14_Contextuality,Dressel_15_Weak,Delfosse_15_Wigner}.
Hence the KD quasiprobability, like the OTOC,
reflects nonclassicality.

Yet disparate communities use these tools:
The OTOC $F(t)$ features in quantum information theory,
high-energy physics, and condensed matter. Contexts include black holes within AdS/CFT duality~\cite{Shenker_Stanford_14_BHs_and_butterfly,Maldacena_98_AdSCFT,Witten_98_AdSCFT,Gubser_98_AdSCFT}, weakly interacting field theories~\cite{Stanford_15_WeakCouplingChaos,Patel_16_ChaosCritFS,Chowdhury_17_ONChaos,Patel_17_DisorderMetalChaos}, spin models~\cite{Shenker_Stanford_14_BHs_and_butterfly,HosurYoshida_16_Chaos}, and the Sachdev-Ye-Kitaev model~\cite{Sachdev_93_Gapless,Kitaev_15_Simple}. The KD distribution features in quantum optics.
Experimentalists have inferred the quasiprobability
from weak measurements of photons~\cite{Bollen_10_Direct,Lundeen_11_Direct,Lundeen_12_Procedure,Bamber_14_Observing,Mirhosseini_14_Compressive,Suzuki_16_Observation,Piacentini_16_Measuring,Thekkadath_16_Direct}
and superconducting qubits~\cite{White_16_Preserving,Groen_13_Partial}.

The two tools were united in~\cite{YungerHalpern_17_Jarzynski}.
The OTOC was shown to equal a moment of
a summed quasiprobability, $\OurKD{\rho}$:
\begin{align}
   \label{eq:JarzLike}
   F (t)  =  \frac{ \partial^2 }{ \partial \beta  \,  \partial \beta' }
   \expval{ e^{ - ( \beta W + \beta' W' ) } }
   \Bigg\rvert_{ \beta, \beta' = 0 }   \, .
\end{align}
$W$ and $W'$ denote measurable random variables
analogous to thermodynamic work; and $\beta, \beta'  \in  \mathbb{R}$.
The average $\expval{ . }$ is with respect to
a sum of quasiprobability values $\OurKD{\rho} ( . )$.
Equation~\eqref{eq:JarzLike} resembles Jarzynski's Equality,
a fluctuation relation in nonequilibrium statistical mechanics~\cite{Jarzynski_97_Nonequilibrium}.
Jarzynski cast a useful, difficult-to-measure free-energy difference $\Delta F$
in terms of the characteristic function of a probability.
Equation~\eqref{eq:JarzLike} casts the useful, difficult-to-measure OTOC
in terms of the characteristic function of a summed quasiprobability.\footnote{
For a thorough comparison of Eq.~\eqref{eq:JarzLike}
with Jarzynski's equality,
see the two paragraphs that follow the proof in~\cite{YungerHalpern_17_Jarzynski}.
}
The OTOC has recently been linked to thermodynamics also in~\cite{Campisi_16_Thermodynamics,Tsuji_16_Out}.

Equation~\eqref{eq:JarzLike} motivated
definitions of quantities
that deserve study in their own right.
The most prominent quantity is
the quasiprobability $\OurKD{\rho}$.
$\OurKD{\rho}$ is more fundamental than $F(t)$:
$\OurKD{\rho}$ is a distribution that consists of many values.
$F(t)$ equals a combination of those values---a
derived quantity, a coarse-grained quantity.
$\OurKD{\rho}$ contains more information than $F(t)$.
This paper spotlights $\OurKD{\rho}$
and related quasiprobabilities ``behind the OTOC.''

$\OurKD{\rho}$, we argue, is an extension of the KD quasiprobability.
Weak-measurement tools used to infer KD quasiprobabilities
can be applied to infer $\OurKD{\rho}$ from experiments~\cite{YungerHalpern_17_Jarzynski}.
Upon measuring $\OurKD{\rho}$, one can recover the OTOC.
Alternative OTOC-measurement proposals rely on
Lochshmidt echoes~\cite{Swingle_16_Measuring}, interferometry~\cite{Swingle_16_Measuring,Yao_16_Interferometric,YungerHalpern_17_Jarzynski,Bohrdt_16_Scrambling},
clocks~\cite{Zhu_16_Measurement},
particle-number measurements of ultracold atoms~\cite{Danshita_16_Creating,Tsuji_17_Exact,Bohrdt_16_Scrambling},
and two-point measurements~\cite{Campisi_16_Thermodynamics}.
Initial experiments have begun the push
toward characterizing many-body scrambling:
OTOCs of an infinite-temperature four-site NMR system
have been measured~\cite{Li_16_Measuring}.
OTOCs of symmetric observables have been measured
with infinite-temperature trapped ions~\cite{Garttner_16_Measuring} and in nuclear spin chains~\cite{Wei_16_NuclearSpinOTOC}.
Weak measurements offer a distinct toolkit,
opening new platforms and regimes to OTOC measurements.
The weak-measurement scheme in~\cite{YungerHalpern_17_Jarzynski}
is expected to provide a near-term challenge for
superconducting qubits~\cite{White_16_Preserving,Hacohen_16_Quantum,Rundle_16_Quantum,Takita_16_Demonstration,Kelly_15_State,Heeres_16_Implementing,Riste_15_Detecting},
trapped ions~\cite{Gardiner_97_Quantum,Choudhary_13_Implementation,Lutterbach_97_Method,Debnath_16_Nature,Monz_16_Realization,Linke_16_Experimental,Linke_17_Experimental},
ultracold atoms~\cite{Browaeys_16_Experimental}, cavity quantum electrodynamics (QED)~\cite{Guerlin_07_QND,Murch_13_SingleTrajectories}, and perhaps NMR~\cite{Xiao_06_NMR,Dawei_14_Experimental}.

We investigate the quasiprobability $\OurKD{\rho}$
that ``lies behind'' the OTOC.
The study consists of three branches:
We discuss experimental measurements,
calculate (a coarse-grained) $\OurKD{\rho}$,
and explore mathematical properties.
Not only does quasiprobability theory shed new light on the OTOC.
The OTOC also inspires questions about quasiprobabilities
and motivates weak-measurement experimental challenges.

The paper is organized as follows.
In a technical introduction, we review the KD quasiprobability, the OTOC,
the OTOC quasiprobability $\OurKD{\rho}$,
and schemes for measuring $\OurKD{\rho}$.
We also introduce our set-up and notation.
All the text that follows the technical introduction is new
(never published before, to our knowledge).

Next, we discuss experimental measurements.
We introduce a coarse-graining $\SumKD{\rho}$ of $\OurKD{\rho}$.
The coarse-graining involves a ``projection trick'' that decreases,
exponentially in system size, the number of trials required
to infer $F(t)$ from weak measurements.
We evaluate pros and cons of
the quasiprobability-measurement schemes in~\cite{YungerHalpern_17_Jarzynski}.
We also compare our schemes
with alternative $F(t)$-measurement schemes~\cite{Swingle_16_Measuring,Yao_16_Interferometric,Zhu_16_Measurement}.
We then present a circuit for weakly measuring
a qubit system's $\SumKD{\rho}$.
Finally, we show how to infer the coarse-grained $\SumKD{\rho}$
from alternative OTOC-measurement schemes
(e.g.,~\cite{Swingle_16_Measuring}).

Sections~\ref{section:Numerics} and~\ref{section:Brownian} feature
calculations of $\SumKD{\rho}$.
First, we numerically simulate a transverse-field Ising model.
$\SumKD{\rho}$ changes significantly, we find,
over time scales relevant to the OTOC.
The quasiprobability's behavior distinguishes
nonintegrable from integrable Hamiltonians.
The quasiprobability's negativity and nonreality remains robust
with respect to substantial quantum interference.
We then calculate an average, over Brownian circuits, of $\SumKD{\rho}$.
Brownian circuits model chaotic dynamics:
The system is assumed to evolve, at each time step,
under random two-qubit couplings~\cite{Brown_13_Scrambling,Hayden_07_Black,Sekino_08_Fast,Lashkari_13_Towards}.

A final ``theory'' section concerns mathematical properties
and physical interpretations of $\OurKD{\rho}$.
$\OurKD{\rho}$ shares some, though not all,
of its properties with the KD distribution.
The OTOC motivates a generalization of
a Bayes-type theorem obeyed by the KD distribution~\cite{Aharonov_88_How,Johansen_04_Nonclassical,Hall_01_Exact,Hall_04_Prior,Dressel_15_Weak}.
The generalization exponentially shrinks the memory required
to compute weak values, in certain cases.
The OTOC also motivates a generalization of
decompositions of quantum states $\rho$.
This decomposition property may help experimentalists assess
how accurately they prepared the desired initial state
when measuring $F(t)$.
A time-ordered correlator $F_\toc(t)$ analogous to $F(t)$, we show next,
depends on a quasiprobability that can reduce to a probability.
The OTOC quasiprobability lies farther from classical probabilities
than the TOC quasiprobability,
as the OTOC registers quantum-information scrambling
that $F_\toc(t)$ does not.
Finally, we recall that the OTOC encodes three time reversals.
OTOCs that encode more are
moments of sums of ``longer'' quasiprobabilities.
We conclude with theoretical and experimental opportunities.

We invite readers to familiarize themselves with the technical review,
then to dip into the sections that interest them most.
The technical review is intended to introduce
condensed-matter, high-energy, and quantum-information readers
to the KD quasiprobability
and to introduce quasiprobability and weak-measurement readers
to the OTOC.
Armed with the technical review,
experimentalists may wish to focus on Sec.~\ref{section:Measuring}
and perhaps Sec.~\ref{section:Numerics}.
Adherents of abstract theory may prefer Sec.~\ref{section:Theory}.
The computationally minded may prefer
Sections~\ref{section:Numerics} and~\ref{section:Brownian}.
The paper's modules (aside from the technical review) are independently accessible.

\onecolumngrid
\twocolumngrid

\section{Technical introduction}
\label{section:Tech_intro}

This review consists of three parts.
In Sec.~\ref{section:Intro_to_KD}, we overview the KD quasiprobability.
Section~\ref{section:SetUp} introduces our set-up and notation.
In Sec.~\ref{section:OTOC_review}, we review the OTOC and
its quasiprobability $\OurKD{\rho}$.
We overview also the weak-measurement and interference schemes
for measuring $\OurKD{\rho}$ and $F(t)$.

The quasiprobability section (\ref{section:Intro_to_KD}) provides background for
quantum-information, high-energy, and condensed-matter readers.
The OTOC section (\ref{section:OTOC_review}) targets
quasiprobability and weak-measurement readers.
We encourage all readers to study
the set-up (\ref{section:SetUp}), as well as
$\OurKD{\rho}$ and the schemes for measuring $\OurKD{\rho}$ (\ref{section:Review_OTOC_quasiprob}).

\subsection{The KD quasiprobability in quantum optics}
\label{section:Intro_to_KD}

The Kirkwood-Dirac quasiprobability is defined as follows.
Let $\Sys$ denote a quantum system
associated with a Hilbert space $\Hil$.
Let $\Set{ \ket{ a } }$ and $\Set{ \ket{ f } }$
denote orthonormal bases for $\Hil$.
Let $\mathcal{B} ( \Hil )$ denote the set of bounded operators
defined on $\Hil$, and let $\Oper  \in  \mathcal{B} ( \Hil )$.
The KD quasiprobability
\begin{align}
   \label{eq:KD_rho}   
   \OurKD{\Oper}^\1 ( a, f )  :=
   \langle f | a \rangle  \langle a |  \Oper  | f \rangle  \, ,
\end{align}
regarded as a function of $a$ and $f$,
contains all the information in $\Oper$,
if $\langle a | f \rangle \neq 0$ for all $a, f$.
Density operators $\Oper = \rho$ are often focused on
in the literature and in this paper.
This section concerns the context, structure,
and applications of $\OurKD{\Oper}^\1 ( a, f )$.

We set the stage with phase-space representations of quantum mechanics,
alternative quasiprobabilities, and historical background.
Equation~\eqref{eq:KD_rho}
facilitates retrodiction, or inference about the past,
reviewed in Sec.~\ref{section:KD_Retro}.
How to decompose an operator $\Oper$
in terms of KD-quasiprobability values
appears in Sec.~\ref{section:KD_Coeffs}.
The quasiprobability has mathematical properties
reviewed in Sec.~\ref{section:KDProps}.

Much of this section parallels Sec.~\ref{section:Theory},
our theoretical investigation of the OTOC quasiprobability.
More background appears in~\cite{Dressel_15_Weak}.

\subsubsection{Phase-space representations, alternative quasiprobabilities, and history}

Phase-space distributions form a mathematical toolkit
applied in Liouville mechanics~\cite{Landau_80_Statistical}.
Let $\Sys$ denote a system of $6 \Sites$ degrees of freedom (DOFs).
An example system consists of $\Sites$ particles,
lacking internal DOFs,
in a three-dimensional space.
We index the particles with $i$ and let $\alpha = x, y, z$.
The $\alpha^\th$ component $q_i^\alpha$
of particle $i$'s position
is conjugate to
the $\alpha^\th$ component $p_i^\alpha$
of the particle's momentum.
The variables $q_i^\alpha$ and $p_i^\alpha$ label
the axes of \emph{phase space}.

Suppose that the system contains many DOFs: $\Sites \gg 1$.
Tracking all the DOFs is difficult.
Which phase-space point $\Sys$ occupies,
at any instant, may be unknown.
The probability that, at time $t$, $\Sys$ occupies
an infinitesimal volume element
localized at $(q_1^x, \ldots, p_N^z)$ is
$\rho( \{ q_i^\alpha \} , \{ p_i^\alpha \}; t ) \, d^{3N} q  \:  d^{3N} p$.
The \emph{phase-space distribution}
$\rho( \{ q_i^\alpha \} , \{ p_i^\alpha \}; t )$
is a probability density.

$q_i^\alpha$ and $p_i^\alpha$
seem absent from quantum mechanics (QM), \emph{prima facie}.
Most introductions to QM cast quantum states in terms of
operators, Dirac kets $\ket{ \psi }$, and wave functions $\psi (x)$.
Classical variables are relegated to measurement outcomes
and to the classical limit.
Wigner, Moyal, and others represented QM
in terms of phase space~\cite{Carmichael_02_Statistical}.
These representations are used most in quantum optics.

In such a representation, a \emph{quasiprobability} density replaces
the statistical-mechanical probability density $\rho$.\footnote{
We will focus on discrete quantum systems,
motivated by a spin-chain example.
Discrete systems are governed by quasiprobabilities,
which resemble probabilities.
Continuous systems are governed by
quasiprobability densities,
which resemble probability densities.
Our quasiprobabilities can be replaced with quasiprobability densities,
and our sums can be replaced with integrals,
in, e.g., quantum field theory.}
Yet quasiprobabilities violate axioms of probability~\cite{Ferrie_11_Quasi}.
Probabilities are nonnegative, for example.
Quasiprobabilities can assume negative values,
associated with nonclassical physics such as contextuality~\cite{Spekkens_08_Negativity,Ferrie_11_Quasi,Kofman_12_Nonperturbative,Dressel_14_Understanding,Dressel_15_Weak,Delfosse_15_Wigner},
and nonreal values.
Relaxing different axioms leads to different quasiprobabilities.
Different quasiprobabilities correspond also to
different orderings of noncommutative operators~\cite{Dirac_45_On}.
The best-known quasiprobabilities include
the Wigner function, the Glauber-Sudarshan $P$ representation,
and the Husimi $Q$ function~\cite{Carmichael_02_Statistical}.

The KD quasiprobability resembles a little brother of theirs,
whom hardly anyone has heard of~\cite{Banerji_07_Exploring}.
Kirkwood and Dirac defined the quasiprobability independently
in 1933~\cite{Kirkwood_33_Quantum} and 1945~\cite{Dirac_45_On}.
Their finds remained under the radar for decades.
Rihaczek rediscovered the distribution in 1968,
in classical-signal processing~\cite{Rihaczek_68_Signal,Cohen_89_Time}.
(The KD quasiprobability is sometimes called
``the Kirkwood-Rihaczek distribution.'')
The quantum community's attention has revived recently.
Reasons include experimental measurements, mathematical properties,
and applications to retrodiction and state decompositions.

\subsubsection{Bayes-type theorem and retrodiction with
the KD quasiprobability}
\label{section:KD_Retro}

Prediction is inference about the future.
\emph{Retrodiction} is inference about the past. 
One uses the KD quasiprobability to infer about a time $t'$,
using information about an event that occurred before $t'$
and information about an event that occurred after $t'$.
This forward-and-backward propagation
evokes the OTOC's out-of-time ordering.

We borrow notation from,
and condense the explanation in,~\cite{Dressel_15_Weak}.
Let $\Sys$ denote a discrete quantum system.
Consider preparing $\Sys$ in
a state $\ket{ i }$ at time $t = 0$.
Suppose that $\Sys$ evolves under a time-independent Hamiltonian
that generates the family $U_t$ of unitaries.
Let $F$ denote an observable
measured at time $t'' > 0$.
Let $F = \sum_f  f  \ketbra{f }{ f }$ be the eigendecomposition,
and let $f$ denote the outcome.

Let $\A  =  \sum_a a  \ketbra{a}{a}$ be
the eigendecomposition of an observable
that fails to commute with $F$.
Let $t'$ denote a time in $(0, t'')$.
Which value can we most reasonably attribute
to the system's time-$t'$ $\A$,
knowing that $\Sys$ was prepared in $\ket{i}$
and that the final measurement yielded $f$?

Propagating the initial state forward to time $t'$ yields
$\ket{ i' }  :=  U_{t'}  \ket{i}$.
Propagating the final state backward yields
$\ket{f'}  :=  U^\dag_{t'' - t'} \ket{f}$.
Our best guess
about $\A$ is the \emph{weak value}~\cite{Ritchie_91_Realization,Hall_01_Exact,Johansen_04_Nonclassical,Hall_04_Prior,Pryde_05_Measurement,Dressel_11_Experimentals,Groen_13_Partial}
\begin{align}
   \label{eq:WeakVal}
   \A_\weak (i, f )  :=  \Re  \left(
   \frac{ \langle f' | \A | i' \rangle }{  \langle f' | i' \rangle }
   \right)  \, .
\end{align}
The real part of a complex number $z$
is denoted by $\Re (z)$.
The guess's accuracy is quantified with
a distance metric (Sec.~\ref{section:TA_retro})
and with comparisons to weak-measurement data.

Aharonov \emph{et al.} discovered weak values
in 1988~\cite{Aharonov_88_How}.
Weak values be \emph{anomalous}, or \emph{strange}:
$\A_\weak$ can exceed the greatest eigenvalue $a_\Max$ of $\A$
and can dip below the least eigenvalue $a_\Min$.
Anomalous weak values concur with
negative quasiprobabilities and nonclassical physics~\cite{Kofman_12_Nonperturbative,Dressel_14_Understanding,Pusey_14_Anomalous,Dressel_15_Weak,Waegell_16_Confined}.
Debate has surrounded weak values' role in quantum mechanics~\cite{Ferrie_14_How,Vaidman_14_Comment,Cohen_14_Comment,Aharonov_14,Sokolovski_14_Comment,Brodutch_15_Comment,Ferrie_15_Ferrie}.

The weak value $\A_\weak$, we will show,
depends on the KD quasiprobability.
We replace the $\A$ in Eq.~\eqref{eq:WeakVal}
with its eigendecomposition.
Factoring out the eigenvalues yields
\begin{align}
  \label{eq:WeakVal2}
   \A_\weak( i , f )  =   \sum_a  a  \,  \Re  \left(
   \frac{ \langle f' | a \rangle \langle a | i' \rangle }{
            \langle f' | i' \rangle }  \right) \, .
\end{align}
The weight $\Re ( . )$ is a \emph{conditional quasiprobability}.
It resembles a conditional probability---the
likelihood that, if $\ket{i}$ was prepared
and the measurement yielded $f$,
$a$ is the value most reasonably attributable to $\A$.
Multiplying and dividing the argument
by $\langle i' | f' \rangle$ yields
\begin{align}
   \label{eq:CondQuasi}
   \tilde{p} ( a | i, f )  :=   \frac{
   \Re  \left(  \langle f' | a \rangle \langle a | i' \rangle  \langle i' | f' \rangle
                      \right) }{
   |  \langle f' | i' \rangle  |^2 }     \, .
\end{align}
Substituting  into Eq.~\eqref{eq:WeakVal2} yields
\begin{align}
   \label{eq:WeakVal3}
   \A_\weak( i , f )  =   \sum_a  a  \,
   \tilde{p} ( a | i, f )  \, .
\end{align}

Equation~\eqref{eq:WeakVal3} illustrates why
negative quasiprobabilities concur with anomalous weak values.
Suppose that $\tilde{p} ( a | i, f )  \geq  0  \;  \:  \forall a$.
The triangle inequality, followed by the Cauchy-Schwarz inequality, implies
\begin{align}
   | \A_\weak( i , f ) |
   & \leq  \left\lvert  \sum_a  a  \,  \tilde{p} ( a | i, f )  \right\rvert \\
   & \leq  \sum_a  |a |  \cdot  |  \tilde{p} ( a | i, f )  |  \\
   \label{eq:Anom_help1}
   & \leq  |  a_\Max |  \sum_a    |  \tilde{p} ( a | i, f )  |  \\
   \label{eq:Anom_help2}
   & =  |  a_\Max |  \sum_a    \tilde{p} ( a | i, f ) \\
   & =  | a_\Max |  \, .
\end{align}
The penultimate equality follows from $\tilde{p}( a | i, f )  \geq  0$.
Suppose, now, that the quasiprobability contains
a negative value $\tilde{p} ( a_- | i, f ) < 0$.
The distribution remains normalized.
Hence the rest of the $\tilde{p}$ values sum to $>1$.
The RHS of~\eqref{eq:Anom_help1}
exceeds $|  a_\Max |$.
%

The numerator of Eq.~\eqref{eq:CondQuasi} is
the \emph{Terletsky-Margenau-Hill (TMH) quasiprobability}~\cite{Terletsky_37_Limiting,Margenau_61_Correlation,Johansen_04_Nonclassical,Johansen_04_Nonclassicality}.
The TMH distribution is the real part of a complex number.
That complex generalization,
\begin{align}
   \label{eq:KD}
   \langle f' | a \rangle    \langle a | i' \rangle   \langle i' | f' \rangle \, ,
\end{align}
is the KD quasiprobability~\eqref{eq:KD_rho}.

We can generalize the retrodiction argument to arbitrary states $\rho$~\cite{Wiseman_02_Weak}.
Let $\mathcal{D} ( \Hil )$ denote the set of density operators
(unit-trace linear positive-semidefinite operators)
defined on $\mathcal{H}$.
Let $\rho  =  \sum_i  p_i  \ketbra{ i }{ i }
\in  \mathcal{D} ( \Hil )$
be a density operator's eigendecomposition.
Let $\rho'  :=  U_{t'} \rho U_{t'}^\dag$.
The weak value Eq.~\eqref{eq:WeakVal} becomes
\begin{align}
   \label{eq:WeakVal_rho}
   \A_\weak ( \rho, f )  :=  \Re \left(  \frac{
   \langle f' | \A \rho' | f' \rangle }{
   \langle f' | \rho' | f' \rangle }  \right) \, .
\end{align}
Let us eigendecompose $\A$ and factor out $\sum_a a$.
The eigenvalues are weighted by the conditional quasiprobability
\begin{align}
   \tilde{p} ( a | \rho, f )  =  \frac{  \Re \left(
   \langle f' | a \rangle \langle a | \rho' | f' \rangle  \right) }{
   \langle f' | \rho' | f' \rangle } \, .
\end{align}
The numerator is the TMH quasiprobability for $\rho$.
The complex generalization
\begin{align}
   \label{eq:KD_rho_2}
   \OurKD{\rho}^\1 ( a, f )  =
   \langle f' | a \rangle  \langle a | \rho' | f' \rangle
\end{align}
is the KD quasiprobability~\eqref{eq:KD_rho} for $\rho$.\footnote{
The $A$ in the quasiprobability $\OurKD{\rho}$
should not be confused with
the observable $\A$.}
We rederive~\eqref{eq:KD_rho_2}, via
an operator decomposition, next.

\subsubsection{Decomposing operators in terms of
KD-quasiprobability coefficients}
\label{section:KD_Coeffs}

The KD distribution can be interpreted
not only in terms of retrodiction,
but also in terms of operation decompositions~\cite{Lundeen_11_Direct,Lundeen_12_Procedure}.
Quantum-information scientists decompose
qubit states in terms of Pauli operators.
Let $\bm{ \sigma } =
\sigma^x \hat{ \mathbf{ x } }  +  \sigma^y  \hat{ \mathbf{ y } }
+  \sigma^z  \hat{ \mathbf{ z } }$
denote a vector of the one-qubit Paulis.
Let $\hat{ \mathbf{n} }  \in  \mathbb{R}^3$ denote a unit vector.
Let $\rho$ denote any state of a \emph{qubit},
a two-level quantum system.
$\rho$ can be expressed as
$\rho  =  \frac{1}{2}
\left( \id  +  \hat{ \mathbf{n} }  \cdot  \bm{ \sigma }  \right) \, .$
The identity operator is denoted by $\id$.
The $\Unit{n}$ components $n_\ell$ constitute decomposition coefficients.
The KD quasiprobability consists of coefficients
in a more general decomposition.

Let $\Sys$ denote a discrete quantum system
associated with a Hilbert space $\mathcal{H}$.
Let $\Set{ \ket{ f } }$ and
$\Set{  \ket{a} }$ denote orthonormal bases for $\mathcal{H}$.
Let $\Oper \in \mathcal{B} ( \mathcal{H} )$ denote a bounded operator
defined on $\mathcal{H}$.
Consider operating on each side of $\Oper$
with a resolution of unity:
\begin{align}
   \Oper  & =  \id  \Oper  \id
   =  \left(  \sum_a  \ketbra{ a }{ a }  \right)  \Oper
       \left(  \sum_f  \ketbra{ f }{ f }  \right) \\
   & \label{eq:DecomposeHelp1}
   =  \sum_{a, f }  \ketbra{ a }{ f }  \;  \langle a | \Oper | f \rangle \, .
\end{align}
Suppose that every element of $\Set{ \ket{a} }$ has a nonzero overlap
with every element of $\Set{ \ket{f} }$:
\begin{align}
   \label{eq:OverlapCond}
   \langle f | a \rangle  \neq  0  \qquad \forall a, f \, .
\end{align}
Each term in Eq.~\eqref{eq:DecomposeHelp1}
can be multiplied and divided by the inner product:
\begin{align}
   \label{eq:StateDecomp}
   \Oper  =  \sum_{a , f }
   \frac{ \ketbra{ a }{ f } }{  \langle f | a \rangle }  \;
   \langle f | a \rangle   \langle a | \Oper | f \rangle \, .
\end{align}

Under condition~\eqref{eq:OverlapCond},
$\Set{ \frac{ \ketbra{ a }{ f } }{  \langle f | a \rangle } }$
forms an orthonormal basis for $\mathcal{B} ( \mathcal{ H } ) \, .$
[The orthonormality is with respect to
the Hilbert-Schmidt inner product.
Let $\Oper_1 ,  \Oper_2  \in  \mathcal{B} ( \Hil )$.
The operators have the Hilbert-Schmidt inner product
$( \Oper_1 ,  \,  \Oper_2  )  =  \Tr ( \Oper_1^\dag \Oper_2 )$.]
The KD quasiprobability
$\langle f | a \rangle   \langle a | \Oper | f \rangle$
consists of the decomposition coefficients.

Condition~\eqref{eq:OverlapCond} is usually assumed to hold~\cite{Lundeen_11_Direct,Lundeen_12_Procedure,Thekkadath_16_Direct}.
In~\cite{Lundeen_11_Direct,Lundeen_12_Procedure}, for example,
$\Set{ \ketbra{ a }{ a } }$ and $\Set{ \ketbra{ f }{ f } }$ manifest as
the position and momentum eigenbases
$\Set{ \ket{ x } }$ and $\Set{ \ket{ p } }$.
Let $\ket{ \psi }$ denote a pure state.
Let $\psi(x)$ and $\tilde{\psi} (p)$ represent $\ket{ \psi }$
relative to the position and momentum eigenbases.
The KD quasiprobability for
$\rho = \ketbra{ \psi }{ \psi }$ has the form
\begin{align}
   \label{eq:KD_ex}
   \OurKD{ \ketbra{\psi}{\psi} }^\1 ( p, x )
   & = \langle x | p \rangle  \langle p | \psi \rangle \langle \psi | x \rangle \\
   & =  \frac{ e^{ - i x p / \hbar } }{ \sqrt{ 2 \pi \hbar } }  \;
   \tilde{\psi} ( p )  \,  \psi^* ( x )  \, .
\end{align}
The OTOC motivates a relaxation of condition~\eqref{eq:OverlapCond}
(Sec.~\ref{section:TA_Coeffs}).
[Though assumed
in the operator decomposition~\eqref{eq:StateDecomp},
and assumed often in the literature,
condition~\eqref{eq:OverlapCond} need not hold
in arbitrary KD-quasiprobability arguments.]

\subsubsection{Properties of the KD quasiprobability}
\label{section:KDProps}

The KD quasiprobability shares some, but not all,
of its properties with other quasiprobabilities.
The notation below is defined as it has been
throughout Sec.~\ref{section:Intro_to_KD}.

\begin{property}   \label{prop:Complex}
The KD quasiprobability $\OurKD{\Oper}^\1 ( a, f )$ maps
$\mathcal{B} ( \mathcal{H} )   \times  \Set{ a }  \times  \Set{ f }$
to $\mathbb{C} \, .$
The domain is a composition of the set
$\mathcal{B} ( \mathcal{H} )$ of bounded operators
and two sets of real numbers.
The range is the set $\mathbb{C}$ of complex numbers,
not necessarily the set $\mathbb{R}$ of real numbers.
\end{property}

The Wigner function assumes only real values.
Only by dipping below zero can the Wigner function
deviate from classical probabilistic behavior.
The KD distribution's negativity has the following physical significance:
Imagine projectively measuring two (commuting) observables,
$\A$ and $\B$, simultaneously.
The measurement has some probability $p(a ; b)$
of yielding the values $a$ and $b$.
Now, suppose that $\A$ does not commute with $\B$.
No joint probability distribution $p(a ; b)$ exists.
Infinitely precise values cannot be ascribed
to noncommuting observables simultaneously.
Negative quasiprobability values are not observed directly:
Observable phenomena are modeled by
averages over quasiprobability values.
Negative values are visible only on scales
smaller than the physical coarse-graining scale.
But negativity causes observable effects,
visible in sequential measurements.
Example effects include anomalous weak values~\cite{Aharonov_88_How,Kofman_12_Nonperturbative,Dressel_14_Understanding,Pusey_14_Anomalous,Dressel_15_Weak,Waegell_16_Confined}
and violations of Leggett-Garg inequalities~\cite{Leggett_85_Quantum,Emary_14_LGI}.

Unlike the Wigner function, the KD distribution can assume nonreal values.
Consider measuring two noncommuting observables sequentially.
How much does the first measurement
affect the second measurement's outcome?
This disturbance is encoded in
the KD distribution's imaginary component~\cite{Hofmann_12_Complex,Dressel_12_Significance,Hofmann_14_Derivation,Hofmann_14_Sequential}.

\begin{property}
\label{prop:SumToProb}
Summing $\OurKD{\rho}^\1 ( a, f )$ over $a$
yields a probability distribution.
So does summing $\OurKD{\rho}^\1 ( a, f )$ over $f$.
\end{property} \noindent
Consider substituting $\Oper = \rho$ into Eq.~\eqref{eq:KD_rho}.
Summing over $a$ yields $\langle f | \rho | f \rangle$.
This inner product equals a probability, by Born's Rule.

\begin{property}
\label{prop:Discrete}
The KD quasiprobability is defined as in Eq.~\eqref{eq:KD_rho}
regardless of whether $\Set{ a }$ and $\Set{ f }$ are discrete.
\end{property} \noindent
The KD distribution and the Wigner function
were defined originally for continuous systems.
Discretizing the Wigner function
is less straightforward~\cite{Ferrie_11_Quasi,Delfosse_15_Wigner}.

\begin{property}
\label{prop:Bayes}
The KD quasiprobability obeys an analog of Bayes' Theorem,
Eq.~\eqref{eq:CondQuasi}.
\end{property}

Bayes' Theorem governs the conditional probability $p(f | i)$
that an event $f$ will occur, given that an event $i$ has occurred.
$p(f | i)$ is expressed in terms of
the conditional probability $p(i | f )$
and the absolute probabilities $p(i)$ and $p(f)$:
\begin{align}
   \label{eq:BayesThm}
   p( f | i )  =  \frac{ p ( i | f )  \:  p ( f ) }{ p ( i ) }  \, .
\end{align}

Equation~\eqref{eq:BayesThm} can be expressed in terms of
jointly conditional distributions.
Let $p ( a | i , f )$ denote the probability that an event $a$ will occur,
given that an event $i$ occurred
and that $f$ occurred subsequently.
$p( a, f | i )$ is defined similarly.
What is the joint probability $p( i, f, a )$ that $i$, $f$, and $a$ will occur?
We can construct two expressions:
\begin{align}
   \label{eq:ToBayes}
   p( i , f , a )  =  p ( a | i, f ) \,  p( i , f )
   =  p ( a , f | i )  \,  p ( i )  \, .
\end{align}
The joint probability $p ( i , f )$ equals $p ( f | i )  \,  p ( i )$.
This $p( i )$ cancels with the $p ( i )$ on
the right-hand side of Eq.~\eqref{eq:ToBayes}.
Solving for $p( a | i, f )$ yields
Bayes' Theorem for jointly conditional probabilities,
\begin{align}
   \label{eq:BayesThm2}
   p( a | i, f )  =  \frac{ p ( a, f | i ) }{ p ( f | i ) } \, .
\end{align}

Equation~\eqref{eq:CondQuasi} echoes Eq.~\eqref{eq:BayesThm2}.
The KD quasiprobability's Bayesian behavior~\cite{Hofmann_14_Derivation,Bamber_14_Observing}
has been applied to quantum state tomography~\cite{Lundeen_11_Direct,Lundeen_12_Procedure,Hofmann_14_Sequential,Salvail_13_Full,Malik_14_Direct,Howland_14_Compressive,Mirhosseini_14_Compressive}
and to quantum foundations~\cite{Hofmann_12_Complex}.

Having reviewed the KD quasiprobability,
we approach the extended KD quasiprobability behind the OTOC.
We begin by concretizing our set-up,
then reviewing the OTOC.

\subsection{Set-up}
\label{section:SetUp}

This section concerns the set-up and notation
used throughout the rest of this paper.
Our framework is motivated by the OTOC,
which describes quantum many-body systems.
Examples include black holes~ \cite{Shenker_Stanford_14_BHs_and_butterfly,Kitaev_15_Simple},
the Sachdev-Ye-Kitaev model~\cite{Sachdev_93_Gapless,Kitaev_15_Simple},
other holographic systems~\cite{Maldacena_98_AdSCFT,Witten_98_AdSCFT,Gubser_98_AdSCFT}
and spin chains.
We consider a system $\Sys$ associated with
a Hilbert space $\Hil$ of dimensionality $\Dim$.
The system evolves under a Hamiltonian $H$
that might be nonintegrable or integrable.
$H$ generates the time-evolution operator $U  :=  e^{ - i H t } \, .$

We will have to sum or integrate over spectra.
For concreteness, we sum, supposing that $\Hil$ is discrete.
A spin-chain example, discussed next, motivates our choice.
Our sums can be replaced with integrals
unless, e.g., we evoke spin chains explicitly.

We will often illustrate with a one-dimensional (1D) chain
of spin-$\frac{1}{2}$ degrees of freedom.
Figure~\ref{fig:Spin_chain} illustrates the chain,
simulated numerically in Sec.~\ref{section:Numerics}.
Let $\Sites$ denote the number of spins.
This system's $\Hil$ has dimensionality $\Dim  =  2^\Sites$.

%
%
\begin{figure}[h]
\centering
\includegraphics[width=.35\textwidth]{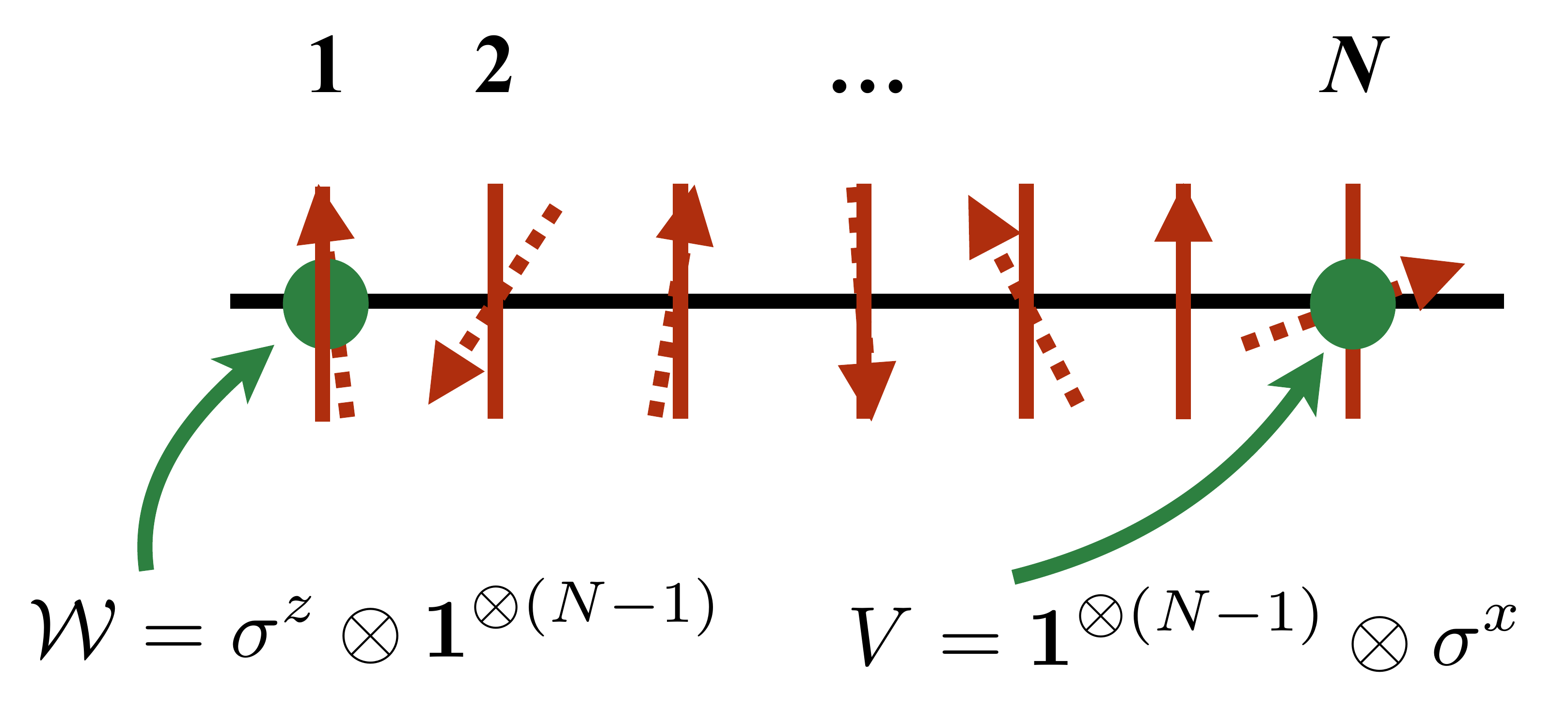}
\caption{\caphead{Spin-chain example:}
A spin chain exemplifies the quantum many-body systems
characterized by the out-of-time-ordered correlator (OTOC).
We illustrate with a one-dimensional chain
of $\Sites$ spin-$\frac{1}{2}$ degrees of freedom. 
The vertical red bars mark the sites.
The dotted red arrows illustrate how spins can point in arbitrary directions.
The OTOC is defined in terms of
local unitary or Hermitian operators $\W$ and $V$.
Example operators include single-qubit Paulis $\sigma^x$ and $\sigma^z$
that act nontrivially on opposite sides of the chain.}
\label{fig:Spin_chain}
\end{figure}

We will often suppose that $\Sys$ occupies, or is initialized to, a state
\begin{align}
   \label{eq:Rho}
   \rho  =  \sum_j  p_j  \ketbra{j}{j}   \in  \mathcal{D} (\Hil)  \, .
\end{align}
The set of density operators defined on $\Hil$
is denoted by $\mathcal{D} ( \Hil )$, as in Sec.~\ref{section:Intro_to_KD}.
Orthonormal eigenstates are indexed by $j$;
eigenvalues are denoted by $p_j$.
Much literature focuses on
temperature-$T$ thermal states $e^{ - H / T } / Z$.
(The partition function $Z$ normalizes the state.)
We leave the form of $\rho$ general,
as in~\cite{YungerHalpern_17_Jarzynski}.

The OTOC is defined in terms of local operators $\W$ and $V$.
In the literature, $\W$ and $V$ are assumed to be unitary and/or Hermitian.
Unitarity suffices for deriving
the results in~\cite{YungerHalpern_17_Jarzynski},
as does Hermiticity.
Unitarity and Hermiticity are assumed there,
and here, for convenience.\footnote{
Measurements of $\W$ and $V$ are discussed
in~\cite{YungerHalpern_17_Jarzynski} and here.
Hermitian operators $\GW$ and $\GV$ generate $\W$ and $V$.
If $\W$ and $V$ are not Hermitian,
$\GW$ and $\GV$ are measured instead of $\W$ and $V$.}
In our spin-chain example,
the operators manifest as one-qubit Paulis
that act nontrivially on opposite sides of the chain, e.g.,
$\W  =  \sigma^z \otimes \id^{ \otimes ( \Sites - 1 ) }$,
and $V  =  \id^{ \otimes ( \Sites - 1 ) } \otimes \sigma^x$.
In the Heisenberg Picture, $\W$ evolves as
$\W(t)  :=  U^\dag  \W  U \, .$

The operators eigendecompose as
\begin{align}
   \W  =  \sum_{w_\ell,  \DegenW_{w_\ell} }
   w_\ell  \ketbra{ w_\ell,  \DegenW_{w_\ell} }{ w_\ell,  \DegenW_{w_\ell} }
\end{align}
and
\begin{align}
   V  =  \sum_{ v_\ell,  \DegenV_{v_\ell} }
   v_\ell  \ketbra{ v_\ell,  \DegenV_{v_\ell} }{ v_\ell,  \DegenV_{v_\ell} } \, .
\end{align}
The eigenvalues are denoted by $w_\ell$ and $v_\ell$.
The degeneracy parameters are denoted by
$\DegenW_{ w_\ell }$ and $\DegenV_{v_\ell}$.
Recall that $\W$ and $V$ are local.
In our example, $\W$ acts nontrivially on just
one of $\Sites \gg 1$ qubits.
Hence $\W$ and $V$ are
exponentially degenerate in $\Sites$.
The degeneracy parameters can be measured:
Some nondegenerate Hermitian operator $\NondegW$
has eigenvalues in a one-to-one correspondence with
the $\DegenW_{ w_\ell }$'s.
A measurement of $\W$ and $\NondegW$
outputs a tuple $(w_\ell, \DegenW_{ w_\ell } )$.
We refer to such a measurement as ``a $\NondegW$ measurement,''
for conciseness.
Analogous statements concern $V$ and a Hermitian operator $\NondegV$.
Section~\ref{section:ProjTrick} introduces a trick
that frees us from bothering with degeneracies.

\subsection{The out-of-time-ordered correlator}
\label{section:OTOC_review}

Given two unitary operators $\mathcal{W}$ and $V$, the
out-of-time-ordered correlator is defined as
\begin{align}
   \label{eq:OTOC_Def}
   F(t) := \langle \mathcal{W}^\dagger(t) V^\dagger \mathcal{W}(t) V \rangle
   \equiv \text{Tr} \LParen  \rho \mathcal{W}^\dagger(t) V^\dagger
   \mathcal{W}(t) V  \RParen  \, .
\end{align}
This object reflects the degree of noncommutativity of $V$ and the Heisenberg operator $\mathcal{W}(t)$. More precisely, the OTOC appears in the expectation value of the squared magnitude of the commutator $[\mathcal{W}(t),V]$,
\begin{align}
C(t) := \langle [\mathcal{W}(t),V]^\dagger [\mathcal{W}(t),V] \rangle
= 2 - 2 \Re \LParen  F(t)  \RParen  \, .
\end{align}
Even if $\mathcal{W}$ and $V$ commute, the Heisenberg operator $\mathcal{W}(t)$ generically does not commute with $V$ at sufficiently late times.

An analogous definition involves Hermitian $\mathcal{W}$ and $V$.
The commutator's square magnitude becomes
\begin{align}
C(t) = - \langle [\mathcal{W}(t),V]^2\rangle.
\end{align}
This squared commutator involves TOC (time-ordered-correlator) and OTOC terms. The TOC terms take the forms
$\langle V \mathcal{W}(t) \mathcal{W}(t) V \rangle$ and
$\langle \mathcal{W}(t) V V \mathcal{W}(t) \rangle$.
[Technically, $\langle V \mathcal{W}(t) \mathcal{W}(t) V \rangle$
is time-ordered. $\langle \mathcal{W}(t) V V \mathcal{W}(t) \rangle$
behaves similarly.]

The basic physical process reflected by the OTOC is the spread of Heisenberg operators with time. Imagine starting with a simple $\mathcal{W}$, e.g., an operator acting nontrivially on just one spin in a many-spin system. Time-evolving yields $\mathcal{W}(t)$. The operator has grown if $\mathcal{W}(t)$ acts nontrivially on more spins than $\W$ does.
The operator $V$ functions as a probe for testing whether the action of $\mathcal{W}(t)$ has spread to the spin on which $V$ acts nontrivially.

Suppose $\mathcal{W}$ and $V$ are unitary and commute.
At early times, $\mathcal{W}(t)$ and $V$ approximately commute.
Hence $F(t) \approx 1$, and $C(t) \approx 0$.
Depending on the dynamics, at later times, $\mathcal{W}(t)$ may significantly fail to commute with $V$.
In a chaotic quantum system, $\W(t)$ and $V$ generically do not commute at late times, for most choices of $\W$ and $V$.

The analogous statement for Hermitian $\mathcal{W}$ and $V$ is that $F(t)$ approximately equals the TOC terms at early times. At late times, depending on the dynamics, the commutator can grow large. The time required for the TOC terms to approach their equilibrium values is called the \emph{dissipation time} $t_{\text{d}}$. This time parallels the time required for a system to reach local thermal equilibrium. The time scale on which the commutator grows to be order-one is called the \emph{scrambling time} $t_*$. The scrambling time parallels the time over which a drop of ink spreads across a container of water.

Why consider the commutator's square modulus?
The simpler object $\langle [\mathcal{W}(t),V]\rangle$
often vanishes at late times,
due to cancellations between states
in the expectation value. Physically, the vanishing of $\langle [\mathcal{W}(t),V]\rangle$ signifies that perturbing the system with $V$ does not significantly change the expectation value of $\mathcal{W}(t)$. This physics is expected for a chaotic system, which effectively loses its memory of its initial conditions. In contrast, $C(t)$ is the expectation value of a positive operator (the magnitude-squared commutator). The cancellations that zero out $\langle [\mathcal{W}(t),V]\rangle$ cannot zero out $\expval{ | [ \W(t) , V ] |^2 }$.

Mathematically, the diagonal elements of
the matrix that represents $[\mathcal{W}(t),V]$
relative to the energy eigenbasis
can be small.
$\expval{ [\mathcal{W}(t),V] }$, evaluated on a thermal state,
would be small.
Yet the matrix's off-diagonal elements can boost
the operator's Frobenius norm,
$\sqrt{  \Tr \left( | [\mathcal{W}(t),V] |^2 \right)  }$,
which reflects the size of $C(t)$.

We can gain intuition about the manifestation of chaos in $F(t)$
from a simple quantum system
that has a chaotic semiclassical limit.
Let $\mathcal{W} = q$ and $\mathcal{V} = p$
for some position $q$ and momentum $p$:
\begin{align}
C(t) = - \langle [q(t),p]^2 \rangle \sim \hbar^2 e^{2 \Lyap t}  \, .
\end{align}
This $\Lyap$ is a classical Lyapunov exponent. The final expression follows from the Correspondence Principle:
Commutators are replaced with $i \hbar$ times the corresponding Poisson bracket. The Poisson bracket of $q(t)$ with $p$ equals the derivative of the final position with respect to the initial position. This derivative reflects the butterfly effect in classical chaos, i.e., sensitivity to initial conditions. The growth of $C(t)$, and the deviation of $F(t)$ from the TOC terms, provide a quantum generalization of the butterfly effect.

Within this simple quantum system, the analog of the dissipation time may be regarded as $t_{\text{d}} \sim \Lyap^{-1}$. The analog of the scrambling time is $t_* \sim \Lyap^{-1} \ln \frac{\Omega}{\hbar}$. The $\Omega$ denotes some measure of the accessible phase-space volume. Suppose that the phase space is large in units of $\hbar$. The scrambling time is much longer than the dissipation time: $t_*  \gg  t_{\rm d}$.
Such a parametric separation between the time scales
characterizes the systems that interest us most.

In more general chaotic systems, the value of $t_*$ depends on whether the interactions are geometrically local and on $\mathcal{W}$ and $V$. Consider, as an example, a spin chain
 governed by a local Hamiltonian.
Suppose that $\mathcal{W}$ and $V$ are local operators
that act nontrivially on spins separated by a distance $\ell$. The scrambling time is generically proportional to $\ell$. For this class of local models,
$\ell/t_*$ defines a velocity $v_{\rm B}$ called the \emph{butterfly velocity}. Roughly, the butterfly velocity reflects how quickly initially local Heisenberg operators grow in space.

Consider a system in which $\td$ is separated parametrically from $t_*$.
The rate of change of $F(t)$
[rather, a regulated variation on $F(t)$]
was shown to obey a nontrivial bound.
Parameterize the OTOC as
$F(t) \sim \text{TOC} - \epsilon  \, e^{\Lyap t}$.
The parameter $\epsilon \ll 1$ encodes the separation of scales.
The exponent $\Lyap$ obeys $\Lyap \leq 2 \pi \kB T$ in thermal equilibrium at temperature $T$~\cite{Maldacena_15_Bound}. $\kB$ denotes Boltzmann's constant. Black holes in the AdS/CFT duality saturate this bound, exhibiting maximal chaos~\cite{Shenker_Stanford_14_BHs_and_butterfly,Kitaev_15_Simple}.

More generally, $\Lyap$ and $v_{\rm B}$ control the operators' growth and the spread of chaos. The OTOC has thus attracted attention for a variety of reasons, including (but not limited to) the possibilities of nontrivial bounds on quantum dynamics, a new probe of quantum chaos, and a signature of black holes in AdS/CFT.

\subsection{Introducing the quasiprobability
behind the OTOC}
\label{section:Review_OTOC_quasiprob}

$F(t)$ was shown, in~\cite{YungerHalpern_17_Jarzynski},
to equal a moment of a summed quasiprobability.
We review this result, established in four steps:
A quantum probability amplitude $A_\rho$
is reviewed in Sec.~\ref{section:Review_A} .
Amplitudes are combined to form the quasiprobability $\OurKD{\rho}$
in Sec.~\ref{section:Review_OTOC_quasiprob_sub}.
Summing $\OurKD{\rho}( . )$ values,
with constraints, yields a complex distribution
$P (W, W')$ in Sec.~\ref{section:Intro_PWWPrime}.
Differentiating $P(W, W')$ yields the OTOC.
$\OurKD{\rho}$ can be inferred experimentally
from a weak-measurement scheme
and from interference.
We review these schemes in Sec.~\ref{section:Intro_weak_meas}.

A third quasiprobability is introduced in Sec.~\ref{section:ProjTrick},
the \emph{coarse-grained quasiprobability} $\SumKD{\rho}$.
$\SumKD{\rho}$ follows from summing values of $\OurKD{\rho}$.
$\SumKD{\rho}$ has a more concise description than $\OurKD{\rho}$.
Also, measuring $\SumKD{\rho}$ requires fewer resources
(e.g., trials)
than measuring $\OurKD{\rho}$.
Hence Sections~\ref{section:Measuring}-\ref{section:Brownian}
will spotlight $\SumKD{\rho}$.
$\OurKD{\rho}$ returns to prominence in
the proofs of Sec.~\ref{section:Theory} and in
opportunities detailed in Sec.~\ref{section:Outlook}.
Different distributions suit different investigations.
Hence the presentation of three distributions in this thorough study:
$\OurKD{\rho}$, $\SumKD{\rho}$, and $P(W, W')$.

\subsubsection{Quantum probability amplitude $A_\rho$}
\label{section:Review_A}

The OTOC quasiprobability $\OurKD{\rho}$
is defined in terms of probability amplitudes $A_\rho$.
The $A_\rho$'s are defined in terms of the following process, $\ProtocolA$:
\begin{enumerate}[(1)]
   \item   Prepare $\rho$.
   \item   Measure the $\rho$ eigenbasis, $\Set{ \ketbra{j}{j} }$.
   \item   Evolve $\Sys$ forward in time under $U$.
   \item   Measure $\NondegW$.
   \item   Evolve $\Sys$ backward under $U^\dag$.
   \item   Measure $\NondegV$.
   \item   Evolve $\Sys$ forward under $U$.
   \item   Measure $\NondegW$.
\end{enumerate}
Suppose that the measurements yield the outcomes
$j$,  $(w_1, \DegenW_{w_1})$,  $(v_1, \DegenV_{v_1})$,
and  $(w_2, \DegenW_{w_2})$.
Figure~\ref{fig:Protocoll_Trial1} illustrates this process.
The process corresponds to the probability amplitude\footnote{
We order the arguments of $A_\rho$ differently
than in~\cite{YungerHalpern_17_Jarzynski}.
Our ordering here parallels
our later ordering of the quasiprobability's argument.
Weak-measurement experiments motivate
the quasiprobability arguments' ordering.
This motivation is detailed in Footnote~\ref{footnote:Order}.}
\begin{align}
   \label{eq:Amp}
   & A_\rho( j  ;  w_1,  \DegenW_{w_1}  ;  v_1, \DegenV_{v_1}  ;
                   w_2, \DegenW_{w_2}  )
   :=  \langle w_2, \DegenW_{w_2} | U | v_1, \DegenV_{v_1} \rangle
   \nonumber \\ & \qquad \times
   \langle  v_1, \DegenV_{v_1}  |  U^\dag  |  w_1,  \DegenW_{w_1}   \rangle
   \langle  w_1,  \DegenW_{w_1}   |  U  |  j  \rangle
   \sqrt{ p_j } \, .
\end{align}

%
%
\begin{figure}[h]
\centering
\begin{subfigure}{0.49\textwidth}
\centering
\includegraphics[width=.9\textwidth]{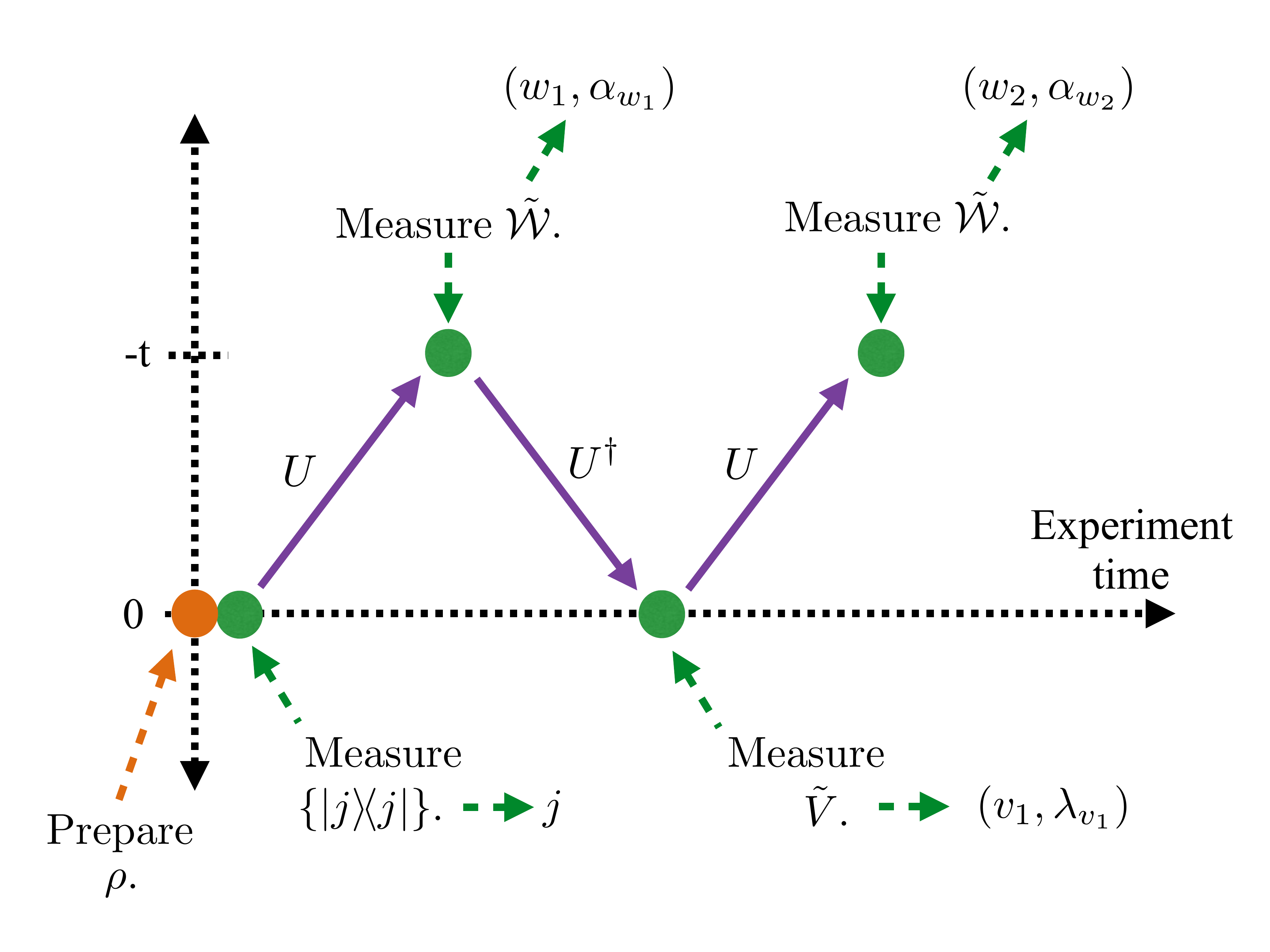}
\caption{}
\label{fig:Protocoll_Trial1}
\end{subfigure}
\begin{subfigure}{.49\textwidth}
\centering
\includegraphics[width=.9\textwidth]{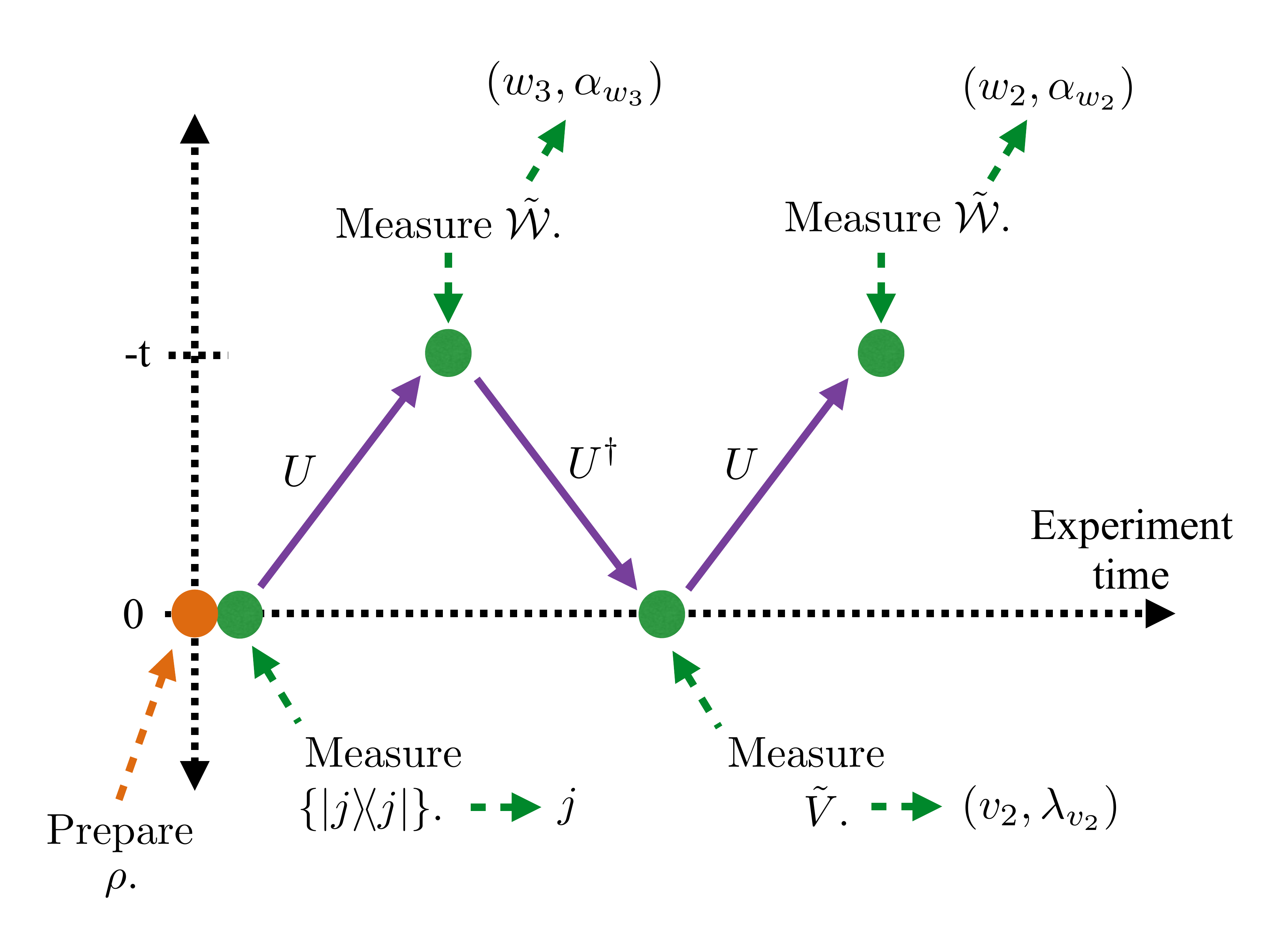}
\caption{}
\label{fig:Protocoll_Trial2}
\end{subfigure}
\caption{\caphead{Quantum processes
described by the probability amplitudes $\Amp_\rho$
in the out-of-time-ordered correlator (OTOC):}
These figures, and parts of this caption, appear in~\cite{YungerHalpern_17_Jarzynski}.
The OTOC quasiprobability $\OurKD{\rho}$ results from summing products
$A_\rho^*( . ) A_\rho( . )$.
Each $A_\rho( . )$ denotes a probability amplitude [Eq.~\eqref{eq:Amp}],
so each product resembles a probability.
But the amplitudes' arguments differ---the
amplitudes correspond to different quantum processes---because
the OTOC operators $\W(t)$ and $V$
fail to commute, typically.
Figure~\ref{fig:Protocoll_Trial1} illustrates
the process described by the $A_\rho( . )$;
and Fig.~\ref{fig:Protocoll_Trial2}, the process described by the $A_\rho^*( . )$.
Time, as measured by a laboratory clock, increases from left to right.
Each process begins with the preparation of
the state $\rho = \sum_j p_j \ketbra{j}{j}$
and a measurement of the state's eigenbasis.
Three evolutions ($U$, $U^\dag$, and $U$) then alternate with
three measurements of observables
($\NondegW$, $\NondegV$, and $\NondegW$).
Figures~\ref{fig:Protocoll_Trial1} and~\ref{fig:Protocoll_Trial2}
are used to define $\OurKD{\rho}$,
rather than showing protocols for measuring $\OurKD{\rho}$.}
\label{fig:Protocoll}
\end{figure}

We do not advocate for performing $\ProtocolA$ in any experiment.
$\ProtocolA$ is used to define $A_\rho$
and to interpret $A_\rho$ physically.
Instances of $A_\rho$ are combined into $\OurKD{\rho}$.
A weak-measurement protocol can be used
to measure $\OurKD{\rho}$ experimentally.
An interference protocol can be used to measure
$A_\rho$ (and so $\OurKD{\rho}$) experimentally.

\subsubsection{The fine-grained OTOC quasiprobability $\OurKD{\rho}$}
\label{section:Review_OTOC_quasiprob_sub}

The quasiprobability's definition is constructed as follows.
Consider a realization of $\ProtocolA$ that yields the outcomes
$j$, $( w_3, \DegenW_{w_3} )$, $(  v_2, \DegenV_{v_2} )$, and
$( w_2 , \DegenW_{w_2} )$.
 Figure~\ref{fig:Protocoll_Trial2} illustrates this realization.
The initial and final measurements yield
the same outcomes as in the~\eqref{eq:Amp} realization.
We multiply the complex conjugate of
the second realization's amplitude
by the first realization's probability amplitude.
Then, we sum over $j$ and $(w_1, \DegenW_{w_1})$:\footnote{
Familiarity with tensors might incline one to sum
over the $(w_2,  \DegenW_{w_2})$
shared by the trajectories.
But we are not invoking tensors.
More importantly, summing over $(w_2,  \DegenW_{w_2})$
introduces a $\delta_{ v_1 v_2 }  \delta_{ \DegenV_{v_1} \DegenV_{v_2} }$
that eliminates one $( v_\ell,  \DegenV_{v_\ell} )$ degree of freedom.
The resulting quasiprobability would not ``lie behind'' the OTOC.
One could, rather than summing over $( w_1,  \DegenW_{w_1} )$,
sum over $(w_3,  \DegenW_{w_3})$.
Either way, one sums over one trajectory's first $\NondegW$ outcome.
We sum over $(w_1,  \DegenW_{w_1})$ to maintain consistency with~\cite{YungerHalpern_17_Jarzynski}.}$^,$\footnote{   \label{footnote:Order}
In~\cite{YungerHalpern_17_Jarzynski},
the left-hand side's arguments are ordered differently
and are condensed into the shorthand
$(w, v,  \DegenW_w, \DegenV_v)$.
Experiments motivate our reordering:
Consider inferring $\OurKD{\rho} ( a , b , c , d )$ from
experimental measurements.
In each trial, one (loosely speaking) weakly measures
$a$, then $b$, then $c$;
and then measures $d$ strongly.
As the measurements are ordered, so are the arguments.}
\begin{align}
   \label{eq:TADef}
   & \OurKD{\rho} ( v_1,  \DegenV_{v_1} ;  w_2,  \DegenW_{w_2} ;
   v_2,  \DegenV_{v_2}  ;  w_3,  \DegenW_{w_3}  )
   \nonumber  \\ &
   :=  \sum_{j , (w_1, \DegenW_{w_1} ) }
   A_{\rho}^* ( j  ;  w_3,  \DegenW_{w_3}  ;  v_2, \DegenV_{v_2}  ;
                   w_2, \DegenW_{w_2} )
  \nonumber  \\ & \qquad \qquad \qquad \times
  A_{\rho} ( j  ;  w_1,  \DegenW_{w_1}  ;  v_1, \DegenV_{v_1}  ;
                   w_2, \DegenW_{w_2} ) \, .
\end{align}

Equation~\eqref{eq:TADef} resembles a probability
but differs due to the noncommutation of $\W(t)$ and $V$.
We illustrate this relationship in two ways.

Consider a 1D quantum system, e.g., a particle on a line.
We represent the system's state with
a wave function $\psi(x)$.
The probability density at point $x$ equals $\psi^* (x)  \,  \psi(x)$.
The $A^*_\rho  \,  A_\rho$ in Eq.~\eqref{eq:TADef} echoes $\psi^* \psi$.
But the argument of the $\psi^*$ equals
the argument of the $\psi$.
The argument of the $A^*_\rho$ differs from
the argument of the $A_\rho$,
because $\W(t)$ and $V$ fail to commute.

Substituting into Eq.~\eqref{eq:TADef} from Eq.~\eqref{eq:Amp} yields
\begin{align}
   \label{eq:TAForm}
   & \OurKD{\rho} ( v_1,  \DegenV_{v_1} ;  w_2,  \DegenW_{w_2} ;
   v_2,  \DegenV_{v_2}  ;  w_3,  \DegenW_{w_3} )
   \nonumber \\ &
   =  \langle  w_3 , \DegenW_{w_3}  |  U  |
                   v_2, \DegenV_{v_2}  \rangle
   \langle  v_2, \DegenV_{v_2}  |  U^\dag  |
               w_2,  \DegenW_{w_2}  \rangle
   \nonumber \\ & \qquad \times
   \langle  w_2,  \DegenW_{w_2}  |  U  |  v_1,  \DegenV_{v_1}  \rangle
   \langle  v_1,  \DegenV_{v_1}  |  \rho  U^\dag  |  w_3 , \DegenW_{w_3}  \rangle \, .
\end{align}
A simple example illustrates how
$\OurKD{\rho}$ nearly equals a probability.
Suppose that an eigenbasis of $\rho$ coincides with
$\Set{ \ketbra{ v_\ell,  \DegenV_{v_\ell} }{ v_\ell,  \DegenV_{v_\ell} } }$
or  with  $\Set{ U^\dag  \ketbra{ w_\ell , \DegenW_{w_\ell} }{
                        w_\ell , \DegenW_{w_\ell} } U }$.
Suppose, for example, that
\begin{align}
   \label{eq:WRho}
   \rho =  \rho_{V}  :=
   \sum_{ v_\ell , \DegenV_{v_\ell} }
   p_{ v_\ell , \DegenV_{v_\ell} }
   \ketbra{ v_\ell , \DegenV_{v_\ell} }{ v_\ell , \DegenV_{v_\ell} }  \, .
\end{align}
One such $\rho$ is the infinite-temperature Gibbs state
$\id / \Dim$.
Another example is easier to prepare:
Suppose that $\Sys$ consists of $\Sites$ spins
and that $V = \sigma^x_\Sites$.
One $\rho_V$ equals a product of $\Sites$ $\sigma^x$ eigenstates.
Let $( v_2 ,  \DegenV_{v_2} )  =  ( v_1,  \DegenV_{v_1} )$.
[An analogous argument follows from
$( w_3 , \DegenW_{w_3} ) = ( w_2, \DegenW_{w_2} )$.]
Equation~\eqref{eq:TAForm} reduces to
\begin{align}
   \label{eq:Reduce_to_p}
   & | \langle  w_2,  \DegenW_{w_2}  |  U  |
        v_1,  \DegenV_{v_1}  \rangle |^2  \,
   |  \langle  w_3,  \DegenW_{w_3}  |  U  |
       v_1,  \DegenV_{v_1}  \rangle |^2  \,
   p_{ v_1 , \DegenV_{v_1} } \, .
\end{align}
Each square modulus equals a conditional probability.
$p_{ v_1,  \DegenV_{v_1} }$ equals the probability that,
if $\rho$ is measured with respect to
$\Set{ \ketbra{ v_\ell , \DegenV_{v_\ell} }{
                        v_\ell , \DegenV_{v_\ell} } }$,
outcome $( v_1,  \DegenV_{v_1} )$ obtains.

In this simple case,
certain quasiprobability values
equal probability values---the
quasiprobability values that satisfy
$( v_2,   \DegenV_{v_2} )  =  ( v_1,  \DegenV_{v_1} )$
or $( w_3 ,  \DegenW_{w_3} )  =  ( w_2,  \DegenW_{w_2} )$.
When both conditions are violated, typically,
the quasiprobability value does not
equal a probability value.
Hence not all the OTOC quasiprobability's values
reduce to probability values.
Just as a quasiprobability lies behind the OTOC,
quasiprobabilities lie behind
time-ordered correlators (TOCs).
Every value of a TOC quasiprobability
reduces to a probability value
in the same simple case (when $\rho$ equals, e.g., a $V$ eigenstate)
(Sec.~\ref{section:OTOC_TOC}).

\subsubsection{Complex distribution $P(W, W')$}
\label{section:Intro_PWWPrime}

$\OurKD{\rho}$ is summed, in~\cite{YungerHalpern_17_Jarzynski},
to form a complex distribution $P(W, W')$.
Let $W :=  w_3^*  v_2^*$ and $W'  :=  w_2  v_1$ denote
random variables calculable from measurement outcomes.
If $\W$ and $V$ are Paulis,
$(W, W')$ can equal $(1, 1), (1, -1), (-1, 1),$ or $(-1, -1)$.

$W$ and $W'$ serve, in the Jarzynski-like equality~\eqref{eq:JarzLike},
analogously to thermodynamic work $W_\th$ in Jarzynski's equality.
$W_\th$ is a random variable,
inferable from experiments,
that fluctuates from trial to trial.
So are $W$ and $W'$.
One infers a value of $W_\th$
by performing measurements and processing the outcomes.
The \emph{two-point measurement scheme} (TPMS)
illustrates such protocols most famously.
The TPMS has been used to derive
quantum fluctuation relations~\cite{Tasaki00}.
One prepares the system in a thermal state,
measures the Hamiltonian, $H_i$, projectively;
disconnects the system from the bath;
tunes the Hamiltonian to $H_f$;
and measures $H_f$ projectively.
Let $E_i$ and $E_f$ denote the measurement outcomes.
The work invested in the Hamiltonian tuning
is defined as $W_\th  :=  E_f  -  E_i$.
Similarly, to infer $W$ and $W'$,
one can measure $\W$ and $V$
as in Sec.~\ref{section:Intro_weak_meas},
then multiply the outcomes.

Consider fixing the value of $(W, W')$.
For example, let  $(W, W')  =  (1, -1)$.
Consider the octuples
$( v_1,  \DegenV_{v_1} ;  w_2,  \DegenW_{w_2}  ;
    v_2,  \DegenV_{v_2}   ;  w_3,  \DegenW_{w_3} )$
that satisfy the constraints $W = w_3^*  v_2^*$
and $W'  =  w_2  v_1$.
Each octuple corresponds to
a quasiprobability value $\OurKD{\rho} ( . )$.
Summing these quasiprobability values yields
\begin{align}
   \label{eq:PWWPrime}
   & P(W, W')  :=  \sum_{ \substack{
   ( v_1,  \DegenV_{v_1} ) , ( w_2,  \DegenW_{w_2} ) ,
   ( v_2,  \DegenV_{v_2} ) , ( w_3,  \DegenW_{w_3} ) } }
   \\  \nonumber  &
   \OurKD{\rho} ( v_1,  \DegenV_{v_1} ;  w_2,  \DegenW_{w_2} ;
   v_2,  \DegenV_{v_2}  ;  w_3,  \DegenW_{w_3} )  \:
   \delta_{W ( w_3^*  v_2^* ) }    \delta_{ W' (w_2  v_1) } \, .
\end{align}
The Kronecker delta is represented by $\delta_{ab}$.
$P(W, W')$ functions analogously to
the probability distribution, in the fluctuation-relation paper~\cite{Jarzynski_97_Nonequilibrium},
over values of thermodynamic work.

The OTOC equals a moment of $P(W, W')$ [Eq.~\eqref{eq:JarzLike}],
which equals a constrained sum over $\OurKD{\rho}$~\cite{YungerHalpern_17_Jarzynski}.
Hence our labeling of $\OurKD{\rho}$ as a
``quasiprobability behind the OTOC.''
Equation~\eqref{eq:PWWPrime} expresses
the useful, difficult-to-measure $F(t)$
in terms of a characteristic function of a (summed) quasiprobability,
as Jarzynski~\cite{Jarzynski_97_Nonequilibrium} expresses
a useful, difficult-to-measure
free-energy difference $\Delta F$
in terms of a characteristic function of a probability.
Quasiprobabilities reflect nonclassicality (contextuality)
as probabilities do not;
so, too, does $F(t)$ reflect nonclassicality (noncommutation)
as $\Delta F$ does not.

The definition of $P$ involves arbitrariness:
The measurable random variables, and $P$,
may be defined differently.
Alternative definitions, introduced in Sec.~\ref{section:HigherOTOCs},
extend more robustly to
OTOCs that encode more time reversals.
All possible definitions share two properties:
(i) The arguments $W$, etc. denote random variables
inferable from measurement outcomes.
(ii) $P$ results from summing $\OurKD{\rho}( . )$ values
subject to constraints $\delta_{ab}$.

$P(W, W')$ resembles a work distribution
constructed by Solinas and Gasparinetti (S\&G)~\cite{Jordan_chat,Solinas_chat}.
They study fluctuation-relation contexts,
rather than the OTOC.
S\&G propose a definition for
the work performed on a quantum system~\cite{Solinas_15_Full,Solinas_16_Probing}.
The system is coupled weakly to detectors
at a protocol's start and end.
The couplings are represented by constraints
like $\delta_{W ( w_3^* v_2^*) }$ and $\delta_{W' ( w_2 v_1 ) }$.
Suppose that the detectors measure the system's Hamiltonian.
Subtracting the measurements' outcomes
yields the work performed during the protocol.
The distribution over possible work values
is a quasiprobability.
Their quasiprobability is a Husimi $Q$-function,
whereas the OTOC quasiprobability is a KD distribution~\cite{Solinas_16_Probing}.
Related frameworks appear in~\cite{Alonso_16_Thermodynamics,Miller_16_Time,Elouard_17_Role}.
The relationship between those thermodynamics frameworks
and our thermodynamically motivated OTOC framework
merits exploration.

\subsubsection{Weak-measurement and interference schemes
for inferring $\OurKD{\rho}$}
\label{section:Intro_weak_meas}

$\OurKD{\rho}$ can be inferred from weak measurements
and from interference, as shown in~\cite{YungerHalpern_17_Jarzynski}.
Section~\ref{section:MeasSumFromOTOC} shows
how to infer a coarse-graining of $\OurKD{\rho}$
from other OTOC-measurement schemes (e.g.,~\cite{Swingle_16_Measuring}).
We focus mostly on the weak-measurement scheme here.
The scheme is simplified
in Sec.~\ref{section:Measuring}.
First, we briefly review the interference scheme.

The interference scheme in~\cite{YungerHalpern_17_Jarzynski}
differs from other interference schemes for measuring $F(t)$~\cite{Swingle_16_Measuring,Yao_16_Interferometric,Bohrdt_16_Scrambling}:
From the~\cite{YungerHalpern_17_Jarzynski} interference scheme,
one can infer not only $F(t)$, but also $\OurKD{\rho}$.
Time need not be inverted ($H$ need not be negated) in any trial.
The scheme is detailed in Appendix~B of~\cite{YungerHalpern_17_Jarzynski}.
The system is coupled to an ancilla prepared in a superposition
$\frac{1}{ \sqrt{2} } \, ( \ket{0}  +  \ket{1} )$.
A unitary, conditioned on the ancilla, rotates the system's state.
The ancilla and system are measured projectively.
From many trials' measurement data,
one infers $\langle a | \U | b \rangle$,
wherein $\U = U$ or $U^\dag$ and
$a, b =  ( w_\ell, \DegenW_{w_\ell} ), ( v_m , \DegenV_{v_m} )$.
These inner products are multiplied together
to form $\OurKD{\rho}$ [Eq.~\eqref{eq:TAForm}].
If $\rho$ shares neither the $\NondegV$
nor the $\NondegW(t)$ eigenbasis,
quantum-state tomography is needed to infer
$\langle v_1 ,  \DegenV_{v_1}  |  \rho  U^\dag  |
   w_3,  \DegenW_{w_3}  \rangle$.

The weak-measurement scheme
is introduced in Sec. II B 3 of~\cite{YungerHalpern_17_Jarzynski}.
A simple case, in which $\rho = \id / \Dim$,
is detailed in Appendix~A of~\cite{YungerHalpern_17_Jarzynski}.
Recent weak measurements~\cite{Bollen_10_Direct,Lundeen_11_Direct,Lundeen_12_Procedure,Bamber_14_Observing,Mirhosseini_14_Compressive,White_16_Preserving,Piacentini_16_Measuring,Suzuki_16_Observation,Thekkadath_16_Direct},
some used to infer KD distributions,
inspired our weak $\OurKD{\rho}$-measurement proposal.
We review weak measurements,
a Kraus-operator model for measurements,
and the $\OurKD{\rho}$-measurement scheme.

\emph{Review of weak measurements:}
Measurements can alter quantum systems' states.
A weak measurement barely disturbs the measured system's state.
In exchange, the measurement
provides little information about the system.
Yet one can infer much by
performing many trials and processing the outcome statistics.

Extreme disturbances result from strong measurements~\cite{NielsenC10}.
The measured system's state collapses onto a subspace.
For example, let $\rho$ denote the initial state.
Let $\A  =  \sum_a  a  \ketbra{a}{a}$ denote
the measured observable's eigendecomposition.
A strong measurement has a probability $\langle a | \rho | a \rangle$
of projecting $\rho$ onto $\ket{a}$.

One can implement a measurement with an ancilla.
Let $X  =  \sum_x  x \ketbra{x}{x}$ denote an ancilla observable.
One correlates $\A$ with $X$ via an interaction unitary.
Von Neumann modeled such unitaries with
$V_\inter  :=  e^{ - i  \tilde{g} \,  \A \otimes X }$~\cite{vonNeumann_32_Mathematische,Dressel_15_Weak}.
The parameter $\tilde{g}$ signifies the interaction strength.\footnote{
$\A$ and $X$ are dimensionless: To form them,
we multiply dimensionful observables by
natural scales of the subsystems.
These scales are incorporated into $\tilde{g}$.}
An ancilla observable---say,
$Y  =  \sum_y y \ketbra{y}{y}$---is measured strongly.

The greater the $\tilde{g}$, the stronger
the correlation between $\A$ and $Y$.
$\A$ is measured strongly if
it is correlated with $Y$ maximally,
if a one-to-one mapping interrelates
the $y$'s and the $a$'s.
Suppose that the $Y$ measurement yields $y$.
We say that an $\A$ measurement has yielded
some outcome $a_y$.

Suppose that $\tilde{g}$ is small.
$\A$ is correlated imperfectly with $Y$.
The $Y$-measurement outcome, $y$,
provides incomplete information about $\A$.
The value most reasonably attributable to $\A$ remains $a_y$.
But a subsequent measurement of $\A$
would not necessarily yield $a_y$.
In exchange for forfeiting information about $\A$,
we barely disturb the system's initial state.
We can learn more about $\A$ by measuring $\A$ weakly
in each of many trials, then processing measurement statistics.

\emph{Kraus-operator model for measurement:}
Kraus operators~\cite{NielsenC10} model
the system-of-interest evolution
induced by a weak measurement.
Let us choose the following form for $\A$.
Let $V  =  \sum_{ v_\ell,  \DegenV_{v_\ell} }  v_\ell
\ketbra{ v_\ell,  \DegenV_{v_\ell} }{ v_\ell,  \DegenV_{v_\ell} }
=  \sum_{ v_\ell }  v_\ell  \,  \ProjV{v_\ell}$
denote an observable of the system.
$\ProjV{v_\ell}$ projects onto the $v_\ell$ eigenspace.
Let $\A  =  \ketbra{ v_\ell,  \DegenV_{v_\ell} }{ v_\ell,  \DegenV_{v_\ell} }$.
Let $\rho$ denote the system's initial state,
and let $\ket{ D }$ denote the detector's initial state.

Suppose that the $Y$ measurement yields $y$.
The system's state evolves under the Kraus operator
\begin{align}
    \label{eq:Kraus_form00}
    M_y  & =  \langle y | V_\inter | D \rangle  \\
    & =  \langle y |
    \exp \left( - i \tilde{g}
                    \ketbra{ v_\ell,  \DegenV_{v_\ell} }{ v_\ell,  \DegenV_{v_\ell} }
                    \otimes  X  \right)
   | D \rangle      \\
    & = \label{eq:Kraus_form0}
    \langle y | D \rangle  \,  \id
    \nonumber \\ & \quad
    +  \langle y |
        \left( e^{ - i \tilde{g} X }  -  \id  \right)
        | D \rangle  \,
    \ketbra{ v_\ell,  \DegenV_{v_\ell} }{ v_\ell,  \DegenV_{v_\ell} }  \,
\end{align}
as $\rho  \mapsto
\frac{ M_y  \rho  M_y^\dag }{
         \Tr \left(  M_y \rho  M_y^\dag  \right) }  \, .$
The third equation follows from Taylor-expanding the exponential,
then replacing the projector's square with the projector.\footnote{
\label{footnote:MeasPauli}
Suppose that each detector observable
(each of $X$ and $Y$) has
at least as many eigenvalues as $V$.
For example, let $Y$ represent a pointer's position
and $X$ represent the momentum.
Each $X$ eigenstate can be coupled to
one $V$ eigenstate.
$\A$ will equal $V$, and
$V_\inter$ will have the form $e^{ - i \tilde{g} V \otimes X }$.
Such a coupling makes efficient use of the detector:
Every possible final pointer position $y$ correlates with
some $(v_\ell ,  \DegenV_{v_\ell} )$.
Different $\ketbra{ v_\ell,  \DegenV_{v_\ell} }{ v_\ell,  \DegenV_{v_\ell} }$'s
need not couple to different detectors.
Since a weak measurement of $V$ provides information about
one $( v_\ell,  \DegenV_{v_\ell} )$
as well as a weak measurement of
$\ketbra{ v_\ell,  \DegenV_{v_\ell} }{ v_\ell,  \DegenV_{v_\ell} }$ does,
we will sometimes call a weak measurement of
$\ketbra{ v_\ell,  \DegenV_{v_\ell} }{ v_\ell,  \DegenV_{v_\ell} }$
``a weak measurement of $V$,'' for conciseness.

The efficient detector use
trades off against mathematical simplicity,
if $\A$ is not a projector:
Eq.~\eqref{eq:Kraus_form00} fails to simplify to
Eq.~\eqref{eq:Kraus_form0}.
Rather, $V_\inter$ should be approximated to
some order in $\tilde{g}$.
The approximation is (i) first-order
if a KD quasiprobability is being inferred
and (ii) third-order
if the OTOC quasiprobability is being inferred.

If $\A$ is a projector,
Eq.~\eqref{eq:Kraus_form00} simplifies to
Eq.~\eqref{eq:Kraus_form0} even if
$\A$ is degenerate,
e.g., $\A  =  \ProjV{v_\ell}$.
Such an $\A$ assignment will prove natural
in Sec.~\ref{section:Measuring}:
Weak measurements of eigenstates
$\ketbra{ v_\ell,  \DegenV_{v_\ell} }{ v_\ell,  \DegenV_{v_\ell} }$
are replaced with less-resource-consuming
weak measurements of $\ProjV{v_\ell}$'s.

Experimentalists might prefer measuring Pauli operators $\sigma^\alpha$
(for $\alpha = x, y, z$)
to measuring projectors $\Pi$ explicitly.
Measuring Paulis suffices, as
the eigenvalues of $\sigma^\alpha$ map,
bijectively and injectively, onto
the eigenvalues of $\Pi$
(Sec.~\ref{section:Measuring}).
Paulis square to the identity, rather than to themselves:
$\left( \sigma^\alpha \right)^2  =  \id$.
Hence Eq.~\eqref{eq:Kraus_form0} becomes
\begin{align}
   \langle y | \cos \left(  \tilde{g} X \right)  |  D \rangle  \,  \id
   - i \langle y |
     \sin \left( \tilde{g} X \right)
     | D \rangle  \,  \sigma^\alpha  \,   .
\end{align}
}
We reparameterize the coefficients as
$\langle y | D \rangle  \equiv  p(y)  \,  e^{ i \phi }$,
wherein $p(y)  :=  |  \langle y | D \rangle |$, and
$\langle y |  \left( e^{ - i \tilde{g} X }  -  \id  \right) | D \rangle
\equiv   g(y)  \,   e^{ i \phi }$.
An unimportant global phase is denoted by $e^{ i \phi }$.
We remove this global phase from the Kraus operator,
redefining $M_y$ as
\begin{align}
   \label{eq:Kraus_form}
    M_y  & =    \sqrt{ p( y ) }  \:  \id
    +  g( y )  \,  \ketbra{ v_\ell,  \DegenV_{v_\ell} }{
                                 v_\ell,  \DegenV_{v_\ell} }  \, .
\end{align}

The coefficients have the following significances.
Suppose that the ancilla did not couple to the system.
The $Y$ measurement would have
a baseline probability $p( y )$ of outputting $y$.
The dimensionless parameter $g( y )  \in  \mathbb{C}$
is derived from $\tilde{g}$.
We can roughly interpret $M_y$ statistically:
In any given trial, the coupling has a probability $p( y )$
of failing to disturb the system
(of evolving $\rho$ under $\id$)
and a probability $| g( y ) |^2$ of projecting $\rho$ onto
$\ketbra{ v_\ell,  \DegenV_{v_\ell} }{
                                 v_\ell,  \DegenV_{v_\ell} }$.

\emph{Weak-measurement scheme for inferring
the OTOC quasiprobability $\OurKD{\rho}$:}
Weak measurements have been used to measure
KD quasiprobabilities~\cite{Bollen_10_Direct,Lundeen_11_Direct,Lundeen_12_Procedure,Bamber_14_Observing,Mirhosseini_14_Compressive,White_16_Preserving,Suzuki_16_Observation,Thekkadath_16_Direct}.
These experiments' techniques can be applied
to infer $\OurKD{\rho}$ and, from $\OurKD{\rho}$, the OTOC.
Our scheme involves
three sequential weak measurements per trial
(if $\rho$ is arbitrary) or two
[if $\rho$ shares the $\NondegV$ or the $\NondegW(t)$ eigenbasis,
e.g., if $\rho = \id / \Dim$].
The weak measurements alternate with time evolutions
and precede a strong measurement.

We review the general and simple-case protocols.
A projection trick, introduced in Sec.~\ref{section:ProjTrick},
reduces exponentially the number of trials required
to infer about $\OurKD{\rho}$ and $F(t)$.
The weak-measurement and interference protocols
are analyzed in Sec.~\ref{section:Advantages}.
A circuit for implementing the weak-measurement scheme
appears in Sec.~\ref{section:Circuit}.

Suppose that $\rho$ does not share the $\NondegV$
or the $\NondegW(t)$ eigenbasis.
One implements the following protocol, $\Protocol$:
\begin{enumerate}[(1)]
   \item  Prepare $\rho$.
   \item  Measure $\NondegV$ weakly.
             (Couple the system's $\NondegV$ weakly to
             some observable $X$ of a clean ancilla.
             Measure $X$ strongly.)
   \item  Evolve the system forward in time under $U$.
   \item  Measure $\NondegW$ weakly.
             (Couple the system's $\NondegW$ weakly to
             some observable $Y$ of a clean ancilla.
             Measure $Y$ strongly.)
   \item  Evolve the system backward under $U^\dag$.
   \item  Measure $\NondegV$ weakly.
             (Couple the system's $\NondegV$ weakly to
             some observable $Z$ of a clean ancilla.
             Measure $Z$ strongly.)
   \item  Evolve the system forward under $U$.
   \item  Measure $\NondegW$ strongly.
\end{enumerate}
$X$, $Y$, and $Z$ do not necessarily denote Pauli operators.
Each trial yields three ancilla eigenvalues ($x$, $y$, and $z$)
and one $\NondegW$ eigenvalue ($w_3, \DegenW_{w_3}$).
One implements $\Protocol$ many times.
From the measurement statistics, one infers the probability
$\mathscr{P}_\weak ( x ; y ; z ; w_3, \DegenW_{w_3} )$
that any given trial will yield the outcome quadruple
$( x ; y ; z ; w_3, \DegenW_{w_3} )$.

From this probability, one infers the quasiprobability
$\OurKD{\rho} ( v_1,  \DegenV_{v_1} ;  w_2,  \DegenW_{w_2} ;
  v_2,  \DegenV_{v_2}  ;  w_3,  \DegenW_{w_3}  )$.
The probability has the form
\begin{align}
   \label{eq:P_weak}
   & \mathscr{P}_\weak ( x ; y ; z ; w_3, \DegenW_{w_3} )
   =  \bra{ w_3,  \DegenW_{w_3} }  U
   M_z  U^\dag  M_y  U  M_x
   \nonumber \\ & \qquad \qquad \qquad \quad \times
   \rho  M_x^\dag  U^\dag  M_y^\dag  U  M_z^\dag  U^\dag
   \ket{ w_3,  \DegenW_{w_3} } \, .
\end{align}
We integrate over $x$, $y$, and $z$,
to take advantage of all measurement statistics.
We substitute in for the Kraus operators from Eq.~\eqref{eq:Kraus_form},
then multiply out.
The result appears in Eq.~(A7) of~\cite{YungerHalpern_17_Jarzynski}.
Two terms combine into $\propto \Im \LParen \OurKD{\rho} ( . ) \RParen$.
The other terms form independently measurable ``background'' terms.
To infer $\Re \LParen \OurKD{\rho} ( . ) \RParen$,
one performs $\Protocol$ many more times,
using different couplings
(equivalently, measuring different detector observables).
Details appear in Appendix~A of~\cite{YungerHalpern_17_Jarzynski}.

To infer the OTOC, one multiplies
each quasiprobability value
$\OurKD{\rho} ( v_1,  \DegenV_{v_1} ;  w_2,  \DegenW_{w_2}  ;
    v_2,  \DegenV_{v_2}   ;  w_3,  \DegenW_{w_3} )$
by the eigenvalue product
$v_1 w_2 v_2^* w_3^*$.
Then, one sums over the eigenvalues
and the degeneracy parameters:
\begin{align}
   \label{eq:RecoverF1}
   F(t)   & =  \sum_{ ( v_1,  \DegenV_{v_1} ) ,
   ( w_2,  \DegenW_{w_2} ) ,
   ( v_2,  \DegenV_{v_2} ) ,  ( w_3,  \DegenW_{w_3} ) }
   %
   v_1 w_2 v_2^* w_3^*
   \\  \nonumber &  \quad  \times
   \OurKD{\rho} ( v_1,  \DegenV_{v_1}  ;  w_2,  \DegenW_{w_2}  ;
   v_2,  \DegenV_{v_2}  ;  w_3,  \DegenW_{w_3} )   \, .
\end{align}
Equation~\eqref{eq:RecoverF1} follows from Eq.~\eqref{eq:JarzLike}.
Hence inferring the OTOC from the weak-measurement scheme---inspired by Jarzynski's equality---requires a few steps more than
inferring a free-energy difference $\Delta F$ from
Jarzynski's equality~\cite{Jarzynski_97_Nonequilibrium}.
Yet such quasiprobability reconstructions are performed
routinely in quantum optics.

$\W$ and $V$ are local.
Their degeneracies therefore scale with the system size.
If $\Sys$ consists of $\Sites$ spin-$\frac{1}{2}$ degrees of freedom,
$| \DegenW_{w_\ell} |,  | \DegenV_{v_\ell} |  \sim  2^\Sites$.
Exponentially many $\OurKD{\rho}( . )$ values must be inferred.
Exponentially many trials must be performed.
We sidestep this exponentiality in Sec.~\ref{section:ProjTrick}:
One measures eigenprojectors of the degenerate $\W$ and $V$,
rather than of the nondegenerate $\NondegW$ and $\NondegV$.
The one-dimensional
$\ketbra{ v_\ell,  \DegenV_{v_\ell} }{ v_\ell,  \DegenV_{v_\ell} }$
of Eq.~\eqref{eq:Kraus_form0} is replaced with $\ProjV{v_\ell}$.
From the weak measurements, one infers the coarse-grained quasiprobability
$\sum_{\rm degeneracies} \OurKD{\rho}( . )
=:  \SumKD{\rho} ( . )$.
Summing $\SumKD{\rho}( . )$ values yields the OTOC:
\begin{align}
          \label{eq:RecoverF2}
          F(t)  & =  \sum_{ v_1, w_2, v_2, w_3 }
          v_1 w_2 v_2^* w_3^*  \;
          \SumKD{\rho} ( v_1, w_2, v_2, w_3 ) \, .
\end{align}
Equation~\eqref{eq:RecoverF2} follows from
performing the sums over the degeneracy parameters
$\DegenW$ and $\DegenV$ in Eq.~\eqref{eq:RecoverF1}.

Suppose that $\rho$ shares the $\NondegV$ or the $\NondegW(t)$ eigenbasis.
The number of weak measurements reduces to two.
For example, suppose that $\rho$ is
the infinite-temperature Gibbs state $\id / \Dim$.
The protocol $\Protocol$ becomes
\begin{enumerate}[(1)]
   \item  Prepare a $\NondegW$ eigenstate
            $\ket{ w_3,  \DegenW_{w_3} }$.
   \item  Evolve the system backward under $U^\dag$.
   \item  Measure $\NondegV$ weakly.
   \item  Evolve the system forward under $U$.
   \item  Measure $\NondegW$ weakly.
   \item  Evolve the system backward under $U^\dag$.
   \item  Measure $\NondegV$ strongly.
\end{enumerate}

In many recent experiments, only one weak measurement
is performed per trial~\cite{Bollen_10_Direct,Lundeen_11_Direct,Bamber_14_Observing}.
A probability $\mathscr{P}_\weak$ must be approximated
to first order in the coupling constant $g(x)$.
Measuring $\OurKD{\rho}$ requires
two or three weak measurements per trial.
We must approximate $\mathscr{P}_\weak$ to second or third order.
The more weak measurements performed sequentially,
the more demanding the experiment.
Yet sequential weak measurements have been performed recently~\cite{Piacentini_16_Measuring,Suzuki_16_Observation,Thekkadath_16_Direct}.
The experimentalists aimed to reconstruct density matrices
and to measure non-Hermitian operators.
The OTOC measurement provides
new applications for their techniques.

\section{Experimentally measuring $\OurKD{\rho}$
and the coarse-grained $\SumKD{\rho}$}
\label{section:Measuring}

Multiple reasons motivate measurements of
the OTOC quasiprobability $\OurKD{\rho}$.
$\OurKD{\rho}$ is more fundamental than the OTOC $F(t)$,
$F(t)$ results from combining values of $\OurKD{\rho}$.
$\OurKD{\rho}$ exhibits behaviors
not immediately visible in $F(t)$,
as shown in Sections~\ref{section:Numerics} and~\ref{section:Brownian}.
$\OurKD{\rho}$ therefore holds interest in its own right.
Additionally, $\OurKD{\rho}$ suggests
new schemes for measuring the OTOC.
One measures the possible values of $\OurKD{\rho} ( . )$, then
combines the values to form $F(t)$.
Two measurement schemes are detailed in~\cite{YungerHalpern_17_Jarzynski}
and reviewed in Sec.~\ref{section:Intro_weak_meas}.
One scheme relies on weak measurements;
one, on interference.
We simplify, evaluate, and augment these schemes.

First, we introduce a ``projection trick'':
Summing over degeneracies turns one-dimensional projectors
(e.g., $\ketbra{ w_\ell,  \DegenW_{w_\ell} }{ w_\ell,  \DegenW_{w_\ell} }$)
into projectors onto degenerate eigenspaces (e.g., $\ProjW{w_\ell}$).
The \emph{coarse-grained OTOC quasiprobability} $\SumKD{\rho}$ results.
This trick decreases exponentially the number of trials required
to infer the OTOC from weak measurements.\footnote{
The summation preserves interesting properties of the quasiprobability---nonclassical negativity and nonreality, as well as intrinsic time scales.
We confirm this preservation via numerical simulation
in Sec.~\ref{section:Numerics}.}
Section~\ref{section:Advantages} concerns pros and cons of
the weak-measurement and interference schemes
for measuring $\OurKD{\rho}$ and $F(t)$.
We also compare those schemes with
alternative schemes for measuring $F(t)$.
Section~\ref{section:Circuit} illustrates a circuit for implementing
the weak-measurement scheme.
Section~\ref{section:MeasSumFromOTOC}
shows how to infer $\SumKD{\rho}$ not only from
the measurement schemes in Sec.~\ref{section:Intro_weak_meas},
but also with alternative OTOC-measurement proposals
(e.g.,~\cite{Swingle_16_Measuring})
(if the eigenvalues of $\W$ and $V$ are $\pm 1$).

\subsection{The coarse-grained OTOC quasiprobability $\SumKD{\rho}$
and a projection trick}
\label{section:ProjTrick}

$\W$ and $V$ are local.
They manifest, in our spin-chain example, as one-qubit Paulis
that nontrivially transform opposite ends of the chain.
The operators' degeneracies grows exponentially with
the system size $\Sites$:
$| \DegenW_{w_\ell} | ,  \:  | \DegenV_{v_m} |  \sim  2^\Sites$.
Hence the number of $\OurKD{\rho} ( . )$ values grows exponentially.
One must measure exponentially many numbers
to calculate $F(t)$ precisely via $\OurKD{\rho}$.
We circumvent this inconvenience
by summing over the degeneracies in $\OurKD{\rho} ( . ) $,
forming the coarse-grained quasiprobability $\SumKD{\rho} ( . )$.
$\SumKD{\rho} ( . )$ can be measured in numerical simulations,
experimentally via weak measurements,
and (if the eigenvalues of $\W$ and $V$ are $\pm 1$)
experimentally with other $F(t)$-measurement set-ups
(e.g.,~\cite{Swingle_16_Measuring}).

   The \emph{coarse-grained OTOC quasiprobability}
   results from marginalizing $\OurKD{\rho} ( . )$ over its degeneracies:
   \begin{align}
      \label{eq:Sum_KD_def}
      & \SumKD{\rho} ( v_1, w_2, v_2, w_3 )  :=
      \sum_{ \DegenV_{v_1},  \DegenW_{w_2} ,
                  \DegenV_{v_2} ,  \DegenW_{w_3} }
      \nonumber \\ & \quad
      \OurKD{\rho} ( v_1,  \DegenV_{v_1} ;  w_2,  \DegenW_{w_2} ;
                              v_2,  \DegenV_{v_2}  ;  w_3,  \DegenW_{w_3} )   \, .
   \end{align}

Equation~\eqref{eq:Sum_KD_def} reduces to a more practical form.
Consider substituting into Eq.~\eqref{eq:Sum_KD_def}
for $\OurKD{\rho} ( . )$ from Eq.~\eqref{eq:TAForm}.
The right-hand side of Eq.~\eqref{eq:TAForm} equals a trace.
Due to the trace's cyclicality,
the three rightmost factors can be shifted leftward:
\begin{align}
   & \SumKD{\rho} ( v_1, w_2, v_2, w_3 )
   =  \sum_{ \substack{  \DegenV_{v_1},  \DegenW_{w_2} ,  \\
                    \DegenV_{v_2} ,  \DegenW_{w_3} } }
   \Tr \Big( \rho U^\dag
   \ketbra{ w_3,  \DegenW_{w_3} }{ w_3,  \DegenW_{w_3} }  U
   \nonumber \\ & \times
   \ketbra{ v_2,  \DegenV_{v_2} }{  v_2,  \DegenV_{v_2} }
   U^\dag  \ketbra{ w_2,  \DegenW_{w_2} }{  w_2,  \DegenW_{w_2} } U
   \ketbra{ v_1,  \DegenV_{v_1} }{ v_1,  \DegenV_{v_1} }
   \Big) \, .
\end{align}
The sums are distributed throughout the trace:
\begin{align}
   \label{eq:Corase_help1}
   & \SumKD{\rho} ( v_1, w_2, v_2, w_3 )
   =  \Tr \Bigg( \rho
   \Bigg[  U^\dag  \sum_{ \DegenW_{w_3} }
      \ketbra{ w_3,  \DegenW_{w_3} }{ w_3,  \DegenW_{w_3} }  U  \Bigg]
   \nonumber \\ & \times
   \Bigg[  \sum_{ \DegenV_{v_2} }
      \ketbra{ v_2,  \DegenV_{v_2} }{  v_2,  \DegenV_{v_2} }  \Bigg]
   \Bigg[  U^\dag  \sum_{ \DegenW_{w_2} }
      \ketbra{ w_2,  \DegenW_{w_2} }{  w_2,  \DegenW_{w_2} }  U  \Bigg]
   \nonumber \\ & \times
   \Bigg[  \sum_{ \DegenV_{v_1} }
      \ketbra{ v_1,  \DegenV_{v_1} }{ v_1,  \DegenV_{v_1} }  \Bigg]
   \Bigg)  \, .
\end{align}
Define
 \begin{align}
      \label{eq:ProjW}
      \ProjW{w_\ell}  :=  \sum_{ \DegenW_{w_\ell} }
      \ketbra{ w_\ell,  \DegenW_{w_\ell} }{ w_\ell,  \DegenW_{w_\ell} }
   \end{align}
   as the projector onto the $w_\ell$ eigenspace of $\W$,
   \begin{align}
      \label{eq:ProjWt}
      \ProjWt{ w_\ell }  :=  U^\dag  \ProjW{ w_\ell }  U
   \end{align}
   as the projector onto the $w_\ell$ eigenspace of $\W(t)$, and
   \begin{align}
      \label{eq:ProjV}
      \ProjV{v_\ell}  :=  \sum_{ \DegenV_{v_\ell} }
      \ketbra{ v_\ell,  \DegenV_{v_\ell} }{ v_\ell,  \DegenV_{v_\ell} }
   \end{align}
   as the projector onto the $v_\ell$ eigenspace of $V$.
We substitute into Eq.~\eqref{eq:Corase_help1},
then invoke the trace's cyclicality:
\begin{align}
      \label{eq:SumKD_simple2} \boxed{
      \SumKD{\rho} ( v_1, w_2, v_2, w_3 )
      =  \Tr  \Big(  \ProjWt{ w_3 }  \ProjV{v_2}
                            \ProjWt{w_2}  \ProjV{v_1}  \rho  \Big)  }  \, .
\end{align}

Asymmetry distinguishes Eq.~\eqref{eq:SumKD_simple2}
from Born's Rule and from expectation values.
Imagine preparing $\rho$,
measuring $V$ strongly, evolving $\Sys$ forward under $U$,
measuring $\W$ strongly, evolving $\Sys$ backward under $U^\dag$,
measuring $V$ strongly, evolving $\Sys$ forward under $U$,
and measuring $\W$.
The probability of obtaining the outcomes
$v_1, w_2, v_2$, and $w_3$, in that order, is
\begin{align}
   \label{eq:BornCompare}
   \Tr \Big(  & \ProjWt{ w_3 }  \ProjV{ v_2 }   \ProjWt{ w_2 }  \ProjV{ v_1 }
              \rho  \ProjV{ v_1 }  \ProjWt{ w_2 }   \ProjV{ v_2 }  \ProjWt{ w_3 }  \Big) \, .
\end{align}
The operator $\ProjWt{ w_3 }  \ProjV{ v_2 }  \ProjWt{ w_2 }  \ProjV{ v_1 }$
conjugates $\rho$ symmetrically.
This operator multiplies $\rho$ asymmetrically in Eq.~\eqref{eq:SumKD_simple2}.
Hence $\SumKD{\rho}$ does not obviously equal a probability.

Nor does $\SumKD{\rho}$ equal an expectation value.
Expectation values have the form
$\Tr ( \rho \A )$, wherein $\A$ denotes a Hermitian operator.
The operator leftward of the $\rho$ in Eq.~\eqref{eq:SumKD_simple2}
is not Hermitian.
Hence $\SumKD{\rho}$ lacks two symmetries of
familiar quantum objects:
the symmetric conjugation in Born's Rule
and the invariance, under Hermitian conjugation,
of the observable $\A$ in an expectation value.

The right-hand side of Eq.~\eqref{eq:SumKD_simple2} can be measured
numerically and experimentally.
We present numerical measurements in Sec.~\ref{section:Numerics}.
The weak-measurement scheme follows
from Appendix~A of~\cite{YungerHalpern_17_Jarzynski},
reviewed in Sec.~\ref{section:Intro_weak_meas}:
Section~\ref{section:Intro_weak_meas} features projectors onto
one-dimensional eigenspaces, e.g.,
$\ketbra{ v_1, \DegenV_{v_1} }{ v_1, \DegenV_{v_1} }$.
Those projectors are replaced with
$\Pi$'s onto higher-dimensional eigenspaces.
Section~\ref{section:MeasSumFromOTOC} details
how $\SumKD{\rho}$ can be inferred
from alternative OTOC-measurement schemes.

\subsection{Analysis of the quasiprobability-measurement schemes and
comparison with other OTOC-measurement schemes}
\label{section:Advantages}

%
%
\begin{table*}[t]
\begin{center}
\begin{tabular}{|c|c|c|c|c|c|}
   \hline
   &  Weak-
   &  Yunger Halpern
   &  Swingle
   &  Yao
   &  Zhu
   \\
   &  measurement
   &  interferometry
   &  \emph{et al.}
   &  \emph{et al.}
   &  \emph{et al.}
   \\  \hline \hline
        Key tools
   &   Weak
   &   Interference
   &   Interference,
   &   Ramsey interfer.,
   &   Quantum
   \\
   &  measurement
   &
   &  Lochschmidt echo
   &  R\'{e}nyi-entropy meas.
   &  clock
   \\  \hline
        What's inferable
   &   (1) $F(t)$, $\OurKD{\rho}$,
   &   $F^\ParenKB(t)$,  $\OurKD{\rho}^\K$,
   &   $F(t)$
   &   Regulated
   &   $F(t)$
   \\
        from the mea-
   &   \& $\rho$ or
   &   \&  $\rho  \; \:  \forall \K$
   &
   &   correlator
   &
   \\
        surement?
   &   (2) $F(t)$ \& $\SumKD{\rho}$
   &
   &
   &   $F_\reg(t)$
   &
   \\  \hline
        Generality
   &   Arbitrary
   &   Arbitrary
   &   Arbitrary
   &   Thermal:
   &   Arbitrary
   \\
        of $\rho$
   &   $\rho \in \mathcal{D} ( \Hil )$
   &   $\rho \in \mathcal{D} ( \Hil )$
   &   $\rho \in \mathcal{D} ( \Hil )$
   &   $e^{ - H / T } / Z$
   &   $\rho  \in \mathcal{D} (\Hil)$
   \\  \hline
        Ancilla
   &  Yes
   &  Yes
   &   Yes for $\Re \LParen F(t) \RParen$,
   &   Yes
   &   Yes
   \\
      needed?
   &
   &
   &   no for $| F(t) |^2$
   &
   &
   \\  \hline
        Ancilla coup-
   &   No
   &   Yes
   &   Yes
   &   No  
   &   Yes
   \\
       ling global?
   &
   &
   &
   &
   &
   \\ \hline
         How long must
   &   1 weak
   &   Whole
   &   Whole
   &   Whole
   &   Whole
   \\
        ancilla stay
   &   measurement
   &   protocol
   &   protocol
   &   protocol
   &  protocol
   \\
       coherent?
   &
   &
   &
   &
   &
   \\ \hline
       \# time
   &  2
   &  0
   &  1
   &  0
   &  2 (implemented
   \\
      reversals
   &
   &
   &
   &
   &  via ancilla)
   \\ \hline
      \# copies of $\rho$
   &  1
   &  1
   &  1
   &  2
   &  1
   \\
      needed / trial
   &
   &
   &
   &
   &
   \\ \hline
        Signal-to-
   &   To be deter-
   &   To be deter-
   &   Constant
   &   $\sim e^{ - \Sites }$
   &   Constant
   \\
         noise ratio
   &   mined~\cite{Swingle_Resilience}
   &   mined~\cite{Swingle_Resilience}
   &   in $\Sites$
   &
   &   in $\Sites$
   \\ \hline
        Restrictions
   &   Hermitian or
   &   Unitary
   &   Unitary (extension
   &   Hermitian
   &   Unitary
   \\ on $\W$ \& $V$
   &   unitary
   &
   &   to Hermitian possible)
   &   and unitary
   &
   \\ \hline
\end{tabular}
\caption{\caphead{Comparison of our measurement schemes with alternatives:}
This paper focuses on the weak-measurement and interference schemes
for measuring the OTOC quasiprobability $\OurKD{\rho}$
or the coarse-grained quasiprobability $\SumKD{\rho}$.
From $\OurKD{\rho}$ or $\SumKD{\rho}$,
one can infer the OTOC $F(t)$.
These schemes appear in~\cite{YungerHalpern_17_Jarzynski},
are reviewed in Sec.~\ref{section:Intro_weak_meas},
and are assessed in Sec.~\ref{section:Advantages}.
We compare our schemes with
the OTOC-measurement schemes in~\cite{Swingle_16_Measuring,Yao_16_Interferometric,Zhu_16_Measurement}.
More OTOC-measurement schemes appear in~\cite{Danshita_16_Creating,Li_16_Measuring,Garttner_16_Measuring,Tsuji_17_Exact,Campisi_16_Thermodynamics,Bohrdt_16_Scrambling}.
Each row corresponds to a desirable quantity
or to a resource potentially challenging to realize experimentally.
The regulated correlator $F_\reg(t)$ [Eq.~\eqref{eq:RegOTOC_def}]
is expected to behave similarly to $F(t)$~\cite{Maldacena_15_Bound,Yao_16_Interferometric}.
$\mathcal{D} ( \Hil )$ denotes the set of density operators defined on
the Hilbert space $\Hil$.
$\rho$ denotes the initially prepared state.
Target states $\rho_\target$ are never prepared perfectly;
$\rho$ may differ from $\rho_\target$.
Experimentalists can reconstruct $\rho$ by trivially processing
data taken to infer $\OurKD{\rho}$~\cite{YungerHalpern_17_Jarzynski}
(Sec.~\ref{section:TA_retro}).
$F^\ParenKB(t)$ denotes the $\Opsb$-fold OTOC,
which encodes $\Ops  =  2 \Opsb - 1$ time reversals.
The conventional OTOC corresponds to $\Ops = 3$.
The quasiprobability behind $F^\ParenKB(t)$ is
$\OurKD{\rho}^\ParenK$ (Sec.~\ref{section:HigherOTOCs}).
$\Sites$ denotes the system size, e.g., the number of qubits.
The Swingle \emph{et al.} and Zhu \emph{et al.} schemes
have constant signal-to-noise ratios (SNRs)
in the absence of environmental decoherence.
The Yao \emph{et al.} scheme's SNR varies inverse-exponentially with
the system's entanglement entropy, $S_{\rm vN}$.
The system occupies a thermal state $e^{ - H / T } / Z$,
so $S_{\rm vN}  \sim  \log ( 2^\Sites )  =  \Sites$.}
\label{table:Compare}
\end{center}
\end{table*}

Section~\ref{section:Intro_weak_meas} reviews
two schemes for inferring $\OurKD{\rho}$:
a weak-measurement scheme and an interference scheme.
From $\OurKD{\rho}$ measurements,
one can infer the OTOC $F(t)$.
We evaluate our schemes' pros and cons.
Alternative schemes for measuring $F(t)$ have been proposed~\cite{Swingle_16_Measuring,Danshita_16_Creating,Yao_16_Interferometric,Zhu_16_Measurement,Tsuji_17_Exact,Campisi_16_Thermodynamics,Bohrdt_16_Scrambling},
and two schemes have been realized~\cite{Li_16_Measuring,Garttner_16_Measuring}.
We compare our schemes with alternatives,
as summarized in Table~\ref{table:Compare}.
For specificity, we focus on~\cite{Swingle_16_Measuring,Yao_16_Interferometric,Zhu_16_Measurement}.

The weak-measurement scheme augments
the set of techniques and platforms
with which $F(t)$ can be measured.
Alternative schemes rely on interferometry~\cite{Swingle_16_Measuring,Yao_16_Interferometric,Bohrdt_16_Scrambling}, controlled unitaries~\cite{Swingle_16_Measuring,Zhu_16_Measurement},
ultracold-atoms tools~\cite{Danshita_16_Creating,Tsuji_17_Exact,Bohrdt_16_Scrambling},
and strong two-point measurements~\cite{Campisi_16_Thermodynamics}.
Weak measurements, we have shown, belong in the OTOC-measurement toolkit.
Such weak measurements are expected
to be realizable, in the immediate future,
with superconducting qubits~\cite{White_16_Preserving,Hacohen_16_Quantum,Rundle_16_Quantum,Takita_16_Demonstration,Kelly_15_State,Heeres_16_Implementing,Riste_15_Detecting},
trapped ions~\cite{Gardiner_97_Quantum,Choudhary_13_Implementation,Lutterbach_97_Method,Debnath_16_Nature,Monz_16_Realization,Linke_16_Experimental,Linke_17_Experimental},
cavity QED~\cite{Guerlin_07_QND,Murch_13_SingleTrajectories},
ultracold atoms~\cite{Browaeys_16_Experimental},
and perhaps NMR~\cite{Xiao_06_NMR,Dawei_14_Experimental}.
Circuits for weakly measuring qubit systems
have been designed~\cite{Groen_13_Partial,Hacohen_16_Quantum}.
Initial proof-of-principle experiments might not require
direct access to the qubits:
The five superconducting qubits available from IBM,
via the cloud, might suffice~\cite{IBM_QC}.
Random two-qubit unitaries could simulate chaotic Hamiltonian evolution.

In many weak-measurement experiments,
just one weak measurement is performed per trial~\cite{Lundeen_11_Direct,Lundeen_12_Procedure,Bamber_14_Observing,Mirhosseini_14_Compressive}.
Yet two weak measurements
have recently been performed sequentially~\cite{Piacentini_16_Measuring,Suzuki_16_Observation,Thekkadath_16_Direct}.
Experimentalists aimed to ``directly measure general quantum states''~\cite{Lundeen_12_Procedure}
and to infer about non-Hermitian observable-like operators.
The OTOC motivates a new application
of recently realized sequential weak measurements.

Our schemes furnish not only the OTOC $F(t)$,
but also more information:
\begin{enumerate}[(1)]

   \item From the weak-measurement scheme in~\cite{YungerHalpern_17_Jarzynski},
   we can infer the following:
   \begin{enumerate}[(A)]
      \item  The OTOC quasiprobability $\OurKD{\rho}$.
      The quasiprobability is more fundamental than $F(t)$,
as combining $\OurKD{\rho} ( . )$ values yields $F(t)$
[Eq.~\eqref{eq:RecoverF1}].

      \item  The OTOC $F(t)$.

      \item The form $\rho$ of the state prepared.
      Suppose that we wish to evaluate $F(t)$ on
a target state $\rho_\target$.
$\rho_\target$ might be difficult to prepare, e.g., might be thermal.
The prepared state $\rho$ approximates $\rho_\target$.
Consider performing
the weak-measurement protocol $\Protocol$
with $\rho$. One infers $\OurKD{\rho}$.
Summing $\OurKD{\rho}( . )$ values yields the form of $\rho$.
We can assess the preparation's accuracy
without performing tomography independently.
Whether this assessment meets experimentalists' requirements for precision
remains to be seen.
Details appear in Sec.~\ref{section:TA_Coeffs}.
   \end{enumerate}

   \item The weak-measurement protocol $\Protocol$
            is simplified later in this section.
            Upon implementing the simplified protocol,
            we can infer the following information:
   \begin{enumerate}[(A)]
      \item The coarse-grained OTOC quasiprobability $\SumKD{\rho}$.
      Though less fundamental than the fine-grained $\OurKD{\rho}$,
      $\SumKD{\rho}$ implies the OTOC's form
      [Eq.~\eqref{eq:RecoverF2}].
      \item  The OTOC $F(t)$.
   \end{enumerate}

   \item Upon implementing the interferometry scheme in~\cite{YungerHalpern_17_Jarzynski},
   we can infer the following information:
   \begin{enumerate}[(A)]
      \item  The OTOC quasiprobability $\OurKD{\rho}$.
      \item  The OTOC $F(t)$.
      \item  The form of the state $\rho$ prepared.
      \item  All the $\Opsb$-fold OTOCs $F^\ParenKB(t)$,
      which generalize the OTOC $F(t)$.
                $F(t)$ encodes three time reversals.
                $F^\ParenKB(t)$ encodes $\Ops = 2 \Opsb - 1 = 3, 5, \ldots$
                time reversals.
                Details appear in Sec.~\ref{section:HigherOTOCs}.
      \item The quasiprobability $\OurKD{\rho}^\ParenK$
               behind $F^\ParenKB(t)$, for all $\Ops$
               (Sec.~\ref{section:HigherOTOCs}).
   \end{enumerate}

\end{enumerate}

We have delineated the information inferable from
the weak-measurement and interference schemes
for measuring $\OurKD{\rho}$ and $F(t)$.
Let us turn to other pros and cons.

The weak-measurement scheme's ancillas
need not couple to the whole system.
One measures a system weakly by coupling an ancilla to the system,
then measuring the ancilla strongly.
Our weak-measurement protocol requires
one ancilla per weak measurement.
Let us focus, for concreteness, on
an $\SumKD{\rho}$ measurement for a general $\rho$.
The protocol involves three weak measurements and so three ancillas.
Suppose that $\W$ and $V$ manifest as one-qubit Paulis
localized at opposite ends of a spin chain.
Each ancilla need interact with only one site (Fig.~\ref{fig:Circuit}).
In contrast, the ancilla in~\cite{Zhu_16_Measurement}
couples to the entire system.
So does the ancilla in our interference scheme
for measuring $\OurKD{\rho}$.
Global couplings can be engineered in some platforms,
though other platforms pose challenges.
Like our weak-measurement scheme,~\cite{Swingle_16_Measuring} and~\cite{Yao_16_Interferometric}
require only local ancilla couplings.

In the weak-measurement protocol,
each ancilla's state must remain coherent
during only one weak measurement---during
the action of one (composite) gate in a circuit.
The first ancilla may be erased, then reused in the third weak measurement.
In contrast, each ancilla in~\cite{Swingle_16_Measuring,Yao_16_Interferometric,Zhu_16_Measurement}
remains in use throughout the protocol.
The Swingle \emph{et al.} scheme
for measuring $\Re \LParen F(t) \RParen$, too,
requires an ancilla that remains coherent throughout the protocol~\cite{Swingle_16_Measuring}.
The longer an ancilla's ``active-duty'' time,
the more likely the ancilla's state is to decohere.
Like the weak-measurement sheme,
the Swingle \emph{et al.} scheme for measuring $| F (t) |^2$
requires no ancilla~\cite{Swingle_16_Measuring}.

Also in the interference scheme for measuring $\OurKD{\rho}$~\cite{YungerHalpern_17_Jarzynski},
an ancilla remains active throughout the protocol.
That protocol, however, is short: Time need not be reversed in any trial.
Each trial features exactly one $U$ or $U^\dag$, not both.
Time can be difficult to reverse in some platforms, for two reasons.
Suppose that a Hamiltonian $H$ generates a forward evolution.
A perturbation $\varepsilon$ might lead
$- (H + \varepsilon)$ to generate the reverse evolution.
Perturbations can mar
long-time measurements of $F(t)$~\cite{Zhu_16_Measurement}.
Second, systems interact with environments.
Decoherence might not be completely reversible~\cite{Swingle_16_Measuring}.
Hence the lack of a need for time reversal,
as in our interference scheme and in~\cite{Yao_16_Interferometric,Zhu_16_Measurement},
has been regarded as an advantage.

Unlike our interference scheme,
the weak-measurement scheme requires that time be reversed.
Perturbations $\varepsilon$ threaten the weak-measurement scheme
as they threaten the Swingle \emph{et al.} scheme~\cite{Swingle_16_Measuring}.
$\varepsilon$'s might threaten the weak-measurement scheme more,
because time is inverted twice in our scheme.
Time is inverted only once in~\cite{Swingle_16_Measuring}.
However, our error might be expected to have roughly the size
of the Swingle \emph{et al.} scheme's error~\cite{Swingle_Resilience}.
Furthermore, tools for mitigating
the Swingle \emph{et al.} scheme's inversion error
are being investigated~\cite{Swingle_Resilience}.
Resilience of the Swingle \emph{et al.} scheme to decoherence
has been analyzed~\cite{Swingle_16_Measuring}.
These tools may be applied to the weak-measurement scheme~\cite{Swingle_Resilience}.
Like resilience, our schemes' signal-to-noise ratios
require further study.

As noted earlier, as the system size $\Sites$ grows,
the number of trials required to infer $\OurKD{\rho}$ grows exponentially.
So does the number of ancillas required to infer $\OurKD{\rho}$:
Measuring a degeneracy parameter $\DegenW_{w_\ell}$ or $\DegenV_{v_m}$
requires a measurement of each spin.
Yet the number of trials, and the number of ancillas,
required to measure the coarse-grained $\SumKD{\rho}$
remains constant as $\Sites$ grows.
One can infer $\SumKD{\rho}$ from weak measurements
and, alternatively, from other $F(t)$-measurement schemes
(Sec.~\ref{section:MeasSumFromOTOC}).
$\SumKD{\rho}$ is less fundamental than $\OurKD{\rho}$,
as $\SumKD{\rho}$ results from coarse-graining $\OurKD{\rho}$.
$\SumKD{\rho}$, however, exhibits
nonclassicality and OTOC time scales (Sec.~\ref{section:Numerics}).
Measuring $\SumKD{\rho}$ can balance
the desire for fundamental knowledge with practicalities.

The weak-measurement scheme for inferring $\SumKD{\rho}$
can be rendered more convenient.
Section~\ref{section:ProjTrick} describes measurements
of projectors $\Pi$.
Experimentalists might prefer measuring
Pauli operators $\sigma^\alpha$.
Measuring Paulis suffices
for inferring a multiqubit system's $\SumKD{\rho}$:
The relevant $\Pi$ projects onto
an eigenspace of a $\sigma^\alpha$.
Measuring the $\sigma^\alpha$ yields $\pm 1$.
These possible outcomes map bijectively onto
the possible $\Pi$-measurement outcomes.
See Footnote~\ref{footnote:MeasPauli} for mathematics.

Our weak-measurement and interference schemes
offer the advantage of involving general operators.
$\W$ and $V$ must be Hermitian or unitary,
not necessarily one or the other.
Suppose that $\W$ and $V$ are unitary.
Hermitian operators $\GW$ and $\GV$ generate $\W$ and $V$,
as discussed in Sec.~\ref{section:SetUp}.
$\GW$ and $\GV$ may be measured in place of $\W$ and $V$.
This flexibility expands upon the measurement opportunities of, e.g.,~\cite{Swingle_16_Measuring,Yao_16_Interferometric,Zhu_16_Measurement}, 
which require unitary operators.

Our weak-measurement and interference schemes offer leeway
in choosing not only $\W$ and $V$, but also $\rho$.
The state can assume any form  $\rho \in \mathcal{D} ( \Hil )$.
In contrast, infinite-temperature Gibbs states $\rho = \id / \Dim$
were used in~\cite{Li_16_Measuring,Garttner_16_Measuring}.
Thermality of $\rho$ is assumed in~\cite{Yao_16_Interferometric}.
Commutation of $\rho$ with $V$ is assumed in~\cite{Campisi_16_Thermodynamics}.
If $\rho$ shares a $V$ eigenbasis or the $\W(t)$ eigenbasis,
e.g., if $\rho = \id / \Dim$,
our weak-measurement protocol simplifies
from requiring three sequential weak measurements
to requiring two.

\subsection{Circuit for inferring $\SumKD{\rho}$
from weak measurements}
\label{section:Circuit}

Consider a 1D chain $\Sys$ of $\Sites$ qubits.
A circuit implements the weak-measurement scheme
reviewed in Sec.~\ref{section:Intro_weak_meas}.
We exhibit a circuit for measuring $\SumKD{\rho}$.
One subcircuit implements each weak measurement.
These subcircuits result from augmenting
Fig.~1 of~\cite{Dressel_14_Implementing}.

Dressel \emph{et al.} use the \emph{partial-projection formalism},
which we review first.
We introduce notation, then review
the weak-measurement subcircuit of~\cite{Dressel_14_Implementing}.
Copies of the subcircuit are embedded into
our $\SumKD{\rho}$-measurement circuit.

\subsubsection{Partial-projection operators}
\label{section:PartialProj_main}

Partial-projection operators update a state after a measurement
that may provide incomplete information.
Suppose that $\Sys$ begins in a state $\ket{ \psi }$.
Consider performing a measurement that could output $+$ or $-$.
Let $\Pi_+$ and $\Pi_-$ denote
the projectors onto the $+$ and $-$ eigenspaces.
Parameters $p, q  \in  [0, 1]$ quantify the correlation
between the outcome and the premeasurement state.
If $\ket{ \psi }$ is a $+$ eigenstate, the measurement has
a probability $p$ of outputting $+$.
If $\ket{ \psi }$ is a $-$ eigenstate, the measurement has
a probability $q$ of outputting $-$.

Suppose that outcome $+$ obtains.
We update $\ket{ \psi }$ using the \emph{partial-projection operator}
$D_+  :=  \sqrt{p}  \;  \Pi_+
+  \sqrt{1 - q} \;  \Pi_-$:
$\ket{\psi}  \mapsto
   \frac{ D_{+} \ket{ \psi } }{ || D_+ \ket{ \psi } ||^2 } \, .$
If the measurement yields $-$,
we update $\ket{ \psi }$ with
$D_-  :=  \sqrt{1 - p}  \;  \Pi_+
+  \sqrt{q}  \;  \Pi_-$.

The measurement is strong if
$(p, q) = (0, 1)$ or $(1, 0)$.
$D_+$ and $D_-$ reduce to projectors.
The measurement collapses $\ket{ \psi }$ onto an eigenspace.
The measurement is weak if $p$ and $q$ lie close to $\frac{1}{2}$:
$D_\pm$ lies close to the normalized identity, $\frac{\id}{ \Dim }$.
Such an operator barely changes the state.
The measurement provides hardly any information.

We modeled measurements with Kraus operators $M_x$
in Sec.~\ref{section:Intro_weak_meas}.
The polar decomposition of $M_x$~\cite{Preskill_15_Ch3}
is a partial-projection operator.
Consider measuring a qubit's $\sigma^z$.
Recall that $X$ denotes a detector observable.
Suppose that, if an $X$ measurement yields $x$,
a subsequent measurement of the spin's $\sigma^z$
most likely yields $+$.
The Kraus operator $M_x  =  \sqrt{ p(x) }  \:  \id  +  g(x)  \,  \Pi_+$
updates the system's state.
$M_x$ is related to $D_+$ by
   $D_+  =  U_x  \sqrt{ M_x^\dag  M_x }$
for some unitary $U_x$.
The form of $U_x$ depends on the system-detector coupling
and on the detector-measurement outcome.

The imbalance $| p - q |$ can be tuned experimentally.
Our scheme has no need for a nonzero imbalance.
We assume that $p$ equals $q$.

\subsubsection{Notation}
\label{section:Circuit_notation}

Let $\bm{\sigma}  :=  \sigma^x  \,  \hat{ \mathbf{ x } }
+  \sigma^y  \,  \hat{ \mathbf{ y } }
+  \sigma^z \,  \hat{ \mathbf{ z } }$
denote a vector of one-qubit Pauli operators.
The $\sigma^z$ basis serves as
the computational basis in~\cite{Dressel_14_Implementing}.
We will exchange the $\sigma^z$ basis with
the $\W$ eigenbasis, or with the $V$ eigenbasis,
in each weak-measurement subcircuit.

In our spin-chain example, $\W$ and $V$ denote one-qubit Pauli operators
localized on opposite ends of the chain $\Sys$:
$\W  =  \sigma^{ \W }  \otimes  \id^{ \otimes (\Sites - 1) }$,
and $V  =  \id^{ \otimes (\Sites - 1) }  \otimes  \sigma^{ V }$.
Unit vectors $\hat{\W}, \hat{V}  \in  \mathbb{R}^3$
are chosen such that
$\sigma^{ n }  :=  \bm{\sigma}  \cdot  \hat{ \bm{n} }$,
for $n  =  \W,  V$.

The one-qubit Paulis eigendecompose as
$\sigma^{ \W }  =  \ketbra{ + \W }{ + \W }
-  \ketbra{ - \W }{ - \W }$ and
$\sigma^{ V }  =  \ketbra{ +V }{ +V }  -  \ketbra{ -V }{ -V }$.
The whole-system operators eigendecompose as
$\W  =  \ProjW{+}  -  \ProjW{-}$  and
$V  =  \ProjV{+}  -  \ProjV{-}$.
A rotation operator $R_{ n }$ maps
the $\sigma^z$ eigenstates to the $\sigma^{ n }$ eigenstates:
$R_{ n }  \ket{ +z }  =  \ket{ + n }$, and
$R_{ n }  \ket{ -z }  =  \ket{ - n }$.

We model weak $\W$ measurements
with the partial-projection operators
\begin{align}
   \label{eq:PartialProjW}
   & D_+^{ \W }  :=  \sqrt{p_{\W}}  \;  \ProjW{+}
   +  \sqrt{1 - p_{\W} }  \;  \ProjW{-}
   \;  \:  \text{and}  \\  &
   D_-^{ \W }  :=  \sqrt{ 1 - p_{\W} }  \;  \ProjW{+}
   +  \sqrt{p_{\W}}  \;  \ProjW{-}  \, .
\end{align}
The $V$ partial-projection operators
are defined analogously:
\begin{align}
   \label{eq:PartialProjV}
   & D_+^{ V }  :=  \sqrt{p_V}  \;  \ProjV{+}
   +  \sqrt{1 - p_V }  \;  \ProjV{-}
   \;  \:  \text{and}  \\  &
   D_-^{ V }  :=  \sqrt{ 1 - p_V }  \;  \ProjV{+}
   +  \sqrt{p_V}  \;  \ProjV{-}  \, .
\end{align}

\subsubsection{Weak-measurement subcircuit}
\label{section:Weak_subcircuit}

%
%
\begin{figure}[h]
\centering
\begin{subfigure}{0.5\textwidth}
\centering
\includegraphics[width=.95\textwidth]{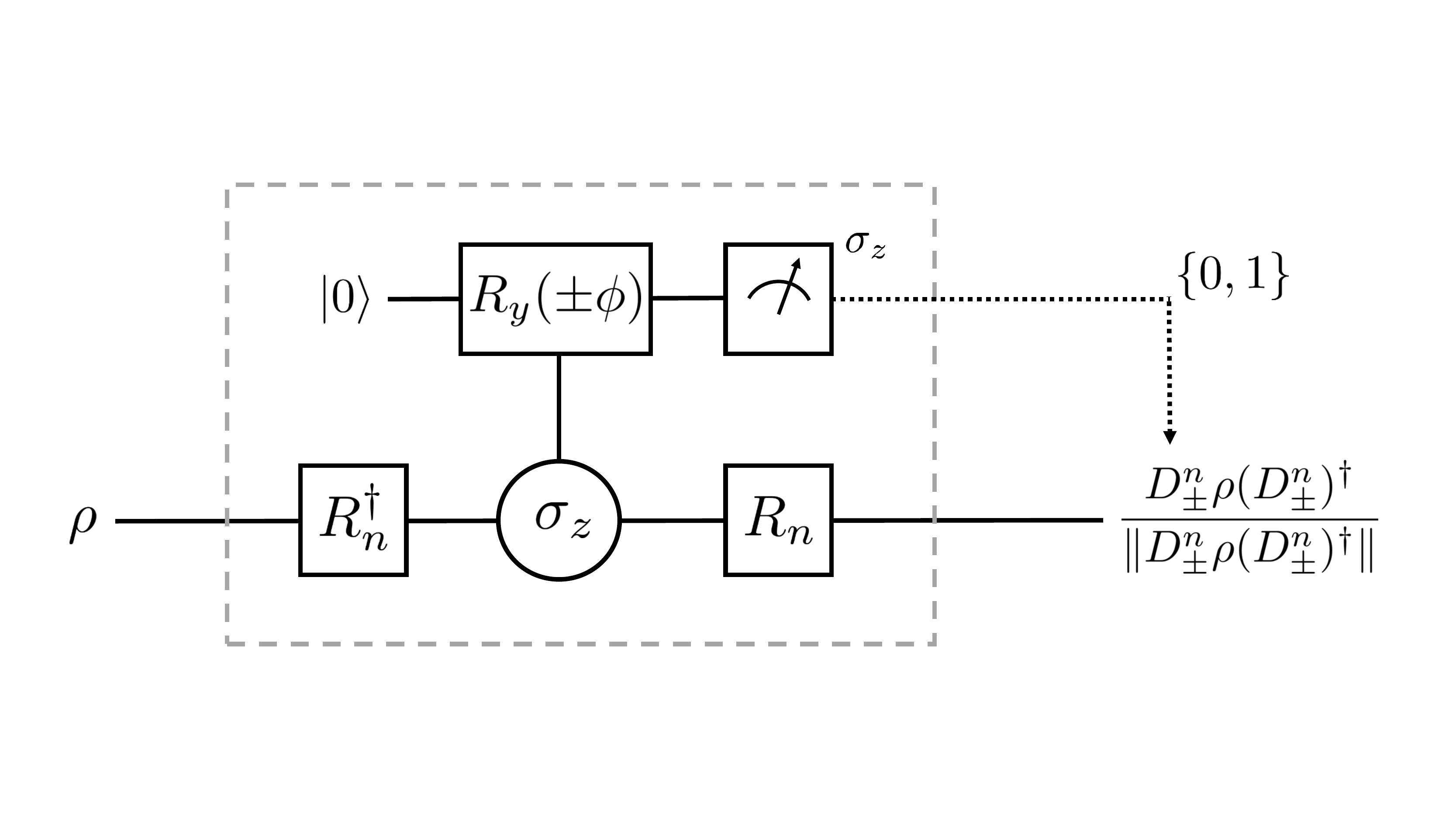}
\caption{}
\label{fig:D_subcircuit}
\end{subfigure}
\begin{subfigure}{.5\textwidth}
\centering
\includegraphics[width=.95\textwidth]{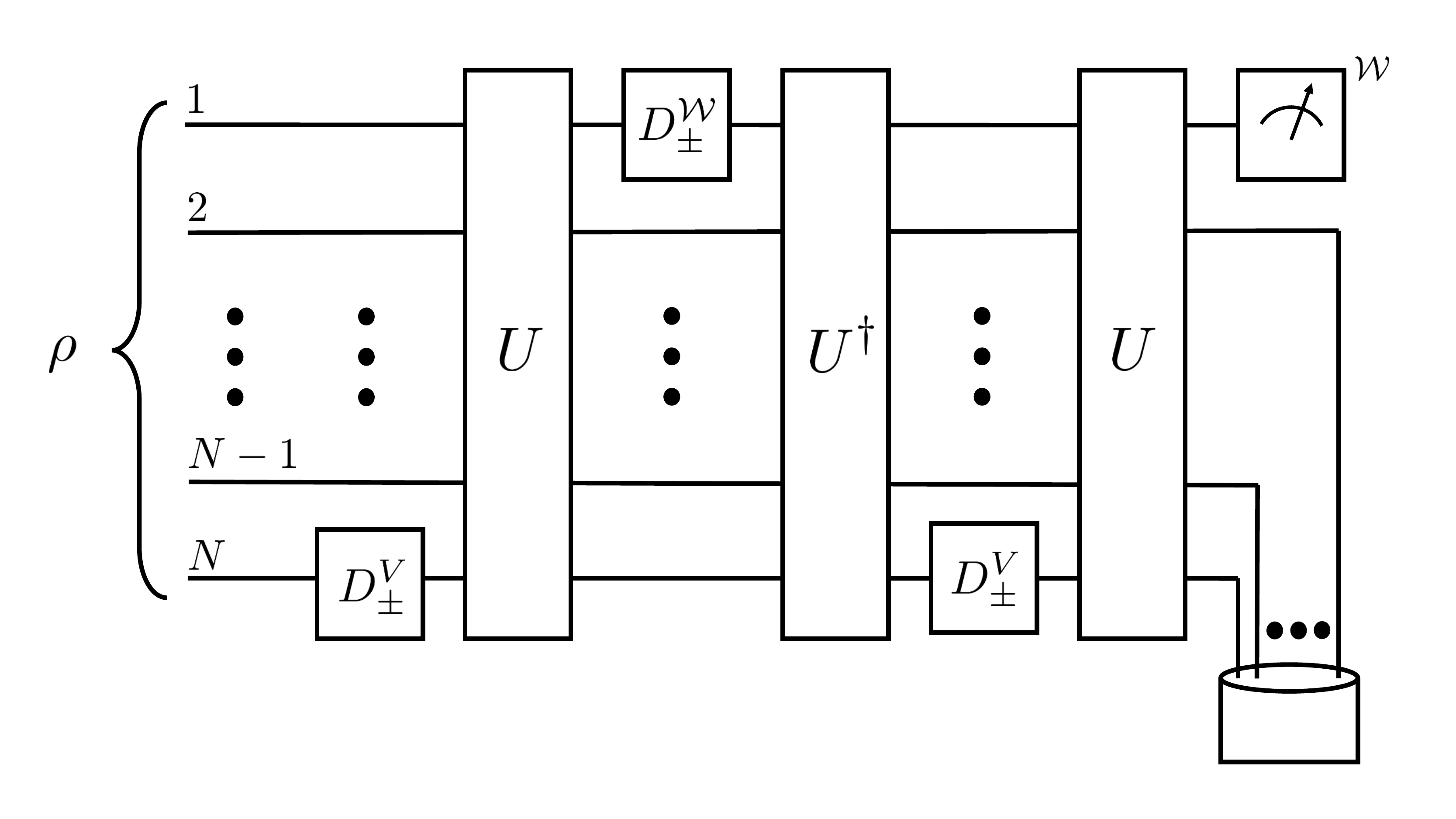}
\caption{}
\label{fig:Full_circuit}
\end{subfigure}
\caption{\caphead{Quantum circuit for inferring
the coarse-grained OTOC quasiprobability $\SumKD{\rho}$
from weak measurements:}
We consider a system of $\Sites$ qubits
prepared in a state $\rho$.
The local operators $\W = \sigma^{ \W }  \otimes
\id^{ \otimes (\Sites - 1) }$ and
$V  = \id^{ \otimes (\Sites - 1) }  \otimes
\sigma^{ V }$
manifest as one-qubit Paulis.
Weak measurements can be used to infer
the coarse-grained quasiprobability $\SumKD{\rho}$.
Combining values of $\SumKD{\rho}$ yields the OTOC $F(t)$.
Figure~\ref{fig:D_subcircuit} depicts a subcircuit
used to implement a weak measurement
of $n = \W$ or $V$.
An ancilla is prepared in a fiducial state $\ket{0}$.
A unitary $R_{ n }^\dag$ rotates
the qubit's $\sigma^{ n }$ eigenbasis
into its $\sigma^z$ eigenbasis.
$R_y ( \pm \phi )$ rotates the ancilla's state counterclockwise
about the $y$-axis through a small angle $\pm \phi$,
controlled by the system's $\sigma^z$.
The angle's smallness guarantees the measurement's weakness.
$R_{ n }$ rotates the system's $\sigma^z$ eigenbasis
back into the $\sigma^{ n }$ eigenbasis.
The ancilla's $\sigma^z$ is measured strongly.
The outcome, $+$ or $-$, dictates
which partial-projection operator $D_{\pm}^n$
updates the state.
Figure~\ref{fig:Full_circuit} shows the circuit
used to measure $\SumKD{\rho}$.
Three weak measurements, interspersed with
three time evolutions ($U$, $U^\dag$, and $U$),
precede a strong measurement.
Suppose that the initial state, $\rho$, commutes with $\W$ or $V$,
e.g., $\rho = \id / \Dim$.
Figure~\ref{fig:Full_circuit} requires only two weak measurements.
}
\label{fig:Circuit}
\end{figure}

Figure~\ref{fig:D_subcircuit} depicts a subcircuit
for measuring $n = \W$ or $V$ weakly.
To simplify notation, we relabel $p_{n}$ as $p$.
Most of the subcircuit appears in Fig.~1 of~\cite{Dressel_14_Implementing}.
We set the imbalance parameter $\epsilon$ to 0.
We sandwich Fig.~1 of~\cite{Dressel_14_Implementing}
between two one-qubit unitaries.
The sandwiching interchanges the computational basis
with the $n$ eigenbasis.

The subcircuit implements the following algorithm:
\begin{enumerate}[(1)]

   \item   Rotate the $n$ eigenbasis into the $\sigma^z$ eigenbasis,
   using $R_{ n }^\dag$.

   \item   Prepare an ancilla in a fiducial state $\ket{0}  \equiv   \ket{ +z }$.

   \item   Entangle $\Sys$ with the ancilla via a $Z$-controlled-$Y$:
   If $\Sys$ is in state $\ket{0}$, rotate the ancilla's state
   counterclockwise (CCW) through a small angle
   $\phi  \ll  \frac{\pi}{2}$ about the $y$-axis.
   Let $R_y ( \phi )$ denote the one-qubit unitary
   that implements this rotation.
   If $\Sys$ is in state $\ket{1}$, rotate the ancilla's state CCW
   through an angle $- \phi$, with $R_y ( - \phi )$.

   \item   Measure the ancilla's $\sigma^z$.
   If the measurement yields outcome $+$,
   $D_+$ updates the system's state;
   and if $-$, then $D_-$.

   \item   Rotate the $\sigma^z$ eigenbasis
   into the $n$ eigenbasis, using $R_{ n }$.

\end{enumerate} \noindent
The measurement is weak because $\phi$ is small.
Rotating through a small angle precisely
can pose challenges~\cite{White_16_Preserving}.

\subsubsection{Full circuit for weak-measurement scheme}
\label{section:Full_circuit}

Figure~\ref{fig:Full_circuit} shows the circuit
for measuring $\SumKD{\rho}$.
The full circuit contains three weak-measurement subcircuits.
Each ancilla serves in only one subcircuit.
No ancilla need remain coherent throughout the protocol,
as discussed in Sec.~\ref{section:Advantages}.
The ancilla used in the first $V$ measurement
can be recycled for the final $V$ measurement.

The circuit simplifies in a special case.
Suppose that $\rho$ shares an eigenbasis with $V$ or with $\W(t)$,
e.g., $\rho = \id / \Dim$.
Only two weak measurements are needed,
as discussed in Sec.~\ref{section:Intro_weak_meas}.

We can augment the circuit to measure $\OurKD{\rho}$,
rather than $\SumKD{\rho}$:
During each weak measurement,
every qubit will be measured.
The qubits can be measured individually:
The $\Sites$-qubit measurement can be
a product of local measurements.
Consider, for concreteness, the first weak measurement.
Measuring just qubit $\Sites$ would yield
an eigenvalue $v_1$ of $V$.
We would infer whether qubit $\Sites$ pointed upward or downward
along the $\hat{V}$ axis.
Measuring all the qubits would yield
a degeneracy parameter $\DegenV_{v_1}$.
We could define $\DegenV_{v_\ell}$ as encoding
the $\hat{V}$-components of
the other $\Sites - 1$ qubits' angular momenta.

%
%
%
\subsection{How to infer $\SumKD{\rho}$
from other OTOC-measurement schemes}
\label{section:MeasSumFromOTOC}

$F(t)$ can be inferred, we have seen, from
the quasiprobability $\OurKD{\rho}$
and from the coarse-grained $\SumKD{\rho}$.
$\SumKD{\rho}$ can be inferred from $F(t)$-measurement schemes,
we show, if the eigenvalues of $\W$ and $V$ equal $\pm 1$.
We assume, throughout this section, that they do.
The eigenvalues equal $\pm 1$ if $\W$ and $V$ are Pauli operators.

The projectors~\eqref{eq:ProjW} and~\eqref{eq:ProjV}
can be expressed as
\begin{align}
   \label{eq:ProjW2}
   \ProjW{ w_\ell }  =  \frac{1}{2} ( \id  +  w_\ell \W )
   \quad \text{and} \quad
   \ProjV{ v_\ell }  =  \frac{1}{2}  ( \id  +  v_\ell V ) \, .
\end{align}
Consider substituting from Eqs.~\eqref{eq:ProjW2}
into Eq.~\eqref{eq:SumKD_simple2}.
Multiplying out yields sixteen terms.
If $\expval{ . }  :=  \Tr (  .  \, . )$,
\begin{align}
   \label{eq:ProjTrick}
   & \SumKD{\rho} ( v_1, w_2, v_2, w_3 )
   =  \frac{1}{16}  \Big[  1  +  ( w_2 + w_3 )  \expval{ \W (t) }
   \nonumber \\ &
   +  ( v_1  +  v_2 )  \expval{ V }  +  w_2 w_3  \expval{ \W^2 (t) }
   +  v_1 v_2  \expval{ V^2 }
   \nonumber \\ &
   +  ( w_2 v_1  +  w_3 v_1  +  w_3 v_2 )  \expval{ \W (t) V }
   +  w_2  v_2  \expval{ V \W(t) }
   \nonumber \\ &
   +  w_2 w_3 v_1  \expval{ \W^2 (t) V }
   +  w_3  v_1  v_2  \expval{ \W(t)  V^2  }
   \nonumber \\ &
   +  w_2  w_3  v_2  \expval{ \W(t)  V  \W(t) }
   +  w_2  v_1  v_2  \expval{ V \W(t)  V }
   \nonumber \\ &
   +  w_2  w_3  v_1  v_2  \,  F(t)  \Big]  \, .
\end{align}
If $\W(t)$ and $V$ are unitary, they square to $\id$.
Equation~\eqref{eq:ProjTrick} simplifies to
\begin{align}
   \label{eq:ProjTrick2}
   & \SumKD{\rho} ( v_1, w_2, v_2, w_3 )
   =  \frac{1}{16}  \Big\{  ( 1  +  w_2 w_3  +  v_1 v_2 )
   \nonumber  \\ &
   +  [ w_2  +  w_3 ( 1  +  v_1 v_2 ) ]  \expval{ \W(t) }
   +  [  v_1 ( 1 + w_2 w_3 )  +  v_2 ]  \expval{ V }
   \nonumber \\ &
   +  ( w_2  v_1  +  w_3  v_1  +  w_3  v_2 )  \expval{ \W(t) V }
   +  w_2 v_2  \expval{ V \W(t) }
    \nonumber \\ &
   +  w_2  w_3 v_2  \expval{ \W (t) V \W(t) }
   +  w_2  v_1 v_2  \expval{ V \W(t) V }
   \nonumber \\ &+  w_2 w_3 v_1 v_2  \, F (t)  \Big\}  \, .
\end{align}

The first term is constant.
The next two terms are single-observable expectation values.
The next two terms are two-point correlation functions.
$\expval{ V \W(t)  V }$ and $\expval{ \W(t)  V  \W(t) }$
are time-ordered correlation functions.
$F(t)$ is the OTOC.
$F(t)$ is the most difficult to measure.
If one can measure it, one likely has the tools to infer $\SumKD{\rho}$.
One can measure every term, for example,
using the set-up in~\cite{Swingle_16_Measuring}.

\section{Numerical simulations}
\label{section:Numerics}

We now study the OTOC quasiprobability's physical content in two simple models. In this section, we study a geometrically local 1D model, an Ising chain with transverse and longitudinal fields. In Sec.~\ref{section:Brownian}, we study a geometrically nonlocal model known as the \emph{Brownian-circuit model.} This model effectively has a time-dependent Hamiltonian.

We compare the physics of $\SumKD{\rho}$ with that of the OTOC.
The time scales inherent in $\SumKD{\rho}$,
as compared to the OTOC's time scales, particularly interest us.
We study also nonclassical behaviors---negative and nonreal values---of
$\SumKD{\rho}$.
Finally, we find a parallel with classical chaos:
The onset of scrambling breaks a symmetry.
This breaking manifests in bifurcations of $\SumKD{\rho}$,
reminiscent of pitchfork diagrams.

The Ising chain is defined on a Hilbert space of $\Sites$ spin-$\frac{1}{2}$ degrees of freedom.
The total Hilbert space has dimensionality $\Dim = 2^\Sites$.
The single-site Pauli matrices are labeled $\{\sigma^x_i,\sigma^y_i,\sigma^z_i\}$, for $i=1,...,\Sites$. The Hamiltonian is
\begin{align}\label{eq:IsingH}
  H = - J \sum_{i=1}^{\Sites-1} \sigma_i^z \sigma_{i+1}^z - h \sum_{i=1}^{\Sites} \sigma_i^z - g \sum_{i=1}^{\Sites} \sigma^x_i  \, .
\end{align}
The chain has open boundary conditions. Energies are measured in units of $J$.
Times are measured in units of $1/J$. The interaction strength is thus set to one, $J=1$, henceforth. We numerically study this model for $\Sites=10$ by exactly diagonalizing $H$. This system size suffices for probing
the quasiprobability's time scales.
However, $\Sites = 10$ does not necessarily illustrate the thermodynamic limit.

When $h = 0$, this model is integrable and can be solved with noninteracting-fermion variables. When $h\neq 0$, the model appears to be reasonably chaotic. These statements' meanings are clarified in the data below.
As expected, the quasiprobability's qualitative behavior
is sensitive primarily to
whether $H$ is integrable,
as well as to the initial state's form.
We study two sets of parameters,
\begin{align}
  \text{Integrable:} \;  & h=0,  \:  g=1.05
  \quad \text{and} \nonumber \\
  \text{Nonintegrable:}  \; & h=.5,   \:  g=1.05  \, .
\end{align}
We study several classes of initial states $\rho$,
including thermal states, random pure states, and product states.

For $\W$ and $V$, we choose single-Pauli operators
that act nontrivially on just the chain's ends.
We illustrate with $\mathcal{W} = \sigma_1^{x}$ or $\mathcal{W}
= \sigma_1^z$
and $V= \sigma_\Sites^x$ or $\sigma_\Sites^z$.
These operators are unitary and Hermitian.
They square to the identity, enabling us to use Eq.~\eqref{eq:ProjTrick2}.
We calculate the coarse-grained quasiprobability directly:
\begin{align}
   \label{eq:SumKD_Num}
   \SumKD{\rho}( v_1 , w_2 , v_2 , w_3 )
   = \text{Tr}\left( \rho \Pi^{\mathcal{W}(t)}_{w_3} \Pi^V_{v_2} \Pi^{\mathcal{W}(t)}_{w_2}    \Pi^{V}_{v_1} \right)  \, .
\end{align}
For a Pauli operator $\Oper$,
$\Pi^\Oper_a = \frac{1}{2} \: ( 1 + a \Oper )$ projects onto
the $a \in \{1,-1\}$ eigenspace.
We also compare the quasiprobability with the OTOC,
Eq.~\eqref{eq:RecoverF2}.

$F(t)$ deviates from one at roughly
the time needed for information to propagate
from one end of the chain to the other.
This onset time, which up to a constant shift is also approximately the scrambling time, lies approximately between $t=4$ and $t=6$,
according to our the data.
The system's length and the butterfly velocity $v_{\rm B}$
set the scrambling time (Sec.~\ref{section:OTOC_review}).
Every term in the Hamiltonian~\eqref{eq:IsingH} is order-one.
Hence $v_{\rm B}$ is expected to be order-one, too.
In light of our spin chain's length,
the data below are all consistent with a $v_{\rm B}$ of approximately two.

\subsection{Thermal states}

We consider first thermal states $\rho \propto e^{-H/T}$.
Data for the infinite-temperature ($T = \infty$) state,
with $\mathcal{W} = \sigma_1^z$, $V = \sigma_\Sites^z$,
and nonintegrable parameters, appear in
Figures \ref{fig:TInf_zz_F}, \ref{fig:TInf_zz_AR}, and \ref{fig:TInf_zz_AI}.
The legend is labeled such that $abcd$ corresponds to
$w_3 = (-1)^a$, $v_2 = (-1)^b$, $w_2 = (-1)^c$, and $v_1 = (-1)^d$. This labelling corresponds to the order in which the operators appear in Eq.~\eqref{eq:SumKD_Num}.

Three behaviors merit comment.
Generically, the coarse-grained quasiprobability is a complex number:
$\SumKD{\rho}( . )  \in  \mathbb{C}$.
However, $\SumKD{( \id / \Dim) }$ is real.
The imaginary component $\Im \left(  \SumKD{( \id / \Dim) }  \right)$
might appear nonzero in Fig.~\ref{fig:TInf_zz_AI}.
Yet $\Im \left(  \SumKD{( \id / \Dim) }  \right)  \leq  10^{ -16 }$.
This value equals zero, to within machine precision.
The second feature to notice is that
the time required for $\SumKD{ ( \id / \Dim) }$ to deviate from its initial value
equals approximately the time required for the OTOC to deviate from its initial value. Third, although $\SumKD{ ( \id / \Dim) }$ is real, it is negative and hence nonclassical for some values of its arguments.

What about lower temperatures?
Data for the $T=1$ thermal state are shown in Figures \ref{fig:T1_zz_F}, \ref{fig:T1_zz_AR}, and \ref{fig:T1_zz_AI}.
The coarse-grained quasiprobability is no longer real.
Here, too, the time required for $\SumKD{\rho}$ to deviate significantly from its initial value is comparable with the time scale of changes in $F(t)$.
This comparability characterizes
the real and imaginary parts of $\SumKD{\rho}$.
Both parts oscillate at long times. In the small systems considered here, such oscillations can arise from finite-size effects,
including the energy spectrum's discreteness.
With nonintegrable parameters, this model has an energy gap $\Delta_{\Sites=10} = 2.92$ above the ground state.
The temperature $T=1$ is smaller than the gap.
Hence lowering $T$ from $\infty$ to 1
brings the thermal state close to the ground state.

What about long-time behavior?
At infinite temperature, $\SumKD{ (\id / \Dim ) }$ approaches a limiting form
after the scrambling time
but before any recurrence time.
Furthermore, $\SumKD{ (\id / \Dim ) }$ can approach one of
only a few possible limiting values,
depending on the function's arguments.
This behavior follows from the terms in Eq.~\eqref{eq:ProjTrick2}.
At infinite temperature, $\langle \mathcal{W} \rangle = \langle V \rangle = 0$.
Also the 3-point functions vanish, due to the trace's cyclicity.
We expect the nontrivial 2- and 4-point functions
to be small at late times.
(Such smallness is visible in the 4-point function in Fig.~\ref{fig:TInf_zz_F}.) Hence Eq.~\eqref{eq:ProjTrick2} reduces as
\begin{align}\label{eq:infiniteTlatetime}
  \SumKD{\rho}( v_1 , w_2 , v_2 , w_3 )
  \underbrace{\longrightarrow}_{t\rightarrow \infty} \frac{1 + w_2 w_3 + v_1 v_2}{16}  \, .
\end{align}

According to Eq.~\eqref{eq:infiniteTlatetime},
the late-time values of $\SumKD{ (\id / \Dim ) }$
should cluster around $3/16$, $1/16$, and $-1/16$.
This expectation is roughly consistent with Fig.~\ref{fig:TInf_zz_AR},
modulo the upper lines' bifurcation.

A bifurcation of $\SumKD{\rho}$ signals
the breaking of a symmetry at the onset of scrambling. Similarly, pitchfork plots signal the breaking of a symmetry
in classical chaos~\cite{Strogatz_00_Non}.
The symmetry's mathematical form follows from Eq.~\eqref{eq:ProjTrick2}.
At early times, $\W(t)$ commutes with $V$, and $F(t) \approx 1$.
Suppose, for simplicity, that $\rho = \id / \Dim$.
The expectation values $\expval{ \W(t) }$ and $\expval{V}$ vanish,
because every Pauli has a zero trace.
Equation~\eqref{eq:ProjTrick2} becomes
\begin{align}
   \label{eq:ProjTrick_early}
   & \SumKD{\rho} ( v_1 , w_2 , v_2 , w_3 )
   =  \frac{1}{16}  \Big[ (1 + w_2 w_3  +  v_1 v_2  +  w_2 w_3 v_1 v_2 )
   \nonumber \\ & \qquad \qquad \qquad
   +  ( w_2  +  w_3 )  ( v_1  +  v_2 )  \expval{ \W(t)  V }  \Big]  \, .
\end{align}

Suppose that $w_2 = - w_3$ and/or $v_1 = - v_2$,
as in the lower lines in Fig.~\ref{fig:TInf_zz_AR}.
$\SumKD{\rho} (.)$ reduces to the constant
\begin{align}
   \label{eq:Early_square}
   & \frac{1}{16}  ( 1 + w_2 w_3  +  v_1 v_2  +  w_2 w_3 v_1 v_2 )
   \\ \nonumber &
   =  \frac{1}{32}  \Big[  ( 1 + w_2 w_3  +  v_1 v_2  )^2
   -  ( w_2 w_3 )^2  -  ( v_1 v_2 )^2  +  1  \Big] \, .
\end{align}
The right-hand side depends on the eigenvalues $w_\ell$ and $v_m$
only through squares.
$\SumKD{\rho} ( . )$ remains invariant
under the interchange of $w_2$ with $w_3$,
under the interchange of $v_1$ with $v_2$,
under the simultaneous negations of $w_2$ and $w_3$,
and under the simultaneous negations of $v_1$ and $v_2$.
These symmetries have operational significances:
$\OurKD{\rho}$ remains constant under permutations and negations
of measurement outcomes in
the weak-measurement scheme (Sec.~\ref{section:Intro_weak_meas}).
Symmetries break as the system starts scrambling:
$F(t)$ shrinks, shrinking the final term in Eq.~\eqref{eq:Early_square}.
$\SumKD{\rho}$ starts depending not only on
squares of $w_\ell$-and-$v_m$ functions,
but also on the eigenvalues individually.

Whereas the shrinking of $F(t)$ bifurcates
the lower lines in Fig.~\ref{fig:TInf_zz_AR},
the shrinking does not bifurcate the upper lines.
The reason is that each upper line corresponds to $w_2 w_3  =  v_1 v_2  =  1$.
[At early times, $| F(t) |$ is small enough that
any $F(t)$-dependent correction would fall within the lines' widths.]
Hence the final term in Eq.~\eqref{eq:ProjTrick_early} is proportional to
$\pm \expval{ \mathcal{W}(t) V }$.
This prediction is consistent with the observed splitting.
The $\expval{ \mathcal{W}(t) V }$ term does not split the lower lines:
Each lower line satisfies $w_2 = - w_3$ and/or $v_1 = - v_2$.
Hence the $\expval{ \mathcal{W}(t) V }$ term vanishes.
We leave as an open question
whether these pitchforks can be understood in terms of equilibria,
like classical-chaos pitchforks~\cite{Strogatz_00_Non}.

In contrast with the $T=\infty$ data, the $T=1$ data
oscillate markedly at late times
(after the quasiprobability's initial sharp change).
We expect these oscillations to decay to zero at late times,
if the system is chaotic, in the thermodynamic limit.
Unlike at infinite temperature, $\mathcal{W}$ and $V$
can have nonzero expectation values.
But, if all nontrivial connected correlation functions have decayed,
Eq.~\eqref{eq:ProjTrick2} still implies a simple dependence on
the $w_\ell$ and $v_m$ parameters at late times.

Finally, Figures \ref{fig:TInf_zz_int_F} and \ref{fig:TInf_zz_int_AR} show
the coarse-grained quasiprobability at infinite temperature,
$\SumKD{ (\id / \Dim ) }$, with integrable parameters.
The imaginary part remains zero, so we do not show it.
The difference from the behavior in
Figures \ref{fig:TInf_zz_F} and \ref{fig:TInf_zz_AR}
(which shows $T = \infty$, nonintegrable-$H$ data) is obvious.
Most dramatic is the large revival that occurs
at what would, in the nonintegrable model, be a late time.
Although this is not shown, the quasiprobability depends significantly
on the choice of operator.
This dependence is expected, since different Pauli operators have different degrees of complexity in terms of the noninteracting-fermion variables.


\begin{figure}
\begin{center}
\includegraphics[width=.49\textwidth]{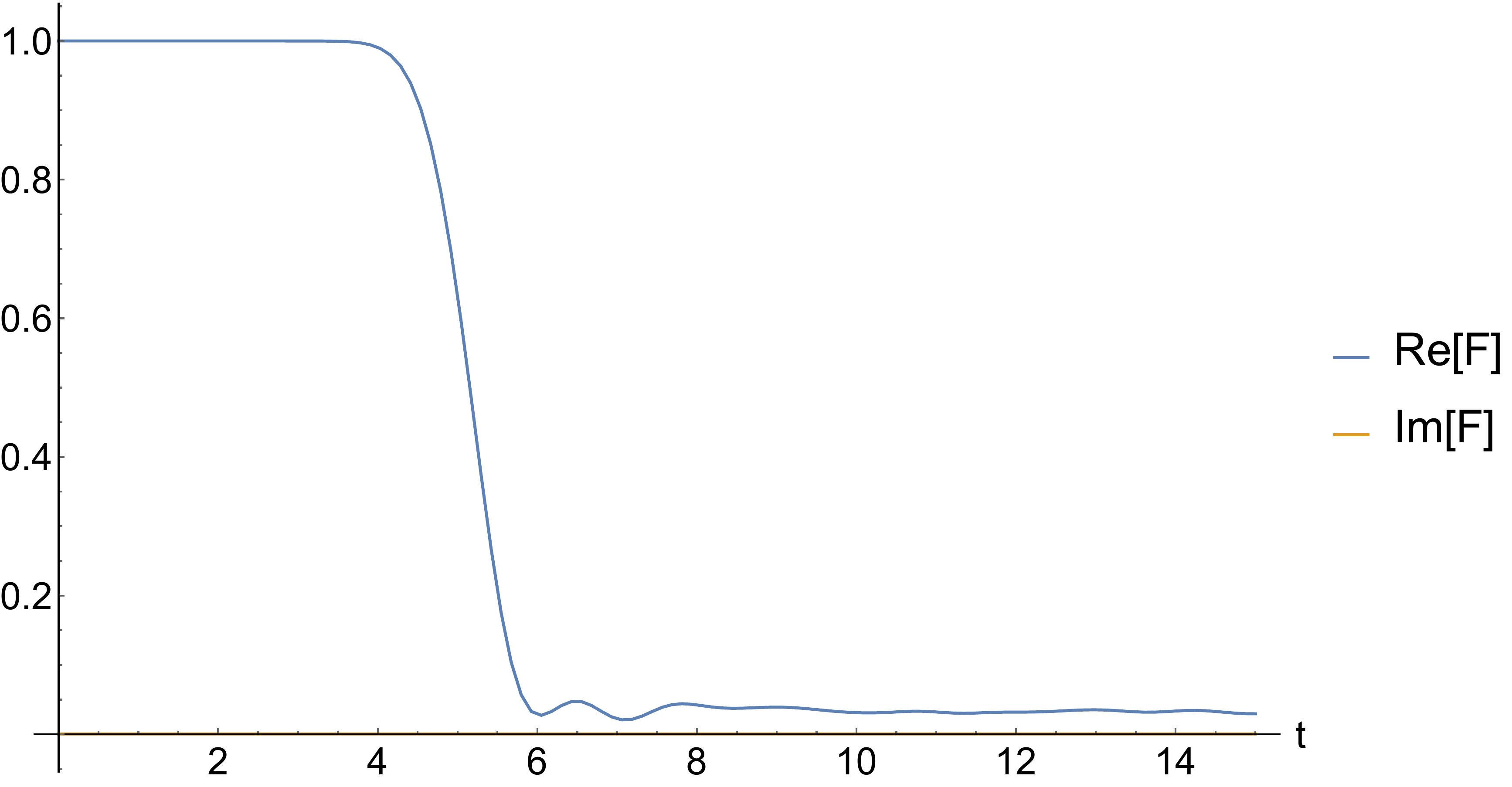}
\end{center}
\caption{Real (upper curve) and imaginary (lower curve) parts of $F(t)$ as a function of time. $T=\infty$ thermal state. Nonintegrable parameters, $\Sites=10$, $\mathcal{W}=\sigma_1^z$, $V=\sigma_\Sites^z$. }
\label{fig:TInf_zz_F}
\end{figure}

\begin{figure}
\begin{center}
\includegraphics[width=.49\textwidth]{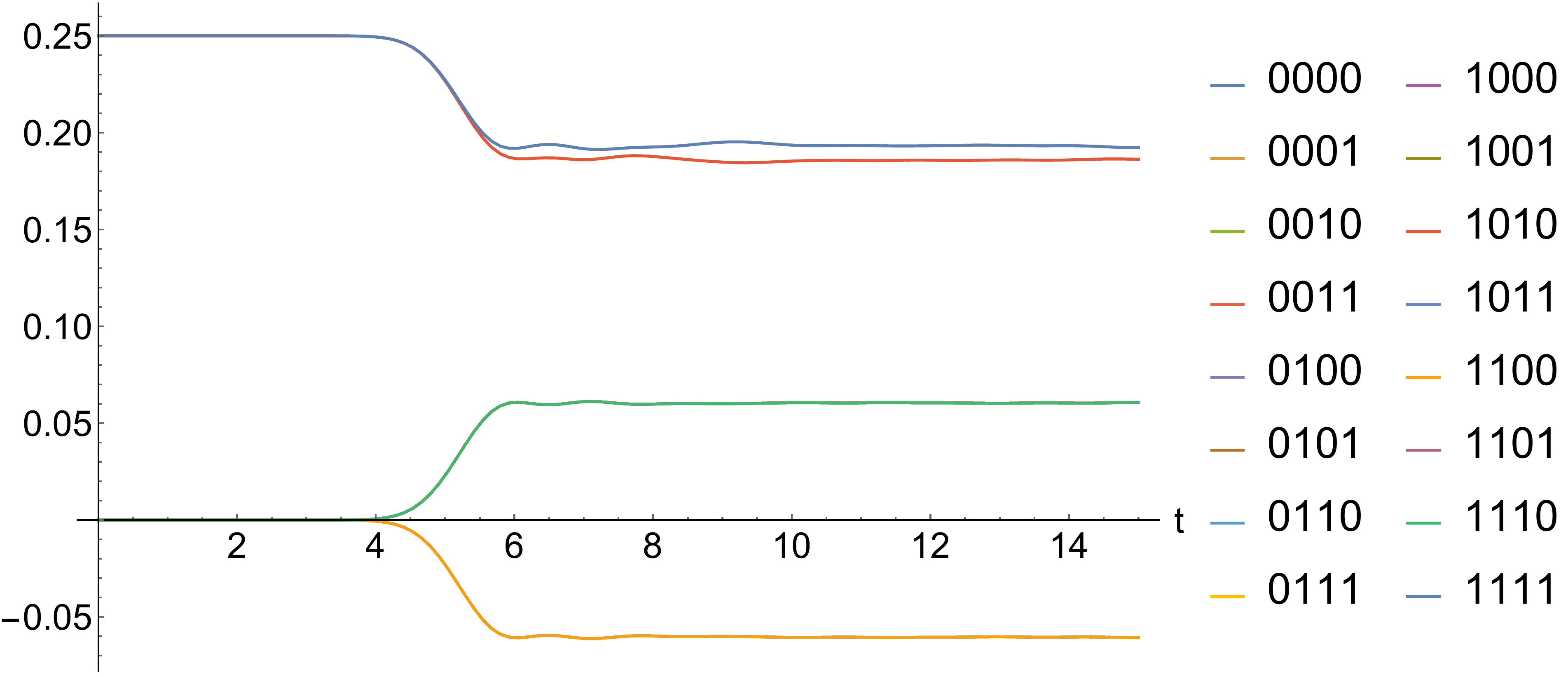}
\end{center}
\caption{Real part of $\SumKD{\rho}$ as a function of time. $T=\infty$ thermal state. Nonintegrable parameters, $\Sites=10$, $\mathcal{W}=\sigma_1^z$, $V=\sigma_\Sites^z$. There are many degeneracies. The upper curves include $0000$ and $1010$, while the top of the lower pitchfork includes $1110$ and the bottom of the lower pitchfork includes $0001$.}
\label{fig:TInf_zz_AR}
\end{figure}

\begin{figure}
\begin{center}
\includegraphics[width=.49\textwidth]{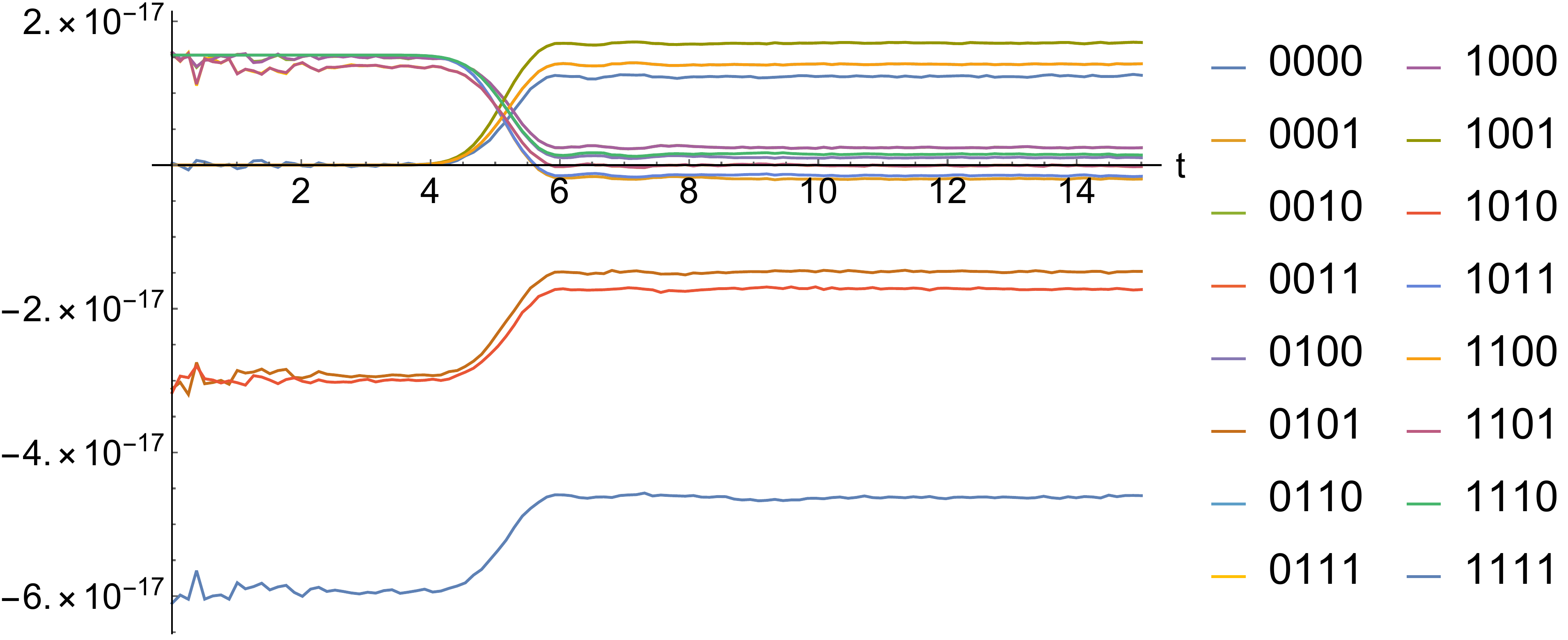}
\end{center}
\caption{Imaginary part of $\SumKD{\rho}$ as a function of time. $T=\infty$ thermal state. Nonintegrable parameters, $\Sites=10$, $\mathcal{W}=\sigma_1^z$, $V=\sigma_\Sites^z$.
To within machine precision, $\Im \left(  \OurKD{\rho}  \right)$ vanishes
for all values of the arguments.}
\label{fig:TInf_zz_AI}
\end{figure}


\begin{figure}
\begin{center}
\includegraphics[width=.49\textwidth]{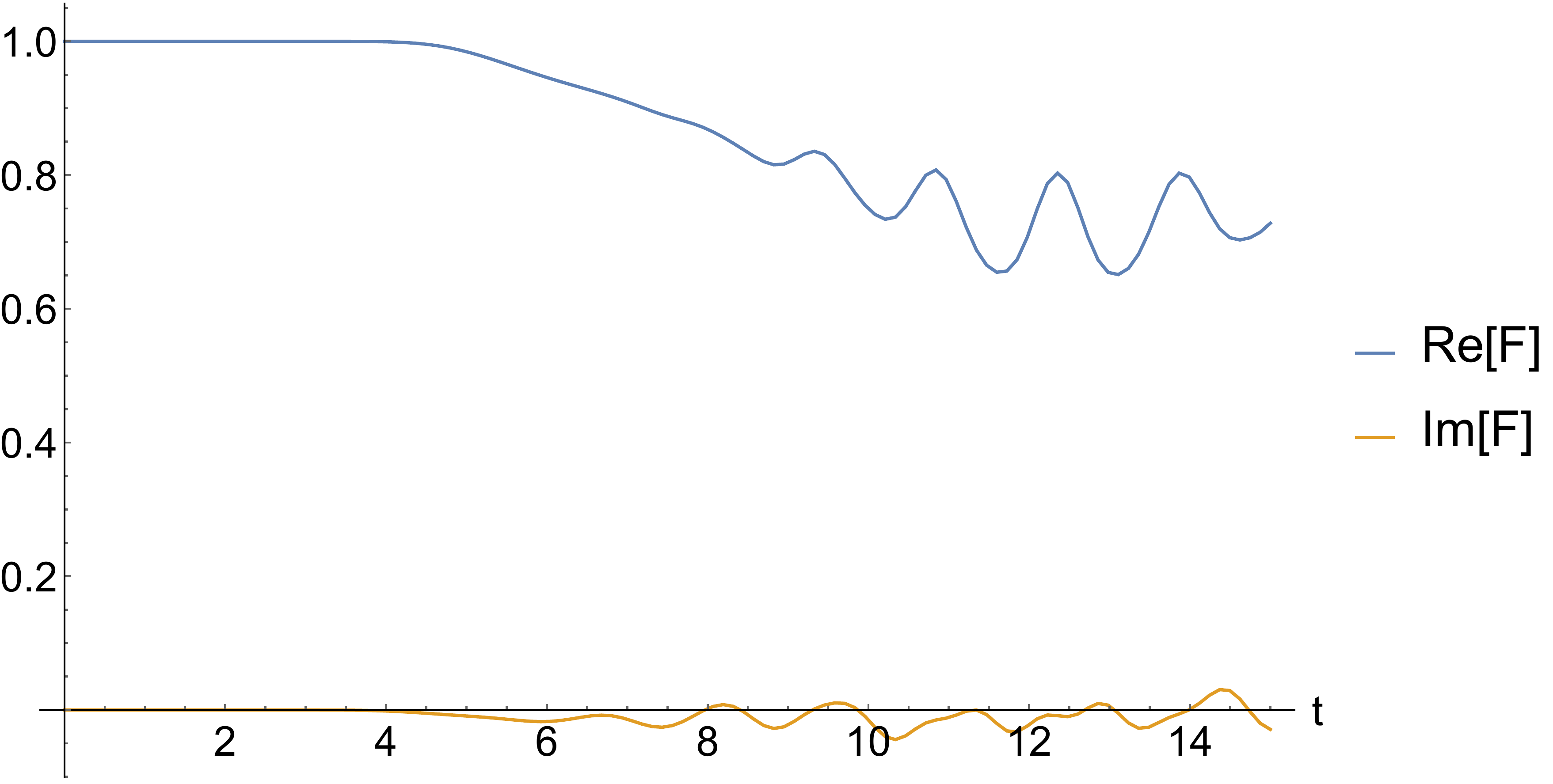}
\end{center}
\caption{Real (upper curve) and imaginary (lower curve) parts of $F(t)$ as a function of time. $T=1$ thermal state. Nonintegrable parameters, $\Sites=10$, $\mathcal{W}=\sigma_1^z$, $V=\sigma_\Sites^z$. }
\label{fig:T1_zz_F}
\end{figure}

\begin{figure}
\begin{center}
\includegraphics[width=.49\textwidth]{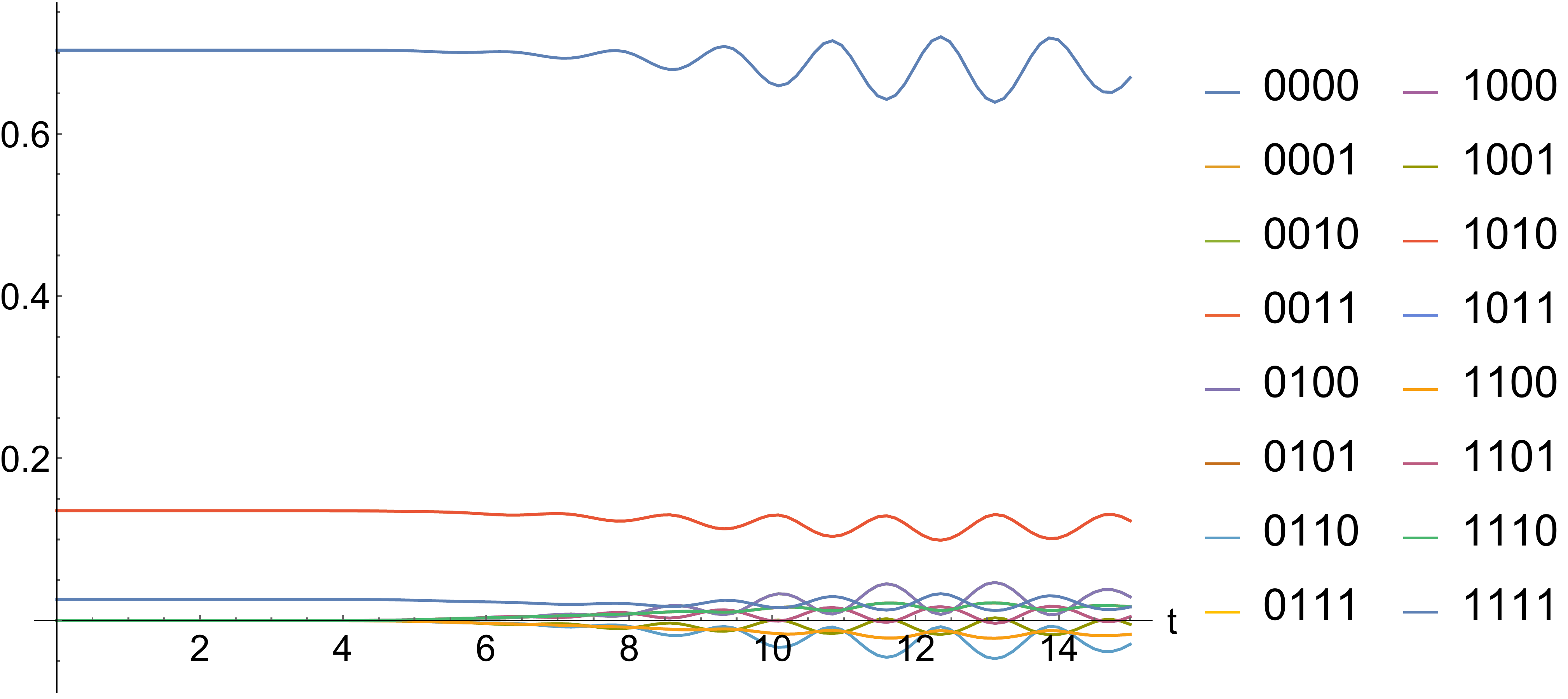}
\end{center}
\caption{Real part of $\SumKD{\rho}$ as a function of time. $T=1$ thermal state. Nonintegrable parameters, $\Sites=10$, $\mathcal{W}=\sigma_1^z$, $V=\sigma_\Sites^z$. The upper curve includes $0000$, the middle curve includes $0011$, while the lower cluster of curves includes $0100$. } \label{fig:T1_zz_AR}
\end{figure}

\begin{figure}
\begin{center}
\includegraphics[width=.49\textwidth]{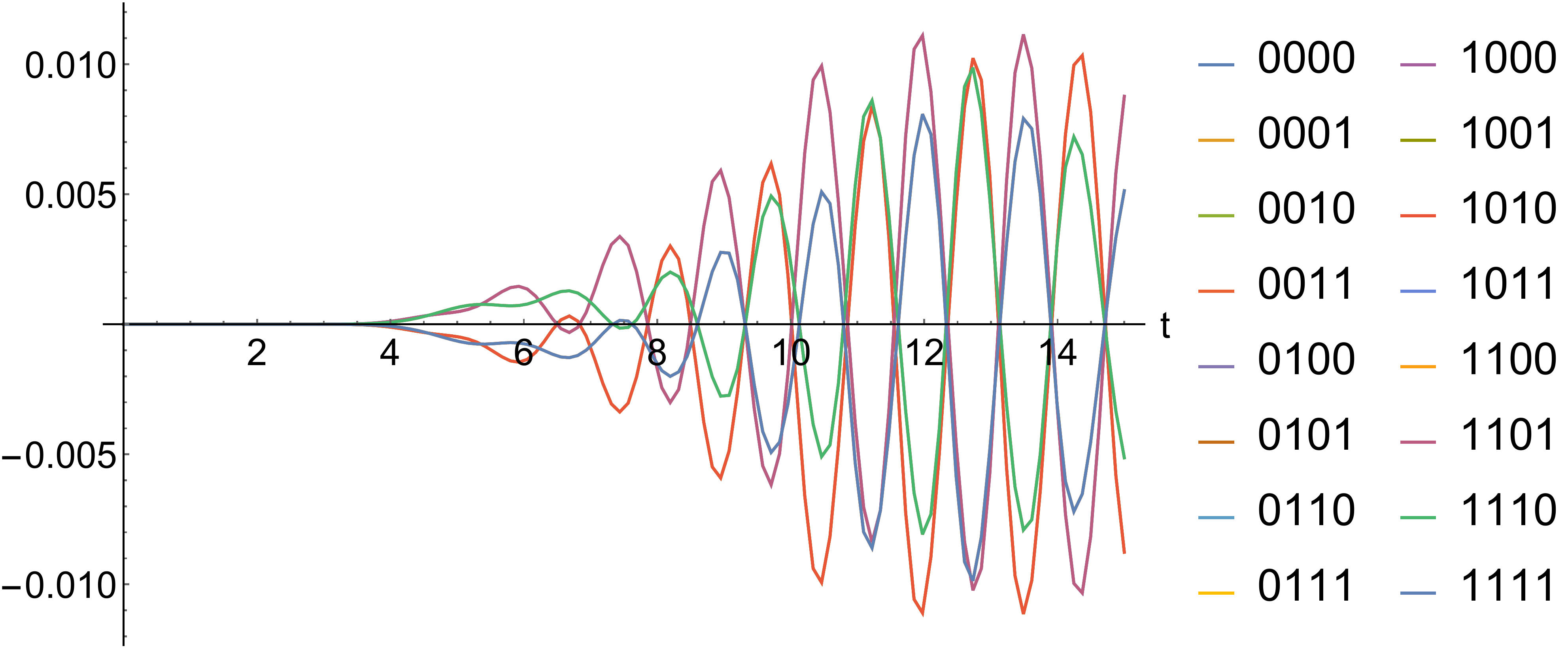}
\end{center}
\caption{Imaginary part of $\SumKD{\rho}$ as a function of time. $T=1$ thermal state. Nonintegrable parameters, $\Sites=10$, $\mathcal{W}=\sigma_1^z$, $V=\sigma_\Sites^z$. The various curves display similar looking oscillations as a function of time; some of the curves appear to be more-or-less related by a factor of minus one, for example $1111$ and $1110$.}
\label{fig:T1_zz_AI}
\end{figure}


\begin{figure}
\begin{center}
\includegraphics[width=.49\textwidth]{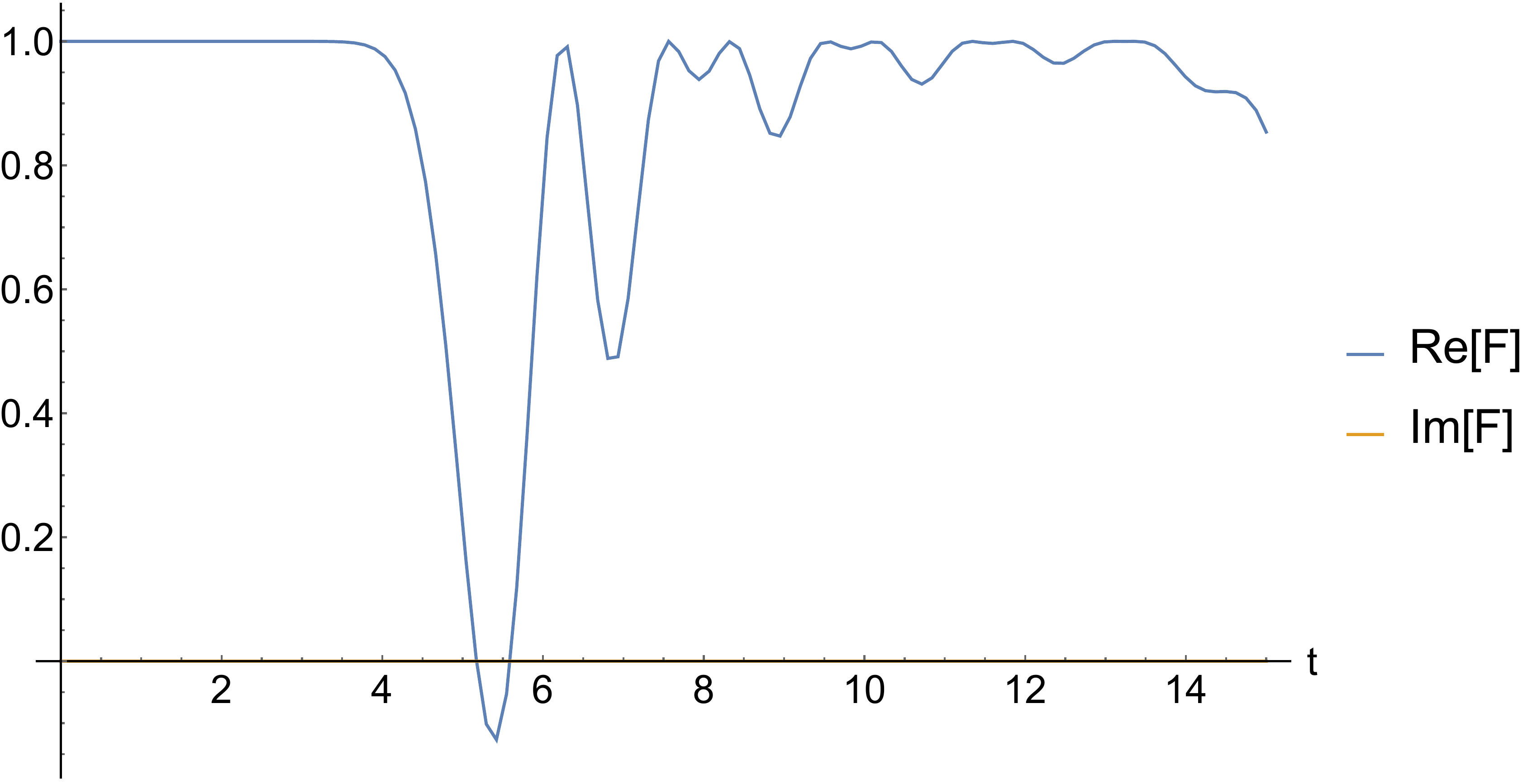}
\end{center}
\caption{Real (upper curve) and imaginary (lower curve) parts of $F(t)$ as a function of time. $T=\infty$ thermal state. Integrable parameters, $\Sites=10$, $\mathcal{W}=\sigma_1^z$, $V=\sigma_\Sites^z$. }
\label{fig:TInf_zz_int_F}
\end{figure}

\begin{figure}
\begin{center}
\includegraphics[width=.49\textwidth]{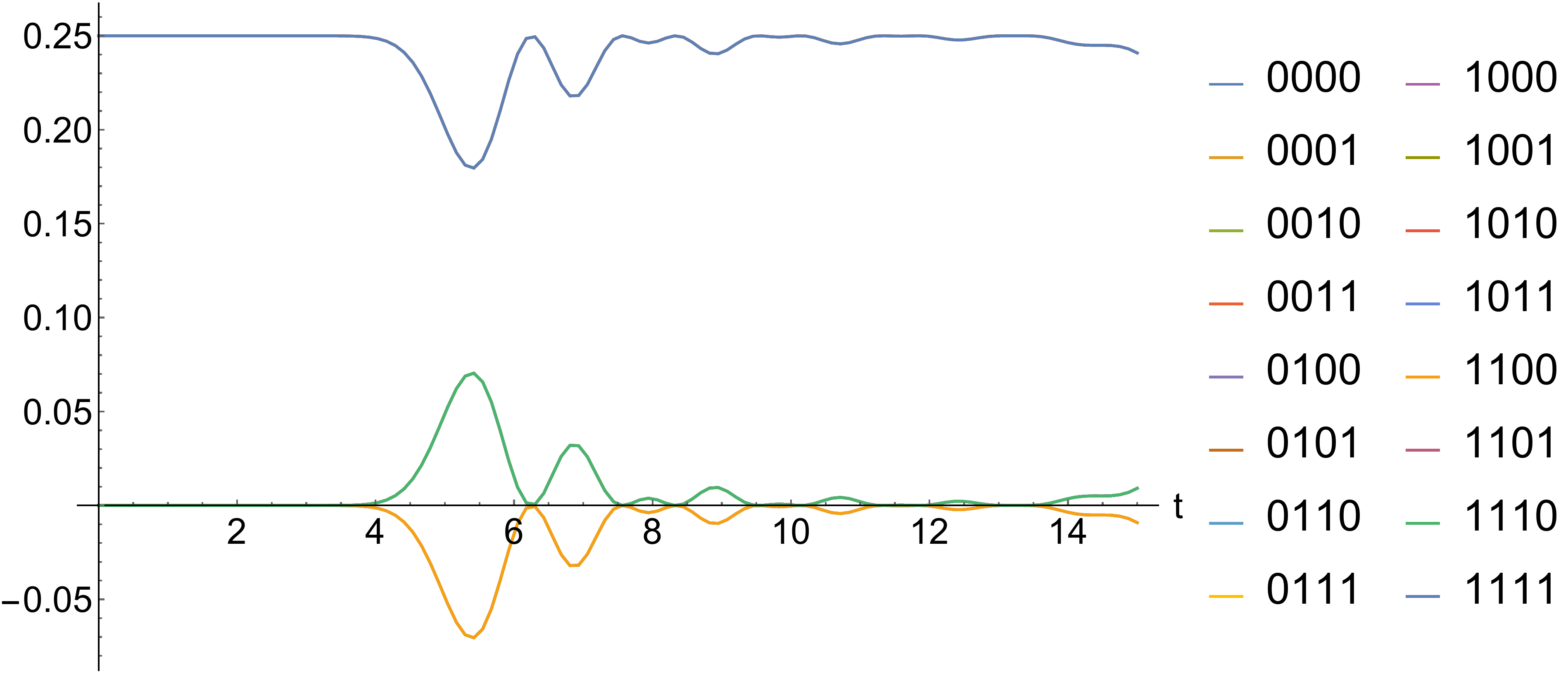}
\end{center}
\caption{Real part of $\SumKD{\rho}$ as a function of time. $T=\infty$ thermal state. Integrable parameters, $\Sites=10$, $\mathcal{W}=\sigma_1^z$, $V=\sigma_\Sites^z$. Upper curve includes $0000$ while for the lower curves, the upper includes $1110$ and the lower includes $0001$. }
\label{fig:TInf_zz_int_AR}
\end{figure}


\subsection{Random states}

We now consider random pure states
$\rho \propto \ketbra{ \psi }{ \psi }$ and nonintegrable parameters.
Figures \ref{fig:rand_zz_F}, \ref{fig:rand_zz_AR}, and \ref{fig:rand_zz_AI}
show $F(t)$ and $\SumKD{\rho}$
for the operator choice $\mathcal{W} = \sigma_1^z$ and $V = \sigma_\Sites^z$
in a randomly chosen pure state.
The pure state is drawn according to the Haar measure.
Each figure shows a single shot
(contains data from just one pure state).
Broadly speaking, the features are similar to those exhibited by
the infinite-temperature $\rho = \id / \Dim$, with additional fluctuations.

The upper branch of lines in Fig.~\ref{fig:rand_zz_AR}
exhibits dynamics before the OTOC does.
However, lines' average positions move significantly
(the lower lines bifurcate, and the upper lines shift downward)
only after the OTOC begins to evolve.
The early motion must be associated with
the early dynamics of the 2- and 3-point functions in Eq.~\eqref{eq:ProjTrick2}.
The late-time values are roughly consistent with
those for $\rho = \id / \Dim$ but fluctuate more pronouncedly.

The agreement between random pure states
and the $T = \infty$ thermal state is expected,
due to closed-system thermalization~\cite{D'Alessio_16_From,Gogolin_16_Equilibration}.
Consider assigning a temperature to a pure state
by matching its energy density with
the energy density of the thermal state $e^{ - H / T } / Z$,
cast as a function of temperature.
With high probability, any given random pure state
corresponds to an infinite temperature.
The reason is the thermodynamic entropy's
monotonic increase with temperature.
Since the thermodynamic entropy gives the density of states,
more states correspond to higher temperatures.
Most states correspond to infinite temperature.

For the random states and system sizes $\Sites$ considered,
if $H$ is nonintegrable,
the agreement with thermal results is not complete.
However, the physics appears qualitatively similar.

\begin{figure}
\begin{center}
\includegraphics[width=.49\textwidth]{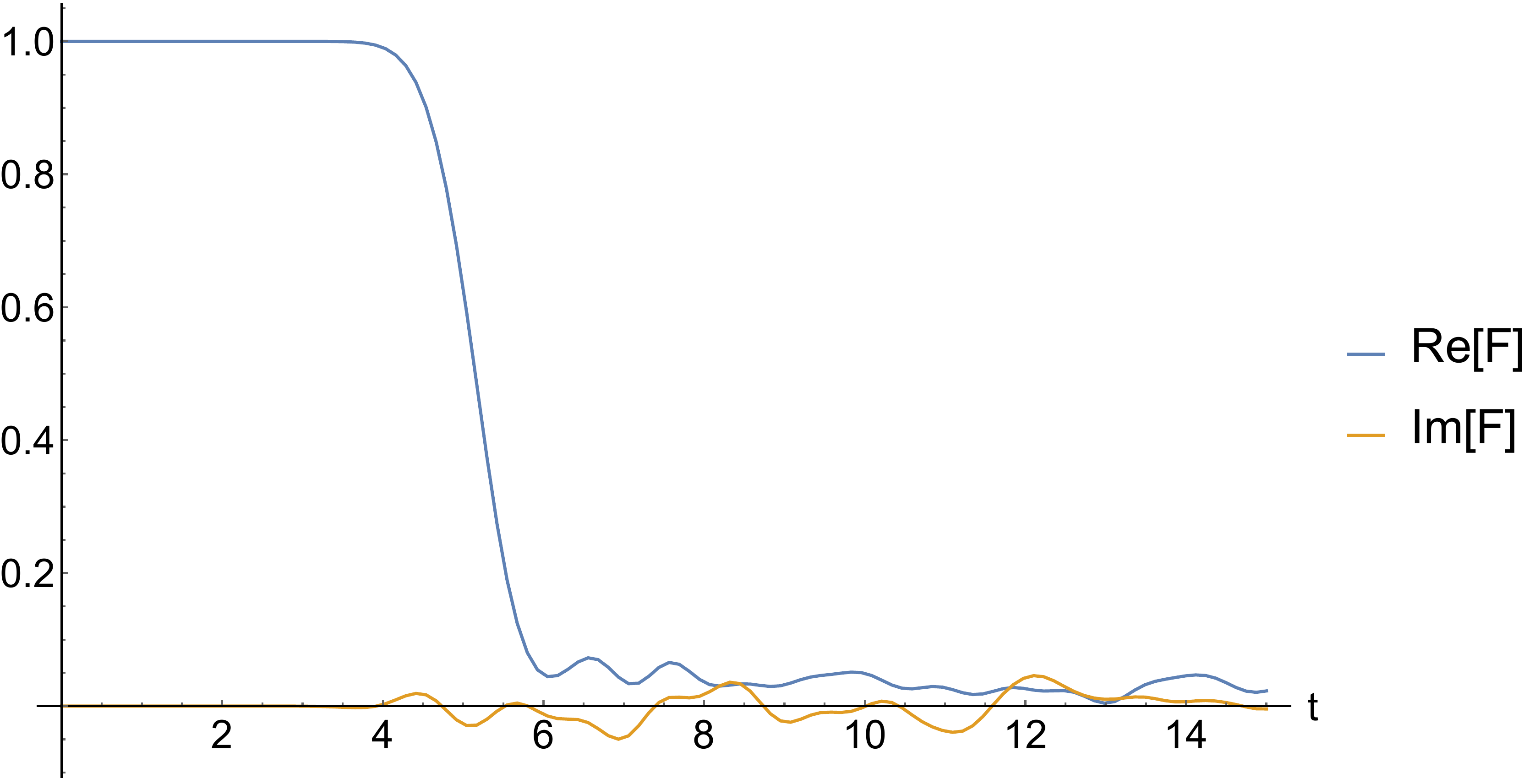}
\end{center}
\caption{Real (upper curve) and imaginary (lower curve) parts of $F(t)$ as a function of time. Random pure state. Nonintegrable parameters, $\Sites=10$, $\mathcal{W}=\sigma_1^z$, $V=\sigma_\Sites^z$. }
\label{fig:rand_zz_F}
\end{figure}

\begin{figure}
\begin{center}
\includegraphics[width=.49\textwidth]{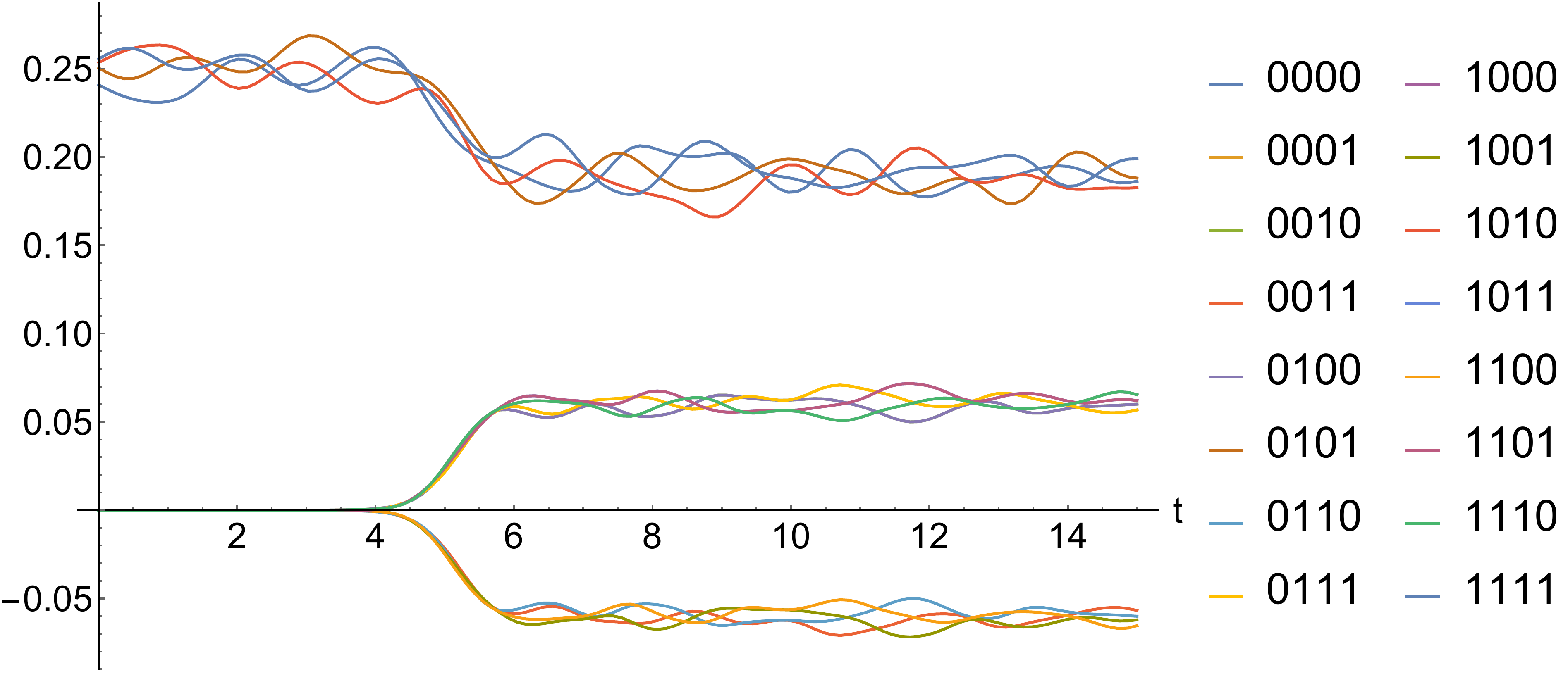}
\end{center}
\caption{Real part of $\SumKD{\rho}$ as a function of time. Random pure state. Nonintegrable parameters, $\Sites=10$, $\mathcal{W}=\sigma_1^z$, $V=\sigma_\Sites^z$. The upper cluster includes $0000$ and $0011$. The upper prong of the lower pitchfork includes $1110$ while the lower prong includes $1001$. }
\label{fig:rand_zz_AR}
\end{figure}

\begin{figure}
\begin{center}
\includegraphics[width=.49\textwidth]{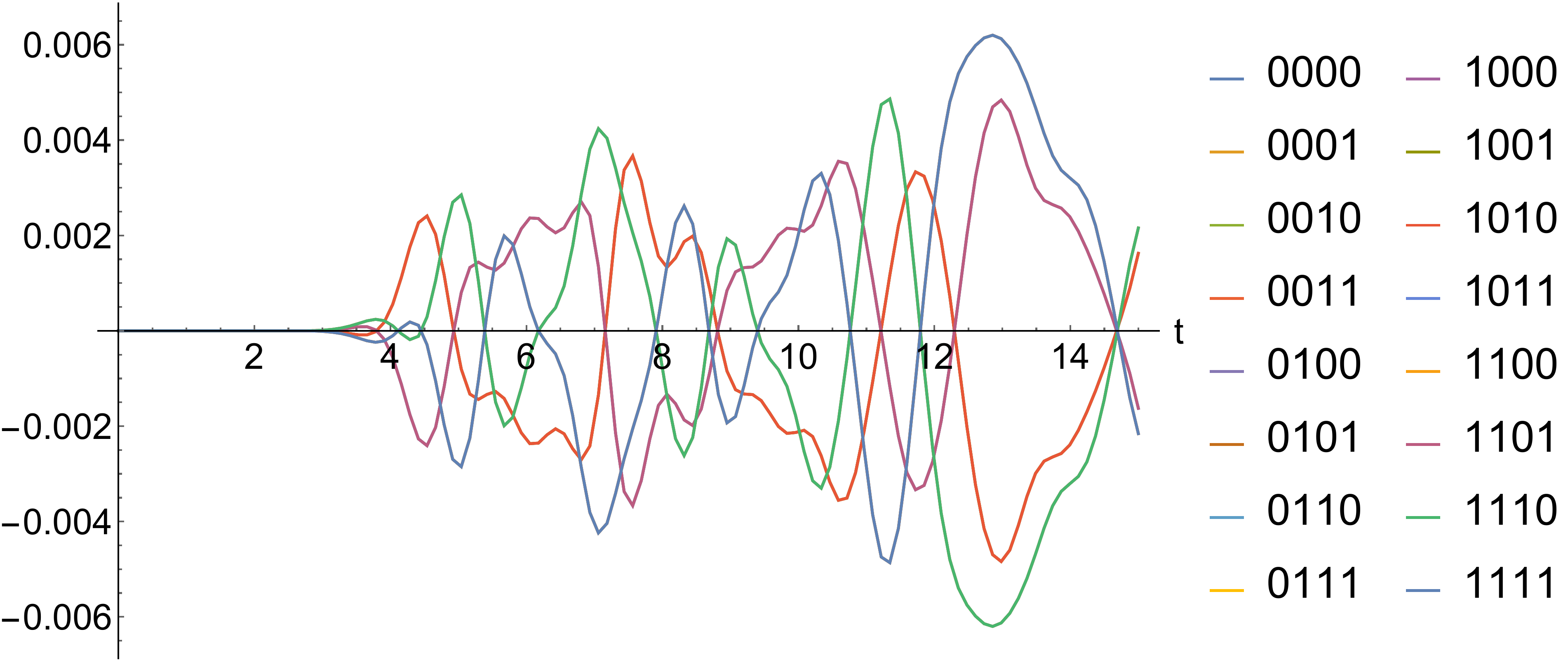}
\end{center}
\caption{Imaginary part of $\SumKD{\rho}$ as a function of time. Random pure state. Nonintegrable parameters, $\Sites=10$, $\mathcal{W}=\sigma_1^z$, $V=\sigma_\Sites^z$. A familiar pattern of oscillations is visible, with some curves being more-or-less related by a factor of minus one, for example, $0011$ and $1000$.}
\label{fig:rand_zz_AI}
\end{figure}

\subsection{Product states}

Finally, we consider the product $\ket{ +x }^{ \otimes \Sites}$
of $\Sites$ copies of the $+1$ $\sigma^x$ eigenstate
(Figures~\ref{fig:xup_zz_F}--\ref{fig:xup_zz_AI}).
We continue to use $\mathcal{W} = \sigma_1^z$ and $V = \sigma_\Sites^z$.
For the Hamiltonian parameters chosen,
this state lies far from the ground state.
The state therefore should correspond to
a large effective temperature.
Figures \ref{fig:xup_zz_F}, \ref{fig:xup_zz_AR}, and \ref{fig:xup_zz_AI} show
$F (t)$ and $\SumKD{\rho}$ for nonintegrable parameters.

The real part of $F(t)$ decays significantly from its initial value of one.
The imaginary part of $F(t)$ is nonzero but remains small.
These features resemble the infinite-temperature features.
However, the late-time $F(t)$ values are substantially larger than
in the $T = \infty$ case and oscillate significantly.

Correspondingly, the real and imaginary components of $\SumKD{\rho}$
oscillate significantly.
$\Re \left( \SumKD{\rho} \right)$ exhibits dynamics before scrambling begins,
as when $\rho$ is a random pure state.
The real and imaginary parts of $\SumKD{\rho}$ differ more from
their $T = \infty$ counterparts than
$F(t)$ differs from its counterpart.
Some of this differing is apparently washed out
by the averaging needed to construct $F(t)$
[Eq.~\eqref{eq:RecoverF2}].

We expected pure product states to behave
roughly like random pure states.
The data support this expectation very roughly, at best.
Whether finite-size effects cause this deviation,
we leave as a question for further study.

\begin{figure}
\begin{center}
\includegraphics[width=.49\textwidth]{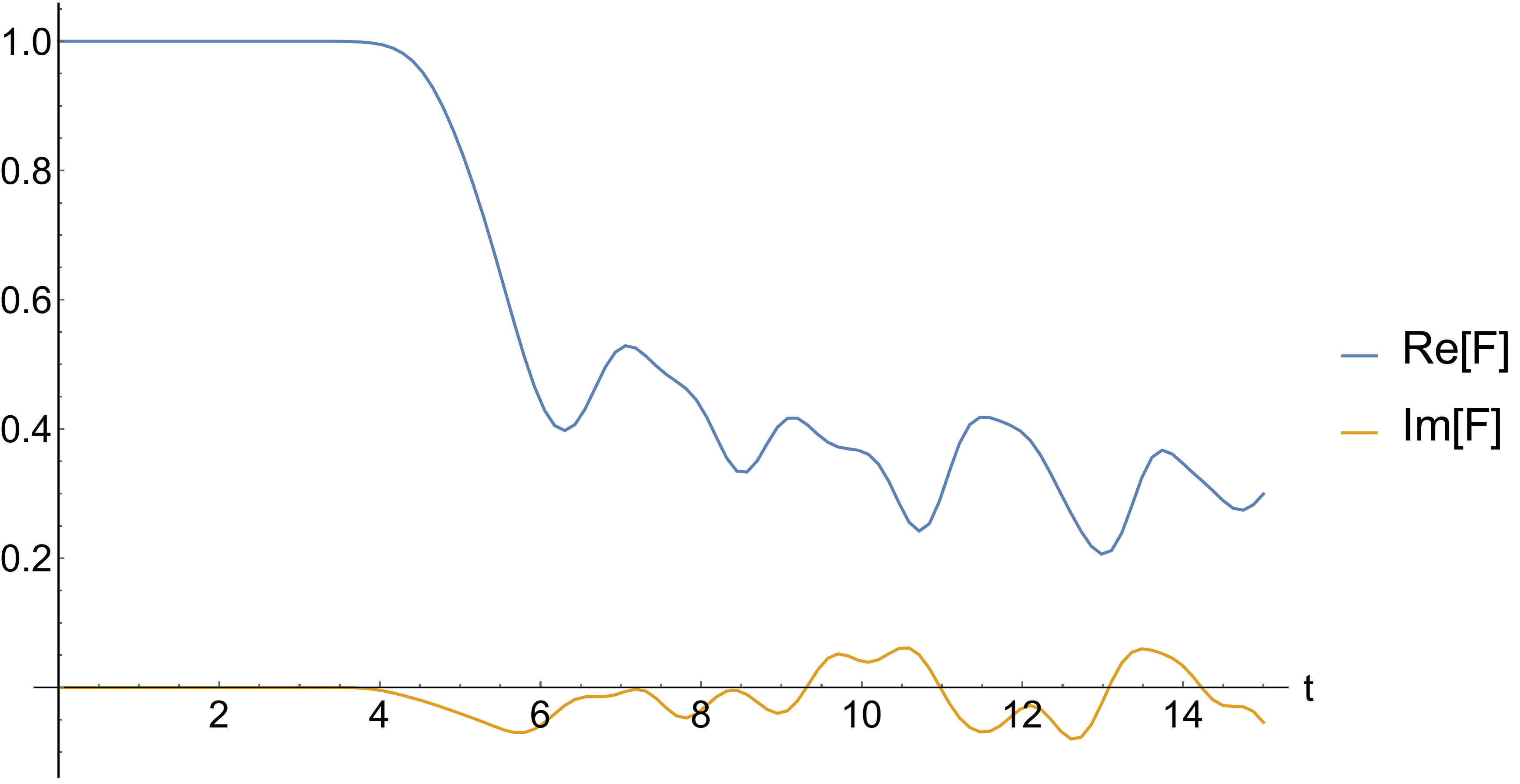}
\end{center}
\caption{Real (upper curve) and imaginary (lower curve) parts of $F(t)$ as a function of time.
Product $\ket{ +x }^{ \otimes \Sites}$ of $\Sites$ copies of
the $+1$ $\sigma^x$ eigenstate.
Nonintegrable parameters, $\Sites=10$, $\mathcal{W}=\sigma_1^z$, $V=\sigma_\Sites^z$. }
\label{fig:xup_zz_F}
\end{figure}

\begin{figure}
\begin{center}
\includegraphics[width=.49\textwidth]{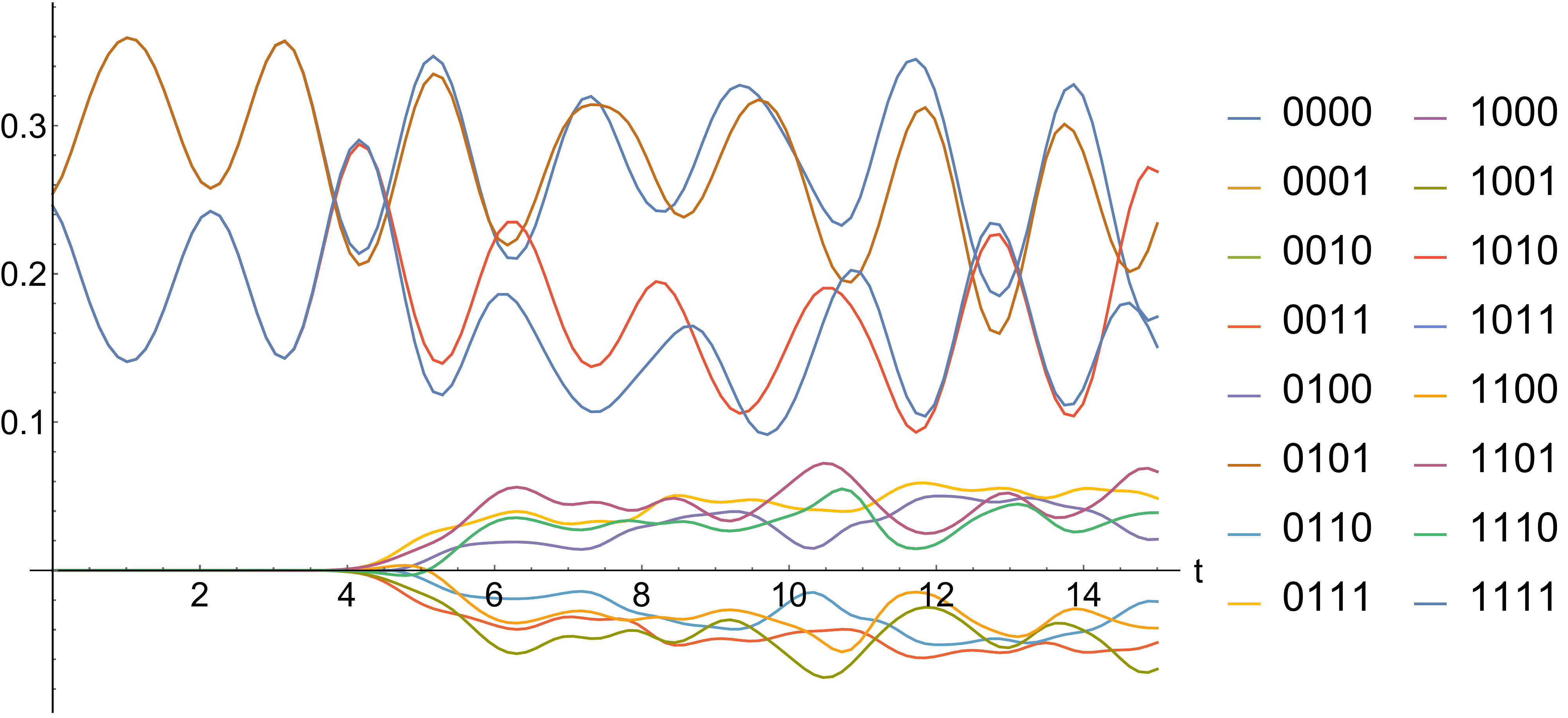}
\end{center}
\caption{Real part of $\SumKD{\rho}$ as a function of time.
Product $\ket{ +x }^{ \otimes \Sites}$ of $\Sites$ copies of
the $+1$ $\sigma^x$ eigenstate.
Nonintegrable parameters, $\Sites=10$, $\mathcal{W}=\sigma_1^z$, $V=\sigma_\Sites^z$. The top of the upper cluster includes $0101$ while the bottom of the upper cluster includes $0011$. The top of the lower pitchfork includes $1110$ while the bottom of the lower pitchfork includes $1001$.}
\label{fig:xup_zz_AR}
\end{figure}

\begin{figure}
\begin{center}
\includegraphics[width=.49\textwidth]{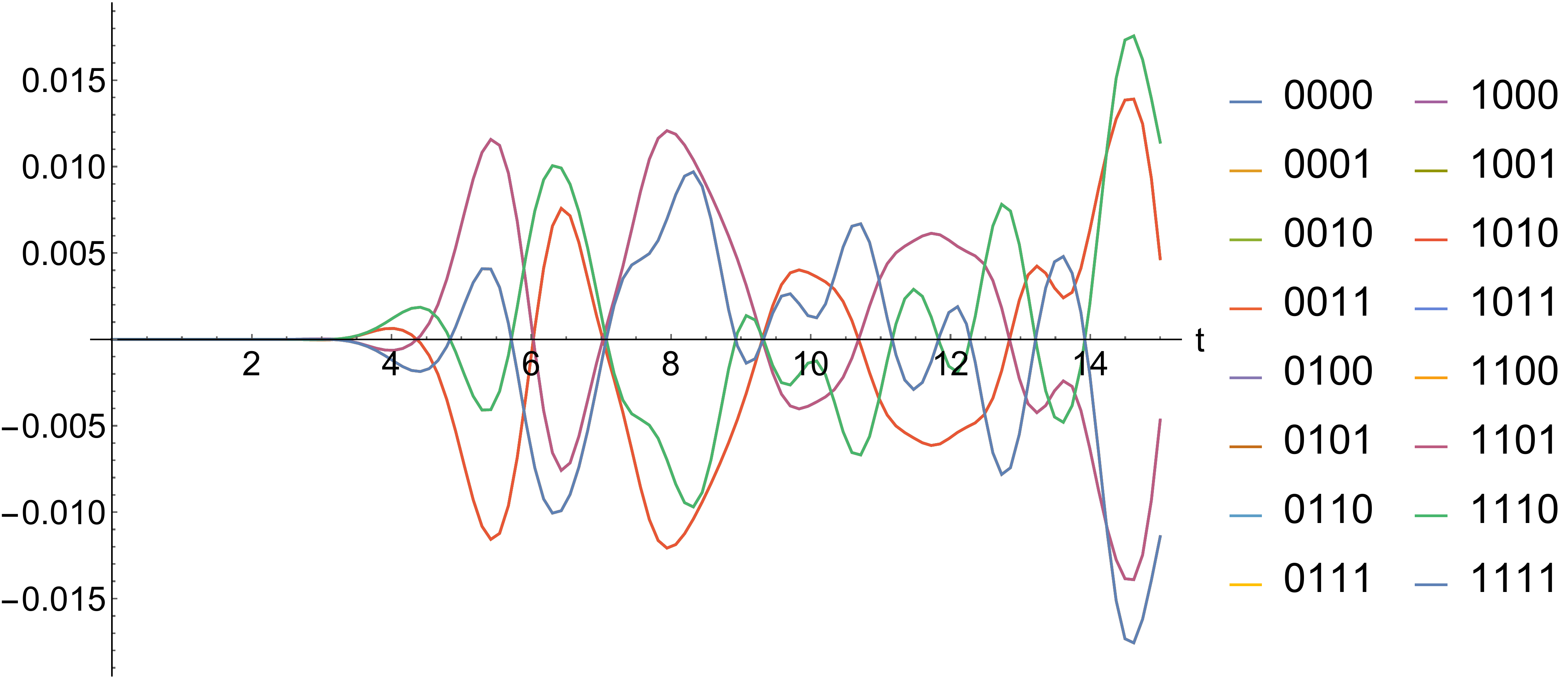}
\end{center}
\caption{Imaginary part of $\SumKD{\rho}$ as a function of time.
Product $\ket{ +x }^{ \otimes \Sites}$ of $\Sites$ copies of
the $+1$ $\sigma^x$ eigenstate.
Nonintegrable parameters, $\Sites=10$, $\mathcal{W}=\sigma_1^z$, $V=\sigma_\Sites^z$. Similar physics to other depictions of the imaginary part.}
\label{fig:xup_zz_AI}
\end{figure}

\subsection{Summary}

The main messages from this study are the following.
\begin{enumerate}[(1)]

\item The coarse-grained quasiprobability $\SumKD{\rho}$ is generically complex.
Exceptions include the $T = \infty$ thermal state $\id / \Dim$
and states $\rho$ that share
an eigenbasis with $V$ or with $\W(t)$
[e.g., as in Eq.~\eqref{eq:WRho}].
Recall that the KD distribution's nonreality
signals nonclassical physics (Sec.~\ref{section:Intro_to_KD}).

\item The derived quantity $P(W,W')$ is generically complex,
our results imply.\footnote{
The relevant plots are not shown,
so that this section maintains
a coherent focus on $\SumKD{\rho}$.
This result merits inclusion, however,
as $P(W, W')$ plays important roles in
(i)~\cite{YungerHalpern_17_Jarzynski}
and (ii) connections between the OTOC
and quantum thermodynamics (Sec.~\ref{section:Outlook}).
}
Nonclassicality thus survives even
the partial marginalization that defines $P$ [Eq.~\eqref{eq:PWWPrime}].
In general, marginalization can cause interference to dampen nonclassicality.
(We observe such dampening in Property~\ref{prop:MargOurKD}
of Sec.~\ref{section:TA_Props}
and in Property~\ref{prop:MargP} of Appendix~\ref{section:P_Properties}.)

\item Random pure states' quasiprobabilities resemble
the $T = \infty$ thermal state's quasiprobability
but fluctuate more.

\item Certain product states' quasiprobabilities
display anomalously large fluctuations.
We expected these states to resemble random states more.

\item The $\SumKD{\rho}$'s generated by integrable Hamiltonians
differ markedly from
the $\SumKD{\rho}$'s generated by nonintegrable Hamiltonians.
Both types of $\SumKD{\rho}$'s achieve nonclassical values, however.
We did not clearly observe a third class of behavior.

\item The time scale after which $\SumKD{\rho}$ changes significantly
is similar to the OTOC time scale.
$\SumKD{\rho}$ can display nontrivial early-time dynamics
not visible in $F(t)$.
This dynamics can arise, for example, because of
the 2-point function contained in the expansion of $\SumKD{\rho}$
[see Eq.~\eqref{eq:ProjTrick2}].

\item $\SumKD{\rho}$ reveals that scrambling breaks a symmetry.
Operationally, the symmetry consists of invariances of $\SumKD{\rho}$
under permutations and negations of measurement outcomes
in the weak-measurement scheme (Sec.~\ref{section:Intro_weak_meas}).
The symmetry breaking manifests in bifurcations of $\SumKD{\rho}$.
These bifurcations evoke classical-chaos pitchfork diagrams,
which also arise when a symmetry breaks.
One equilibrium point splits into three
in the classical case~\cite{Strogatz_00_Non}.
Perhaps the quasiprobability's pitchforks can be recast
in terms of equilibria.

\end{enumerate}

\section{Calculation of $\SumKD{\rho}$ averaged over Brownian circuits}
\label{section:Brownian}

We study a geometrically nonlocal model---the \emph{Brownian-circuit model}---governed by a time-dependent Hamiltonian~\cite{Lashkari_13_Towards}.
We access physics qualitatively different from
the physics displayed in the numerics of Sec.~\ref{section:Numerics}.
We also derive results for large systems
and compare with the finite-size numerics.
Since the two models' locality properties differ,
we do not expect agreement at early times.
The late-time scrambled states, however,
may be expected to share similarities.
We summarize our main findings at the end of the section.

We consider a system of $\Sites$ qubits
governed by the random time-dependent Hamiltonian
\begin{align}
H(t) \propto \sum_{i < j} \sum_{\alpha_i ,\alpha_j}
J^{\alpha_i,\alpha_j}_{i,j}(t)  \,
\sigma_i^{\alpha_i} \sigma_j^{\alpha_j}  \, .
\end{align}
The couplings $J$ are time-dependent random variables.
We denote the site-$i$ identity operator and Pauli operators by
$\sigma_i^\alpha$, for $\alpha=0,1,2,3$.
According to the model's precise formulation,
the time-evolution operator $U(t)$ is a random variable that obeys
\begin{align}\label{eq:browncirc}
   U(t+dt) & - U(t)  = - \frac{\Sites}{2} U(t) dt
   - i  \,  dB(t)  \, .
\end{align}
The final term's $dB(t)$ has the form
\begin{align}
   \label{eq:dB}
   dB(t) = \sqrt{\frac{1}{8(\Sites-1)}} \sum_{i < j} \sum_{\alpha_i, \alpha_j}
   \sigma_i^{\alpha_i} \sigma_j^{\alpha_j} dB^{\alpha_i,\alpha_j}_{i,j}(t)  \, .
\end{align}
We will sometimes call Eq.~\eqref{eq:dB} ``$dB$.''
$dB$ is a Gaussian random variable
with zero mean and with variance
\begin{align} \label{eq:dBdB}
   \mathbf{E}_B
   \left\{ dB^{\alpha,\beta}_{i,j}  \,
   dB^{\alpha',\beta'}_{i',j'} \right\} = \delta_{\alpha,\alpha'}\delta_{\beta,\beta'}
   \delta_{i,i'} \delta_{j,j'}  \,  dt.
\end{align}
The expectation value $\mathbf{E}_B$ is an average
over realizations of the noise $B$.
We demand that $dt  \,  dt =0$ and $dB  \,  dt = 0$,
in accordance with the standard Ito calculus.
$dB(t)$ is independent of $U(t)$,
i.e., of all previous $dB$'s.

We wish to compute the average,
over the ensemble defined by Eq.~\eqref{eq:browncirc},
of the coarse-grained quasiprobability:
\begin{align}
\mathfrak{A}( v_1 , w_2 , v_2 , w_3 )
= \mathbf{E}_B\left\{ \SumKD{\rho} ( v_1 , w_2 , v_2 , w_3 ) \right\}  \, .
\end{align}

\subsection{Infinite-temperature thermal state $\id / 2^\Sites$}

We focus here on the infinite-temperature thermal state,
$\rho = \id/2^{\Sites}$, for two reasons.
First, a system with a time-dependent Hamiltonian generically heats to infinite temperature with respect to any Hamiltonian in the ensemble.
Second, the $T = \infty$ state is convenient for calculations.
A discussion of other states follows.

The ensemble remains invariant under single-site rotations,
and all qubits are equivalent.
Therefore, all possible choices of
single-site Pauli operators for $\mathcal{W}$ and $V$ are equivalent.
Hence we choose
$\mathcal{W} = \sigma_1^z$ and $V = \sigma_2^z$
without loss of generality.

Let us return to Eq.~\eqref{eq:ProjTrick}.
Equation~\eqref{eq:ProjTrick} results from
substituting in for the projectors in $\SumKD{\rho}$.
The sum contains $16$ terms.
To each term, each projector contributes
the identity $\id$ or a nontrivial Pauli ($\mathcal{W}$ or $V$).
The terms are
\begin{enumerate}[(1)]
  \item $\id \id \id \id $: $ \text{Tr}\left\{\frac{ \id }{2^\Sites} \right\} =  1 $,

  \item $\mathcal{W}\id \id \id $, $\id V\id \id $,
  $\id \id \mathcal{W} \id $, $\id \id \id V$: $ 0 $,

  \item $\mathcal{W}V\id \id $, $\mathcal{W} \id \id V$,
  $ \id V\mathcal{W} \id $,  $ \id  \id \mathcal{W}V$: \\
  $ \text{Tr}\left\{\frac{\sigma_1^z(t) \sigma_2^z}{2^\Sites} \right\} =: G(t) $,

  \item $\mathcal{W} \id \mathcal{W} \id $, $ \id V \id V$:
  $ \text{Tr}\left\{\frac{ \id }{2^\Sites} \right\} =  1 $,

  \item $\mathcal{W}V\mathcal{W} \id $,
  $\mathcal{W}V \id V$, $\mathcal{W} \id \mathcal{W}V$,
  $ \id V\mathcal{W}V$: $ 0 $,  \quad and

  \item $\mathcal{W}V\mathcal{W}V$:
  $ \text{Tr}\left\{\frac{\sigma_1^z(t)
  \sigma_2^z \sigma_1^z(t) \sigma_2^z}{2^\Sites} \right\} = F(t) $.

\end{enumerate}
These computations rely on $\rho =  \id /2^\Sites$.
Each term that contains an odd number of Pauli operators vanishes,
due to the trace's cyclicality and to the Paulis' tracelessness.
We have introduced a 2-point function $G(t)$.
An overall factor of $1/16$ comes from the projectors' normalization.

Combining all the ingredients,
we can express $\SumKD{\rho}$ in terms of $G$ and $F$. The result is
\begin{align}
& 16  \,  \SumKD{\rho} ( v_1 , w_2 , v_2 , w_3 )
= (1 + w_2 w_3 + v_1 v_2)  \\  \nonumber
& \qquad + (w_2+w_3)(v_1+v_2)  \,  G
+ w_2 w_3 v_1 v_2  \, F.
\end{align}
This result depends on $\rho = \id / 2^\Sites$,
not on the form of the dynamics.
But to compute $\mathfrak{A}$, we must compute
\begin{align}
\mathfrak{G} = \mathbf{E}_B\left\{G \right\}
\end{align}
and
\begin{align}
\mathfrak{F} = \mathbf{E}_B\left\{F \right\}.
\end{align}

The computation of $\mathfrak{F}$ appears in the literature~\cite{Shenker_Stanford_15_Stringy}.
$\mathfrak{F}$ initially equals unity.
It decays to zero around 
$t_* = \frac{1}{3} \log \Sites$, the scrambling time.
The precise functional form of $\mathfrak{F}$ is not crucial.
The basic physics is captured in a phenomenological form
inspired by AdS/CFT computations \cite{Shenker_Stanford_15_Stringy},
\begin{align}
\mathfrak{F} \sim \left(\frac{1+c_1}{1+c_1 e^{3 t}}\right)^{c_2},
\end{align}
wherein $c_1 \sim 1/\Sites$ and $c_2 \sim 1$.

To convey a sense of the physics, we review the simpler calculation of $\mathfrak{G}$. The two-point function evolves according to
\begin{align}
G(t+dt) &= \frac{1}{2^\Sites}  \:  \text{Tr}\bigg\{
\left[U(t) - \frac{\Sites}{2 } U(t) dt - i  \,  dB  \,  U(t) \right]
\sigma_1^z
\nonumber \\ & \times
\left[U(t)^\dagger - \frac{\Sites}{2} U(t)^\dagger dt
+ i  \,  U(t)^\dagger dB^\dagger \right]  \sigma_2^z \bigg\} \, .
\end{align}
Using the usual rules of Ito stochastic calculus,
particularly Eq.~\eqref{eq:dBdB} and $dt  \, dt = dB  \, dt =0$,
we obtain
\begin{align}
   & \mathfrak{G}(t+dt)  - \mathfrak{G}(t)
   = - \Sites  \,  dt  \,  \mathfrak{G}(t)
   + dt  \,   \frac{1}{8(\Sites-1)}
   \nonumber \\ & \times
   \sum_{i < j}  \sum_{\alpha_i,\alpha_j} \frac{1}{2^\Sites}
   \mathbf{E}_B\left\{ \text{Tr}\left\{
   \sigma_1^z(t) \sigma_i^{\alpha_i} \sigma_j^{\alpha_j} \sigma_2^z     \sigma_i^{\alpha_i} \sigma_j^{\alpha_j}   \right\} \right\}.
\end{align}
We have applied the trace's cyclicality in the second term.

The second term's value depends on whether $i$ and/or $j$ equals $2$.
If $i$ and/or $j$ equals $2$,
the second term vanishes because
$\sum_{\alpha=0}^3 \sigma^\alpha \sigma^z \sigma^\alpha = 0$.
If neither $i$ nor $j$ is $2$,
$\sigma_i^{\alpha_i} \sigma_j^{\alpha_j}$ commutes with $\sigma_2^z$.
The second term becomes proportional to $G$.
In $(\Sites-1)(\Sites-2)/2$ terms, $i, j \neq 2$.
An additional factor of $4^2 = 16$
comes from the two sums over Pauli matrices. Hence
\begin{align}
\mathfrak{G}(t+dt) - \mathfrak{G}(t) = - 2 dt  \, \mathfrak{G}  \, ,
\end{align}
or
\begin{align}
\frac{d \mathfrak{G}}{dt} = - 2 \mathfrak{G} .
\end{align}

This differential equation implies that $\mathfrak{G}$
exponentially decays from its initial value.
The initial value is zero: $\mathfrak{G}(0) = G(0) = 0$.
Hence $\mathfrak{G}(t)$ is identically zero.

Although it does not arise when we consider $\mathfrak{A}$,
the ensemble-average autocorrelation function
$\mathbf{E}_B\left\{\langle \sigma_1^z(t) \sigma_1^z\rangle \right\}$
obeys a differential equation similar to
the equation obeyed by $\mathfrak{G}$.
In particular, the equation decays exponentially with
an order-one rate.

By the expectation value's linearity and the vanishing of $\mathfrak{G}$,
\begin{align} \label{avgA}
\mathfrak{A} = \frac{(1+w_2 w_3 + v_1 v_2)
+ w_2 w_3 v_1 v_2  \,  \mathfrak{F} }{16}.
\end{align}
This simple equation states that
the ensemble-averaged quasiprobability depends only on
the ensemble-averaged OTOC $F(t)$,
at infinite temperature.
The time scale of $\mathfrak{F}$'s decay is $t_* = \frac{1}{3} \log \Sites$.
Hence this is the time scale of changes in $\mathfrak{A}$.

Equation~\eqref{avgA} shows (as intuition suggests)
that $\mathfrak{A}$ depends only on
the combinations $w_2 w_3$ and $v_1 v_2$.
At $t=0$, $\mathfrak{F}(0)=1$. Hence $\mathfrak{A}$ is
\begin{align}
\mathfrak{A}_{t=0} = \frac{1 + w_2 w_3 + v_1 v_2 +  w_2 w_3 v_1 v_2}{16}.
\end{align}
The cases are
\begin{enumerate}[(1)]
\item $w_2 w_3 =1, v_1 v_2 =1$: $\mathfrak{A} = 1/4$,
\item $w_2 w_3=1, v_1 v_2=-1$: $\mathfrak{A} = 0$,
\item $w_2 w_3=-1, v_1 v_2=1$: $\mathfrak{A} = 0$,
\quad \text{and}
\item $w_2 w_3=-1, v_1 v_2=-1$: $\mathfrak{A} = 0$.
\end{enumerate}
These values are consistent with Fig.~\ref{fig:TInf_zz_AR} at $t=0$.
These values' degeneracies are consistent with
the symmetries discussed in Sec.~\ref{section:Numerics}
and in Sec.~\ref{section:TA_Props} (Property~\ref{property:Syms}).

At long times, $\mathfrak{F}(\infty) = 0$, so $\mathfrak{A}$ is
\begin{align}
\mathfrak{A}_{t=\infty} = \frac{1 + w_2 w_3 + v_1 v_2}{16}.
\end{align}
The cases are
\begin{enumerate}[(1)]
\item $w_2 w_3=1, v_1 v_2=1$: $\mathfrak{A} = 3/16$,
\item $w_2 w_3=1, v_1 v_2=-1$: $\mathfrak{A} = 1/16$,
\item $w_2 w_3=-1, v_1 v_2=1$: $\mathfrak{A} = 1/16$,
\quad \text{and}
\item $w_2 w_3=-1, v_1 v_2=-1$: $\mathfrak{A} = -1/16$.
\end{enumerate}
Modulo the splitting of the upper two lines,
this result is broadly consistent with
the long-time behavior in Fig.~\ref{fig:TInf_zz_AR}.
As the models in Sec.~\ref{section:Numerics} and this section differ,
the long-time behaviors need not agree perfectly.
However, the models appear to achieve
qualitatively similar scrambled states at late times.

\subsection{General state}

Consider a general state $\rho$, such that
$\SumKD{\rho}$ assumes the general form in Eq.~\eqref{eq:ProjTrick}.
We still assume that $\mathcal{W}=\sigma_1^z$ and $V = \sigma_2^z$.
However, the results will, in general, now depend on these choices
via the initial condition $\rho$.
We still expect that, at late times,
the results will not depend on the precise choices.
Below, we use the notation $\langle . \rangle \equiv \text{Tr}( \rho \, . )$.

We must consider 16 terms again.
The general case involves fewer simplifications. The terms are
\begin{enumerate}[(1)]
  \item $\id \id \id \id $: $ 1 $,

  \item $\mathcal{W} \id \id \id$, $ \id V \id  \id $,
  $ \id  \id \mathcal{W} \id $, $ \id  \id  \id V$:
  $\langle \sigma_1^z(t) \rangle $ , $\langle \sigma_2^z \rangle$,

  \item $\mathcal{W}V \id  \id $, $\mathcal{W} \id  \id V$,
  $ \id V\mathcal{W} \id $, $ \id  \id \mathcal{W}V$: \\
  $ \langle \sigma_1^z(t)  \,  \sigma_2^z \rangle $,
  $\langle \sigma_2^z  \,   \sigma_1^z(t) \rangle$,

  \item $\mathcal{W} \id \mathcal{W} \id $, $ \id V \id V$: $ 1 $,

  \item  \label{item:ThreeIs}
  $\mathcal{W}V\mathcal{W} \id $, $\mathcal{W}V \id V$,
  $\mathcal{W} \id \mathcal{W}V$, $ \id V\mathcal{W}V$:
  $\langle \sigma_1^z(t)  \,   \sigma_2^z  \,   \sigma_1^z(t)  \rangle$, $
  \langle \sigma_1^z(t)  \,   \rangle$,
  $\langle \sigma_2^z \rangle$,
  $\langle \sigma_2^z  \,   \sigma_1^z(t)  \,   \sigma_2^z \rangle$,
  \quad \text{and}

  \item $\mathcal{W}V\mathcal{W}V$:
  $ \langle \sigma_1^z(t)  \,   \sigma_2^z  \,   \sigma_1^z(t)  \,   \sigma_2^z \rangle = F(t) $.

\end{enumerate}

Consider first the terms of the form
$\mathfrak{q}_i(t) := \mathbf{E}_B\{\langle \sigma_i^z(t) \rangle\}$.
The time derivative is
\begin{align}
&\frac{d \mathfrak{q}_i}{dt} = - \Sites \mathfrak{q}_i  \\ \nonumber
& + \frac{1}{8(\Sites-1)} \sum_{j<k} \sum_{\alpha_j,\alpha_k} \mathbf{E}_B \{ \langle \sigma_j^{\alpha_j} \sigma_k^{\alpha_k} U(t) \sigma_i^z U(t)^\dagger \sigma_j^{\alpha_j} \sigma_k^{\alpha_k} \rangle \}.
\end{align}
To simplify the second term, we use a trick. Since
\begin{align}
\label{eq:BrwnTrick}
\sigma_j^{\alpha_j} \sigma_k^{\alpha_k} \sigma_m^{\alpha_m} \sigma_n^{\alpha_n} \sigma_j^{\alpha_j} \sigma_k^{\alpha_k} = \pm \sigma_m^{\alpha_m} \sigma_n^{\alpha_n},
\end{align}
we may pass the factors of $\sigma_j^{\alpha_j} \sigma_k^{\alpha_k}$
through $U(t)$, at the cost of changing some Brownian weights.
We must consider a different set of $dB$'s,
related to the originals by minus signs.
This alternative set of Brownian weights
has the original set's ensemble probability.
Hence the ensemble average gives the same result. Therefore,
\begin{align}
& \mathbf{E}_B\{\langle \sigma_j^{\alpha_j} \sigma_k^{\alpha_k}
U(t) \, \sigma_i^z U(t)^\dagger \sigma_j^{\alpha_j} \sigma_k^{\alpha_k} \rangle \} \nonumber \\
& = \mathbf{E}_B\{\langle U(t) \sigma_j^{\alpha_j} \sigma_k^{\alpha_k}  \sigma_i^z \sigma_j^{\alpha_j} \sigma_k^{\alpha_k}  U(t)^\dagger \rangle \}.
\end{align}

If $i = j$ and/or $i = k$,
the sum over $\alpha_{j}$
and/or the sum over $\alpha_k$ vanishes.
If $i$ equals neither $j$ nor $k$, the Pauli operators commute.
The term reduces to $\mathfrak{q}_i$.
$i$ equals neither $j$ nor $k$ in
$(\Sites-1)(\Sites-2)/2$ terms.
A factor of $16$ comes from the sums over $\alpha_j$ and $\alpha_k$. Hence
\begin{align}
\frac{d \mathfrak{q}_i}{dt} = - \Sites \mathfrak{q}_i + (\Sites-2) \mathfrak{q}_i = - 2 \mathfrak{q}_i.
\end{align}

Consider the terms of the form $\mathfrak{q}_{ij}(t) := \langle \sigma_i^z(t) \sigma_j^z \rangle$.
Note that $\langle \sigma_j^z \sigma_i^z(t) \rangle = \mathfrak{q}_{ij}^*$.
We may reuse the trick introduced above.
[This trick fails only when more than two copies of $U$ appear, as in $F(t)$].
To be precise,
\begin{align}
& \mathbf{E}_B\{\langle \sigma_m^{\alpha_m} \sigma_n^{\alpha_n} U(t) \sigma_i^z U(t)^\dagger \sigma_m^{\alpha_m} \sigma_n^{\alpha_n} \sigma_j^z \rangle \} \nonumber \\
& = \mathbf{E}_B\{\langle U(t) \sigma_m^{\alpha_m} \sigma_n^{\alpha_n}  \sigma_i^z \sigma_m^{\alpha_m} \sigma_n^{\alpha_n} U(t)^\dagger \sigma_j^z \rangle \}.
\end{align}
As before, the sums over $\alpha$
kill the relevant term in
the time derivative of $\mathfrak{q}_{ij}$, unless $i \neq m,n$. Hence
\begin{align}
\frac{d \mathfrak{q}_{ij}}{dt} = - 2 \mathfrak{q}_{ij} ,
\end{align}
as at infinite temperature.

Item~\ref{item:ThreeIs}, in the list above, concerns
products of three $\W$'s and $V$'s.
We must consider four expectation values of Pauli products.
As seen above, two of these terms reduce to $\mathfrak{q}_i$ terms.
By the trick used earlier,
\begin{align}
& \mathbf{E}_B \{ \langle \sigma_2^z U(t) \sigma_1^z U(t)^\dagger \sigma_2^z \} \nonumber \\
& = \mathbf{E}_B \{ \langle U(t) \sigma_2^z  \sigma_1^z \sigma_2^z U(t)^\dagger  \} = \mathfrak{q}_1(t).
\end{align}
The other term we must consider is
$\mathbf{E}_B \{ \langle \sigma_i^z(t) \sigma_j^z \sigma_i^z(t) \rangle \}
=:  \mathfrak{f}_{ij}$.
Our trick will not work, because
there are multiple copies of $U(t)$
that are not all simultaneously switched
as operators are moved around.
At early times, when $\sigma_i^z(t)$ and $\sigma_j^z$ approximately commute, this term approximately equals $\langle \sigma_j^z \rangle = \mathfrak{q}_j(0)$.
At later times, including around the scrambling time,
this term decays to zero.

The general expression for $\mathfrak{A}$ becomes
\begin{eqnarray}
  16  \,  \mathfrak{A} &=& 1 + w_3 w_2 + v_1 v_2 \nonumber \\
    &+& (w_3 +w_2)  \, \mathfrak{q}_1(t)
    + (v_1 + v_2)  \, \mathfrak{q}_2(0) \nonumber \\
    &+& (w_3 v_2 + w_3 v_1 + w_2 v_1)  \,  \mathfrak{q}_{12}(t)
    + v_2 w_2  \,  \mathfrak{q}_{12}(t)^* \nonumber \\
    &+& w_3 v_2 w_2  \,  \mathfrak{f}_{12}(t)
    + (w_3 v_1 v_2 + w_2 v_1 v_2)  \,  \mathfrak{q}_1(t)
    \nonumber \\ \label{eq:Brown_help}
    &+& w_3 w_2 v_1  \,  \mathfrak{q}_2(0)
    + w_3 w_2 v_1 v_2  \,  \mathfrak{F}(t).
\end{eqnarray}
All these $\mathfrak{q}$ functions obey known differential equations.
The functions decay after a time of order one.
We do not have explicit expressions for the $\mathfrak{f}$ functions that appear.
They are expected to vary after a time $\sim  \log \Sites$.

\subsubsection{Special case: $\sigma_2^z$ eigenstate}

In a concrete example, we suppose that $\rho$ is
a $+1$ eigenstate of $\sigma_2^z$.
Expressions simplify:
\begin{align}
\mathfrak{q}_2(0) = 1,
\end{align}
\begin{align}
\mathfrak{q}_{12}(t) = \mathfrak{q}_1(t) = \mathfrak{q}_{12}(t)^*,
\end{align}
and
\begin{align}
\mathfrak{f}_{12} = \mathfrak{F}.
\end{align}
Hermiticity of the Pauli operators implies that $\mathfrak{f}_{12}$ is real.
Hence the ensemble-averaged OTOC $\mathfrak{F}$ is real for this choice of $\rho$.
The ensemble-averaged $\OurKD{\rho}$ has the form
\begin{align}
\label{eq:Brown_ex0}
\mathfrak{A} = \frac{k_1 + k_2 \mathfrak{q}_1 + k_3 \mathfrak{F}  }{16} ,
\end{align}
wherein
\begin{align}
k_1 = (1+v_1)(1 + v_2  + w_3 w_2),
\end{align}
\begin{align}
k_2 =  (1+v_1)(w_3 +w_2)(1+v_2),
\end{align}
and
\begin{align}
\label{eq:Brown_ex3}
k_3 =  (1+v_1) w_3 v_2 w_2.
\end{align}
Equations~\eqref{eq:Brown_ex0}--\eqref{eq:Brown_ex3} imply that
$\mathfrak{A} = 0$ unless $v_1=1$.

The time scale after which $\mathfrak{q}_1$ decays
is order-one.
The time required for $\mathfrak{F}$ to decay is of order $\log \Sites$ (although not necessarily exactly the same as for the $T = \infty$ state).
Therefore, the late-time value of $\mathfrak{A}$ is well approximated by
\begin{align}
\mathfrak{A}_{t \gg 1} = \frac{k_1 + k_3  \,  \mathfrak{F}}{16}.
\end{align}

\subsection{Summary}

This study has the following main messages.
\begin{enumerate}[(1)]
\item In this model, the ensemble-averaged quasiprobability varies
on two time scales.
The first time scale is an order-one relaxation time.
At later times, the OTOC controls the physics entirely.
$F(t)$ varies after a time of order $\log \Sites$.
\item While the late-time physics of $\SumKD{\rho}$ is controlled entirely by
the ensemble-averaged $F(t)$,
the negative values of $\SumKD{\rho}$ show a nonclassicality
that might not be obvious from $F(t)$ alone.
Furthermore, we computed only the first moment of $\SumKD{\rho}$.
The higher moments are likely not determined by $F(t)$ alone.
\item For $T = \infty$, the late-time physics
is qualitatively similar to the late-time physics of
the geometrically local spin chain in Sec.~\ref{section:Numerics}.
\item Nonclassicality, as signaled by negative values of $\SumKD{\rho}$,
is extremely robust.
It survives the long-time limit and the ensemble average.
One might have expected thermalization and interference
to stamp out nonclassicality.
On the other hand, we expect the circuit average
to suppress the imaginary part of $\SumKD{\rho}$ rapidly.
We have no controlled examples in which
$\Im \left( \SumKD{\rho} \right)$ remains nonzero at long times.
Finding further evidence for or against this conjecture
remains an open problem.
\end{enumerate}

\section{Theoretical study of $\OurKD{\rho}$}
\label{section:Theory}

We have discussed experimental measurements,
numerical simulations, and analytical calculations
of the OTOC quasiprobability $\OurKD{\rho}$.
We now complement these discussions with
mathematical properties and physical interpretations.
First, we define an \emph{extended Kirkwood-Dirac distribution}
exemplified by $\OurKD{\rho}$.
We still denote by $\mathcal{B} ( \Hil )$ the set of
bounded operators defined on $\Hil$.

\begin{definition}[$\Ops$-extended Kirkwood-Dirac quasiprobability]
\label{definition:Extend_KD}
Let $\Set{ \ket{a} },  \ldots,  \Set{ \ket{k} }$
and $\Set{ \ket{f} }$ denote orthonormal bases
for the Hilbert space $\mathcal{H}$.
Let $\Oper \in \mathcal{B}( \mathcal{H} )$ denote
a bounded operator defined on $\Hil$.
A \emph{$\Ops$-extended Kirkwood-Dirac quasiprobability}
for $\Oper$ is defined as\footnote{
\label{footnote:ImplicitTime}
Time evolutions may be incorporated into the bases.
For example, Eq.~\eqref{eq:KD_rho_2} features
the 1-extended KD quasiprobability
$\langle f' | a \rangle  \langle a | \rho' | f' \rangle$.
The $\rho'  :=  U_{t'} \rho U_{t'}^\dag$ results from
time-evolving a state $\rho$.
The $\ket{ f' }  :=  U_{ t'' - t' }^\dag  \ket{f}$ results from
time-evolving an eigenket $\ket{f}$ of $F = \sum_f  f  \ketbra{f}{f}$.
We label~\eqref{eq:KD_rho_2} as
$\OurKD{\rho}^\1 ( \rho, a , f )$,
rather than as $\OurKD{\rho}^\1 ( \rho', a , f' )$.
Why? One would measure~\eqref{eq:KD_rho_2}
by preparing $\rho$, evolving the system,
measuring $\A$ weakly, inferring outcome $a$,
evolving the system, measuring $F$, and obtaining outcome $f$.
No outcome $f'$ is obtained.
Our notation is that in~\cite{Dressel_15_Weak}
and is consistent with the notation in~\cite{YungerHalpern_17_Jarzynski}.}
\begin{align}
   \label{eq:Extend_KD}
   \OurKD{\Oper}^\ParenK ( a,  \ldots,  k, f )  :=
   \langle f | k \rangle \langle k | \ldots | a \rangle \langle a | \Oper | f \rangle \, .
\end{align}
\end{definition}

This quasiprobability can be measured
via an extension of the protocol in Sec.~\ref{section:Intro_weak_meas}.
Suppose that $\Oper$ denotes a density matrix.
In each trial, one prepares $\Oper$,
weakly measures the bases sequentially
(weakly measures $\Set{ \ket{a} }$, and so on,
until weakly measuring $\Set{ \ket{k} }$),
then measures $\ketbra{f}{f}$ strongly.

We will focus mostly on density operators
$\Oper = \rho \in \mathcal{D} ( \Hil )$.
One infers $\OurKD{\rho}^\ParenK$ by performing
$2 \Ops - 1$ weak measurements,
and one strong measurement, per trial.
The order in which the bases are measured is
the order in which the labels $a , \ldots, k, f$ appear in
the argument of $\OurKD{\Oper}^\ParenK ( . )$.
The conventional KD quasiprobability is 1-extended.
The OTOC quasiprobability $\OurKD{\rho}$ is 3-extended.

Our investigation parallels the exposition, in Sec.~\ref{section:Intro_to_KD},
of the KD distribution.
First, we present basic mathematical properties.
$\OurKD{\rho}$, we show next,
obeys an analog of Bayes' Theorem.
Our analog generalizes the known analog~\eqref{eq:CondQuasi}.
Our theorem reduces exponentially (in system size)
the memory needed to compute weak values, in certain cases.
Third, we connect $\OurKD{\rho}$ with
the operator-decomposition argument in Sec.~\ref{section:KD_Coeffs}.
$\OurKD{\rho}$ consists of coefficients
in a decomposition of an operator $\rho'$
that results from asymmetrically decohering $\rho$.
Summing $\OurKD{\rho} ( . )$ values yields
a KD representation for $\rho$.
This sum can be used, in experimental measurements
of $\OurKD{\rho}$ and the OTOC, to evaluate
how accurately the desired initial state was prepared.
Fourth, we explore the relationship between
out-of-time ordering and quasiprobabilities.
Time-ordered correlators are moments of quasiprobabilities
that clearly reduce to classical probabilities.
Finally, we generalize beyond the OTOC,
which encodes $\Ops = 3$ time reversals.
Let $\Opsb  :=  \frac{1}{2} ( \Ops + 1 )$.
A \emph{$\Opsb$-fold OTOC} $F^\ParenKB(t)$
encodes $\Ops$ time reversals~\cite{Roberts_16_Chaos,Hael_17_Classification}.
The quasiprobability behind $F^\ParenKB(t)$, we find,
is $\Ops$-extended.

Recent quasiprobability advances involve out-of-time ordering,
including in correlation functions~\cite{Manko_00_Lyapunov,Bednorz_13_Nonsymmetrized,Oehri_16_Time,Hofer_17_Quasi,Lee_17_On}.
Merging these works with the OTOC framework
offers an opportunity for further research (Sec.~\ref{section:Outlook}).

\subsection{Mathematical properties of $\OurKD{\rho}$}
\label{section:TA_Props}

$\OurKD{\rho}$ shares some of its properties
with the KD quasiprobability (Sec.~\ref{section:KDProps}).
Properties of $\OurKD{\rho}$ imply properties of $P(W, W')$,
presented in Appendix~\ref{section:P_Properties}.

\begin{property}  \label{prop:TA_Complex}
The OTOC quasiprobability is a map
$\OurKD{\rho}  \:  :  \:
\mathcal{D} ( \mathcal{H} )  \times
\Set{ v_1 }    \times   \Set{ \DegenV_{v_1} }    \times
\Set{ w_2 }    \times  \Set{ \DegenW_{w_2} }    \times
\Set{ v_2 }    \times   \Set{ \DegenV_{v_2} }      \times
\Set{ w_3 }    \times   \Set{ \DegenW_{w_3} }    \times
\to \mathbb{C} \, .$
The domain is a composition of
the set $\mathcal{D} ( \Hil )$ of density operators defined on $\Hil$
and eight sets of complex numbers.
The range is not necessarily real:
$\mathbb{C}  \supset \mathbb{R}$.
\end{property}

$\OurKD{\rho}$ depends on $H$ and $t$ implicitly through $U$.
The KD quasiprobability in~\cite{Dressel_15_Weak}
depends implicitly on time similarly
(see Footnote~\ref{footnote:ImplicitTime}).
Outside of OTOC contexts, $\mathcal{D} ( \Hil )$
may be replaced with $\mathcal{B} ( \Hil )$.
$\Ops$-extended KD distributions represent bounded operators,
not only quantum states.
$\mathbb{C}$, not necessarily $\mathbb{R}$,
is the range also of
the $\Ops$-fold generalization $\OurKD{\rho}^\ParenK$.
We expound upon the range's complexity
after discussing the number of arguments of $\OurKD{\rho}$.

\emph{Five effective arguments of $\OurKD{\rho}$:}
On the left-hand side of Eq.~\eqref{eq:TADef},
semicolons separate four tuples.
Each tuple results from a measurement, e.g., of $\NondegW$.
We coarse-grained over the degeneracies
in Sections~\ref{section:ProjTrick}--\ref{section:Brownian}.
Hence each tuple often functions as one degree of freedom.
We treat $\OurKD{\rho}$ as a function of
four arguments (and of $\rho$).
The KD quasiprobability has just two arguments (apart from $\Oper$).
The need for four arises from
the noncommutation of $\W(t)$ and $V$.

\emph{Complexity of $\OurKD{\rho}$:}
The ability of $\OurKD{\rho}$ to assume nonreal values
mirrors Property~\ref{prop:Complex} of the KD distribution.
The Wigner function, in contrast, is real.
The OTOC quasiprobability's real component, $\Re ( \OurKD{\rho} )$,
parallels the Terletsky-Margenau-Hill distribution.
We expect nonclassical values of $\OurKD{\rho}$
to reflect nonclassical physics,
as nonclassical values of the KD quasiprobability do
(Sec.~\ref{section:Intro_to_KD}).

Equations~\eqref{eq:TADef} and~\eqref{eq:TAForm}
reflect the ability of $\OurKD{\rho}$ to assume nonreal values.
Equation~\eqref{eq:TADef} would equal
a real product of probabilities
if the backward-process amplitude $A_\rho^*$
and the forward-process amplitude $A_\rho$
had equal arguments.
But the arguments typically do not equal each other.
Equation~\eqref{eq:TAForm} reveals conditions under which
$\OurKD{\rho} ( . )  \in  \mathbb{R}$ and $\not\in \mathbb{R}$.
We illustrate the $\in$ case with two examples
and the $\not\in$ case with one example.

\begin{example}[Real $\OurKD{\rho}$ \#1:
$t = 0$, shared eigenbasis, arbitrary $\rho$]
\label{ex:Real1}
Consider $t  =  0$, at which $U  =  \id$.
The operators $\W(t) = \W$ and $V$ share an eigenbasis,
under the assumption that $[ \W ,  \,  V ]  =  0$:
$\Set{ \ket{ w_\ell,  \DegenW_{w_\ell} } }
=  \Set{ \ket{ v_\ell,  \DegenV_{v_\ell} } }$.
With respect to that basis,
\begin{align}
   \label{eq:Ex1}
   & \OurKD{\rho} ( v_1,  \DegenV_{v_1}  ;  w_2,  \DegenW_{w_2} ;
   v_2,  \DegenV_{v_2}  ;  w_3,  \DegenW_{w_3}  )
   \nonumber \\ &
   =  \left(  \delta_{w_3 v_2}  \delta_{ \DegenW_{w_3} \DegenV_{v_2} }  \right)
   \left(  \delta_{v_2 w_2}  \delta_{ \DegenV_{v_2}  \DegenW_{w_2} }  \right)
   \left(  \delta_{w_2  v_1}  \delta_{ \DegenW_{w_2} \DegenV_{v_1} }   \right)
   \nonumber \\ &  \quad  \times
   \sum_j  p_j
   | \langle w_3,  \DegenW_{w_3}  |  j  \rangle |^2
    \nonumber \\ &
   \in  \mathbb{R}  \, .
\end{align}
We have substituted into Eq.~\eqref{eq:TAForm}.
We substituted in for $\rho$ from Eq.~\eqref{eq:Rho}.
\end{example}

Example~\ref{ex:Real1} is consistent with
the numerical simulations in Sec.~\ref{section:Numerics}.
According to Eq.~\eqref{eq:Ex1}, at $t = 0$,
$\sum_{\rm degeneracies} \OurKD{\rho}  =:  \SumKD{\rho}  \in \mathbb{R}$.
In Figures~\ref{fig:T1_zz_AI},~\ref{fig:rand_zz_AI}, and~\ref{fig:xup_zz_AI},
the imaginary parts $\Im ( \SumKD{\rho} )$
clearly vanish at $t = 0$.
In Fig.~\ref{fig:TInf_zz_AI}, $\Im ( \SumKD{\rho} )$ vanishes
to within machine precision.\footnote{
The $\Im ( \SumKD{\rho} )$ in Fig.~\ref{fig:TInf_zz_AI}
equals zero identically, if $w_2 = w_3$ and/or if $v_1  =  v_2$.
For general arguments,
\begin{align}
   \label{eq:Imag_zero}
   \Im  \LParen  \SumKD{\rho} ( v_1, w_2, v_2, w_3 ) \RParen
   & = \frac{1}{ 2 i }  \:  \Big[  \OurKD{\rho} ( v_1, w_2, v_2, w_3 )
   \nonumber \\ & \qquad
   -  \OurKD{\rho}^* ( v_1, w_2, v_2, w_3 )  \Big]  \, .
\end{align}
The final term equals
\begin{align}
   & \left[  \Tr \left(  \ProjWt{ w_3 }  \ProjV{ v_2 }  \ProjWt{ w_2 }
                     \ProjV{ v_1 }  \right)  \right]^*
   =  \Tr \left(  \ProjV{ v_1 }  \ProjWt{ w_2 }  \ProjV{ v_2 }
                      \ProjWt{ w_3 }  \right)
   \\ & \quad  \label{eq:Imag_zero_help}
   =  \Tr \left(  \ProjWt{ w_2 }  \ProjV{ v_2 } \ProjWt{ w_3 }
                     \ProjV{ v_1 }  \right)
   =  \SumKD{\rho}  ( v_1 , w_3 , v_2 , w_2 )  \, .
\end{align}
The first equality follows from projectors' Hermiticity;
and the second, from the trace's cyclicality.
Substituting into Eq.~\eqref{eq:Imag_zero} shows that
$\SumKD{\rho} ( . )$ is real if $w_2 = w_3$.
$\SumKD{\rho} ( . )$ is real if $v_1 = v_2$,
by an analogous argument.}

Consider a $\rho$ that lacks coherences relative to
the shared eigenbasis, e.g., $\rho  =  \id / \Dim$.
Example~\ref{ex:Real1} implies that
$\Im \left( \OurKD{( \id / \Dim)} \right)$ at $t = 0$.
But $\Im \left( \OurKD{( \id / \Dim)} \right)$ remains zero for all $t$
in the numerical simulations.
Why, if time evolution deforms the $\W(t)$ eigenbasis
from the $V$ eigenbasis?
The reason appears to be a cancellation, as in Example~\ref{ex:Real2}.

Example~\ref{ex:Real2} requires more notation.
Let us focus on a chain of $\Sites$ spin-$\frac{1}{2}$
degrees of freedom.
Let $\sigma^\alpha$ denote the $\alpha = x, y, z$ Pauli operator.
Let $\ket{ \sigma^\alpha,  \pm }$ denote
the $\sigma^\alpha$ eigenstates, such that
$\sigma^\alpha  \ket{ \sigma^\alpha,  \pm }
=  \pm  \ket{ \sigma^\alpha,  \pm }$.
$\Sites$-fold tensor products are denoted by
$\ket{ \bm{ \sigma^\alpha,  \pm } }  :=
\ket{ \sigma^\alpha,  \pm }^{\otimes \Sites }$.
We denote by $\sigma_j^\alpha$
the $\alpha^\th$ Pauli operator
that acts nontrivially on site $j$.

\begin{example}[Real $\OurKD{\rho}$ \#2:
$t = 0$, nonshared eigenbases, $\rho = \id / \Dim$]
\label{ex:Real2}
Consider the spin chain at $t = 0$, such that $U  =  \id$.
Let $\W = \sigma_1^z$ and $V  =  \sigma_\Sites^y$.
Two $\W$ eigenstates are $\ket{ \bm{ \sigma^z, \pm } }$.
Two $V$ eigenstates are
$\ket{ \bm{ \sigma^y,  +  } }
=  \left[  \frac{1}{ \sqrt{2} }  \:
\left( \ket{ \sigma^z, + }
+  i  \ket{ \sigma^z, - }  \right)  \right]^{ \otimes \Sites }$
and  $\ket{ \bm{ \sigma^y,  -  } }
=  \left[  \frac{1}{ \sqrt{2} }  \:
\left( \ket{ \sigma^z, + }
-  i  \ket{ \sigma^z, - }  \right)  \right]^{ \otimes \Sites }$.
The overlaps between the $\W$ eigenstates
and the $V$ eigenstates are
\begin{align}
   \label{eq:InnerPs}
   & \langle \bm{ \sigma^z, + }  |  \bm{ \sigma^y, + }  \rangle
   =  \left(  \frac{ 1 }{ \sqrt{2} }  \right)^\Sites \, ,
   \nonumber \\ &
   \langle \bm{ \sigma^z, + }  |  \bm{ \sigma^y, - }  \rangle
   =  \left(  \frac{ 1 }{ \sqrt{2} }  \right)^\Sites  \, ,
   \nonumber \\ &
   \langle \bm{ \sigma^z, - }  |  \bm{ \sigma^y, + }  \rangle
   = \left(  \frac{ i }{ \sqrt{2} }  \right)^\Sites \, ,
   \; \text{and} \nonumber \\ &
   \langle \bm{ \sigma^z, - }  |  \bm{ \sigma^y, - }  \rangle
   = \left(  \frac{ - i }{ \sqrt{2} }  \right)^\Sites \, .
\end{align}

Suppose that $\rho = \id / \Dim$.
$\OurKD{ ( \id / \Dim) } ( . )$ would have a chance of being nonreal
only if some $\ket{ v_\ell ,  \DegenV_{v_\ell} }$ equaled
$\ket{ \bm{ \sigma^z,  - } }$.
That $\ket{ \bm{ \sigma^z,  - } }$ would introduce
an $i$ into Eq.~\eqref{eq:TAForm}.
But $\bra{ \bm{ \sigma^z,  - } }$ would introduce another $i$.
The product would be real.
Hence $\OurKD{ ( \id / \Dim) } ( . )  \in  \mathbb{R}$.
\end{example}
\noindent $\OurKD{\rho}$ is nonreal in the following example.

\begin{example}[Nonreal $\OurKD{\rho}$:
$t = 0$, nonshared eigenbases, $\rho$ nondiagonal relative to both]
\label{ex:Nonreal1}
Let $t$, $\W$, $V$, $\Set{ \ket{ w_\ell,  \DegenW_{w_\ell} } }$,
and $\Set{ \ket{ v_m,  \DegenV_{v_m} } }$
be as in Example~\ref{ex:Real2}.

Suppose that $\rho$ has coherences relative to
the $\W$ and $V$ eigenbases.
For instance, let
$\rho  =  \ketbra{ \bm{ \sigma^x, + } }{ \bm{ \sigma^x, + } }$.
Since $\ket{ \sigma^x, + }  =  \frac{1}{ \sqrt{2} }  \:
( \ket{ \sigma^z, + }  +  \ket{ \sigma^z, - } )$,
\begin{align}
   \rho  & =  \frac{1}{ 2^\Sites }  \:
   (  \ketbra{ \sigma^z, + }{ \sigma^z, + }
   +  \ketbra{ \sigma^z, + }{ \sigma^z, - }
   \nonumber \\ & \qquad
   +  \ketbra{ \sigma^z, - }{ \sigma^z, + }
   +  \ketbra{ \sigma^z, - }{ \sigma^z, - }  )^{ \otimes \Sites }  \, .
\end{align}

Let $\ket{ w_3, \DegenW_{w_3} } =  \ket{ \bm{ \sigma^z, - } }$,
such that its overlaps with $V$ eigenstates can contain $i$'s.
The final factor in Eq.~\eqref{eq:TAForm} becomes
\begin{align}
   \langle  v_1,  \DegenV_{v_1}  |  \rho  |  w_3,  \DegenW_{w_3}  \rangle
   & =  \frac{1}{ 2^\Sites }  \Big[
   \langle  v_1,  \DegenV_{v_1}  |
   \left(  \ket{ \sigma^z,  +  }^{ \otimes \Sites }  \right)
   \nonumber \\ & \qquad
   +   \langle  v_1,  \DegenV_{v_1}  |
   \left(  \ket{ \sigma^z,  -  }^{ \otimes \Sites }  \right)
   \Big]  \, .
\end{align}
The first inner product evaluates to
$\left(  \frac{ 1 }{ \sqrt{2} }  \right)^\Sites$,
by Eqs.~\eqref{eq:InnerPs}.
The second inner product evaluates to
$\left( \pm  \frac{ i }{ \sqrt{2} }  \right)^\Sites$.
Hence
\begin{align}
   \langle  v_1,  \DegenV_{v_1}  |  \rho  |  w_3,  \DegenW_{w_3}  \rangle
   =  \frac{1}{ 2^{2 \Sites} }
   \left[  1  +  \left(  \pm i \right)^\Sites  \right] \, .
\end{align}
This expression is nonreal if $\Sites$ is odd.
\end{example}

Example~\ref{ex:Nonreal1}, with the discussion after Example~\ref{ex:Real1},
shows how interference can eliminate nonreality from a quasiprobability.
In Example~\ref{ex:Nonreal1},
$\Im \left( \OurKD{\rho} \right)$ does not necessarily vanish.
Hence the coarse-grained $\Im \left( \SumKD{\rho} \right)$
does not obviously vanish.
But $\Im \left( \SumKD{\rho} \right) = 0$
according to the discussion after Example~\ref{ex:Real1}.
Summing Example~\ref{ex:Nonreal1}'s nonzero
$\Im \left( \OurKD{\rho}  \right)$ values
must quench the quasiprobability's nonreality.
This quenching illustrates how interference
can wash out quasiprobabilities' nonclassicality.
Yet interference does not always wash out nonclassicality.
Section~\ref{section:Numerics} depicts
$\SumKD{\rho}$'s that have nonzero imaginary components
(Figures~\ref{fig:T1_zz_AI},~\ref{fig:rand_zz_AI}, and~\ref{fig:xup_zz_AI}).

Example~\ref{ex:Nonreal1} resonates with a finding in~\cite{Solinas_15_Full,Solinas_16_Probing}.
Solinas and Gasparinetti's quasiprobability assumes nonclassical values
when the initial state
has coherences relative to the energy eigenbasis.

\begin{property}
\label{prop:MargOurKD}
Marginalizing $\OurKD{\rho} ( . )$ over all its arguments
except any one
yields a probability distribution.
\end{property}

   Consider, as an example, summing  Eq.~\eqref{eq:TAForm} over
   every tuple except $( w_3, \DegenW_{w_3} )$.
   The outer products become resolutions of unity, e.g.,
   $\sum_{ ( w_2, \DegenW_{w_2} ) }
      \ketbra{ w_2, \DegenW_{w_2} }{ w_2, \DegenW_{w_2} }
      = \id$.
   A unitary cancels with its Hermitian conjugate:
   $U^\dag U = \id$.
   The marginalization yields
   $\langle  w_3, \DegenW_{w_3}  |  U  \rho  U^\dag  |
                      w_3, \DegenW_{w_3}  \rangle$.
   This expression equals the probability that
   preparing $\rho$, time-evolving,
   and measuring the $\NondegW$ eigenbasis
   yields the outcome $( w_3, \DegenW_{w_3} )$.

This marginalization property,
with the structural and operational resemblances
between $\OurKD{\rho}$
and the KD quasiprobability,
accounts for our calling $\OurKD{\rho}$ an extended quasiprobability.
The general $\Ops$-extended $\OurKD{\rho}^\ParenK$
obeys Property~\ref{prop:MargOurKD}.

\begin{property}[Symmetries of $\OurKD{ ( \id / \Dim) }$]
\label{property:Syms}
Let $\rho$ be the infinite-temperature Gibbs state $\id / \Dim$.
The OTOC quasiprobability $\OurKD{ ( \id / \Dim) }$
has the following symmetries.
\begin{enumerate}[(A)]

   \item  \label{item:Sym1}
   $\OurKD{ ( \id / \Dim) } ( . )$ remains invariant under
   the simultaneous interchanges of
   $( w_2,  \DegenW_{w_2} )$ with $( w_3,  \DegenW_{w_3} )$
   and $( v_1,  \DegenV_{v_1} )$ with $( v_2,  \DegenV_{v_2} )$:
   $\OurKD{ ( \id / \Dim) } ( v_1,  \DegenV_{v_1} ;  w_2,  \DegenW_{w_2} ;
   v_2,  \DegenV_{v_2}  ;  w_3,  \DegenW_{w_3}  )
     =  \OurKD{ ( \id / \Dim) } ( v_2,  \DegenV_{v_2}  ;  w_3,  \DegenW_{w_3} ;
     v_1,  \DegenV_{v_1} ;  w_2,  \DegenW_{w_2} )$.

   \item  \label{item:Sym2}
   Let $t = 0$, such that
   $\Set{ \ket{ w_\ell,  \DegenW_{w_\ell} } }
   =  \Set{ \ket{ v_\ell,  \DegenV_{v_\ell} } }$
   (under the assumption that $[\W,  V]  =  0$).
   $\OurKD{ ( \id / \Dim) } ( . )$ remains invariant under
   every cyclic permutation of its arguments.

\end{enumerate}
\end{property}

Equation~\eqref{eq:TAForm} can be recast as a trace.
Property~\ref{property:Syms} follows from the trace's cyclicality.
Subproperty~\ref{item:Sym2} relies on the triviality of
the $t = 0$ time-evolution operator: $U = \id$.
The symmetries lead to degeneracies visible in numerical plots
(Sec.~\ref{section:Numerics}).

Analogous symmetries characterize a \emph{regulated} quasiprobability.
Maldacena \emph{et al.} regulated $F(t)$
to facilitate a proof~\cite{Maldacena_15_Bound}:\footnote{
The name ``regulated'' derives from quantum field theory.
$F(t)$ contains operators $\W^\dag(t)$ and $\W(t)$
defined at the same space-time point
(and operators $V^\dag$ and $V$ defined at the same space-time point).
Products of such operators encode divergences.
One can regulate divergences
by shifting one operator to another space-time point.
The inserted $\rho^{1/4}  =  \frac{1}{Z^{1/4} }  \;  e^{ - H / 4 T }$
shifts operators along an imaginary-time axis.}
\begin{align}
   \label{eq:RegOTOC_def}
   F_\reg (t)  :=  \Tr \left(  \rho^{1/4}  \W(t)  \rho^{1/4}  V
   \rho^{1/4}  \W(t)  \rho^{1/4}  V  \right)  \, .
\end{align}
$F_\reg(t)$ is expected to behave roughly like $F(t)$~\cite{Maldacena_15_Bound,Yao_16_Interferometric}.
Just as $F(t)$ equals a moment of
a sum over $\OurKD{\rho}$,
$F_\reg(t)$ equals a moment of a sum over
\begin{align}
   \label{eq:RegKD}
   &\OurKD{\rho}^\reg  ( v_1,  \DegenV_{v_1} ; w_2,  \DegenW_{w_2} ;
   v_2,  \DegenV_{v_2}  ;  w_3,  \DegenW_{w_3}  )
   \\ \nonumber &
   :=  \langle  w_3,  \DegenW_{w_3}  |  U  \rho^{1/4}  |
        v_2,  \DegenV_{v_2}  \rangle
        \langle  v_2,  \DegenV_{v_2}  |  \rho^{1/4}  U^\dag  |
        w_2,  \DegenW_{w_2}  \rangle
        \\ \nonumber  & \qquad \times
        \langle  w_2,  \DegenW_{w_2}  |  U  \rho^{1/4}  |
        v_1,  \DegenV_{v_1}  \rangle
        \langle  v_1,  \DegenV_{v_1}  |  \rho^{1/4}  U^\dag  |
        w_3,  \DegenW_{w_3}  \rangle \\
   &   \label{eq:RegKD2}
   \equiv \langle  w_3,  \DegenW_{w_3}  |  \tilde{U}  |
        v_2,  \DegenV_{v_2}  \rangle
        \langle  v_2,  \DegenV_{v_2}  |  \tilde{U}^\dag  |
        w_2,  \DegenW_{w_2}  \rangle
        \\ \nonumber  & \qquad \times
        \langle  w_2,  \DegenW_{w_2}  |  \tilde{U}  |
        v_1,  \DegenV_{v_1}  \rangle
        \langle  v_1,  \DegenV_{v_1}  |  \tilde{U}^\dag  |
        w_3,  \DegenW_{w_3}  \rangle  \, .
\end{align}
The proof is analogous to the proof of Theorem~1 in~\cite{YungerHalpern_17_Jarzynski}.
Equation~\eqref{eq:RegKD2} depends on
$\tilde{U}  :=  \frac{1}{Z}  \:  e^{ - i H \tau }$,
which propagates in the complex-time variable
$\tau :=  t  -  \frac{i}{4 T}$.
The Hermitian conjugate $\tilde{U}^\dag  =  \frac{1}{Z}  \:  e^{ i H \tau^* }$
propagates along $\tau^* = t + \frac{i }{ 4 T }$.

$\OurKD{ \left( e^{ - H / T } / Z \right) }^\reg$ has
the symmetries of $\OurKD{ ( \id / \Dim) }$
(Property~\ref{property:Syms}) for arbitrary $T$.
One might expect $\OurKD{\rho}^\reg$ to behave
similarly to $\OurKD{\rho}$,
as $F_\reg(t)$ behaves similarly to $F(t)$.
Numerical simulations largely support this expectation.
We compared $\SumKD{\rho} ( . )$ with
$\SumKD{ \rho }^\reg ( . )
:=  \sum_{\rm degeneracies} \OurKD{\rho}^{ \reg } ( . ) \, .$
The distributions vary significantly over similar time scales
and have similar shapes.
$\SumKD{\rho}^\reg$ tends to have a smaller imaginary component
and, as expected, more degeneracies.

The properties of $\OurKD{\rho}$ imply properties of $P(W, W')$.
We discuss these properties in Appendix~\ref{section:P_Properties}.

%
%
\subsection{Bayes-type theorem and retrodiction with $\OurKD{\rho}$}
\label{section:TA_retro}

We reviewed, in Sec.~\ref{section:KD_Retro},
the KD quasiprobability's role in retrodiction.
The KD quasiprobability $\OurKD{\rho}^\1$ generalizes the nontrivial part
$\Re ( \langle f' | a \rangle  \langle a | \rho' | f' \rangle )$
of a conditional quasiprobability $\tilde{p} ( a | \rho, f )$
used to retrodict about an observable $\A$.
Does $\OurKD{\rho}$ play a role similar to $\OurKD{\rho}^\1$?

It does. To show so, we generalize Sec.~\ref{section:KD_Retro}
to composite observables.
Let $\A, \B, \ldots, \K$ denote $\Ops$ observables.
$\K  \ldots  \B  \A$ might not be Hermitian but can be symmetrized.
For example, $\Gamma  :=  \K \ldots  \A  +  \A \ldots \K$
is an observable.\footnote{
So is $\tilde{\Gamma}  :=  i ( \K  \ldots  \A  -  \A  \ldots  \K )$.
An operator can be symmetrized in multiple ways.
Theorem~\ref{theorem:RetroK} governs $\Gamma$.
Appendix~\ref{section:RetroK2} contains
an analogous result about $\tilde{\Gamma}$.
Theorem~\ref{theorem:RetroK} extends trivially to
Hermitian (already symmetrized) instances of $\K \ldots \A$.
Corollary~\ref{corollary:RetroOurKD} illustrates this extension.}
Which value is most reasonably attributable to
$\Gamma$ retrodictively?
A weak value $\Gamma_\weak$ given by Eq.~\eqref{eq:WeakVal}.
We derive an alternative expression for $\Gamma_\weak$.
In our expression, $\Gamma$ eigenvalues
are weighted by $\Ops$-extended KD quasiprobabilities.
Our expression reduces exponentially, in the system's size,
the memory required to calculate weak values,
under certain conditions.
We present general theorems about $\OurKD{\rho}^\ParenK$,
then specialize to the OTOC $\OurKD{\rho}$.

%
%
\begin{theorem}[Retrodiction about composite observables]
\label{theorem:RetroK}
Consider a system $\Sys$ associated with a Hilbert space $\Hil$.
For concreteness, we assume that $\Hil$ is discrete.
Let $\A  =  \sum_a  a  \ketbra{a}{a} \, ,  \ldots ,
\K  =  \sum_k  k  \ketbra{k}{k}$ denote
$\Ops$ observables defined on $\mathcal{H}$.
Let $U_t$ denote the family of unitaries
that propagates the state of $\Sys$
along time $t$.

Suppose that $\Sys$ begins in the state
$\rho$ at time $t = 0$,
then evolves under $U_{t''}$ until $t = t''$.
Let $F = \sum_f  f  \ketbra{f}{f}$ denote an observable
measured at $t = t''$.
Let $f$ denote the outcome.
Let $t'  \in  (0,  t'')$ denote an intermediate time.
Define $\rho'  :=  U_{t'}  \rho  U_{t'}^\dag$   and
$\ket{f'}  :=  U^\dag_{t'' - t'}  \ket{f}$ as time-evolved states.

The value most reasonably attributable retrodictively to the time-$t'$
$\Gamma  :=  \K  \ldots  \A  +  \A \ldots \K$
is the weak value
\begin{align}
   \label{eq:GammaW}
   \Gamma_\weak ( \rho , f )
   & = \sum_{a, \ldots, k }  (a \ldots k)
   \Big[ \tilde{p}_\rightarrow (a, \ldots, k | \rho, f )
           \nonumber \\ & \qquad \qquad  +
           \tilde{p}_\leftarrow ( k, \ldots, a | \rho, f )
   \Big] \, .
\end{align}
The weights are joint conditional quasiprobabilities.
They obey analogs of Bayes' Theorem:
\begin{align}
   \tilde{p}_\rightarrow ( a, \ldots, k | \rho , f )
   & =  \label{eq:QuasiBayesLeft1}
   \frac{ \tilde{p}_\rightarrow ( a, \ldots, k , f | \rho ) }{
             p ( f | \rho ) }   \\
   & \equiv  \label{eq:QuasiBayesLeft2}
   \frac{ \Re ( \langle f' | k \rangle \langle k |  \ldots
                      | a \rangle \langle a | \rho' | f' \rangle ) }{
             \langle f'  | \rho' | f' \rangle }\, ,
\end{align}
and
\begin{align}
    \tilde{p}_\leftarrow ( k, \ldots, a | \rho , f )
    & =  \label{eq:QuasiBayesRt1}
            \frac{ \tilde{p}_\leftarrow ( k, \ldots, a, f | \rho ) }{
                      p ( f | \rho ) } \\
    & \equiv  \label{eq:QuasiBayesRt2}
    \frac{ \Re ( \langle f' | a \rangle \langle a |  \ldots
                      | k \rangle \langle k | \rho' | f' \rangle ) }{
             \langle f'  | \rho' | f' \rangle }    \, .
\end{align}
Complex generalizations of the weights' numerators,
\begin{align}
   \label{eq:Extend_KD2}
   \OurKD{ \rho ,  \rightarrow }^\ParenK ( a, \ldots, k, f )
   :=  \langle f' | k \rangle  \langle k |  \ldots  | a \rangle
         \langle a | \rho' | f' \rangle
\end{align}
and
\begin{align}
   \label{eq:Extend_KD1}
   \OurKD{ \rho ,  \leftarrow }^\ParenK  ( k, \ldots, a, f )
   :=  \langle f' | a \rangle  \langle a |  \ldots  | k \rangle
        \langle k |  \rho' | f'  \rangle   \, ,
\end{align}
are $\Ops$-extended KD distributions.
\end{theorem}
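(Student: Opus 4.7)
The plan is to reduce the theorem to a direct algebraic manipulation of the standard weak-value formula~\eqref{eq:WeakVal_rho}, applied to the symmetrized operator $\Gamma = \K\ldots\A + \A\ldots\K$. Since $\Gamma$ is Hermitian, its ``most reasonably attributable'' retrodictive value is, by the same reasoning reviewed in Sec.~\ref{section:KD_Retro}, the weak value
\begin{align}
   \Gamma_\weak(\rho,f) = \Re\!\left(\frac{\langle f' | \Gamma\,\rho' | f'\rangle}{\langle f' | \rho' | f'\rangle}\right).
\end{align}
The numerator splits, by linearity of the trace-like expression, into contributions from $\K\ldots\A$ and from $\A\ldots\K$, which will produce the two ``forward'' and ``backward'' terms in~\eqref{eq:GammaW}.

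Next, I would insert resolutions of identity $\sum_a\ketbra{a}{a}, \ldots, \sum_k\ketbra{k}{k}$ between consecutive factors in each product, then pull out the (real) eigenvalues. For the $\K\ldots\A$ term this yields
\begin{align}
   \langle f'|\K\ldots\A\,\rho'|f'\rangle
   = \sum_{a,\ldots,k}(a\ldots k)\,\langle f'|k\rangle\langle k|\ldots|a\rangle\langle a|\rho'|f'\rangle,
\end{align}
which I identify with $\sum_{a,\ldots,k}(a\ldots k)\,\OurKD{\rho,\rightarrow}^\ParenK(a,\ldots,k,f)$ via Definition~\ref{definition:Extend_KD} and Eq.~\eqref{eq:Extend_KD2}. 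The analogous expansion of $\langle f'|\A\ldots\K\,\rho'|f'\rangle$ with the same insertion of identities, but with the eigenbases now ordered oppositely relative to the bra and the ket, produces $\sum_{k,\ldots,a}(k\ldots a)\,\OurKD{\rho,\leftarrow}^\ParenK(k,\ldots,a,f)$ in the form of Eq.~\eqref{eq:Extend_KD1}. Dividing by $\langle f'|\rho'|f'\rangle$ and taking $\Re(\cdot)$ distributes over the (real) eigenvalue prefactors, so the real parts of the extended KD distributions become precisely the weights $\tilde p_\rightarrow$ and $\tilde p_\leftarrow$ of Eqs.~\eqref{eq:QuasiBayesLeft2} and~\eqref{eq:QuasiBayesRt2}.

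To establish the Bayes-type equalities~\eqref{eq:QuasiBayesLeft1} and~\eqref{eq:QuasiBayesRt1}, I would identify the denominator $\langle f'|\rho'|f'\rangle$ with $p(f|\rho)$ via Born's rule after propagating $\ket{f'} = U^\dag_{t''-t'}\ket{f}$ and $\rho'=U_{t'}\rho U_{t'}^\dag$ back to laboratory time $t''$. I would then interpret the numerators of~\eqref{eq:QuasiBayesLeft2} and~\eqref{eq:QuasiBayesRt2} as joint conditional quasiprobabilities $\tilde p_\rightarrow(a,\ldots,k,f|\rho)$ and $\tilde p_\leftarrow(k,\ldots,a,f|\rho)$, in parallel to how the Terletsky--Margenau--Hill numerator of Eq.~\eqref{eq:CondQuasi} was interpreted in Sec.~\ref{section:KD_Retro}. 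This parallels the derivation of Eq.~\eqref{eq:BayesThm2} from~\eqref{eq:ToBayes}, but with a single cancellation of $p(f|\rho)$ replacing a cancellation of $p(i)$.

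The computation itself is essentially bookkeeping; the one place to be careful is the ordering of the indices in the two terms, because the noncommutativity of $\A,\ldots,\K$ is precisely what forces the appearance of two separate extended KD distributions rather than one. The genuinely conceptual step is the claim that $\Gamma_\weak$ is the ``most reasonable'' retrodictive value for $\Gamma$; I would invoke this as an extension of the single-observable story of Sec.~\ref{section:KD_Retro}, noting that the symmetrization of $\K\ldots\A$ is what makes $\Gamma$ Hermitian and thus legitimately an observable to which the weak-value formalism applies. Once those points are in place, the theorem follows by reading off Eqs.~\eqref{eq:GammaW}--\eqref{eq:Extend_KD1} from the algebra above.
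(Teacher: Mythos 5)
Your proposal is correct and follows essentially the same route as the paper: the paper's proof likewise reduces to $\gamma_f = \Re(\langle f'|\Gamma\rho'|f'\rangle)/\langle f'|\rho'|f'\rangle$, splits $\Gamma$ by linearity of $\Re$, eigendecomposes $\A,\ldots,\K$, and factors out the real eigenvalues to obtain the two oppositely ordered extended KD weights. The only difference is that the paper explicitly rederives the optimal-estimator formula by minimizing the weighted trace distance $\Tr\left(\rho'[\Gamma - \Gest]^2\right)$ and completing the square, whereas you import that formula from the review section; since the formula there holds for any Hermitian observable and $\Gamma$ is Hermitian by construction, this is a legitimate shortcut rather than a gap.
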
 \noindent
A rightward-pointing arrow $\rightarrow$
labels quantities in which the outer products,
$\ketbra{k}{k},  \ldots,  \ketbra{a}{a}$, are ordered analogously to
the first term $\K \ldots \A$ in $\Gamma$.
A leftward-pointing arrow $\leftarrow$
labels quantities in which
reading the outer products $\ketbra{a}{a}, \ldots, \ketbra{k}{k}$
backward---from right to left---parallels
reading $\K \ldots \A$ forward.

\begin{proof}
The initial steps come from~\cite[Sec. II A]{Dressel_15_Weak},
which recapitulates~\cite{Johansen_04_Nonclassical,Hall_01_Exact,Hall_04_Prior}.
For every measurement outcome $f$, we assume,
some number $\gamma_f$ is
the guess most reasonably attributable to $\Gamma$.
We combine these best guesses into the effective observable
$\Gest  :=  \sum_f  \gamma_f  \ketbra{f'}{f'}$.
We must optimize our choice of $\Set{ \gamma_f }$.
We should quantify the distance between
(1) the operator $\Gest$ we construct and
(2) the operator $\Gamma$ we wish to infer about.
We use the weighted trace distance
\begin{align}
   \label{eq:OprDist}
   \mathscr{D}_{\rho'} ( \Gamma,  \Gest )  =
   \Tr  \left(  \rho'  [  \Gamma  -  \Gest ]^2  \right) \, .
\end{align}
$\rho'$ serves as a ``positive prior bias''~\cite{Dressel_15_Weak}.

Let us substitute in for the form of $\Gest$.
Expanding the square, then invoking the trace's linearity, yields
\begin{align}
   \label{eq:DHelp1}
   & \mathscr{D}_{ \rho' } ( \Gamma,  \Gest )  =
  \Tr ( \rho' \Gamma^2 )
  +  \sum_f  \Big[
      \gamma_f^2    \langle  f'  |  \rho'  |  f'  \rangle
      \nonumber \\ & \qquad \quad
       -  \gamma_f  ( \langle f' | \rho' \Gamma | f' \rangle
           +  \langle f' | \Gamma \rho' | f' \rangle )
   \Big] \, .
\end{align}
The parenthesized factor equals
$2 \Re ( \langle f' | \Gamma \rho' | f' \rangle )$.
Adding and subtracting
$\sum_f  \langle f' | \rho' | f' \rangle
[  \Re ( \langle f' | \Gamma \rho' | f' \rangle )  ]^2$
to and from Eq.~\eqref{eq:DHelp1},
we complete the square:
\begin{align}
   \label{eq:DResult}
   & \mathscr{D}_{ \rho' } ( \Gamma,  \Gest )  =
   \Tr ( \rho' \Gamma^2 )
   -  \sum_f  \langle f' | \rho' | f' \rangle
   [  \Re ( \langle f' | \Gamma \rho' | f' \rangle )  ]^2
   \nonumber \\ &  \qquad \qquad
   +  \sum_f  \langle f' | \rho' | f' \rangle
   \Bigg(  \gamma_f  -
   \frac{ \Re ( \langle f' | \Gamma \rho' | f' \rangle ) }{
   \langle f' | \rho' | f' \rangle }    \Bigg)^2  .
\end{align}

Our choice of $\Set{ \gamma_f }$ should minimize the distance~\eqref{eq:DResult}.
We should set the square to zero:
\begin{align}
   \label{eq:Choose}
   \gamma_f  =
   \frac{ \Re ( \langle f' | \Gamma \rho' | f' \rangle ) }{
   \langle f' | \rho' | f' \rangle }  \, .
\end{align}

Now, we deviate from~\cite{Johansen_04_Nonclassical,Hall_01_Exact,Hall_04_Prior,Dressel_15_Weak}.
We substitute the definition of $\Gamma$ into Eq.~\eqref{eq:Choose}.
Invoking the linearity of $\Re$ yields
\begin{align}
   \label{eq:Choose2}
   \gamma_f  =  \frac{
   \Re ( \langle f' | \K  \ldots  \A \rho' | f' \rangle ) }{
   \langle f' | \rho' | f' \rangle }
   + \frac{  \Re ( \langle f' | \A  \ldots  \K \rho' | f' \rangle ) }{
   \langle f' | \rho' | f' \rangle }  \, .
\end{align}
We eigendecompose $\A, \ldots, \K$.
The eigenvalues, being real,
can be factored out of the $\Re$'s.
Defining the eigenvalues' coefficients
as in Eqs.~\eqref{eq:QuasiBayesLeft2} and~\eqref{eq:QuasiBayesRt2},
we reduce Eq.~\eqref{eq:Choose2} to the form in Eq.~\eqref{eq:GammaW}.
\end{proof}

Theorem~\ref{theorem:RetroK} reduces exponentially,
in system size, the space required
to calculate $\Gamma_\weak$, in certain cases.\footnote{
``Space'' means ``memory,'' or ``number of bits,'' here.}
For concreteness, we focus on a multiqubit system
and on $l$-local operators $\A, \ldots, \K$.
An operator $\mathcal{O}$ is \emph{$l$-local} if
$\mathcal{O} =  \sum_j  \mathcal{O}_j$,
wherein each $\mathcal{O}_j$ operates nontrivially on,
at most, $l$ qubits.
Practicality motivates this focus:
The lesser the $l$, the more easily
$l$-local operators can be measured.

We use asymptotic notation from computer science:
Let $f \equiv f( \Sites )$ and $g \equiv g (\Sites)$ denote
any functions of the system size.
If $g  =  O( f )$, $g$ grows no more quickly than
(is upper-bounded by)
a constant multiple of $f$
in the asymptotic limit, as $\Sites \to \infty$.
If $g  =  \Omega ( f )$, $g$ grows at least as quickly as
(is lower-bounded by) a constant multiple of $f$
in the asymptotic limit.
If $g  =  \Theta ( f )$, $g$ is upper- and lower-bounded by $f$:
$g  =  O ( f )$,  and  $g = \Omega ( f )$.
If $g  =  o ( f )$, $g$ shrinks strictly more quickly than $f$
in the asymptotic limit.

\begin{theorem}[Weak-value space saver]
\label{theorem:Space}

Let $\Sys$ denote a system of $\Sites$ qubits.
Let $\Hil$ denote the Hilbert space associated with $\Sys$.
Let $\ket{ f' }  \in  \Hil$ denote a pure state
and $\rho' \in \mathcal{D} ( \Hil )$ denote a density operator.
Let $\Basis$ denote any fixed orthonormal basis for $\Hil$
in which each basis element equals a tensor product
of $\Sites$ factors, each of which operates nontrivially on
exactly one site.
$\Basis$ may, for example, consist of tensor products of
$\sigma^z$ eigenstates.

Let $\Ops$ denote any polynomial function of $\Sites$:
$\Ops  \equiv  \Ops( \Sites )  =  {\rm poly}( \Sites )$.
Let $\A , \ldots, \K$ denote
$\Ops$ traceless $l$-local observables defined on $\Hil$,
for any constant $l $.
Each observable may, for example, be a tensor product of
$\leq l$ nontrivial Pauli operators and $\geq \Sites  -  l$ identity operators.
The composite observable $\Gamma :=  \A \ldots \K  +  \K \ldots \A$
is not necessarily $l$-local.
Let $\A = \sum_a a \ketbra{a}{a} \, ,  \ldots, \K = \sum_k k \ketbra{k}{k}$
denote eigenvalue decompositions of the local observables.
Let $\mathcal{O}_{ \Basis }$ denote the matrix that represents
an operator $\mathcal{O}$ relative to $\Basis$.

Consider being given the matrices $\A_\Basis, \ldots, \K_\Basis$,
$\rho'_\Basis$, and $\ket{ f' }_\Basis$.
From this information, the weak value $\Gamma_\weak$
can be computed in two ways:
\begin{enumerate}[(1)]

   \item \label{item:Conven}
            \textbf{Conventional method}
   \begin{enumerate}[(A)]
      \item Multiply and sum given matrices to form
      $\Gamma_{ \Basis }  =  \K_\Basis \ldots \A_\Basis
                                            +  \A_\Basis \ldots \K_\Basis$.
      \item Compute $\langle f' | \rho' | f' \rangle
               =  \langle f' |_\Basis  \: \rho'_\Basis  \: | f' \rangle_\Basis$.
      \item Substitute into $\Gamma_\weak  =  \Re \left(
      \frac{ \langle f' |_\Basis  \:  \Gamma_\Basis  \:   \rho'_\Basis  \:
               | f' \rangle_\Basis }{ \langle f' | \rho' | f' \rangle }
               \right) \, .$
   \end{enumerate}

   \item \label{item:KFac}
            \textbf{$\Ops$-factored method}
   \begin{enumerate}[(A)]
      \item Compute $\langle f' | \rho' | f' \rangle$.
      \item \label{eq:TermStep}
               For each nonzero term in Eq.~\eqref{eq:GammaW},
               \begin{enumerate}[(i)]
                  \item  calculate $\tilde{p}_\rightarrow ( . )$ and $\tilde{p}_\leftarrow ( . )$
               from Eqs.~\eqref{eq:QuasiBayesLeft2} and~\eqref{eq:QuasiBayesRt2}.
                  \item  substitute into Eq.~\eqref{eq:GammaW}.
               \end{enumerate}
   \end{enumerate}

\end{enumerate}

Let $\Sigma_{(n)}$ denote the space required to compute $\Gamma_\weak$,
aside from the space required to store $\Gamma_\weak$,
with constant precision,
using method $(n) = \ref{item:Conven}, \ref{item:KFac}$,
in the asymptotic limit.
Method~\ref{item:Conven} requires a number of bits at least
exponential in the number $\Ops$ of local observables:
\begin{align}
   \label{eq:Space1}
   \Sigma_{ \ref{item:Conven} }  =  \Omega \left( 2^\Ops \right) \, .
\end{align}
Method~\ref{item:KFac} requires a number of bits linear in $\Ops$:
\begin{align}
   \label{eq:Space2}
   \Sigma_{ \ref{item:KFac} }  =  O ( \Ops )  \, .
\end{align}
Method~\ref{item:KFac} requires exponentially---in $\Ops$ and so in $\Sites$---less memory than Method~\ref{item:Conven}.
\end{theorem}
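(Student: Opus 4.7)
The plan is to bound $\Sigma_{\ref{item:Conven}}$ from below and $\Sigma_{\ref{item:KFac}}$ from above by essentially independent arguments, then combine them to obtain the claimed exponential separation.

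For the lower bound $\Sigma_{\ref{item:Conven}} = \Omega(2^\Ops)$, I would exhibit a worst-case instance that forces Method~\ref{item:Conven} to hold an exponentially large intermediate matrix. A convenient choice: take $\A, \ldots, \K$ to be single Pauli tensor products supported on pairwise disjoint sets of $l$ qubits each, which is possible whenever $l \Ops \leq \Sites$ (a regime included in the asymptotic limit, since $\Ops = \mathrm{poly}(\Sites)$ permits us to choose polynomial $\Ops$ with $l\Ops \leq \Sites$). With disjoint supports, $\K \ldots \A$ is itself an $(l\Ops)$-local Pauli tensor product up to an overall sign, so $\K_\Basis \ldots \A_\Basis$ has $2^{l\Ops}$ nonzero matrix entries on its $2^{l\Ops}$-dimensional nontrivial support. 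The second summand $\A_\Basis \ldots \K_\Basis$ equals the first, because Paulis on disjoint supports commute, so $\Gamma_\Basis = 2\,\K_\Basis \ldots \A_\Basis$ preserves those $2^{l\Ops}$ entries with no cancellation. Since Step~(A) of Method~\ref{item:Conven} mandates forming $\Gamma_\Basis$, the storage of that one intermediate object already requires $\Omega(2^{l\Ops}) = \Omega(2^\Ops)$ bits, $l$ being a fixed constant.

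For the upper bound $\Sigma_{\ref{item:KFac}} = O(\Ops)$, I would exhibit an explicit term-by-term algorithm whose working memory is dominated by a loop counter. For a traceless $l$-local Pauli-tensor observable, the spectrum is $\{+1,-1\}$, so the multiple sums in Eq.~\eqref{eq:GammaW} range over $2^\Ops$ sign tuples $(a, \ldots, k) \in \{-1,+1\}^\Ops$. I would enumerate these tuples with an $\Ops$-bit counter. For each tuple, I would rewrite the quasiprobability numerators via projectors $\Pi_a = \tfrac{1}{2}(\id + a\,\A)$, etc., each of which acts nontrivially on only $l$ qubits, and apply them sequentially and in place to a working state vector, so that each application uses only $O(2^l) = O(1)$ scratch. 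After the $\Ops$ in-place applications I would take the required inner product with $\bra{f'}$, multiply by the product of eigenvalues, extract the real part as prescribed by Eqs.~\eqref{eq:QuasiBayesLeft2} and~\eqref{eq:QuasiBayesRt2}, and add the contribution to a running sum kept at constant precision. The working memory beyond the input is therefore $O(\Ops)$ bits for the counter plus $O(1)$ for the running sum and local scratch, so $\Sigma_{\ref{item:KFac}} = O(\Ops)$.

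The main obstacle will be the bookkeeping around degenerate spectra: Theorem~\ref{theorem:RetroK} is stated with nondegenerate eigenvalue decompositions $\A = \sum_a a \ketbra{a}{a}$, whereas the $l$-local observables of interest here have enormous degeneracies, and the projector form $\Pi_a = \tfrac{1}{2}(\id + a\,\A)$ used by Method~\ref{item:KFac} implicitly aggregates over those degeneracies. Verifying that Eq.~\eqref{eq:GammaW} carries through under the rewriting $\sum_a a \ketbra{a}{a} \to \sum_{a \in \{\pm 1\}} a\,\Pi_a$ is routine but must be written out carefully. A secondary subtlety is fixing a convention for ``space'' that excludes input common to both methods, so that the comparison genuinely isolates intermediate storage; otherwise one must argue that loading the input does not dominate the asymptotics, which can be handled by absorbing input-storage costs into a shared background term subtracted from both $\Sigma_{\ref{item:Conven}}$ and $\Sigma_{\ref{item:KFac}}$.
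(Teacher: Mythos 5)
Your lower bound takes a different route from the paper's: the paper argues generically that, because each $\A_\Basis,\ldots,\K_\Basis$ is traceless, each factor at least doubles the number of nonzero entries of $\Gamma_\Basis$, so \emph{any} admissible instance forces $\Omega(2^\Ops)$ stored entries; you instead exhibit one worst-case instance with pairwise disjoint supports. That construction is clean where it exists, but it requires $l\Ops\leq\Sites$, which excludes most of the polynomials $\Ops(\Sites)$ the theorem quantifies over, so as written it does not cover the stated generality. This is a difference of approach rather than a fatal error, but you should either restrict the regime explicitly or supply an argument (like the paper's doubling count) that does not need disjointness.

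The genuine gap is in your upper bound. You propose to ``apply [the projectors] sequentially and in place to a working state vector,'' and you count only the $O(2^l)$ scratch per local application. But the working state vector itself has $2^\Sites$ complex components and is not part of the given input: it must be initialized to something like $\rho'\ket{f'}$, it is irreversibly altered by each (non-invertible) projection, and it must therefore be re-created for each of the $2^{\Theta(\Ops)}$ terms. Storing even one copy costs $\Omega(2^\Sites)$ bits of working memory beyond the input, which is superpolynomial in $\Ops$ and destroys the claimed $\Sigma_{2}=O(\Ops)$. The paper's $\Ops$-factored method never materializes a transformed state: each term is assembled as a product of \emph{scalars} $\langle f'|k\rangle,\langle k|\cdot\rangle,\ldots,\langle a|\rho'|f'\rangle$, each obtainable by streaming over the input with small reusable counters, and only the scalars are multiplied and accumulated. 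A secondary omission: you keep the running sum ``at constant precision,'' but $\Gamma_\weak$ is a sum of $2^{l\Ops}$ terms, so constant final precision forces each term register to carry $O(l\Ops)=O(\Ops)$ bits of precision—an error budget the paper's proof carries out explicitly. That correction leaves the $O(\Ops)$ total intact; the state-vector issue does not.
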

\begin{proof}
Using Method~\ref{item:Conven}, one computes $\Gamma_\Basis$.
$\Gamma_\Basis$ is a $2^\Sites \times 2^\Sites$ complex matrix.
The matrix has $\Omega ( 2^\Ops )$ nonzero elements:
$\A , \ldots, \K$ are traceless,
so each of $\A_\Basis, \ldots, \K_\Basis$ contains
at least two nonzero elements.
Each operator at least doubles
the number of nonzero elements in $\Gamma_\Basis$.
Specifying each complex number with constant precision
requires $\Theta(1)$ bits.
Hence Method~\ref{item:Conven} requires $\Omega \left( 2^\Ops \right)$ bits.

Let us turn to Method~\ref{item:KFac}.
We can store $\langle f' | \rho' | f' \rangle$
in a constant number of bits.

Step~\ref{eq:TermStep} can be implemented with
a counter variable $C_\Oper$ for each local operator $\Oper$,
a running-total variable $G$, and a ``current term'' variable $T$.
$C_\Oper$ is used to iterate through
the nonzero eigenvalues of $\Oper$
(arranged in some fiducial order).
$\Oper$ has $O ( 2^l )$ nonzero eigenvalues.
Hence $C_\Oper$  requires $O ( l )$ bits.
Hence the set of $\Ops$ counters $C_\Oper$
requires $O ( l \Ops )  =  O ( \Ops )$ bits.

The following algorithm implements Step~\ref{eq:TermStep}:
\begin{enumerate}[(i)]

   \item   \label{item:CNormal}
   If $C_\K <$ its maximum possible value, proceed as follows:
   \begin{enumerate}[(a)]

      \item  For each $\Oper = \A, \ldots, \K$,
      compute the $(2^{ C_\Oper} )^\th$ nonzero eigenvalue
      (according to the fiducial ordering).

      \item  Multiply the eigenvalues to form $a \ldots k$.
      Store the product in $T$.

      \item  For each $\Oper  =  \A, \ldots, \K$,
      calculate the $(2^{ C_\Oper} )^\th$ eigenvector column
      (according to some fiducial ordering).

      \item  Substitute the eigenvector columns into
      Eqs.~\eqref{eq:QuasiBayesLeft2} and~\eqref{eq:QuasiBayesRt2},
      to compute $\tilde{p}_\rightarrow ( . )$ and $\tilde{p}_\leftarrow ( . )$.

      \item  Form $(a \ldots k) \Big[ \tilde{p}_\rightarrow (a, \ldots, k | \rho, f )
           +  \tilde{p}_\leftarrow ( k, \ldots, a | \rho, f )$.
           Update $T$ to this value.

      \item  Add $T$ to $G$.

      \item  Erase $T$.

      \item  Increment $C_\K$.
   \end{enumerate}

   \item If $C_\K$ equals its maximum possible value,
   increment the counter of the preceding variable, $\mathcal{J}$, in the list;
   reset $C_\K$ to one;
   and, if $\mathcal{J}$ has not attained its maximum possible value,
   return to Step~\ref{item:CNormal}.
   Proceed in this manner---incrementing counters;
   then resetting counters, incrementing preceding counters,
   and returning to Step~\ref{item:CNormal}---until $C_\A$ reaches its maximum possible value. Then, halt.
\end{enumerate}

The space needed to store $G$ is
the space needed to store $\Gamma_\weak$.
This space does not contribute to $\Sigma_{ \ref{item:KFac} }$.

How much space is needed to store $T$?
We must calculate $\Gamma_\weak$ with constant precision.
$\Gamma_\weak$ equals a sum of $2^{ l \Ops }$ terms.
Let $\varepsilon_j$ denote the error in term $j$.
The sum $\sum_{j = 1}^{ 2^{ l \Ops } }  \varepsilon_j$
must be $O(1)$.
This requirement is satisfied if
$2^{ l \Ops } \,  \left( \max_j | \varepsilon_j | \right)  =  o (1)$, which implies
$\max_j | \varepsilon_j |  =  o \left(  2^{ - l \Ops }  \right)$.
We can specify each term, with a small-enough roundoff error,
using $O ( l \Ops )  =  O ( \Ops )$ bits.

Altogether, the variables require $O( \Ops )$ bits.
As the set of variables does, so does the $\Oper$-factored method.
\end{proof}

Performing Method~\ref{item:KFac} requires slightly more time
than performing Method~\ref{item:Conven}.
Yet Theorem~\ref{theorem:Space} can benefit computations
about quantum many-body systems.
Consider measuring a weak value of a quantum many-body system.
One might wish to predict the experiment's outcome
and to compare the outcome with the prediction.
Alternatively, consider simulating quantum many-body systems
independently of laboratory experiments,
as in Sec.~\ref{section:Numerics}.
One must compute weak values numerically,
using large matrices.
The memory required to store these matrices
can limit computations.
Theorem~\ref{theorem:Space} can free up space.

Two more aspects of retrodiction deserve exposition:
related studies and the physical significance of $\K \ldots \A$.

\emph{Related studies:}
Sequential weak measurements have been proposed~\cite{Lundeen_12_Procedure}
and realized recently~\cite{Piacentini_16_Measuring,Suzuki_16_Observation,Thekkadath_16_Direct}.
Lundeen and Bamber proposed a ``direct measurement''
of a density operator~\cite{Lundeen_12_Procedure}.
Let $\rho$ denote a density operator
defined on a dimension-$\Dim$ Hilbert space $\Hil$.
Let $\Basis_a  :=  \Set{ \ket{ a_\ell } }$ and
$\Basis_b  :=  \Set{ \ket{ b_\ell } }$ denote orthonormal
\emph{mutually unbiased bases} (MUBs) for $\Hil$.
The interbasis inner products have constant magnitudes:
$| \langle a_\ell | b_m \rangle |  =  \frac{1}{ \sqrt{ \Dim } }
\;  \forall \ell, m$.
Consider measuring $\Basis_a$ weakly,
then $\Basis_b$ weakly, then $\Basis_a$ strongly,
in each of many trials.
One can infer (1) a KD quasiprobability for $\rho$
and (2) a matrix that represents $\rho$ relative to $\Basis_a$~\cite{Lundeen_12_Procedure}.

KD quasiprobabilities are inferred from experimental measurements in~\cite{Piacentini_16_Measuring,Thekkadath_16_Direct}.
Two weak measurements are performed sequentially also in~\cite{Suzuki_16_Observation}.
Single photons are used in~\cite{Piacentini_16_Measuring,Suzuki_16_Observation}.
A beam of light is used in~\cite{Thekkadath_16_Direct}.
These experiments indicate the relevance
of Theorem~\ref{theorem:RetroK}
to current experimental capabilities.
Additionally, composite observables $\A \B  +  \B \A$
accompany KD quasiprobabilities in e.g.,~\cite{Halliwell_16_Leggett}.

%
\emph{Physical significance of $\K \ldots \A$:}
Rearranging Eq.~\eqref{eq:GammaW} offers insight
into the result:
\begin{align}
   \label{eq:GammaW2}
   \Gamma_\weak ( \rho , f )    & =
   \sum_{ k, \ldots, a }  ( k \ldots a )
   \tilde{p}_\rightarrow  ( k, \ldots, a | \rho , f )
   \nonumber \\ & \qquad   +
   \sum_{a, \ldots, k }  (a \ldots k)
   \tilde{p}_\leftarrow ( a, \ldots, k | \rho , f )   \, .
\end{align}
Each sum parallels the sum in Eq.~\eqref{eq:WeakVal3}.
Equation~\eqref{eq:GammaW2} suggests that we are retrodicting
about $\K \ldots \A$ independently of $\A \ldots \K$.
But neither $\K \ldots \A$ nor $\A \ldots \K$ is Hermitian.
Neither operator seems measurable.
Ascribing a value to neither
appears to have physical significance, \emph{prima facie}.

Yet non-Hermitian products $\B \A$
have been measured weakly~\cite{Piacentini_16_Measuring,Suzuki_16_Observation,Thekkadath_16_Direct}.
Weak measurements associate a value with
the supposedly unphysical $\K \ldots \A$,
just as weak measurements enable us to infer
supposedly unphysical probability amplitudes $\Amp_\rho$.
The parallel between $\K \ldots \A$ and $\Amp_\rho$
can be expanded.
$\K \ldots \A$ and $\A \ldots \K$, being non-Hermitian,
appear to lack physical significance independently.
Summing the operators forms an observable.
Similarly, probability amplitudes $\Amp_\rho$ and $\Amp_\rho^*$
appear to lack physical significance independently.
Multiplying the amplitudes forms a probability.
But $\Amp_\rho$ and $\K \ldots \A$
can be inferred individually from weak measurements.

We have generalized Sec.~\ref{section:KD_Retro}.
Specializing to $k = 3$,
and choosing forms for $\A , \ldots \K$,
yields an application of $\OurKD{\rho}$ to retrodiction.

%
%
\begin{corollary}[Retrodictive application of $\OurKD{\rho}$]
   \label{corollary:RetroOurKD}
   Let $\Sys$, $\Hil$, $\rho$, $\W(t)$, and $V$ be defined
   as in Sec.~\ref{section:SetUp}.
   Suppose that $\Sys$ is in state $\rho$ at time $t = 0$.
   Suppose that the observable
   $F = \W
   =  \sum_{ w_3, \DegenW_{w_3} }  w_3
       \ketbra{ w_3, \DegenW_{w_3} }{ w_3, \DegenW_{w_3} }$
   of $\Sys$ is measured at time $t'' = t$.
   Let $( w_3, \DegenW_{w_3} )$ denote the outcome.
   Let $\A  =  V  =  \sum_{ v_1,  \DegenV_{v_1} }  v_1
                       \ketbra{ v_1,  \DegenV_{v_1} }{ v_1,  \DegenV_{v_1} }$,
   $\B  =  \W(t)  =  \sum_{ w_2,  \DegenW_{w_2} }  w_2  \,
                            U^\dag \ketbra{ w_2,  \DegenW_{w_2} }{
                                                      w_2,  \DegenW_{w_2} }  U$,
   and $\C  =  V  =  \sum_{ v_2,  \DegenV_{v_2} }  v_2
                              \ketbra{ v_2,  \DegenV_{v_2} }{ v_2,  \DegenV_{v_2} }  \, .$
   Let the composite observable $\Gamma =  \A \B \C  =  V  \W(t)  V$.
   The value most reasonably attributable to
   $\Gamma$ retrodictively is the weak value
   \begin{align}
      \label{eq:OTOC_retro1}
      & \Gamma_\weak ( \rho ; w_3, \DegenW_{w_3} )
      =  \sum_{ ( v_1 ,  \DegenV_{v_1} ) ,  ( v_2 ,  \DegenV_{v_2} ),
                      ( w_2 ,  \DegenW_{w_2} ) }
      v_1  w_2  v_2
      \nonumber \\ & \times
      \tilde{p}_{\leftrightarrow}
      ( v_2,  \DegenV_{v_2} ; w_2,  \DegenW_{w_2}  ;
        v_1,  \DegenV_{v_1}   |
      \rho  ;  w_3 , \DegenW_{w_3} )  \, .
   \end{align}
   The weights are joint conditional quasiprobabilities
   that obey an analog of Bayes' Theorem:
   \begin{align}
      \label{eq:OTOC_retro2}
      & \tilde{p}_{\leftrightarrow}
      ( v_1,  \DegenV_{v_1} ; w_2,  \DegenW_{w_2}  ;
        v_2,  \DegenV_{v_2}  |
      \rho  ;  w_3, \DegenW_{w_3}  )
      \nonumber \\ &
      =  \frac{ \tilde{p}_\leftrightarrow
      ( v_1,  \DegenV_{v_1} ; w_2,  \DegenW_{w_2}  ;
        v_2,  \DegenV_{v_2}  ;  w_3, \DegenW_{w_3}   |  \rho )   }{
      p ( w_3, \DegenW_{w_3}  | \rho ) }  \\
      & \equiv  \Re (
      \langle w_3, \DegenW_{w_3} | U | v_2,  \DegenV_{v_2} \rangle
      \langle v_2,  \DegenV_{v_2} | U^\dag | w_2,  \DegenW_{w_2} \rangle
      \nonumber \\ & \qquad \times
      \langle w_2,  \DegenW_{w_2} | U | v_1,  \DegenV_{v_1} \rangle
      \langle v_1,  \DegenV_{v_1} | \rho U^\dag |
        w_3, \DegenW_{w_3} \rangle )
      \nonumber \\ & \qquad  \;  /
      \langle w_3, \DegenW_{w_3} | \rho | w_3, \DegenW_{w_3} \rangle \, .
   \end{align}
   A complex generalization of the weight's numerator
   is the OTOC quasiprobability:
   \begin{align}
      \label{eq:OTOC_retro3}
      & \OurKD{\rho, \leftrightarrow }^\3 (
      v_1,  \DegenV_{v_1}  ;  w_2,  \DegenW_{w_2}  ;
      v_2,  \DegenV_{v_2}  ;  w_3 , \DegenW_{w_3}  )
      \nonumber \\ &
      =  \OurKD{\rho}
      ( v_1,  \DegenV_{v_1}  ;  w_2,  \DegenW_{w_2}  ;
      v_2,  \DegenV_{v_2}  ;  w_3 , \DegenW_{w_3} )  \, .
   \end{align}
\end{corollary}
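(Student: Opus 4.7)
The plan is to specialize Theorem~\ref{theorem:RetroK} to the $\Ops = 3$ observables $\A = V$, $\B = \W(t)$, and $\C = V$, invoking the trivial extension to already-Hermitian composite observables noted just after the statement of that theorem. Since $V$ and $\W(t) = U^\dag \W U$ are both Hermitian under the conventions of Sec.~\ref{section:SetUp}, the product $\Gamma = \A\B\C = V\W(t)V$ satisfies $\Gamma^\dag = \C^\dag\B^\dag\A^\dag = \Gamma$. The explicit symmetrization $\K\ldots\A + \A\ldots\K$ in the statement of Theorem~\ref{theorem:RetroK} is therefore unnecessary here, and a single joint conditional quasiprobability $\tilde{p}_\leftrightarrow$, rather than the pair $\tilde{p}_\rightarrow, \tilde{p}_\leftarrow$, governs the retrodiction.

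I would then re-run the proof of Theorem~\ref{theorem:RetroK} verbatim through the completion-of-the-square step, obtaining the best guess
\begin{align}
\gamma_{w_3,\DegenW_{w_3}} = \frac{\Re(\langle f'|\Gamma\rho'|f'\rangle)}{\langle f'|\rho'|f'\rangle}
\end{align}
for $\ket{f} = \ket{w_3,\DegenW_{w_3}}$. Substituting $\Gamma = V\,U^\dag\W U\,V$ into the numerator and inserting three resolutions of identity, one in each intermediate eigenbasis (twice for $V$, once for $\W$), decomposes the numerator into a sum, over $(v_1,\DegenV_{v_1}), (w_2,\DegenW_{w_2}), (v_2,\DegenV_{v_2})$, of the four-factor product that defines $\OurKD{\rho}$ in Eq.~\eqref{eq:TAForm}, each weighted by $v_1 w_2 v_2$. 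Linearity of $\Re(\cdot)$ lets me pull these eigenvalues outside the real part, yielding Eq.~\eqref{eq:OTOC_retro1}, with weights of the form stated in Eq.~\eqref{eq:OTOC_retro2} and complex generalization equal to $\OurKD{\rho}$ as in Eq.~\eqref{eq:OTOC_retro3}.

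To verify the Bayes-type identity in Eq.~\eqref{eq:OTOC_retro2}, I would marginalize the numerator over all tuples except $(w_3,\DegenW_{w_3})$ using Property~\ref{prop:MargOurKD} and check that the result equals the denominator. The main obstacle is aligning the time-evolution conventions: Theorem~\ref{theorem:RetroK} is phrased in terms of Schrödinger-picture propagated states $\rho' = U_{t'}\rho U_{t'}^\dag$ and $\ket{f'} = U_{t''-t'}^\dag\ket{f}$ at an intermediate time $t' \in (0,t'')$, whereas Eq.~\eqref{eq:TAForm} displays $\OurKD{\rho}$ as overlaps in which $\rho$ appears unevolved but is multiplied on the right by $U^\dag$. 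I must choose $t'$ so that, once the $U$ and $U^\dag$ hidden inside $\B = U^\dag\W U$ are combined with the backward propagation of $\ket{f}$, the overlaps reassemble into the asymmetric form of Eq.~\eqref{eq:TAForm} rather than into a symmetrically conjugated variant. The natural choice $t' = 0$ (with $t'' = t$) should do this, giving $\rho' = \rho$ and $\ket{f'} = U^\dag\ket{w_3,\DegenW_{w_3}}$; the remaining step is to confirm that these substitutions reproduce exactly the numerator and denominator in Eq.~\eqref{eq:OTOC_retro2}.
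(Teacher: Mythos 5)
Your proposal is correct and matches the paper's route: the paper offers no separate proof of Corollary~\ref{corollary:RetroOurKD}, relying instead on the footnoted remark that Theorem~\ref{theorem:RetroK} extends trivially to already-Hermitian products $\K\ldots\A$, which is exactly the specialization you carry out (with $t'=0$, $t''=t$, so that $\rho'=\rho$ and $\ket{f'}=U^\dag\ket{w_3,\DegenW_{w_3}}$ reassemble the overlaps into Eq.~\eqref{eq:TAForm}). Your identification of the convention-alignment issue and its resolution is the only nontrivial content, and you resolve it the way the paper intends.
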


The OTOC quasiprobability, we have shown,
assists with Bayesian-type inference,
similarly to the KD distribution.
The inferred-about operator is $V \W(t) V$,
rather than the $\W(t) V \W(t) V$ in the OTOC.
The missing $\W(t)$ plays the role of $F$.
This structure parallels the weak-measurement scheme in
the main text of~\cite{YungerHalpern_17_Jarzynski}:
$V$, $\W(t)$, and $V$ are measured weakly.
$\W(t)$ is, like $F$, then measured strongly.

\subsection{$\OurKD{\rho} ( . )$ values as coefficients
in an operator decomposition}
\label{section:TA_Coeffs}

Let $\Basis$ denote any orthonormal operator basis for $\mathcal{H}$.
Every state $\rho  \in  \mathcal{D} ( \mathcal{H} )$ can be decomposed
in terms of $\Basis$, as in Sec.~\ref{section:KD_Coeffs}.
The coefficients form a KD distribution.
Does $\OurKD{\rho}$ consist of
the coefficients in a state decomposition?

Summing $\OurKD{\rho} ( . )$ values yields
a coefficient in a decomposition of an operator $\rho'$.\footnote{
This $\rho'$ should not be confused with
the $\rho'$ in Theorem~\ref{theorem:RetroK}.}
$\rho'$ results from asymmetrically ``decohering'' $\rho$.
This decoherence relates to time-reversal asymmetry.
We expect $\rho'$ to tend to converge to $\rho$
after the scrambling time $t_*$.
By measuring $\OurKD{\rho}$ after $t_*$,
one may infer how accurately one prepared
the target initial state.

\begin{theorem}   \label{theorem:TA_Decomp}
Let
\begin{align}
   \label{eq:RhoPrime}
   \rho'  & :=  \rho  -  \sum_{ \substack{
   ( v_2,  \DegenV_{v_2} ) ,  ( w_3, \DegenW_{w_3} )
   \: : \:   \\
   \langle w_3, \DegenW_{w_3} | U | v_2,  \DegenV_{v_2} \rangle
   \neq 0 } }
   \ketbra{ v_2,  \DegenV_{v_2} }{ w_3, \DegenW_{w_3} }  U
   \nonumber \\ & \qquad \qquad \qquad \qquad \quad \times
   \langle v_2,  \DegenV_{v_2} |  \rho  U^\dag  |
                w_3, \DegenW_{w_3}  \rangle
\end{align}
denote the result of removing, from $\rho$,
the terms that connect the ``input state''
$U^\dag  \ket{ w_3, \DegenW_{w_3} }$
to the ``output state'' $\ket{ v_2,  \DegenV_{v_2} }$.
We define the set
\begin{align}
   \Basis  :=  \Set{  \frac{
   \ketbra{ v_2,  \DegenV_{v_2} }{ w_3, \DegenW_{w_3} }  U }{
   \langle  w_3, \DegenW_{w_3}  |  U  |  v_2,  \DegenV_{v_2}  \rangle }
   }_{ \langle w_3, \DegenW_{w_3} | U | v_2,  \DegenV_{v_2} \rangle
         \neq 0 }
\end{align}
of trace-one operators.
$\rho'$ decomposes in terms of $\Basis$ as
\begin{align}
   \sum_{ \substack{
   ( v_2,  \DegenV_{v_2} ) ,  ( w_3, \DegenW_{w_3} )
   \: : \:   \\
   \langle w_3, \DegenW_{w_3} | U | v_2,  \DegenV_{v_2} \rangle
   \neq 0 } }
   C^{ ( w_3, \DegenW_{w_3} ) }_{ ( v_2,  \DegenV_{v_2} ) }  \;
   \frac{
   \ketbra{ v_2,  \DegenV_{v_2} }{ w_3, \DegenW_{w_3} }  U }{
   \langle  w_3, \DegenW_{w_3}  |  U  |  v_2,  \DegenV_{v_2}  \rangle }  \, .
\end{align}
The coefficients follow from summing values
of the OTOC quasiprobability:
\begin{align}
   \label{eq:TA_decomp_coeff}
   & C^{ ( w_3, \DegenW_{w_3} ) }_{ ( v_2,  \DegenV_{v_2} ) }
   :=
   \sum_{ \substack{ ( w_2,  \DegenW_{w_2} ), \\  ( v_1,  \DegenV_{v_1} ) } }
   \OurKD{\rho}
   ( v_1,  \DegenV_{v_1}  ;  w_2,  \DegenW_{w_2} ;
     v_2,  \DegenV_{v_2}  ;  w_3,  \DegenW_{w_3} )  \, .
\end{align}
\end{theorem}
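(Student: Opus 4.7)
My plan is to prove Theorem~\ref{theorem:TA_Decomp} by direct computation: first evaluate the coefficients $C^{(w_3,\DegenW_{w_3})}_{(v_2,\DegenV_{v_2})}$ explicitly by collapsing the sum over two indices in Eq.~\eqref{eq:TA_decomp_coeff}, and then verify that the resulting decomposition reproduces the operator~$\rho'$ defined in Eq.~\eqref{eq:RhoPrime}. The key technical tool throughout is the completeness of each eigenbasis, together with the unitarity $U^\dag U = \id$.

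First, I would substitute the form~\eqref{eq:TAForm} of $\OurKD{\rho}$ into the definition~\eqref{eq:TA_decomp_coeff} of $C$. This yields a quadruple sum whose summand is a product of four inner products, with only two of the four tuples (namely $(w_2,\DegenW_{w_2})$ and $(v_1,\DegenV_{v_1})$) summed over. I would carry out the sum over $(w_2,\DegenW_{w_2})$ first, using the completeness relation $\sum_{(w_2,\DegenW_{w_2})}\ketbra{w_2,\DegenW_{w_2}}{w_2,\DegenW_{w_2}}=\id$ together with $U^\dag\id\,U=\id$; this collapses the two middle factors $\langle v_2,\DegenV_{v_2}|U^\dag|w_2,\DegenW_{w_2}\rangle\langle w_2,\DegenW_{w_2}|U|v_1,\DegenV_{v_1}\rangle$ to $\langle v_2,\DegenV_{v_2}|v_1,\DegenV_{v_1}\rangle=\delta_{v_2 v_1}\delta_{\DegenV_{v_2}\DegenV_{v_1}}$. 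The remaining sum over $(v_1,\DegenV_{v_1})$ is then trivial by the Kronecker deltas, leaving
\begin{align}
C^{(w_3,\DegenW_{w_3})}_{(v_2,\DegenV_{v_2})}
=\langle w_3,\DegenW_{w_3}|U|v_2,\DegenV_{v_2}\rangle\,
\langle v_2,\DegenV_{v_2}|\rho\,U^\dag|w_3,\DegenW_{w_3}\rangle.
\end{align}

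With this explicit form in hand, I would form the expansion $\sum_{(v_2,\DegenV_{v_2}),(w_3,\DegenW_{w_3})}C^{(w_3,\DegenW_{w_3})}_{(v_2,\DegenV_{v_2})}\cdot E_{(v,w)}$ with $E_{(v,w)}:=\ketbra{v_2,\DegenV_{v_2}}{w_3,\DegenW_{w_3}}U/\langle w_3,\DegenW_{w_3}|U|v_2,\DegenV_{v_2}\rangle$. The factor $\langle w_3,\DegenW_{w_3}|U|v_2,\DegenV_{v_2}\rangle$ in $C$ cancels the same factor in the denominator of $E_{(v,w)}$ precisely on the summation set where it is nonzero, giving
\begin{align}
\sum_{\text{nonzero}}C\cdot E_{(v,w)}
=\sum_{\text{nonzero}}\langle v_2,\DegenV_{v_2}|\rho\,U^\dag|w_3,\DegenW_{w_3}\rangle\,
\ketbra{v_2,\DegenV_{v_2}}{w_3,\DegenW_{w_3}}U,
\end{align}
which is exactly the operator subtracted in the definition~\eqref{eq:RhoPrime} of $\rho'$. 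Matching the two expressions then gives the claimed decomposition (interpreted, via~\eqref{eq:RhoPrime}, as an identity characterizing $\rho'$).

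The only conceptually delicate point is handling the restriction to nonzero overlaps in the basis~$\Basis$: the ratios $C/\langle w_3|U|v_2\rangle$ are naively $0/0$ on the excluded set, but since $C$ carries $\langle w_3|U|v_2\rangle$ as an explicit factor, the product $C\cdot E_{(v,w)}$ is well-defined and reduces to the same summand across the full index set. Thus the restriction in the definition of $\Basis$ is harmless: it merely excludes pairs on which the coefficient vanishes anyway, so the sum-over-nonzero pairs literally matches the subtracted term in~\eqref{eq:RhoPrime}. Everything else is bookkeeping in the index labels (eigenvalues and degeneracy parameters) of the weakly-measured bases, which I expect to be the main source of clutter but not of genuine difficulty.
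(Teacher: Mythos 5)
Your proof is correct and is essentially the paper's own argument run in reverse: the paper expands $\rho$ in the outer products $\ketbra{ v_2, \DegenV_{v_2} }{ w_3, \DegenW_{w_3} } U$ and then \emph{inserts} the resolution of unity $\bigl( U^\dag \sum_{ (w_2, \DegenW_{w_2}) } \ketbra{ w_2, \DegenW_{w_2} }{ w_2, \DegenW_{w_2} } U \bigr) \bigl( \sum_{ (v_1, \DegenV_{v_1}) } \ketbra{ v_1, \DegenV_{v_1} }{ v_1, \DegenV_{v_1} } \bigr)$ to turn each KD coefficient $\langle w_3, \DegenW_{w_3} | U | v_2, \DegenV_{v_2} \rangle \langle v_2, \DegenV_{v_2} | \rho U^\dag | w_3, \DegenW_{w_3} \rangle$ into the sum~\eqref{eq:TA_decomp_coeff}, whereas you \emph{remove} that same resolution of unity to collapse~\eqref{eq:TA_decomp_coeff} to the KD coefficient. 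Your closing observation also lands on the intended reading of the theorem: the decomposition reproduces the nonzero-overlap part of the expansion of $\rho$ (i.e., $\rho$ minus the terms with $\langle w_3, \DegenW_{w_3} | U | v_2, \DegenV_{v_2} \rangle = 0$), which is exactly the operator the paper's proof constructs and calls $\rho'$.
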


\begin{proof}
We deform the argument in Sec.~\ref{section:KD_Coeffs}.
Let the $\Set{ \ket{a} }$ in Sec.~\ref{section:KD_Coeffs} be
$\Set{ \ket{ v_2,  \DegenV_{v_2} } }$.
Let the $\Set{ \ket{f} }$ be
$\Set{ U^\dag \ket{ w_3,  \DegenW_{w_3} } }$.
We sandwich $\rho$ between resolutions of unity:
$\rho = \left( \sum_a  \ketbra{a}{a}  \right)  \rho
\left(  \sum_f  \ketbra{f}{f}  \right)$.
Rearranging yields
\begin{align}
   \label{eq:TA_Decomp_1}
   \rho  & =  \sum_{ ( v_2,  \DegenV_{v_2} ) ,  ( w_3,  \DegenW_{w_3} ) }
   \ketbra{ v_2,  \DegenV_{v_2} }{ w_3,  \DegenW_{w_3} } U
   \nonumber \\ & \qquad \qquad \qquad \qquad \times
   \langle  v_2,  \DegenV_{v_2}  |  \rho U^\dag  |
               w_3,  \DegenW_{w_3}  \rangle \, .
\end{align}

We wish to normalize the outer product,
by dividing by its trace.
We assumed, in Sec.~\ref{section:KD_Coeffs}, that
no interbasis inner product vanishes.
But inner products could vanish here.
Recall Example~\ref{ex:Real1}: When $t = 0$,
$\W(t)$ and $V$ share an eigenbasis.
That eigenbasis can have orthogonal states
$\ket{\psi}$ and $\ket{\phi}$.
Hence $\langle  w_3,  \DegenW_{w_3}  |  U  |
                          v_2,  \DegenV_{v_2}  \rangle$
can equal  $\langle \psi | \phi \rangle  =  0$.
No such term in Eq.~\eqref{eq:TA_Decomp_1}
can be normalized.

We eliminate these terms from the sum with
the condition $\langle  w_3,  \DegenW_{w_3}  |  U  |
                          v_2,  \DegenV_{v_2}  \rangle  \neq 0$.
The left-hand side of Eq.~\eqref{eq:TA_Decomp_1}
is replaced with the $\rho'$ in Eq.~\eqref{eq:RhoPrime}.
We divide and multiply by
the trace of each $\Basis$ element:
\begin{align}
   \rho' & =
   \sum_{ \substack{
   ( v_2,  \DegenV_{v_2} )  ,   ( w_3, \DegenW_{w_3} )  \: : \:   \\
   \langle w_3, \DegenW_{w_3} | U | v_2,  \DegenV_{v_2} \rangle
   \neq 0 } }
   \frac{
   \ketbra{ v_2,  \DegenV_{v_2} }{ w_3, \DegenW_{w_3} }  U }{
   \langle  w_3, \DegenW_{w_3}  |  U  |  v_2,  \DegenV_{v_2}  \rangle }
   \nonumber \\ &  \quad \times
   \langle  w_3, \DegenW_{w_3}  |  U  |  v_2,  \DegenV_{v_2}  \rangle
   \langle  v_2,  \DegenV_{v_2}  |  \rho  U^\dag  |
                w_3, \DegenW_{w_3}  \rangle \, .
\end{align}
The coefficients are KD-quasiprobability values.

Consider inserting, just leftward of the $\rho$,
the resolution of unity
\begin{align}
   \id  & =
   \left(  U^\dag   \sum_{ w_2, \DegenW_{w_2} }
            \ketbra{ w_2, \DegenW_{w_2} }{ w_2, \DegenW_{w_2} }  U  \right)
   \nonumber \\ & \qquad \times
   \left(  \sum_{ v_1,  \DegenV_{v_1} }
            \ketbra{ v_1,  \DegenV_{v_1} }{ v_1,  \DegenV_{v_1} }  \right) \, .
\end{align}
In the resulting $\rho'$ decomposition,
the $\sum_{ w_2, \DegenW_{w_2} }
\sum_{ v_1,  \DegenV_{v_1} }$
is pulled leftward, to just after the
$\frac{
   \ketbra{ v_2,  \DegenV_{v_2} }{ w_3, \DegenW_{w_3} }  U }{
   \langle  w_3, \DegenW_{w_3}  |  U  |  v_2,  \DegenV_{v_2}  \rangle }$.
This double sum becomes a sum of $\OurKD{\rho}$'s.
The $\rho'$ weights have the form in Eq.~\eqref{eq:TA_decomp_coeff}.
\end{proof}

Theorem~\ref{theorem:TA_Decomp} would hold
if $\rho$ were replaced with
any bounded operator $\Oper \in \mathcal{B} ( \Hil )$.
Four more points merit discussion.
We expect that, after the scrambling time $t_*$,
there tend to exist parameterizations
$\Set{ \DegenW_{w_\ell} }$ and $\Set{ \DegenV_{v_m} }$
such that $\Basis$ forms a basis.
Such a tendency could facilitate error estimates:
Suppose that $\OurKD{\rho}$ is measured after $t_*$.
One can infer the form of the state $\rho$ prepared
at the trial's start.
The target initial state may be difficult to prepare, e.g., thermal.
The preparation procedure's accuracy can be assessed at a trivial cost.
Third, the physical interpretation of $\rho'$ merits investigation.
The asymmetric decoherence relates to time-reversal asymmetry.
Fourth, the sum in Eq.~\eqref{eq:TA_decomp_coeff} relates to
a sum over trajectories,
a marginalization over intermediate-measurement outcomes.

\emph{Relationship between scrambling and
completeness of $\Basis$:}
The $\Set{ \frac{ \ketbra{ a }{ f } }{ \langle f | a \rangle } }$
in Sec.~\ref{section:KD_Coeffs} forms a basis for $\mathcal{D} ( \Hil )$.
But suppose that $\rho'  \neq  \rho$.
$\Basis$ fails to form a basis.

What does this failure imply about $\W(t)$ and $V$?
The failure is equivalent to the existence of a vanishing
$\xi  :=  | \langle w_3, \DegenW_{w_3} | U |
               v_2,  \DegenV_{v_2} \rangle |$.
Some $\xi$ vanishes if
some degenerate eigensubspace $\mathcal{H}_0$ of $\W(t)$
is a degenerate eigensubspace of $V$:
Every eigenspace of every Hermitian operator
has an orthogonal basis.
$\mathcal{H}_0$ therefore has an orthogonal basis.
One basis element can be labeled $U^\dag \ket{ w_3, \DegenW_{w_3} }$;
and the other, $\ket{ v_2,  \DegenV_{v_2} }$.

The sharing of an eigensubspace is equivalent to
the commutation of some component of $\W(t)$ with
some component of $V$.
The operators more likely commute
before the scrambling time $t_*$
than after.
Scrambling is therefore expected to magnify the similarity between
the OTOC quasiprobability $\OurKD{\rho}$
and the conventional KD distribution.

Let us illustrate with an extreme case.
Suppose that all the $\xi$'s lie as far from zero as possible:
\begin{align}
   \label{eq:MUB}
   \xi  =  \frac{1}{ \sqrt{ \Dim } }  \quad  \forall \xi \, .
\end{align}
Equation~\eqref{eq:MUB} implies that
$\mathcal{W}(t)$ and $V$ eigenbases are
\emph{mutually unbiased biases} (MUBs)~\cite{Durt_10_On}.
MUBs are eigenbases of operators
that maximize the lower bound
in an uncertainty relation~\cite{Coles_15_Entropic}.
If you prepare any eigenstate of one operator
(e.g., $U^\dag  \ket{ w_\ell,  \DegenW_{w_\ell} }$)
and measure the other operator (e.g., $V$),
all the possible outcomes have equal likelihoods.
You have no information with which to predict the outcome;
your ignorance is maximal.
$\W(t)$ and $V$ are maximally incompatible,
in the quantum-information (QI) sense of entropic uncertainty relations.
Consistency between this QI sense
of ``mutually incompatible'' and the OTOC sense
might be expected:
$\W(t)$ and $V$ eigenbases might be expected
to form MUBs after the scrambling time $t_*$.
We elaborate on this possibility in Sec.~\ref{section:TheoryOpps}.

KD quasiprobabilities are typically evaluated
on MUBs, such as position and momentum eigenbases~\cite{Lundeen_11_Direct,Lundeen_12_Procedure,Thekkadath_16_Direct}.
One therefore might expect $\OurKD{\rho}$
to relate more closely the KD quasiprobability
after $t_*$ than before.
The OTOC motivates a generalization of KD studies
beyond MUBs.

\emph{Application: Evaluating a state preparation's accuracy:}
Experimentalists wish to measure the OTOC $F(t)$
at each of many times $t$.
One may therefore wish to measure $\OurKD{\rho}$ after $t_*$.
Upon doing so, one may be able to infer not only $F(t)$,
but also the accuracy with which one prepared the target initial state.

Suppose that, after $t_*$, some $\Basis$
that forms a basis for $\Hil$.
Consider summing late-time $\OurKD{\rho}( . )$ values
over $( w_2,  \DegenW_{w_2} )$
and $( v_1, \DegenV_{v_1} )$.
The sum equals a KD quasiprobability for $\rho$.
The quasiprobability encodes all the information in $\rho$~\cite{Lundeen_11_Direct,Lundeen_12_Procedure}.
One can reconstruct the state that one prepared~\cite{Piacentini_16_Measuring,Suzuki_16_Observation,Thekkadath_16_Direct}.

The prepared state $\rho$ might differ from
the desired, or target, state $\rho_\target$.
Thermal states $e^{ - H / T } / Z$
are difficult to prepare, for example.
How accurately was $\rho_\target$ prepared?
One may answer by comparing $\rho_\target$
with the KD quasiprobability $\OurKD{\rho}$ for $\rho$.

Reconstructing the KD quasiprobability requires a trivial sum
over already-performed measurements
[Eq.~\eqref{eq:TA_decomp_coeff}].
One could reconstruct $\rho$ independently
via conventional quantum-state tomography~\cite{Paris_04_Q_State_Estimation}.
The $\rho$ reconstruction inferred from $\OurKD{\rho}$
may have lower precision,
due to the multiplicity of weak measurements
and to the sum.
But independent tomography would likely require extra measurements,
exponentially many in the system size.
Inferring $\OurKD{\rho}$ requires
exponentially many measurements, granted.\footnote{
One could measure, instead of $\OurKD{\rho}$,
the coarse-grained quasiprobability
$\SumKD{\rho}  =:  \sum_{\rm degeneracies} \OurKD{\rho}$
(Sec.~\ref{section:ProjTrick}).
From $\SumKD{\rho}$, one could infer the OTOC.
Measuring $\SumKD{\rho}$ would require
exponentially fewer measurements.
But from $\SumKD{\rho}$, one could not infer the KD distribution.
One could infer a coarse-grained KD distribution, akin to
a block-diagonal matrix representation for $\rho$.}
But, from these measurements,
one can infer $\OurKD{\rho}$, the OTOC, and $\rho$.
Upon reconstructing the KD distribution for $\rho$,
one can recover a matrix representation for $\rho$
via an integral transform~\cite{Lundeen_12_Procedure}.

\emph{The asymmetrically decohered $\rho'$:}
What does the decomposed operator $\rho'$ signify?
$\rho'$ has the following properties:
The term subtracted off in Eq.~\eqref{eq:RhoPrime}
has trace zero.
Hence $\rho'$ has trace one, like a density operator.
But the subtracted-off term is not Hermitian.
Hence $\rho'$ is not Hermitian,
unlike a density operator.
Nor is $\rho'$ anti-Hermitian, necessarily unitarity,
or necessarily anti-unitary.

$\rho'$ plays none of the familiar roles---of
state, observable, or time-evolution operator---in quantum theory.
The physical significance of $\rho'$ is not clear.
Similar quantities appear in weak-measurement theory:
First, non-Hermitian products $\B \A$ of observables
have been measured weakly
(see Sec.~\ref{section:TA_retro} and~\cite{Piacentini_16_Measuring,Suzuki_16_Observation,Thekkadath_16_Direct}).
Second, nonsymmetrized correlation functions
characterize quantum detectors
of photon absorptions and emissions~\cite{Bednorz_13_Nonsymmetrized}.
Weak measurements imbue these examples with physical significance.
We might therefore expect $\rho'$ to have physical significance.
Additionally, since $\rho'$ is non-Hermitian,
non-Hermitian quantum mechanics
might offer insights~\cite{Moiseyev_11_Non}.

The subtraction in Eq.~\eqref{eq:RhoPrime}
constitutes a removal of coherences.
But the subtraction is not equivalent to a decohering channel~\cite{NielsenC10},
which outputs a density operator.
Hence our description of the decoherence as asymmetric.

The asymmetry relates to the breaking time-reversal invariance.
Let $U^\dag  \ket{  w_3,  \DegenW_{w_3}  }
=:  \ket{  \tilde{w}_3  }$
be fixed throughout the following argument
(be represented, relative to any given basis, by a fixed list of numbers).
Suppose that $\rho = e^{ - H / T } / Z$.
The removal of $\langle v_2,  \DegenV_{v_2}  |  \rho  |
\tilde{w}_3  \rangle$     terms from $\rho$
is equivalent to the removal of
$\langle v_2,  \DegenV_{v_2}  |  H  |  \tilde{w}_3  \rangle$
terms from $H$:
$\rho  \mapsto  \rho'  \;  \Leftrightarrow  \;  H  \mapsto  H'$.
Imagine, temporarily, that $H'$ could represent a Hamiltonian
without being Hermitian.
$H'$ would generate a time evolution under which
$\ket{ \tilde{w}_3 }$ could not evolve into
$\ket{ v_2,  \DegenV_{v_2} }$.
But $\ket{ v_2,  \DegenV_{v_2} }$ could evolve into
$\ket{ \tilde{w}_3 }$.
The forward process would be allowed;
the reverse would be forbidden.
Hence $\rho \mapsto \rho'$ relates to
a breaking of time-reversal symmetry.

\emph{Interpretation of the sum in Eq.~\eqref{eq:TA_decomp_coeff}:}
Summing $\OurKD{\rho} ( . )$ values, in Eq.~\eqref{eq:TA_decomp_coeff},
yields a decomposition coefficient $C$ of $\rho'$.
Imagine introducing that sum into Eq.~\eqref{eq:OTOC_retro3}.
The OTOC quasiprobability $\OurKD{\rho} ( . )$
would become a KD quasiprobability.
Consider applying this summed Eq.~\eqref{eq:OTOC_retro3}
in Eq.~\eqref{eq:OTOC_retro1}.
We would change from retrodicting about $V \W(t) V$
to retrodicting about the leftmost $V$.

\subsection{Relationship between out-of-time ordering
and quasiprobabilities}
\label{section:OTOC_TOC}

The OTOC has been shown to equal
a moment of the complex distribution $P(W, W')$~\cite{YungerHalpern_17_Jarzynski}.
This equality echoes Jarzynski's~\cite{Jarzynski_97_Nonequilibrium}.
Jarzynski's equality governs
out-of-equilibrium statistical mechanics.
Examples include a quantum oscillator
whose potential is dragged quickly~\cite{An_15_Experimental}.
With such nonequilibrium systems, one can associate
a difficult-to-measure, but useful,
free-energy difference $\Delta F$.
Jarzynski cast $\Delta F$ in terms of
the characteristic function $\expval{ e^{ - \beta W } }$
of a probability distribution $P(W)$.\footnote{
Let $P(W)$ denote a probability distribution
over a random variable $W$.
The characteristic function $\Charac (s)$
equals the Fourier transform:
$\Charac (s)  :=  \int dW  \;  e^{ i s W }$.
Defining $s$ as an imaginary-time variable,
$is \equiv - \beta$, yields $\expval{ e^{ - \beta W } }$.
Jarzynski's equality reads,
$\expval{ e^{ - \beta W } }  =  e^{ - \beta \Delta F }$.}
Similarly, the difficult-to-measure, but useful, OTOC $F(t)$
has been cast in terms of
the characteristic function $\expval{ e^{ - (\beta W + \beta' W') } }$
of the summed quasiprobability $P(W, W')$~\cite{YungerHalpern_17_Jarzynski}.

Jarzynski's classical probability must be replaced with a quasiprobability
because $[\W(t),  V]  =  0$.
This replacement appeals to intuition:
Noncommutation and quasiprobabilities reflect nonclassicality
as commuting operators and probabilities do not.
The OTOC registers quantum-information scrambling
unregistered by \emph{time-ordered correlators} (TOCs).
One might expect TOCs to equal
moments of coarse-grained quasiprobabilities
closer to probabilities than $\OurKD{\rho}$ is.

We prove this expectation.
First, we review the TOC $\TOC (t)$.
Then, we introduce the TOC analog $\Amp_\rho^\toc$ of
the probability amplitude $\Amp_\rho$ [Eq.~\eqref{eq:Amp}].
$\Amp_\rho$ encodes no time reversals, as expected.
Multiplying a forward amplitude $\Amp_\rho^\toc$
by a backward amplitude $\left( \Amp_\rho^\toc \right)^*$
yields the TOC quasiprobability $\TOCKD{\rho}$.
Inferring $\TOCKD{\rho}$ requires
only two weak measurements per trial.
$\TOCKD{\rho}$ reduces to a probability
if $\rho = \rho_{V}$ [Eq.~\eqref{eq:WRho}].
In contrast, under no known condition on $\rho$
do all $\OurKD{\rho}( . )$ values
reduce to probability values.
Summing $\TOCKD{\rho}$ under constraints
yields a complex distribution $P_\toc (W, W')$.
The TOC $\TOC (t)$ equals a moment of $P_\toc (W, W')$.

%
%
%
\subsubsection{Time-ordered correlator $\TOC(t)$}
\label{section:TOC_def}

The OTOC equals a term in
the expectation value $\expval{ . }$ of
the squared magnitude $| . |^2$ of a commutator $[ . \, , \: . ]$~\cite{Kitaev_15_Simple,Maldacena_15_Bound},
\begin{align}
   \label{eq:CommSquared}
   C(t)  & :=  \expval{ [ \W(t) ,  V ]^\dag  [ \W(t),  V ] } \\
   \nonumber \\ &
   =  - \expval{ \W^\dag (t)  V^\dag  V   \W (t)  }
   -  \expval{ V^\dag   \W^\dag (t)  \W(t)  V }
   \nonumber \\ &  \qquad
   +  2 \Re \LParen F(t) \RParen \, .
\end{align}
The second term is a time-ordered correlator (TOC),
\begin{align}
   \label{eq:TOC_def}
   \TOC (t)  :=   \expval{ V^\dag   \W^\dag (t)  \W(t)  V } \, .
\end{align}
The first term, $\expval{ \W^\dag (t)  V^\dag  V   \W (t)  }$,
exhibits similar physics.
Each term evaluates to one if $\W$ and $V$ are unitary.
If $\W$ and $V$ are nonunitary Hermitian operators,
the TOC reaches its equilibrium value by the dissipation time $t_d < t_*$
(Sec.~\ref{section:OTOC_review}).
The TOC fails to reflect scrambling,
which generates the OTOC's Lyapunov-type behavior
at $t \in (t_d,  \,  t_*)$.

\subsubsection{TOC probability amplitude $\Amp_\rho^\toc$}
\label{section:TOC_Amp}

We define
\begin{align}
   \label{eq:TOC_Amp}
   & \Amp_\rho^\toc  ( j ;  v_1,  \DegenV_{v_1} ;  w_1, \DegenW_{w_1} )
   \nonumber \\ &
   :=  \langle  w_1,  \DegenW_{w_1}  | U  |  v_1   \DegenV_{v_1}  \rangle
   \langle  v_1   \DegenV_{v_1}  |  j  \rangle \,
   \sqrt{ p_j }
\end{align}
as the \emph{TOC probability amplitude}.
$\Amp_\rho^\toc$ governs a quantum process $\ProtocolA^\toc$.
Figure~\ref{fig:TOC_amp1}, analogous to Fig.~\ref{fig:Protocoll_Trial1},
depicts $\ProtocolA^\toc$,
analogous to the $\ProtocolA$ in Sec.~\ref{section:Review_A}:
\begin{enumerate}[(1)]

   \item Prepare $\rho$.

   \item Measure the $\rho$ eigenbasis, $\Set{ \ketbra{j}{j} }$.

   \item Measure $\NondegV$.

   \item Evolve the system forward in time under $U$.

   \item Measure $\NondegW$.

\end{enumerate}
Equation~\eqref{eq:TOC_Amp} represents
the probability amplitude associated with
the measurements' yielding the outcomes
$j,  ( v_1,  \DegenV_{v_1} )$, and $( w_1,  \DegenW_{w_1} )$,
in that order.
All the measurements are strong.
$\ProtocolA^\toc$ is not a protocol for measuring $\Amp_\rho^\toc$.
Rather, $\ProtocolA^\toc$ facilitates
the physical interpretation of $\Amp_\rho^\toc$.

%
%
\begin{figure}[h]
\centering
\begin{subfigure}{0.4\textwidth}
\centering
\includegraphics[width=.9\textwidth]{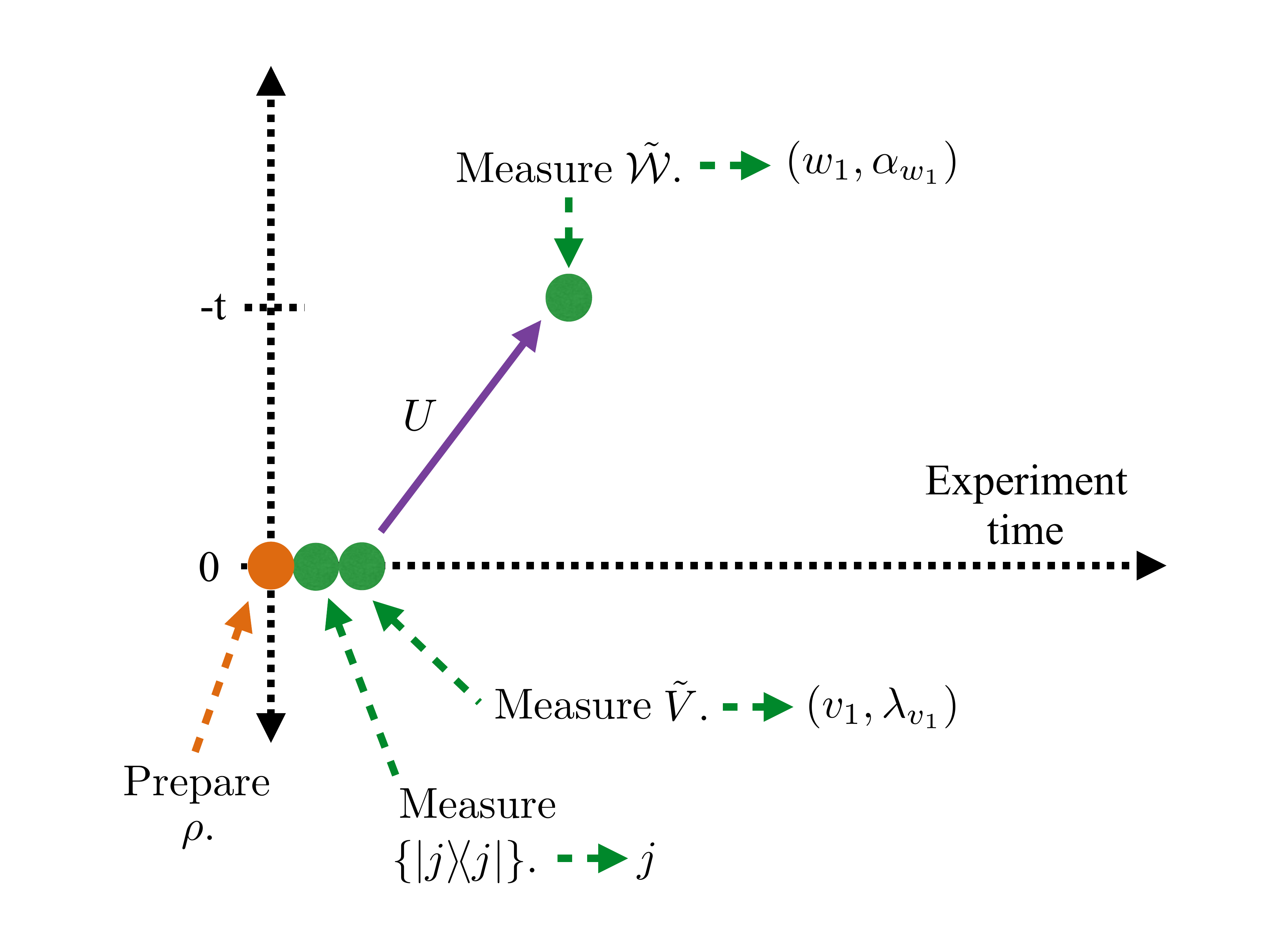}
\caption{}
\label{fig:TOC_amp1}
\end{subfigure}
\begin{subfigure}{.4\textwidth}
\centering
\includegraphics[width=.9\textwidth]{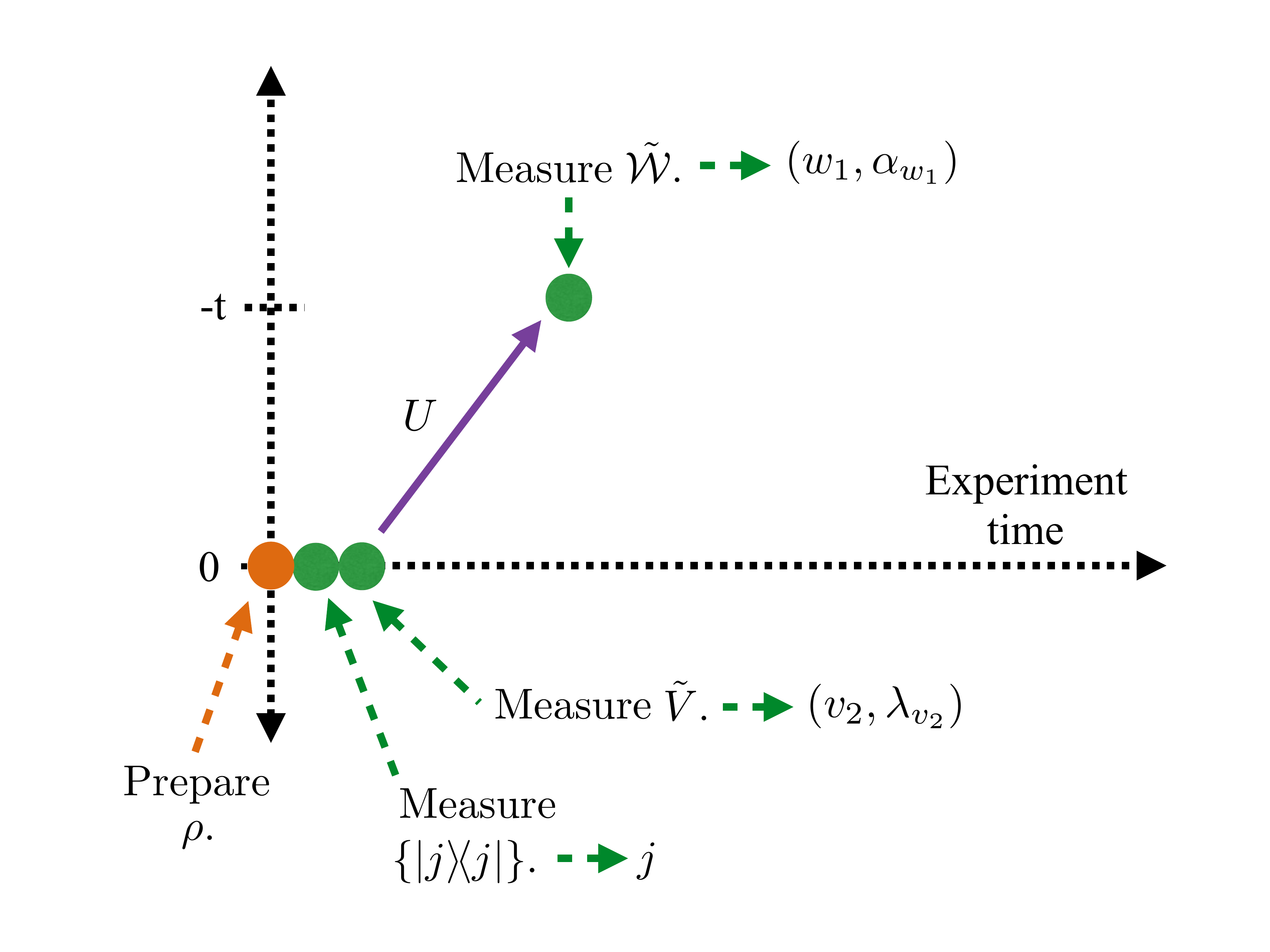}
\caption{}
\label{fig:TOC_amp2}
\end{subfigure}
\caption{\caphead{Quantum processes described by
the probability amplitudes $\Amp_\rho^\toc$
in the time-ordered correlator (TOC) $\TOC (t)$:}
$\TOC (t)$, like $F(t)$, equals a moment of
a summed quasiprobability (Theorem~\ref{theorem:TOC_Jarz}).
The quasiprobability, $\OurKD{\rho}^\toc$, equals
a sum of multiplied probability amplitudes $\Amp_\rho^\toc$
[Eq.~\eqref{eq:TOCKD_def}].
Each product contains two factors:
$\Amp_\rho^\toc ( j  ;  v_1,  \DegenV_{v_1}  ;  w_1,  \DegenW_{w_1} )$
denotes the probability amplitude
associated with the ``forward'' process in Fig.~\ref{fig:TOC_amp1}.
The system, $\Sys$, is prepared in a state $\rho$.
The $\rho$ eigenbasis $\Set{ \ketbra{j}{j} }$ is measured,
yielding outcome $j$.
$\NondegV$ is measured,
yielding outcome $( v_1,  \DegenV_{v_1} )$.
$\Sys$ is evolved forward in time under the unitary $U$.
$\NondegW$ is measured,
yielding outcome $( w_1,  \DegenW_{w_1} )$.
Along the abscissa runs the time
measured by a laboratory clock.
Along the ordinate runs the $t$ in $U := e^{ - i H t }$.
The second factor in each $\OurKD{\rho}^\toc$ product is
$\Amp_\rho^\toc ( j  ;  v_2,  \DegenV_{v_2}  ;  w_1,  \DegenW_{w_1} )^*$.
This factor relates to the process in Fig.~\ref{fig:TOC_amp2}.
The operations are those in Fig.~\ref{fig:TOC_amp1}.
The processes' initial measurements yield the same outcome.
So do the final measurements.
The middle outcomes might differ.
Complex-conjugating $\Amp_\rho^\toc$ yields
the probability amplitude associated with the \emph{reverse} process.
Figures~\ref{fig:TOC_amp1} and~\ref{fig:TOC_amp2}
depict no time reversals.
Each analogous OTOC figure
(Fig.~\ref{fig:Protocoll_Trial1} and Fig.~\ref{fig:Protocoll_Trial2}) depicts two.}
\label{fig:TOC_amps}
\end{figure}

$\ProtocolA^\toc$ results from eliminating, from $\ProtocolA$,
the initial $U$, $\NondegW$ measurement, and $U^\dag$.
$\Amp_\rho$ encodes two time reversals.
$\Amp_\rho^\toc$ encodes none,
as one might expect.

%
%
%
\subsubsection{TOC quasiprobability $\TOCKD{\rho}$}
\label{section:TOC_KD}

Consider a $\ProtocolA^\toc$ implementation that yields the outcomes
$j$, $( v_2,  \DegenV_{v_2} )$, and $( w_1,  \DegenW_{w_1} )$.
Such an implementation appears in Fig.~\ref{fig:TOC_amp2}.
The first and last outcomes [$j$ and $( w_1,  \DegenW_{w_1} )$]
equal those in Fig.~\ref{fig:TOC_amp1}, as in the OTOC case.
The middle outcome can differ.
This process corresponds to the probability amplitude
\begin{align}
   \label{eq:Amp_TOC_rev}
   & \Amp_\rho^\toc ( j ; v_2,  \DegenV_{v_2}  ;  w_1,  \DegenW_{w_1} )
   \nonumber \\ & \quad
   =  \langle w_1 , \DegenW_{w_1}  |  U  |  v_2,  \DegenV_{v_2}
   \rangle \langle   v_2,  \DegenV_{v_2}  |  j  \rangle  \:
   \sqrt{p_j} \, .
\end{align}
Complex conjugation reverses the inner products,
yielding the reverse process's amplitude.

We multiply this reverse amplitude by the forward amplitude~\eqref{eq:TOC_Amp}.
Summing over $j$ yields the \emph{TOC quasiprobability}:
\begin{align}
   \label{eq:TOCKD_def}
   & \TOCKD{\rho}
   ( v_1 ,  \DegenV_{v_1}  ;  w_1,  \DegenW_{w_1} ;  v_2 ,  \DegenV_{v_2}  )
   \nonumber \\ &
   :=  \sum_j
   \Amp_\rho^\toc ( j ; v_2,  \DegenV_{v_2}  ;  w_1,  \DegenW_{w_1} )^*
    \Amp_\rho^\toc  ( j ; v_1,  \DegenV_{v_1}  ;  w_1,  \DegenW_{w_1}  ) \\
   \label{eq:TOCKD_form}
   & =  \langle  v_2 ,  \DegenV_{v_2}  |  U^\dag  |  w_1,  \DegenW_{w_1}  \rangle
   \langle  w_1,  \DegenW_{w_1}  |  U  |  v_1 ,  \DegenV_{v_1}
   \rangle
   \nonumber \\ & \qquad \times
   \langle v_1 ,  \DegenV_{v_1} |  \rho  |   v_2 ,  \DegenV_{v_2}  \rangle  \, .
\end{align}

Like $\OurKD{\rho}$, $\TOCKD{\rho}$ is
an extended Kirkwood-Dirac quasiprobability.
$\TOCKD{\rho}$ is 2-extended, whereas $\OurKD{\rho}$ is 3-extended.
$\TOCKD{\rho}$ can be inferred from
a weak-measurement protocol $\Protocol^\toc$:
\begin{enumerate}[(1)]
   \item Prepare $\rho$.

   \item Measure $\NondegV$ weakly.

   \item Evolve the system forward under $U$.

   \item Measure $\NondegW$ weakly.

   \item Evolve the system backward under $U^\dag$.

   \item Measure $\NondegV$ strongly.

\end{enumerate}
$\Protocol^\toc$ requires just two weak measurements.
The weak-measurement protocol $\Protocol$
for inferring $\OurKD{\rho}$ requires three.
$\Protocol^\toc$ requires one time reversal;
$\Protocol$ requires two.

In a simple case, every $\Amp_\rho^\toc ( . )$ value
reduces to a probability value.
Suppose that $\rho$ shares the $\NondegV$ eigenbasis,
as in Eq.~\eqref{eq:WRho}.
The $( v_2 ,  \DegenV_{v_2} )$ in Eq.~\eqref{eq:TOCKD_form}
comes to equal $( v_1,  \DegenV_{v_1} )$;
Figures~\ref{fig:TOC_amp1} and~\ref{fig:TOC_amp2}
become identical.
Equation~\eqref{eq:TOCKD_form} reduces to
\begin{align}
   & \Amp_{ \rho_{V} }^\toc
   ( v_1,  \DegenV_{v_1}  ;  w_1,  \DegenW_{w_1} ;  v_2,  \DegenV_{v_2} ) \\
   & =  |  \langle  w_1,  \DegenW_{w_1}  |  U  |
             v_1,  \DegenV_{v_1}  \rangle  |^2  \,
   p_{ v_1,  \DegenV_{v_1} }  \,
   \delta_{ v_1  v_2 }  \,  \delta_{ \DegenV_{v_1}  \DegenV_{v_2} }  \\
   &  =  p (  w_1,  \DegenW_{w_1}  |  v_1,  \DegenV_{v_1} )  \,
   p_{ v_1,  \DegenV_{v_1} }  \,
  \delta_{ v_1  v_2 }  \,  \delta_{ \DegenV_{v_1}  \DegenV_{v_2} }  \\
   & =  p ( v_1,  \DegenV_{v_1} ;  w_1,  \DegenW_{w_1} )  \,
   \delta_{ v_1  v_2 }  \,  \delta_{ \DegenV_{v_1}  \DegenV_{v_2} }  \, .
\end{align}
The $p( a | b)$ denotes the conditional probability that,
if $b$ has occurred, $a$ will occur.
$p( a ; b )$ denotes the joint probability that $a$ and $b$ will occur.

All values $\OurKD{\rho_{ V } }^\toc ( . )$ of the TOC quasiprobability
have reduced to probability values.
Not all values of $\OurKD{ \rho_V }$ reduce:
The values associated with
$( v_2 ,  \DegenV_{v_2 } )  =  ( v_1 ,  \DegenV_{v_1} )$ or
$( w_3 ,  \DegenW_{w_3} )  =  ( w_2 ,  \DegenW_{w_2} )$
reduce to products of probabilities.
[See the analysis around Eq.~\eqref{eq:Reduce_to_p}.]
The OTOC quasiprobability encodes nonclassicality---violations
of the axioms of probability---more resilient than the TOC quasiprobability's.

\subsubsection{Complex TOC distribution $P_\toc (W_\toc, W'_\toc)$}
\label{section:P_TOC}

Let $W_\toc$ and $W'_\toc$ denote random variables
analogous to thermodynamic work.
We fix the constraints $W_\toc  =  w_1  v_2$ and
$W'_\toc  =  w_1  v_1$.
($w_1$ and $v_2$ need not be complex-conjugated because
they are real, as $\W$ and $V$ are Hermitian.)
Multiple outcome sextuples
$( v_2,  \DegenV_{v_2} ; w_1,  \DegenW_{w_1} ; v_1,  \DegenV_{v_1} )$
satisfy these constraints.
Each sextuple corresponds to a quasiprobability
$\TOCKD{\rho} ( . )$.
We sum the quasiprobabilities that satisfy the constraints:
\begin{align}
   \label{eq:P_TOC}
   & P_\toc ( W_\toc, W'_\toc )  :=
   \sum_{ ( v_1,  \DegenV_{v_1} ) ,  ( w_1,  \DegenW_{w_1} ) ,
                ( v_2,  \DegenV_{v_2} ) }
   \nonumber \\ &  \times
   \OurKD{\rho}^\toc ( v_1,  \DegenV_{v_1}  ;  w_1,  \DegenW_{w_1} ;
                                   v_2,  \DegenV_{v_2} )  \,
   \delta_{ W ( w_1^* v_2^* ) }  \,  \delta_{W' ( w_1 v_1 ) }  \, .
\end{align}

$P_\toc$ forms a complex distribution.
Let $f$ denote any function of $W_\toc$ and $W'_\toc$.
The $P_\toc$ average of $f$ is
\begin{align}
   \label{eq:TOC_avg}
   & \expval{ f ( W_\toc ,  W'_\toc ) }
   \\ \nonumber &
   :=  \sum_{ W_\toc ,  W'_\toc }  f ( W_\toc ,  W'_\toc )  P_\toc ( W_\toc, W'_\toc )  \, .
\end{align}

\subsubsection{TOC as a moment of the complex distribution}
\label{eq:TOC_Jarz}

The TOC obeys an equality
analogous to Eq.~(11) in~\cite{YungerHalpern_17_Jarzynski}.
%
%
\begin{theorem}[Jarzynski-like theorem for the TOC]
   \label{theorem:TOC_Jarz}
   The time-ordered correlator~\eqref{eq:TOC_def}
   equals a moment of the complex distribution~\eqref{eq:P_TOC}:
   \begin{align}
      \label{eq:TOC_Jarz}
      \TOC (t)  =  \frac{ \partial^2 }{ \partial \beta \, \partial \beta' }
      \expval{ e^{ - ( \beta W_\toc  +  \beta' W'_\toc ) } }
      \Bigg\rvert_{ \beta, \beta' = 0 }   ,
   \end{align}
   wherein $\beta, \beta' \in \mathbb{R}$.
\end{theorem}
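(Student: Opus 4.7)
The plan is to mimic the OTOC Jarzynski-like derivation of~\cite{YungerHalpern_17_Jarzynski}, with the simpler, two-weak-measurement quasiprobability $\OurKD{\rho}^\toc$ playing the role of $\OurKD{\rho}$. First I would carry out the differentiation on the right-hand side. Each $\partial/\partial\beta$ pulls down a factor of $-W_\toc$, and each $\partial/\partial\beta'$ pulls down $-W'_\toc$; after setting $\beta=\beta'=0$ the exponential becomes $1$. Hence the RHS reduces to the first mixed moment $\expval{W_\toc W'_\toc}$, evaluated with respect to $P_\toc$.

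Next I would expand that moment using the definition~\eqref{eq:P_TOC} of $P_\toc$. Substituting and performing the Kronecker-delta-enforced sums over $W_\toc,W'_\toc$ replaces $W_\toc W'_\toc$ by $(w_1 v_2)(w_1 v_1) = w_1^2 v_1 v_2$, yielding
\begin{align}
   \expval{W_\toc W'_\toc}
   = \sum_{\substack{(v_1,\DegenV_{v_1}),(w_1,\DegenW_{w_1}),\\(v_2,\DegenV_{v_2})}}
   w_1^2\, v_1\, v_2 \;\OurKD{\rho}^\toc( v_1,\DegenV_{v_1};  w_1,\DegenW_{w_1}; v_2,\DegenV_{v_2}) . \nonumber
\end{align}
Then I would insert the explicit form~\eqref{eq:TOCKD_form} of $\OurKD{\rho}^\toc$ and perform the three sums one at a time. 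The sum $\sum_{w_1,\DegenW_{w_1}} w_1^2 \ketbra{w_1,\DegenW_{w_1}}{w_1,\DegenW_{w_1}}$ assembles into the spectral resolution of $\W^2$; conjugating by $U$ produces $\W^2(t)$. The sum $\sum_{v_1,\DegenV_{v_1}} v_1 \ketbra{v_1,\DegenV_{v_1}}{v_1,\DegenV_{v_1}}$ reproduces $V$. Finally, $\sum_{v_2,\DegenV_{v_2}} v_2\, \bra{v_2,\DegenV_{v_2}}(\cdot)\ket{v_2,\DegenV_{v_2}}$ becomes $\Tr(V\,\cdot\,)$.

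Putting these together gives $\expval{W_\toc W'_\toc} = \Tr\!\LParen \rho\, V\, \W^2(t)\, V\RParen$. Because $\W$ and $V$ are Hermitian, $\W^2(t) = \W^\dag(t)\W(t)$ and $V = V^\dag$, so the trace equals $\expval{V^\dag \W^\dag(t) \W(t) V} = \TOC(t)$, completing the proof. The only real subtlety—and the one place I would expect to slow down—is bookkeeping the degeneracy sums: one must verify that summing over $(v_\ell,\DegenV_{v_\ell})$ or $(w_1,\DegenW_{w_1})$ gives the \emph{full} operators $V$ and $\W^2$ (including their degenerate eigenspaces), rather than a coarse-grained projector version. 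This is guaranteed because the eigenvalues $v_\ell,w_1$ are independent of the degeneracy parameters, so summing over the latter recovers the full projector $\ProjV{v_\ell}$ or $\ProjW{w_1}$ before the eigenvalue-weighted sum reconstructs $V$ or $\W^2$. No obstruction analogous to the OTOC's out-of-time ordering arises here, which is exactly the point: the TOC moment reassembles into a single time-ordered trace.
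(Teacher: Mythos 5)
Your proposal is correct and follows exactly the route the paper intends: the paper's own proof is simply the statement that the argument is analogous to that of Theorem~1 in~\cite{YungerHalpern_17_Jarzynski}, i.e., differentiate to obtain the first mixed moment $\expval{W_\toc W'_\toc}$, resolve the Kronecker deltas, and resum the eigenvalue-weighted projectors into $\Tr\LParen \rho\, V\, \W^2(t)\, V \RParen = \TOC(t)$. Your handling of the degeneracy sums and of the (trivial, by Hermiticity) complex conjugations matches the paper's conventions.
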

%
%
\begin{proof}
The proof is analogous to the proof of Theorem~1 in~\cite{YungerHalpern_17_Jarzynski}.
\end{proof}
\noindent
Equation~\eqref{eq:TOC_Jarz} can be recast as
$\TOC (t)  =  \expval{ W_\toc  W'_\toc }  \, ,$
along the lines of Eq.~\eqref{eq:RecoverF2}.

\subsection{Higher-order OTOCs as moments of
longer (summed) quasiprobabilities}
\label{section:HigherOTOCs}

Differentiating a characteristic function again and again
yields higher- and higher-point correlation functions.
So does differentiating $P(W, W')$ again and again.
But each resulting correlator encodes
just $\Ops = 3$ time reversals.
Let $\Opsb  =  \frac{1}{2} ( \Ops + 1 )  =  2, 3, \ldots$,
for $\Ops = 3, 5, \ldots$
A \emph{$\Opsb$-fold OTOC} has been defined~\cite{Roberts_16_Chaos,Hael_17_Classification}:
\begin{align}
   \label{eq:k_OTOC}
   F^\ParenKB (t)  :=
   \langle \underbrace{ \W(t)  V  \ldots  \W(t) V }_{2 \Opsb } \rangle
   \equiv  \Tr \LParen  \rho
   \underbrace{ \W(t)  V  \ldots  \W(t) V }_{2 \Opsb }  \RParen \, .
\end{align}
Each such correlation function contains $\Opsb$
Heisenberg-picture operators $\W(t)$
interleaved with $\Opsb$ time-0 operators $V$.
$F^\ParenKB (t)$ encodes $2 \Opsb - 1  =  \Ops$ time reversals,
illustrated in Fig.~\ref{fig:k_OTOC}.
We focus on Hermitian $\W$ and $V$,
as in~\cite{Maldacena_15_Bound,HosurYoshida_16_Chaos},
for simplicity.

The conventional OTOC corresponds to $\Ops = 3$ and $\Opsb = 2$:
$F(t)  =  F^\2 (t)$.
If $\Ops < 3$, $F^\ParenKB (t)$ is not OTO.

%
%
\begin{figure}[h]
\centering
\includegraphics[width=.45\textwidth]{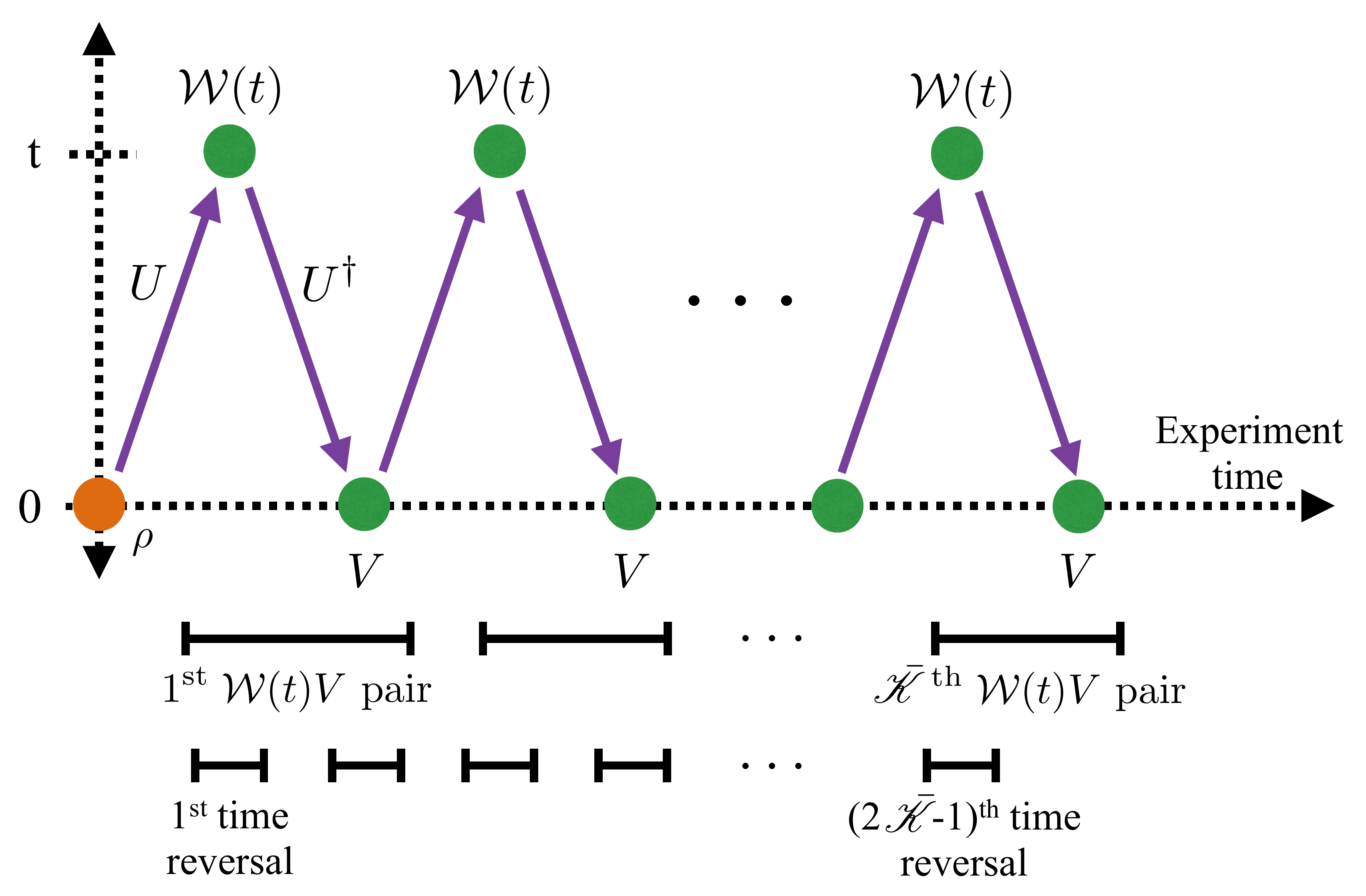}
\caption{\caphead{$\Opsb$-fold out-of-time-ordered correlator (OTOC):}
The conventional OTOC [Eq.~\eqref{eq:OTOC_Def}],
encodes just three time reversals.
The \emph{$\Opsb$-fold OTOC} $F^\ParenKB (t)$ encodes
$2 \Opsb - 1  =  \Ops   =  3, 5, \ldots$ time reversals.
The time measured by a laboratory clock
runs along the abscissa.
The ordinate represents the time parameter $t$,
which may be inverted in experiments.
The orange, leftmost dot represents
the state preparation $\rho$.
Each green dot represents a $\W(t)$ or a $V$.
Each purple line represents a unitary time evolution.
The diagram, scanned from left to right,
represents $F^\ParenKB (t)$, scanned from left to right.
}
\label{fig:k_OTOC}
\end{figure}

The greater the $\Ops$, the longer the distribution $P^\ParenK$
of which $F^\ParenKB (t)$ equals a moment.
We define $P^\ParenK$ in three steps:
We recall the $\Ops$-extended quasiprobability $\OurKD{\rho}^\ParenK$
[Eq.~\eqref{eq:Extend_KD}].
We introduce measurable random variables
$W_\ell$ and $W'_{\ell'}$.
These variables participate in constraints
on sums of $\OurKD{\rho}^\ParenK ( . )$ values.

Let us evaluate Eq.~\eqref{eq:Extend_KD}
on particular arguments:
\begin{align}
   \label{eq:k_OTO_quasi}
   & \OurKD{\rho}^\ParenK (
   v_1,  \DegenV_{v_1}  ;  w_2 ,  \DegenW_{w_2} ;
   \ldots ; v_{\Opsb} ,  \DegenV_{ v_{\Opsb} } ;
   w_{\Opsb + 1},  \DegenW_{ w_{\Opsb + 1} } )
   \nonumber \\ &
   =  \langle  w_{\Opsb + 1},  \DegenW_{w_{\Opsb + 1}}  |  U  |
                    v_{\Opsb}, \DegenV_{v_{\Opsb}}  \rangle
   \langle  v_{\Opsb}, \DegenV_{v_{\Opsb}}  |  U^\dag  |
               w_{\Opsb},  \DegenW_{w_{\Opsb}}  \rangle
   \nonumber \\ &  \times \ldots  \times
   \langle w_2,  \DegenW_{w_2}  |  U  |
               v_1,  \DegenV_{v_1}  \rangle
   \langle  v_1,  \DegenV_{v_1}  |  \rho  U^\dag  |
               w_{\Opsb + 1},  \DegenW_{w_{\Opsb+1}}  \rangle  \, .
\end{align}
One can infer $\OurKD{ \rho }^\ParenK$
from the interferometry scheme in~\cite{YungerHalpern_17_Jarzynski}
and from weak measurements.
Upon implementing one batch of the interferometry trials,
one can infer $\OurKD{\rho}^\ParenK$ for all $\Ops$-values:
One has measured all the inner products $\langle a | \U | b \rangle$.
Multiplying together arbitrarily many inner products yields
an arbitrarily high-$\Ops$ quasiprobability.
Having inferred some $\OurKD{\rho}^\ParenK$,
one need not perform new experiments
to infer $\OurKD{\rho}^{( \Ops + 2 )}$.
To infer $\OurKD{\rho}^\ParenK$ from weak measurements,
one first prepares $\rho$.
One performs $\Ops  =  2 \Opsb - 1$ weak measurements
interspersed with unitaries.
(One measures $\NondegV$ weakly,
evolves with $U$, measures $\NondegW$ weakly,
evolves with $U^\dag$, etc.)
Finally, one measures $\NondegW$ strongly.
The strong measurement corresponds to
the anomalous index $\Opsb + 1$ in
$( w_{\Opsb + 1},    \DegenW_{w_{\Opsb + 1}} )$.

We define $2 \Opsb$ random variables
\begin{align}
   \label{eq:W_ell}
   & W_\ell  \in \{ w_\ell \}   \qquad  \forall \ell =  2, 3, \ldots, \Opsb +1
   \qquad \text{and} \\
   & W'_{\ell'}  \in  \{ v_{\ell'} \}  \qquad \forall  \ell'  =  1, 2, \ldots, \Opsb \, .
\end{align}
Consider fixing the values of the $W_\ell$'s and the $W'_{\ell'}$'s.
Certain quasiprobability values $\OurKD{\rho}^\ParenK ( . )$
satisfy the constraints $W_\ell = w_\ell$ and $W'_{\ell'}  =  v_{\ell'}$
for all $\ell$ and $\ell'$.
Summing these quasiprobability values yields
\begin{align}
   \label{eq:P_k}
   & P^\ParenK ( W_2,  W_3,  \ldots,  W_{\Opsb +1},
   W'_1,  W'_2,  \ldots,  W'_{\Opsb} )
   \\ \nonumber &
   :=  \sum_{ W_2,  W_3,  \ldots,  W_{\Opsb +1} }
   \sum_{ W'_1,  W'_2,  \ldots,  W'_{\Opsb} } \\ &
   \OurKD{\rho}^\ParenK (
   v_1,  \DegenV_{v_1}  ;  w_2 ,  \DegenW_{w_2} ;
   \ldots ; v_{\Opsb} ,  \DegenV_{v_{\Opsb}} ;  w_{\Opsb+1} ,  \DegenW_{w_{\Opsb+1}  } )
   \nonumber \\ &  \nonumber  \times
   \left(  \delta_{ W_2 w_2 }  \times  \ldots
            \times  \delta_{ W_{\Opsb+1} w_{\Opsb+1} }  \right)
   \left(  \delta_{ W'_1  v_1 }  \times  \ldots
            \times  \delta_{ W'_{\Opsb}  v_{\Opsb} }  \right)    .
\end{align}


\begin{theorem}[The $\Opsb$-fold OTOC as a moment]
\label{theorem:k_OTOC}
The $\Opsb$-fold OTOC equals a $2\Opsb^\th$ moment of
the complex distribution~\eqref{eq:P_k}:
\begin{align}
   \label{eq:k_OTOC_thm}
   & F^\ParenKB (t)  =
   \frac{ \partial^{2\Opsb} }{
            \partial \beta_2  \ldots  \partial \beta_{\Opsb + 1 }  \,
            \partial \beta'_1  \ldots  \partial \beta'_{\Opsb} }
   \nonumber \\ &
   \expval{ \exp \left(  -  \left[
                 \sum_{ \ell = 2}^{\Opsb + 1 }  \beta_\ell  W_\ell
                 +   \sum_{ \ell' = 1 }^{\Opsb}  \beta'_{\ell'}  W'_{\ell'}
                 \right]  \right)  }
   \Bigg\lvert_{ \beta_\ell, \beta'_{\ell'}   =  0  \;  \forall \ell, \ell' }  \, ,
\end{align}
wherein $\beta_\ell, \beta'_\ell  \in  \mathbb{R}$.
\end{theorem}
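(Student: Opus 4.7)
The plan is to imitate the strategy used for Theorem~\ref{theorem:TOC_Jarz} (and Theorem~1 of~\cite{YungerHalpern_17_Jarzynski}): differentiate the characteristic function to extract a joint moment, then collapse that moment onto the trace that defines $F^\ParenKB(t)$. First I would expand the right-hand side of~\eqref{eq:k_OTOC_thm} using the $P^\ParenK$-average analogous to~\eqref{eq:TOC_avg}, so the bracket reads $\sum_{W_\ell, W'_{\ell'}} P^\ParenK(W_2,\ldots,W'_{\Opsb})\, \exp(-[\sum_\ell \beta_\ell W_\ell + \sum_{\ell'} \beta'_{\ell'} W'_{\ell'}])$. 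Each derivative $\partial/\partial\beta_\ell$ pulls down a factor $-W_\ell$, and each $\partial/\partial\beta'_{\ell'}$ pulls down $-W'_{\ell'}$; after $2\Opsb$ such derivatives, the overall sign is $(-1)^{2\Opsb}=1$, and evaluating at $\beta_\ell=\beta'_{\ell'}=0$ collapses the exponential to unity. The right-hand side therefore reduces to $\sum_{W_\ell, W'_{\ell'}} P^\ParenK(\ldots)\prod_\ell W_\ell \prod_{\ell'} W'_{\ell'}$.

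Next I would substitute definition~\eqref{eq:P_k}. The Kronecker deltas $\delta_{W_\ell w_\ell}$ and $\delta_{W'_{\ell'} v_{\ell'}}$ execute the $W$ and $W'$ sums and simply replace each $W_\ell$ by $w_\ell$ and each $W'_{\ell'}$ by $v_{\ell'}$ inside the surviving sum over eigenvalues and degeneracy parameters. The right-hand side becomes $\sum_{v_1,\DegenV_{v_1};\,\ldots;\,w_{\Opsb+1},\DegenW_{w_{\Opsb+1}}} \bigl(\prod_\ell w_\ell\bigr)\bigl(\prod_{\ell'} v_{\ell'}\bigr)\,\OurKD{\rho}^\ParenK(v_1,\DegenV_{v_1};\ldots;w_{\Opsb+1},\DegenW_{w_{\Opsb+1}})$, i.e., the $2\Opsb^\th$ joint moment of the quasiprobability.

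Finally I would insert the explicit form~\eqref{eq:k_OTO_quasi}, rewrite each inner-product pair as an outer product, and collect terms into operators. For each $\ell'=1,\ldots,\Opsb$, the factor $\sum_{v_{\ell'},\DegenV_{v_{\ell'}}} v_{\ell'}\ketbra{v_{\ell'},\DegenV_{v_{\ell'}}}{v_{\ell'},\DegenV_{v_{\ell'}}}$ reconstructs $V$; for each $\ell=2,\ldots,\Opsb+1$, the factor $\sum_{w_\ell,\DegenW_{w_\ell}} w_\ell\, U^\dag\ketbra{w_\ell,\DegenW_{w_\ell}}{w_\ell,\DegenW_{w_\ell}}U$ reconstructs $\W(t)$ (with one of these using cyclicity of the trace to close the chain through $\rho$). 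The alternating $U$ and $U^\dag$ factors in~\eqref{eq:k_OTO_quasi} are exactly what is needed to conjugate each $\W$-projector into $\W(t)$. The resulting expression is $\Tr\bigl(\rho\,\W(t)V\W(t)V\cdots\W(t)V\bigr)$, with $2\Opsb$ alternating factors inside the trace, which is precisely $F^\ParenKB(t)$ by Eq.~\eqref{eq:k_OTOC}.

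The main obstacle, and the step requiring most care, is the bookkeeping in this last regrouping: the index $w_{\Opsb+1}$ sits at the ``boundary'' where the strong measurement lives and plays an asymmetric role relative to the other $w_\ell$'s, so I must check that the trace closes correctly when cyclicity is invoked, and that the final operator product is ordered $\W(t)V\W(t)V\cdots$ rather than some cyclic permutation that would differ off-shell. A clean way to verify this is to write the $\Opsb=2$ case explicitly as a sanity check against~\eqref{eq:JarzLike}, then argue by induction on $\Opsb$ that inserting one additional $(v_\ell,w_{\ell+1})$ pair of sums appends one more $V\W(t)$ block to the operator product inside the trace.
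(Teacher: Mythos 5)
Your proposal is correct and is essentially the paper's own argument: the paper proves this theorem only by pointing to the analogous proof of Theorem~1 in~\cite{YungerHalpern_17_Jarzynski}, and your differentiation-of-the-characteristic-function, delta-collapse, and operator-reassembly steps (including the cyclicity move that conjugates the boundary projector $\Pi^{\W}_{w_{\Opsb+1}}$ into $\W(t)$) are exactly what that analogy entails. The $\Opsb=2$ sanity check you propose indeed reproduces Eq.~\eqref{eq:RecoverF1}, confirming the ordering $\Tr\LParen\rho\,\W(t)V\cdots\W(t)V\RParen$.
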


\begin{proof}
The proof proceeds in analogy with
the proof of Theorem~1 in~\cite{YungerHalpern_17_Jarzynski}.
   %
   %
\end{proof}

The greater the $\Ops$, the ``longer''
the quasiprobability $\OurKD{\rho}^\ParenK$.
The more weak measurements are required to infer
$\OurKD{\rho}^\ParenK$.
Differentiating $\OurKD{\rho}^\ParenK$ more
does not raise the number of time reversals encoded in the correlator.

Equation~\eqref{eq:k_OTOC_thm} can be recast as
$F^\ParenKB (t)  =  \expval{
\left(  \prod_{ \ell = 2}^{\Opsb + 1 }  W_\ell  \right)
\left(  \prod_{ \ell' = 1 }^{\Opsb}  W'_{\ell'}   \right)
}  \, ,$
along the lines of Eq.~\eqref{eq:RecoverF2}.

\section{Outlook}
\label{section:Outlook}

We have characterized the quasiprobability $\OurKD{\rho}$
that ``lies behind'' the OTOC $F(t)$.
$\OurKD{\rho}$, we have argued, is an extension of
the Kirkwood-Dirac distribution used in quantum optics.
We have analyzed and simplified measurement protocols for $\OurKD{\rho}$,
calculated $\OurKD{\rho}$ numerically and on average over Brownian circuits,
and investigated mathematical properties.
This work redounds upon quantum chaos,
quasiprobability theory, and weak-measurement physics.
As the OTOC equals a combination of $\OurKD{\rho}( . )$ values,
$\OurKD{\rho}$ provides more-fundamental information about scrambling.
The OTOC motivates generalizations of,
and fundamental questions about, KD theory.
The OTOC also suggests a new application
of sequential weak measurements.

At this intersection of fields lie many opportunities.
We classify the opportunities by the tools
that inspired them: experiments, calculations, and abstract theory.

\subsection{Experimental opportunities}

We expect the weak-measurement scheme for $\OurKD{\rho}$ and $F(t)$
to be realizable in the immediate future.
Candidate platforms include
superconducting qubits, trapped ions, ultracold atoms,
and perhaps NMR.
Experimentalists have developed key tools required to implement the protocol~\cite{Bollen_10_Direct,Lundeen_11_Direct,Lundeen_12_Procedure,Bamber_14_Observing,Mirhosseini_14_Compressive,White_16_Preserving,Hacohen_16_Quantum,Browaeys_16_Experimental,Piacentini_16_Measuring,Suzuki_16_Observation,Thekkadath_16_Direct}.


Achievable control and dissipation must be compared with
the conditions needed to infer the OTOC.
Errors might be mitigated with
tools under investigation~\cite{Swingle_Resilience}.

\subsection{Opportunities motivated by calculations}

Numerical simulations and analytical calculations
point to three opportunities.

Physical models' OTOC quasiprobabilities may be evaluated.
The Sachdev-Ye-Kitaev model, for example, scrambles quickly~\cite{Sachdev_93_Gapless,Kitaev_15_Simple}.
The quasiprobability's functional form
may suggest new insights into chaos.
Our Brownian-circuit calculation (Sec.~\ref{section:Brownian}),
while a first step, involves averages over unitaries.
Summing quasiprobabilities can cause interference
to dampen nonclassical behaviors~\cite{Dressel_15_Weak}.
Additionally, while unitary averages model chaotic evolution,
explicit Hamiltonian evolution might provide different insights.
Explicit Hamiltonian evolution would also preclude the need
to calculate higher moments of the quasiprobability.

In some numerical plots, the real part $\Re ( \SumKD{\rho} )$ bifurcates.
These bifurcations resemble classical-chaos pitchforks~\cite{Strogatz_00_Non}.
Classical-chaos plots bifurcate when
a differential equation's equilibrium point
branches into three.
The OTOC quasiprobability $\OurKD{\rho}$ might be recast
in terms of equilibria.
Such a recasting would strengthen the parallel between
classical chaos and the OTOC.

Finally, the Brownian-circuit calculation has untied threads.
We calculated only the first moment of $\SumKD{\rho}$.
Higher moments may encode physics
less visible in $F(t)$.
Also, evaluating certain components of $\SumKD{\rho}$
requires new calculational tools.
These tools merit development,
then application to $\SumKD{\rho}$.
An example opportunity is discussed after Eq.~\eqref{eq:Brown_help}.

\subsection{Fundamental-theory opportunities}
\label{section:TheoryOpps}

Seven opportunities concern
the mathematical properties
and physical interpretations of $\OurKD{\rho}$.

The KD quasiprobability prompts the question,
``Is the OTOC definition of `maximal noncommutation' consistent with
the mutually-unbiased-bases definition?''
Recall Sec.~\ref{section:TA_Coeffs}:
We decomposed an operator $\rho'$
in terms of a set $\Basis  =  \Set{
\frac{ \ketbra{ a }{ f } }{  \langle f | a \rangle } }_{
\langle f | a \rangle  \neq 0 }$
of operators.
In the KD-quasiprobability literature, the bases
$\Basis_a =  \Set{ \ket{a} }$ and $\Basis_f  =  \Set{ \ket{f} }$
tend to be mutually unbiased (MU):
$| \langle f | a \rangle |  =  \frac{1}{ \sqrt{ \Dim } }  \;  \forall a, f$.
Let $\A$ and $\B$ denote operators that have MU eigenbases.
Substituting $\A$ and $\B$ into an uncertainty relation
maximizes the lower bound on an uncertainty~\cite{Coles_15_Entropic}.
In this quantum-information (QI) sense,
$\A$ and $\B$ noncommute maximally.

In Sec.~\ref{section:TA_Coeffs},
$\Basis_a  =  \Set{ \ket{v_2,  \DegenV_{v_2} } }$, and
$\Basis_f  =  \Set{  U^\dag \ket{ w_3, \DegenW_{w_3} } }$.
These $\Basis$'s are eigenbases of $V$ and $\W(t)$.
When do we expect these eigenbases to be MU,
as in the KD-quasiprobability literature?
After the scrambling time $t_*$---after $F(t)$ decays to zero---when
$\W(t)$ and $V$ noncommute maximally in the OTOC sense.

The OTOC provides one definition of ``maximal noncommutation.''
MUBs provide a QI definition.
To what extent do these definitions overlap?
Initial results show that, in some cases,
the distribution over possible values of
$| \langle v_2,  \DegenV_{v_2} | U | w_3, \DegenW_{w_3} \rangle |$
peaks at $\frac{1}{ \sqrt{ \Dim } }$.
But the distribution approaches this form before $t_*$.
Also, the distribution's width seems constant in $\Dim$.
Further study is required.
The overlap between OTOC and two QI definitions of scrambling
have been explored already:
(1) When the OTOC is small, a tripartite information is negative~\cite{HosurYoshida_16_Chaos}.
(2) An OTOC-like function is proportional to
a \emph{frame potential} that quantifies pseudorandomness~\cite{Roberts_16_Chaos}.
The relationship between the OTOC
and a third QI sense of incompatibility---MUBs and entropic uncertainty relations---merits investigation.

Second, $\OurKD{\rho}$ effectively has four arguments, apart from $\rho$
(Sec.~\ref{section:TA_Props}).
The KD quasiprobability has two.
This doubling of indices parallels the Choi-Jamiolkowski (CJ) representation
of quantum channels~\cite{Preskill_15_Ch3}.
Hosur \emph{et al.} have, using the CJ representation,
linked $F(t)$ to the tripartite information~\cite{HosurYoshida_16_Chaos}.
The extended KD distribution might be linked
to information-theoretic quantities similarly.

Third, our $P(W, W')$ and weak-measurement protocol
resemble analogs in~\cite{Solinas_15_Full,Solinas_16_Probing}.
\{See~\cite{Alonso_16_Thermodynamics,Miller_16_Time,Elouard_17_Role}
for frameworks similar to Solinas and Gasparinetti's (S\&G's).\}
Yet~\cite{Solinas_15_Full,Solinas_16_Probing}
concern quantum thermodynamics, not the OTOC.
The similarity between the quasiprobabilities in~\cite{Solinas_15_Full,Solinas_16_Probing}
and those in~\cite{YungerHalpern_17_Jarzynski},
their weak-measurement protocol and ours,
and the thermodynamic agendas in~\cite{Solinas_15_Full,Solinas_16_Probing}
and~\cite{YungerHalpern_17_Jarzynski}
suggest a connection between the projects~\cite{Jordan_chat,Solinas_chat}.
The connection merits investigation and might yield new insights.
For instance, S\&G calculate
the heat dissipated by an open quantum system
that absorbs work~\cite[Sec. IV]{Solinas_15_Full}.
OTOC theory focuses on closed systems.
Yet experimental systems are open.
Dissipation endangers measurements of $F(t)$.
Solinas and Gasparinetti's toolkit might facilitate predictions about,
and expose interesting physics in, open-system OTOCs.

Fourth, $W$ and $W'$ suggest understudies for work
in quantum thermodynamics.
Thermodynamics sprouted during the 1800s,
alongside steam engines and factories.
How much work a system could output---how
much ``orderly'' energy one could reliably draw---held practical importance.
Today's experimentalists draw energy from power plants.
Quantifying work may be less critical
than it was 150 years ago.
What can replace work in the today's growing incarnation of thermodynamics,
quantum thermodynamics?
Coherence relative to the energy eigenbasis is being quantified~\cite{Lostaglio_15_Description,Narasimhachar_15_Low}.
The OTOC suggests alternatives:
$W$ and $W'$ are random variables, analogous to work,
natural to quantum-information scrambling.
The potential roles of $W$ and $W'$ within quantum thermodynamics
merit exploration.

Fifth, relationships amongst three ideas were identified recently:
\begin{enumerate}[(1)]
   \item
We have linked quasiprobabilities with the OTOC,
following~\cite{YungerHalpern_17_Jarzynski}.
   \item
Aleiner \emph{et al.}~\cite{Aleiner_16_Microscopic}
and Haehl \emph{et al.}~\cite{Haehl_16_Schwinger_I,Haehl_16_Schwinger_II}
have linked the OTOC with Schwinger-Keldysh path integrals.
   \item
Hofer has linked Schwinger-Keldysh path integrals with quasiprobabilities~\cite{Hofer_17_Quasi}.
\end{enumerate}
The three ideas---quasiprobabilities, the OTOC, and Schwinger-Keldysh path integrals---form the nodes of the triangle in Fig.~\ref{fig:OTOC_path_quasi}.
The triangle's legs were discovered recently;
their joinings can be probed further.
For example, Hofer focuses on single-timefold path integrals.
OTOC path integrals contain multiple timefolds~\cite{Aleiner_16_Microscopic,Haehl_16_Schwinger_I,Haehl_16_Schwinger_II}.
Just as Hofer's quasiprobabilities involve fewer timefolds
than the OTOC quasiprobability $\OurKD{\rho}$,
the TOC quasiprobability $\TOCKD{\rho}$~\eqref{eq:TOCKD_def}
can be inferred from fewer weak measurements than $\OurKD{\rho}$ can.
One might expect Hofer's quasiprobabilities to relate to
$\TOCKD{\rho}$.
Kindred works, linking quasiprobabilities with out-of-time ordering, include~\cite{Manko_00_Lyapunov,Bednorz_13_Nonsymmetrized,Oehri_16_Time,Hofer_17_Quasi,Lee_17_On}.

%
%
\begin{figure}[h]
\centering
\includegraphics[width=.45\textwidth]{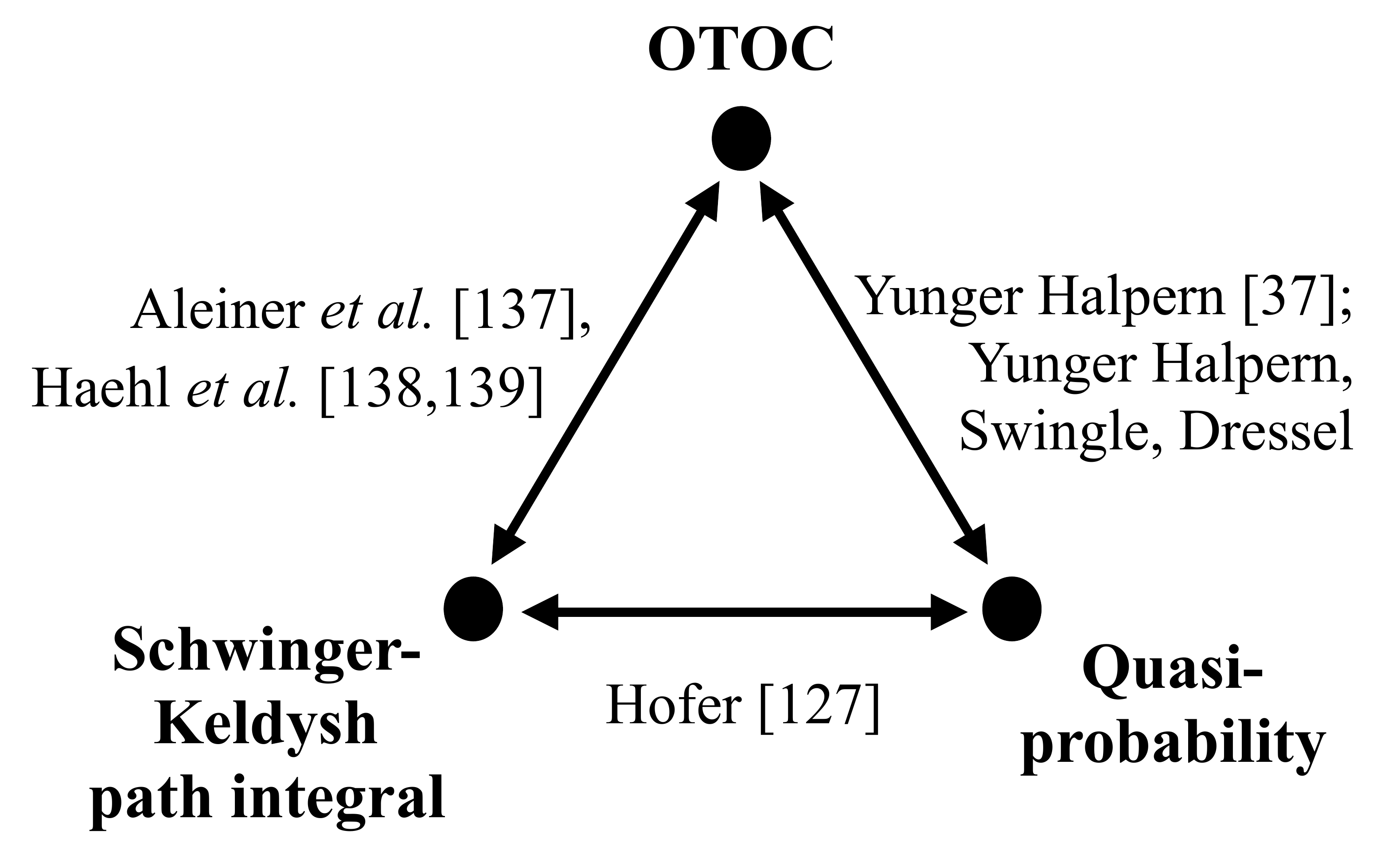}
\caption{\caphead{Three interrelated ideas:}
Relationships amongst the out-of-time-ordered correlator, quasiprobabilities,
and Schwinger-Keldysh path integrals
were articulated recently.}
\label{fig:OTOC_path_quasi}
\end{figure}

Sixth, the OTOC equals a moment of
the complex distribution $P( W, W')$~\cite{YungerHalpern_17_Jarzynski}.
The OTOC has been bounded with general-relativity
and Lieb-Robinson tools~\cite{Maldacena_15_Bound,Lashkari_13_Towards}.
A more information-theoretic bound might follow
from the Jarzynski-like equality in~\cite{YungerHalpern_17_Jarzynski}.

Finally, the KD distribution consists of the coefficients
in a decomposition of a quantum state
$\rho  \in  \mathcal{D} ( \mathcal{H} )$~\cite{Lundeen_11_Direct,Lundeen_12_Procedure}
(Sec.~\ref{section:KD_Coeffs}).
$\rho$ is decomposed in terms of a set
$\Basis  :=  \Set{  \frac{ \ketbra{ a }{ f } }{ \langle f | a \rangle }  }$
of operators.
$\Basis$ forms a basis for $\Hil$
only if $\langle f | a \rangle  \neq  0  \;  \forall a, f$.
The inner product has been nonzero in experiments,
because $\Set{ \ket{ a } }$ and $\Set{ \ket{f} }$
are chosen to be mutually unbiased bases (MUBs):
They are eigenbases of ``maximally noncommuting'' observables.
The OTOC, evaluated before the scrambling time $t = t_*$,
motivates a generalization beyond MUBs.
What if, $F(t)$ prompts us to ask,
$\langle f | a \rangle  =  0$ for some $a, f$
(Sec.~\ref{section:TA_Coeffs})?
The decomposition comes to be of
an ``asymmetrically decohered'' $\rho'$.
This decoherence's physical significance
merits investigation.
The asymmetry appears related to time irreversibility.
Tools from non-Hermitian quantum mechanics
might offer insight~\cite{Moiseyev_11_Non}.

%
%
\section*{Acknowledgements}
This research was supported by NSF grant PHY-0803371.
Partial support came from
the Walter Burke Institute for Theoretical Physics at Caltech. The Institute for Quantum Information and Matter (IQIM) is an NSF Physics Frontiers Center supported by the Gordon and Betty Moore Foundation.
NYH thanks Jordan Cotler and Paolo Solinas for pointing out the parallel with~\cite{Solinas_15_Full,Solinas_16_Probing};
David Ding for asking whether $\OurKD{\rho}$ represents a state;
Mukund Rangamani for discussing $\Opsb$-fold OTOCs;
Michele Campisi, Snir Gazit, John Goold, Jonathan Jones, Leigh Samuel Martin, Oskar Painter, and Norman Yao for discussing experiments;
and Christopher D. White and Elizabeth Crosson for discussing computational complexity.
Parts of this paper were developed while NYH was visiting
the Stanford ITP and UCL.
BGS is supported by the Simons Foundation,
as part of the It From Qubit collaboration;
through a Simons Investigator Award to Senthil Todadri;
and by MURI grant W911NF-14-1-0003 from ARO.
JD is supported by ARO Grant No. W911NF-15-1-0496.

\begin{appendices}


\renewcommand{\thesubsection}{\Alph{section}.\arabic{subsection}}
\renewcommand{\thesubsection}{\Alph{section}.\arabic{subsection}}
\renewcommand{\thesubsubsection}{\Alph{section}.\arabic{subsection}.\roman{subsubsection}}

\makeatletter\@addtoreset{equation}{section}
\def\theequation{\thesection\arabic{equation}}

\section{Mathematical properties of $P(W, W')$}
\label{section:P_Properties}

Summing $\OurKD{\rho}$, with constraints,
yields $P(W, W')$ [Eq.~\eqref{eq:PWWPrime}].
Hence properties of $\OurKD{\rho}$ (Sec.~\ref{section:TA_Props})
imply properties of $P(W, W')$.

\begin{property}
\label{property:P_Complex}
$P(W, W')$ is a map from
a composition of two sets of complex numbers
to the complex numbers:
$P \:  :  \:  \Set{ W }  \times  \Set{  W'  }  \to  \mathbb{C}$.
The range is not necessarily real:
$\mathbb{C}  \supset \mathbb{R}$.
\end{property}

Summing quasiprobability values can eliminate nonclassical behavior:
Interference can reduce quasiprobabilities' nonreality and negativity.
Property~\ref{prop:MargOurKD} consists of an example.
One might expect $P (W, W')$, a sum of $\OurKD{\rho} ( . )$ values,
to be real.
Yet $P(W, W')$ is nonreal in many numerical simulations
(Sec.~\ref{section:Numerics}).

\begin{property} \label{prop:MargP}

Marginalizing $P(W, W')$ over one argument
yields a probability if $\rho$ shares
the $\NondegV$ eigenbasis or the $\NondegW(t)$ eigenbasis.
\end{property}

Consider marginalizing Eq.~\eqref{eq:PWWPrime} over $W'$.
The $( w_2, \DegenW_{w_2} )$  and  $( v_1,  \DegenV_{v_1} )$
sums can be performed explicitly:
\begin{align}
   P(W)  & :=  \sum_{ W' }  P(W, W')
   \\ &  \label{eq:PW_Help1}
   = \sum_{ \substack{ ( v_2,  \DegenV_{v_2} ),  \\
                                     ( w_3,  \DegenW_{w_3} ) } }
   \langle w_3,  \DegenW_{w_3}  |  U  |  v_2,  \DegenV_{v_2}  \rangle
   \langle  v_2,  \DegenV_{v_2}  |  \rho  U^\dag  |
                w_3,  \DegenW_{w_3}  \rangle
   \nonumber \\ & \qquad \qquad \qquad   \times
   \delta_{W ( w_3^*  v_2^* ) }  \, .
\end{align}
The final expression is not obviously a probability.

But suppose that $\rho$ shares its eigenbasis with
$\NondegV$ or with $\NondegW(t)$.
Suppose, for example, that $\rho$ has
the form in Eq.~\eqref{eq:WRho}.
Equation~\eqref{eq:PW_Help1} simplifies:
\begin{align}
   \label{eq:PW_Help2}
   P(W)  & =
   \sum_{ \substack{ ( v_2,  \DegenV_{v_2} ),  \\
                                 ( w_3,  \DegenW_{w_3} ) } }
   p (  v_2,  \DegenV_{v_2} ;   w_3,  \DegenW_{w_3} )  \,
   \delta_{W ( w_3^*  v_2^* ) } \, .
\end{align}
The
\mbox{$p (  v_2,  \DegenV_{v_2} ;   w_3,  \DegenW_{w_3} )
:=  | \langle  w_3,  \DegenW_{w_3}  |  U  |
       v_2,  \DegenV_{v_2} \rangle |^2  \,
     p_{ v_2,  \DegenV_{v_2} }$}
denotes the joint probability that a $\NondegV$ measurement of $\rho$
yields $( v_2,  \DegenV_{v_2} )$
and, after a subsequent evolution under $U$,
a $\NondegW$ measurement yields $( w_3,  \DegenW_{w_3} )$.

Every factor in Eq.~\eqref{eq:PW_Help2} is nonnegative.
Summing over $W$ yields
a sum over the arguments of $\OurKD{\rho} ( . )$.
The latter sum equals one, by Property~\ref{prop:MargOurKD}:
$\sum_W  P(W)  =  1$.
Hence $P(W)  \in  [0, 1]$.
Hence $P(W)$ behaves as a probability.

We can generalize Property~\ref{prop:MargP} to arbitrary Gibbs states
$\rho = e^{ - H / T } / Z$,
using the regulated quasiprobability~\eqref{eq:RegKD2}.
The regulated OTOC~\eqref{eq:RegOTOC_def}
equals a moment of the complex distribution
\begin{align}
   \label{eq:RegP_def}
   & P_\reg ( W, W' )  :=
   \sum_{ \substack{ ( v_1,  \DegenV_{v_1} ),  ( w_2,  \DegenW_{w_2} ),
                                 ( v_2,  \DegenV_{v_2} )    ( w_3,  \DegenW_{w_3} ) } }
   \\  \nonumber  &
   \OurKD{\rho}^\reg  (  v_1,  \DegenV_{v_1} ;  w_2, \DegenW_{w_2} ;
   v_2,  \DegenV_{v_2}  ;  w_3,  \DegenW_{w_3} )  \,
   \delta_{ W ( w_3^* v_2^* ) }  \,  \delta_{ W' ( w_2  v_1 ) }  \, .
\end{align}
The proof is analogous to the proof of Theorem~1 in~\cite{YungerHalpern_17_Jarzynski}.

Summing over $W'$ yields
$P_\reg (W)  :=  \sum_{ W' }  P_\reg  (W, W')$.
We substitute in from Eq.~\eqref{eq:RegP_def},
then for $\OurKD{\rho}^\reg$ from Eq.~\eqref{eq:RegKD2}.
We perform the sum over $W'$ explicitly,
then the sums over $(w_2, \DegenW_{w_2} )$ and $( v_1 ,  \DegenV_{v_1} )$:
\begin{align}
   P_\reg (W)  =  \sum_{ \substack{ ( v_2,  \DegenV_{v_2} ) \\
                                                        ( w_3,  \DegenW_{w_3} ) } }
   | \langle  w_3,  \DegenW_{w_3}  |  \tilde{U} |
     v_2,  \DegenV_{v_2}  \rangle |^2  \,
   \delta_{ W ( w_3^*  v_2^* ) }  \, .
\end{align}
This expression is real and nonnegative.
$P_\reg(W)$ sums to one, as $P(W)$ does.
Hence $P_\reg(W)  \in  [ 0, \, 1 ]$ acts as a probability.

\begin{property}[Degeneracy of every $P(W, W')$
associated with $\rho = \id / \Dim$ and
with eigenvalue-$( \pm 1 )$ operators $\W$ and $V$]
\label{prop:P_Degen}

Let the eigenvalues of $\W$ and $V$ be $\pm 1$.
For example, let $\W$ and $V$ be Pauli operators.
Let $\rho = \id / \Dim$ be the infinite-temperature Gibbs state.
The complex distribution has the degeneracy
$P(1, -1 )  =  P(-1, 1)$.
\end{property}

Property~\ref{prop:P_Degen} follows from
(1) Eq.~\eqref{eq:SumKD_simple2} and
(2) Property~\ref{property:Syms} of $\OurKD{ ( \id / \Dim) }$.
Item (2) can be replaced with the trace's cyclicality.
We reason as follows:
$P(W, W')$ is defined in Eq.~\eqref{eq:PWWPrime}.
Performing the sums over the degeneracies
yields $\SumKD{ ( \id / \Dim) }$.
Substituting in from Eq.~\eqref{eq:SumKD_simple2} yields
\begin{align}
   \label{eq:P_Degen_Help1}
   P(W, W')  & =   \frac{1}{ \Dim }
   \sum_{ v_1 , w_2 , v_2 , w_3 }
   \Tr \left(  \ProjWt{w_3}  \ProjV{v_2}  \ProjWt{w_2}  \ProjV{v_1}  \right)
   \nonumber \\ & \qquad \qquad \qquad \; \times
   \delta_{W ( w_3^*  v_2^* ) }  \delta_{W' ( w_2  v_1 ) } \, .
\end{align}

Consider inferring $\OurKD{ ( \id / \Dim) }$ or $\SumKD{ ( \id / \Dim) }$
from weak measurements.
From one trial, we infer about four random variables:
$v_1,  w_2,  v_2$ and $w_3$.
Each variable equals $\pm 1$.
The quadruple $(v_1,  w_2,  v_2 ,  w_3)$ therefore
assumes one of sixteen possible values.
These four ``base'' variables are multiplied to form
the composite variables $W$ and $W'$.
The tuple $(W, W')$ assumes one of four possible values.
Every $(W, W')$ value can be formed from
each of four values of $(v_1,  w_2,  v_2 ,  w_3)$.
Table~\ref{table:P_Degen} lists the tuple-quadruple correspondences.

%
%
\begin{table*}[t]
\begin{center}
\begin{tabular}{|c|c|}
   \hline
        $(W, W')$
   &   $( v_1,  w_2,  v_2,  w_3 )$
   \\  \hline \hline
        $(1, 1)$
   &   $(1, 1, 1, 1),  (1, 1, -1, -1),  (-1, -1, 1, 1),  (-1, -1, -1, -1)$
   \\  \hline
        $(1, -1)$
   &   $(-1, 1, 1, 1),  (-1, 1, -1, -1),  (1, -1, 1, 1),  (1, -1, -1, -1)$
   \\  \hline
        $(-1, 1)$
   &   $(1, 1, -1, 1),  (1, 1, 1, -1),  (-1, -1, -1, 1),  (-1, -1, 1, -1)$
   \\  \hline
        $(-1, -1)$
   &   $(-1, 1, -1, 1),  (-1, 1, 1, -1),  (1, -1, -1, 1),  (1, -1, 1, -1)$
   \\  \hline
\end{tabular}
\caption{\caphead{Correspondence between
tuples of composite variables
and quadruples of ``base'' variables:}
From each weak-measurement trial, one learns about
a quadruple $( v_1,  w_2,  v_2,  w_3 )$.
Suppose that the out-of-time-ordered-correlator operators $\W$ and $V$
have the eigenvalues $w_\ell,  v_m  =  \pm 1$.
For example, suppose that $\W$ and $V$ are Pauli operators.
The quadruple's elements are combined into
$W := w_3^* v_2^*$ and $W'  :=  w_2 v_1$.
Each $(W, W')$ tuple can be formed from
each of four quadruples.}
\label{table:P_Degen}
\end{center}
\end{table*}

Consider any quadruple associated with
$(W, W')  =  (1, -1)$, e.g., $(-1, 1,  1,  1)$.
Consider swapping $w_2$ with $w_3$
and swapping $v_1$ with $v_2$.
The result, e.g., $(1, 1, -1, 1)$, leads to $(W, W')  =  (-1, 1)$.
This double swap amounts to a cyclic permutation
of the quadruple's elements.
This permutation is equivalent to
a cyclic permutation of the argument of
the~\eqref{eq:P_Degen_Help1} trace.
This permutation preserves the trace's value
while transforming the trace into $P(-1, 1)$.
The trace originally equaled $P(1, -1)$.
Hence $P(1, -1)  =  P(-1, 1)$.

\section{Retrodiction about the symmetrized composite observable $\tilde{\Gamma}  :=  i ( \K \ldots \A  -  \A \ldots \K )$}
\label{section:RetroK2}

Section~\ref{section:TA_retro} concerns retrodiction about
the symmetrized observable $\Gamma  :=  \K \ldots \A + \A \ldots \K$.
The product $\K \ldots \A$ is symmetrized also in
$\tilde{\Gamma}  :=  i ( \K \ldots \A  -  \A \ldots \K )$.
One can retrodict about $\tilde{\Gamma}$,
using $\Ops$-extended KD quasiprobabilities $\OurKD{\rho}^\ParenK$,
similarly to in Theorem~\ref{theorem:RetroK}.

The value most reasonably attributable retrodictively to
the time-$t'$ value of $\tilde{\Gamma}$ is
given by Eqs.~\eqref{eq:GammaW},~\eqref{eq:QuasiBayesLeft1},
and~\eqref{eq:QuasiBayesRt1}.
The conditional quasiprobabilities
on the right-hand sides of
Eqs.~\eqref{eq:QuasiBayesLeft2} and~\eqref{eq:QuasiBayesRt2} become
\begin{align}
   \label{eq:QuasiBayesLeft2Tilde}
   \tilde{p}_\rightarrow ( a, \ldots, k, f | \rho )
   =  \frac{ - \Im ( \langle f' | k \rangle  \langle k |  \ldots
   | a \rangle  \langle a |  \rho'  | f' \rangle ) }{
   \langle f' | \rho' | f' \rangle }
\end{align}
and
\begin{align}
   \label{eq:QuasiBayesRt2Tilde}
   \tilde{p}_\leftarrow ( k, \ldots, a, f | \rho )
   =  \frac{ \Im  ( \langle f' | a \rangle  \langle a |  \ldots
   | k \rangle  \langle k | \rho' | f' \rangle ) }{
   \langle f' | \rho' | f' \rangle }  \, .
\end{align}
The extended KD distributions become
\begin{align}
   \label{eq:Extend_KD_Left_Tilde}
   \OurKD{ \rho, \rightarrow }^\ParenK  ( \rho, a, \ldots, k , f )
   =  i  \langle f' | k \rangle  \langle k |  \ldots
   | a \rangle  \langle a |  \rho'  | f' \rangle
\end{align}
and
\begin{align}
   \label{eq:Extend_KD_Rt_Tilde}
   \OurKD{ \rho, \leftarrow }^\ParenK  ( \rho, k, \ldots, a , f )
   =  - i \langle f' | a \rangle  \langle a |  \ldots
   | k \rangle  \langle k | \rho | f' \rangle \, .
\end{align}

To prove this claim, we repeat the proof of Theorem~\ref{theorem:RetroK}
until reaching Eq.~\eqref{eq:Choose2}.
The definition of $\tilde{\Gamma}$ requires that
an $i$ enter the argument of the first $\Re$
and that a $-i$ enter the argument of the second $\Re$.
The identity $\Re ( i z )  =  - \Im (z)$, for $z \in \mathbb{C}$,
implies Eqs.~\eqref{eq:QuasiBayesLeft2Tilde}--\eqref{eq:Extend_KD_Rt_Tilde}.

\end{appendices}

%
%
\bibliographystyle{h-physrev}
\bibliography{OTOC_FT_bib}

\end{document}